\definecolor{ForestGreen}{rgb}{0.1333,0.5451,0.1333}
\definecolor{DarkRed}{rgb}{0.8,0,0}
\definecolor{Red}{rgb}{1,0,0}
\crefname{theorem}{Theorem}{Theorems}
\Crefname{lemma}{Lemma}{Lemmas}
\Crefname{figure}{Figure}{Figures}
\newcommand{\Ts}{T^{\mathsf{st}}}
\newcommand{\Lf}{L_{\star}}
\newcommand{\Ls}{L_{\mathsf{source}}}
\newcommand{\Lt}{L_{\mathsf{sink}}}
\newcommand{\ecutemb}[3]{(#1, \, #2, \, #3)\text{-}\mathtt{e}\text{-}\mathtt{Cut}\text{-}\mathtt{or}\text{-}\mathtt{Expander}}
\newcommand{\vcutemb}[3]{(#1, \, #2, \, #3)\text{-}\mathtt{v}\text{-}\mathtt{Cut}\text{-}\mathtt{or}\text{-}\mathtt{Expander}}
\newcommand{\betaKKOV}{\beta_{\operatorname{KKOV}}}
\newcommand{\detbalspcut}[4]{(#1, \, #2, \, #3, \, #4)\text{-}\mathtt{Det}\text{-}\mathtt{Sparse}\text{-}\mathtt{Cut}}
\newcommand{\Cin}{C_{\mathsf{in}}}
\newcommand{\Cout}{C_{\mathsf{out}}}
\newcommand{\Troute}{T_{\mathsf{route}}}
\newcommand{\Temb}{T_{\mathsf{emb}}}
\newcommand{\Tcut}{T_{\mathsf{cut}}}
\newcommand{\Gsp}{G^\diamond}
\newcommand{\Esp}{E^\diamond}
\newcommand{\Vsp}{V^\diamond}
\newcommand{\Er}{E^\mathsf{r}}
\newcommand{\Phiv}{\Psi}
\newcommand{\Phie}{\Phi}
\newcommand{\ephicut}{\phi_{\mathsf{cut}}}
\newcommand{\ephiemb}{\phi_{\mathsf{emb}}}
\newcommand{\phicut}{\psi_{\mathsf{cut}}}
\newcommand{\phiemb}{\psi_{\mathsf{emb}}}
\newcommand{\betaleft}{\beta_{\mathsf{leftover}}}
\newcommand{\betacut}{\beta_{\mathsf{cut}}}
\newcommand{\Pactive}{\mathcal{P}_\mathsf{active}}
\newcommand{\aactive}{\textsf{active}}
\newcommand{\dead}{\textsf{dead}}
\newcommand{\success}{\textsf{successful}}
\newcommand{\cutmatch}[2]{(#1, \, #2)\text{-}\mathtt{Multi}\text{-}\mathtt{Embed}}
\newcommand{\Bernoulli}{\operatorname{Bernoulli}}
\newcommand{\PhiIn}[2]{\Phi^{\mathsf{in}}_{#1, \; #2}}
\newcommand{\vPhiIn}[2]{\Psi^{\mathsf{in}}_{#1, \; #2}}
\newcommand{\vPhiOut}[2]{\Psi^{\mathsf{out}}_{#1, \; #2}}
\newcommand{\vol}{\operatorname{Vol}}
\newtheorem{theorem}{Theorem}[section]
\newtheorem{lemma}{Lemma}[section]
\newtheorem{definition}{Definition}[section]
\newcommand{\ignore}[1]{}
\newcommand{\Expect}{\mathbf{E}}
\newcommand{\Prob}{\mathbf{Pr}}
\newcommand{\ceil}[1]{\left\lceil #1 \right\rceil}
\newcommand{\floor}[1]{\lfloor #1 \rfloor}
\newcommand{\poly}{{\operatorname{poly}}}
\newcommand{\indeg}{\operatorname{indeg}}
\newcommand{\dist}{\operatorname{dist}}
\newcommand{\ID}{\operatorname{ID}}
\newcommand{\LOCAL}{\mathsf{LOCAL}}
\newcommand{\CONGEST}{\mathsf{CONGEST}}
\newcommand{\CLIQUE}{\mathsf{CONGESTED}\text{-}\mathsf{CLIQUE}}
\newcommand{\mix}{\ensuremath{\tau_{\operatorname{mix}}}}
\newcommand{\idle}{\textsf{idle}}
\newcommand{\VVcut}{\mathcal{V}_{\mathsf{cut}}}
\newcommand{\VVemb}{\mathcal{V}_{\mathsf{emb}}}
\newcommand{\UU}{\mathcal{U}}
\newcommand{\TT}{\mathcal{T}}
\newcommand{\VV}{\mathcal{V}}
\newcommand{\PP}{\mathcal{P}}
\newcommand{\SSS}{\mathcal{S}}
\newcommand{\MM}{\mathcal{M}}
\global\long\def\eps{\epsilon}
\def\ShowComment{True} 
\def\thatchaphol#1{\marginpar{$\leftarrow$\fbox{T}}\footnote{$\Rightarrow$~{\sf\textcolor{purple}{#1 --Thatchaphol}}}}
\def\thatchaphol#1{}
\def\thatchaphol#1{\marginpar{$\leftarrow$\fbox{ts}}\footnote{$\Rightarrow$~{\sffamily #1 --TS}}}
\title{Deterministic Distributed Expander Decomposition and Routing with Applications in Distributed Derandomization}
\author[1]{Yi-Jun Chang}
\author[2]{Thatchaphol Saranurak}
\affil[1]{ETH Institute for Theoretical Studies, Switzerland}
\affil[2]{Toyota Technological Institute at Chicago, USA}
\begin{document}
\date{}
\maketitle
\pagenumbering{gobble}   
\begin{abstract}
There is a recent exciting line of work in distributed graph algorithms in the $\CONGEST$ model that exploit expanders. 
\emph{All} these  algorithms so far are based on two tools: \emph{expander decomposition} and \emph{expander routing}. An $(\epsilon,\phi)$-expander
decomposition  removes $\epsilon$-fraction of the edges so that
the remaining connected components have conductance at least $\phi$,
i.e.,~they are $\phi$-expanders, and expander routing
 allows each vertex $v$ in a $\phi$-expander to very quickly exchange $\deg(v)$
messages with \emph{any} other vertices, not just its local neighbors.

In this paper, we give the first efficient \emph{deterministic} distributed algorithms for both tools. 
We show that an $(\epsilon,\phi)$-expander decomposition
can be deterministically computed in $\poly(\eps^{-1})  n^{o(1)}$ rounds
for $\phi = \poly(\eps) n^{-o(1)}$, and that expander routing
can be performed deterministically in $\poly(\phi^{-1})n^{o(1)}$ rounds.
Both results match previous bounds of randomized algorithms by [Chang and Saranurak, PODC 2019] and [Ghaffari, Kuhn, and Su, PODC 2017] up to subpolynomial factors.

Consequently, we 
\emph{derandomize} 
existing 
  distributed algorithms 
  that exploit expanders.
We show that a minimum spanning tree
on $n^{o(1)}$-expanders can be constructed deterministically in
$n^{o(1)}$ rounds, and triangle detection and enumeration
on \emph{general} graphs can be solved deterministically in $O(n^{0.58})$ and $n^{2/3 + o(1)}$ rounds,
respectively.

Using similar techniques, we also
give the first \emph{polylogarithmic-round} randomized
algorithm for constructing an $(\epsilon,\phi)$-expander decomposition
in
$\poly(\eps^{-1}, \log n)$ rounds for $\phi =  1 / \poly(\eps^{-1}, \log n)$.
This algorithm is faster than the previous algorithm by
[Chang and Saranurak, PODC 2019] in all regimes of parameters. 
The previous algorithm needs
$n^{\Omega(1)}$ rounds for any $\phi\ge 1/\poly\log n$. 
\end{abstract}

\newpage
\doublespacing
\tableofcontents
\singlespacing
\newpage

\pagenumbering{arabic}   
\section{Introduction}

In this paper, we consider the $\CONGEST$ model of distributed computing, where the underlying distributed network
is represented as an graph $G=(V,E)$, where each vertex corresponds to a computer, and each edge corresponds to 
a  communication link.
Each vertex $v \in V$ has a distinct $\Theta(\log n)$-bit identifier $\ID(v)$, where $n = |V|$ is the number of vertices in the graph.
The computation proceeds in synchronized \emph{rounds}.
In each round, each vertex $v \in V$ can perform unlimited 
local computation, and may send a distinct
$O(\log n)$-bit message to each of its neighbors.
In the randomized variant of $\CONGEST$,
each vertex can generate unlimited local random bits, 
but there is no global randomness.
A related model called the $\CLIQUE$ model is a variant of the $\CONGEST$ model where each vertex $v \in V$ is able to send a separate $O(\log n)$-bit message to each vertex in $V \setminus \{v\}$. 

\paragraph{Expander routing.}
Ghaffari, Kuhn, and Su~\cite{GhaffariKS17} considered a routing problem on high-conductance graphs. They proved that if each vertex $v \in V$ is the source and the destination of at most $\deg(v)$ messages, then all messages can be routed to their destinations in $\mix(G) \cdot 2^{O(\sqrt{\log n \log \log n})}$ rounds with high probability, where  $\mix(G)$ is the mixing time of the lazy random walk on $G$, and we have the following relation~\cite{JerrumS89} between the 
mixing time $\mix(G)$ and conductance $\Phie(G)$: 
\[
\Theta\left(\frac{1}{\Phie(G)}\right) \leq \mix(G) \leq \Theta\left(\frac{\log n}{\Phie(G)^2}\right).
\]
The $2^{O(\sqrt{\log n \log \log n})}$ factor was later improved by Ghaffari and Li~\cite{GhaffariL2018} to $2^{O(\sqrt{\log n })}$. 

Expander routing is a very useful tool in designing distributed algorithms on high-conductance graphs. In particular, it was shown in~\cite{GhaffariKS17}  that a minimum spanning tree can be constructed in $\poly\left( \phi^{-1} \right) \cdot  2^{O(\sqrt{\log n})}$ rounds on graphs $G$ with $\Phie(G) = \phi$, bypassing the ${\Omega}(\sqrt{n} / \log n)$ lower bound for general graphs~\cite{PelegR2000,DasSarma2012}.

More generally, as the expander routing algorithms of~\cite{GhaffariKS17,GhaffariL2018} allows the vertices to communicate arbitrarily, only subjecting to some bandwidth constraints, these routing algorithms allow us to simulate known algorithms from  other models of parallel and distributed graph algorithms with small overhead. Indeed, it was shown in~\cite{GhaffariL2018} that many work-efficient PRAM algorithms can be transformed into round-efficient distributed algorithms in the $\CONGEST$ model when the underlying graph $G$ has high conductance.

\paragraph{Expander decompositions.} A major limitation of the approach of~\cite{GhaffariKS17,GhaffariL2018} is that it is only applicable to graphs with high conductance. A natural idea to extend this line of research to general graphs is to consider \emph{expander decompositions}.  
Formally, an $(\epsilon,\phi)$-expander decomposition of a graph $G=(V,E)$
is a partition of the edges $E=E_{1}\cup E_2 \cup \cdots\cup E_{x} \cup \Er$ meeting the following conditions.
\begin{itemize}
    \item  The subgraphs $G[E_1], G[E_2], \ldots, G[E_x]$ induced by the clusters are vertex-disjoint.
    \item  $\Phie(G[E_x]) \geq \phi$, for each $1 \leq i \leq x$.
    \item  The number of remaining edges is at most $\epsilon$ fraction of the total number of edges, i.e., $|\Er| \leq \epsilon|E|$.
\end{itemize}
In other words, an $(\epsilon,\phi)$-expander decomposition of a graph $G=(V,E)$
is a removal of at most $\epsilon$ fraction of its edges in such a way that each  remaining connected component has conductance at least $\phi$.
It is
 well known  that for {\em any} graph, and an $(\epsilon,\phi)$-expander decomposition  exists for any $0 < \epsilon < 1$ and $\phi = \Omega(\epsilon/\log n)$~\cite{GoldreichR1999,KannanVV04,spielman2004nearly}, and this bound is \emph{tight}. After removing any constant fraction of the edges in a hypercube, some remaining
 component must have conductance at most $O(1/\log n)$~\cite{AlevALG18}.
 Expander decompositions have a wide range of applications, and it has been applied to solving linear systems~\cite{spielman2004nearly},
 unique games~\cite{AroraBS2015,Trevisan08,RaghavendraS2010},
 minimum cut~\cite{KawarabayashiT15}, and dynamic algorithms~\cite{NanongkaiSW17}.

\paragraph{Distributed expander decompositions.} 
 Recently, Chang, Pettie, and Zhang~\cite{ChangPZ19} applied expander decompositions to the field of distributed computing, and they showed that a variant of expander decomposition can be computed efficiently in $\CONGEST$. 
 Using this decomposition, they showed that triangle detection and enumeration can be solved in
$\tilde{O}(n^{1/2})$ rounds.\footnote{The $\tilde{O}(\cdot)$ notation hides any factor polylogarithmic in $n$.}
The previous upper
bounds for  triangle detection and enumeration  were 
$\tilde{O}(n^{2/3})$ and $\tilde{O}(n^{3/4})$, respectively,
due to Izumi and Le~Gall~\cite{IzumiL17}.
Later, Chang and Saranurak~\cite{changS19} improved the expander decomposition algorithm of~\cite{ChangPZ19}. For any $0 < \epsilon < 1$, and for any positive integer $k$,
an $(\epsilon,\phi)$-expander decomposition of a graph $G=(V,E)$ with $\phi=(\epsilon/\log n)^{2^{O(k)}}$  can be constructed in $O\left(n^{2/k}\cdot\poly\left(\phi^{-1},\log n\right)\right)$ rounds with high probability.
As a consequence, triangle detection and enumeration can be solved in
$\tilde{O}(n^{1/3})$ rounds, matching the $\tilde{\Omega}(n^{1/3})$ lower bound of Izumi and Le~Gall~\cite{IzumiL17} and Pandurangan, Robinson, and Scquizzato~\cite{PanduranganRS18}.

The triangle finding algorithms of~\cite{ChangPZ19,changS19} are based on the following generic framework.
Construct an expander decomposition, and then the  routing algorithms of~\cite{GhaffariKS17,GhaffariL2018} enables us to simulate 
some known $\CLIQUE$ algorithms with small overhead on the  high-conductance subgraphs $G[E_1], G[E_2], \ldots, G[E_x]$, and finally the remaining edges $\Er$ can be handled using recursive calls.  

After~\cite{ChangPZ19,changS19}, 
 other applications of distributed expander 
decompositions have been found.
Daga et~al.~\cite{Daga2019distributed} applied  distributed expander decompositions to obtain the first sublinear-round algorithm for \emph{exact} edge connectivity.

Eden et~al.~\cite{EFFKO19} showed that distributed expander decompositions are also useful for various distributed subgraph finding problems beyond triangles. 
For \emph{any} $k$-vertex subgraph $H$,  whether a copy of $H$ exists can be detected in $n^{2 - \Omega(1/k)}$ rounds, 
matching the $n^{2 - O(1/k)}$ lower bound of Fischer et~al.~\cite{FischerGO18}.
There is a constant $\delta \in (0, 1/2)$ such that for any $k$, any $\Omega(n^{(1/2)+\delta})$ lower bound on $2k$-cycle detection would imply a new circuit lower bound.

 Eden et~al.~\cite{EFFKO19} also showed that all 4-cliques can be enumerated in $\tilde{O}(n^{5/6})$ rounds, and all 5-cliques in $\tilde{O}(n^{21/22})$ rounds.  Censor-Hillel, Le~Gall, and Leitersdorf~\cite{CensorLL20}  later improved this result, showing that all $k$-cliques can be enumerated  in $\tilde{O}(n^{2/3})$ rounds for $k = 4$ and $\tilde{O}(n^{1-2/(k+2)})$ rounds for $k \geq 5$, getting closer to the $\tilde{\Omega}(n^{1 - 2/k})$ lower bound of Fischer et al.~\cite{FischerGO18}.

\subsection{Our Contribution}
The main contribution of this paper is to offer the first efficient \emph{deterministic} distributed algorithms for both  expander decomposition and routing in the $\CONGEST$ model.

\begin{restatable}[Deterministic expander decomposition]{theorem}{Rdet}\label{thm-det-expander-decomp} Let $0 < \epsilon < 1$ be a parameter. An $(\epsilon,\phi)$-expander decomposition of a graph $G=(V,E)$ with  $\phi =  \poly(\epsilon) 2^{-O(\sqrt{\log n \log \log n})}$ can be computed in $\poly(\epsilon^{-1}) 2^{O(\sqrt{\log n \log \log n})}$ rounds deterministically.
\end{restatable}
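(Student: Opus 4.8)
I would build a deterministic version of the recursive expander-decomposition framework of Chang and Saranurak~\cite{changS19}, replacing each randomized primitive by a deterministic counterpart of comparable quality, and then instantiate the recursion-depth parameter so that the final guarantees become $\poly(\eps^{-1})\,2^{O(\sqrt{\log n\log\log n})}$ rounds and conductance $\phi=\poly(\eps)\,2^{-O(\sqrt{\log n\log\log n})}$. The whole decomposition is computed by divide and conquer: the workhorse is a subroutine $\balcutemb$ that, on a cluster $G'$, either (i) returns a (most-)balanced sparse cut $(A,\bar A)$ of conductance $O(\phi\cdot\polylog n)$, in which case I add the cut edges to $\Er$ and recurse on $G[A]$ and $G[\bar A]$; or (ii) certifies that $G'$, after deleting $O(\phi\,\vol(G'))$ edges and a standard trimming step, is a genuine $\phi$-expander, in which case I output it and stop. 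With the most-balanced cut the recursion depth is $\polylog n$, each level deletes at most an $O(\phi\cdot\polylog n)$ fraction of edges, and Saranurak--Wang trimming (already deterministic and $\CONGEST$-friendly) removes only another $O(\phi)$ fraction, so $|\Er|\le\eps|E|$ as long as $\phi\le\eps/\polylog n$ with the right exponent. Hence all the randomness to be removed lives inside $\balcutemb$.

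\textbf{Derandomizing $\balcutemb$.} This subroutine is a cut-matching game: over $T=O(\log^2 n)$ iterations it maintains a multigraph $H$ embedded into $G'$ with congestion $\poly(\phi^{-1})$ and dilation $n^{o(1)}$; in each iteration a \emph{cut player} proposes a bipartition and a \emph{matching player} either routes a near-perfect matching across it through $G'$ (which is added to $H$) or returns a sparse cut of $G'$; if all iterations succeed then $H$ is an expander and so $G'$ is a near-expander. For the cut player I would use the deterministic potential-based cut player of Khandekar--Khanna--Osokin--Vazirani in place of the random-walk cut player; this only costs an extra $\polylog n$ factor in the number of iterations and contributes the $\betaKKOV$ term to the expansion bound. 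The matching player is an (approximate) single-commodity blocking-flow computation on the current cluster, which must be executed deterministically in $\CONGEST$; to run this (and the cut player) efficiently even when $G'$ has subpolynomial diameter, I would invoke the deterministic expander-routing algorithm of this paper so that the vertices of $G'$ can communicate essentially as on a clique, up to an $n^{o(1)}$ overhead.

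\textbf{Parameter choice.} Running $\balcutemb$ on $G'$ directly is too slow, so, following~\cite{changS19}, I would apply it recursively over $k$ levels of ``virtual'' graphs of geometrically shrinking size, making one invocation cost $n^{O(1/k)}\cdot\poly(\phi^{-1},\log n)$ rounds while the conductance degrades to $\phi=(\eps/\log n)^{2^{O(k)}}$. Choosing $k=\Theta(\sqrt{\log n/\log\log n})$ balances $n^{O(1/k)}=2^{O(\sqrt{\log n\log\log n})}$ against $2^{O(k\log\log n)}=2^{O(\sqrt{\log n\log\log n})}$, which yields exactly the round complexity and conductance claimed in the statement.

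\textbf{Main obstacle.} The hardest part is the deterministic matching player in $\CONGEST$. The randomized version relies on random sampling to cut down the number of flow commodities and on random-delay (Leighton--Maggs--Rao) packet scheduling to realize low-congestion flows; neither is available here. Removing the scheduling randomness requires a deterministic low-congestion routing schedule inside each low-diameter cluster --- e.g.\ via the method of conditional expectations / pessimistic estimators applied to the packet schedule --- while keeping the dilation $n^{o(1)}$, and removing the sampling requires replacing random commodity choice by a deterministic sparsification or hitting-set argument. Making all of this work inside $\CONGEST$, layered on top of deterministic expander routing, without pushing the round complexity past $n^{o(1)}$, is the main technical burden of the proof.
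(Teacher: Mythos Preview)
Your plan has two genuine gaps. First, you assume that ``Saranurak--Wang trimming [is] already deterministic and $\CONGEST$-friendly'': this is false and is in fact the main technical obstacle the paper overcomes. The paper explicitly states that no efficient distributed algorithm for expander trimming or pruning is known, and that this remains open (see the technical overview and the conclusions). The entire point of the paper's approach is to \emph{bypass} trimming: rather than certifying that $G'$ is a near-expander and then trimming, the paper extracts a well-connected subgraph $G[W]$ directly from the low-congestion \emph{and low-dilation} embedding of the matchings in the KKOV cut-matching game (\cref{lem-merge-subgraphs}, via \cref{lem-expander-emb-multi}). This also means the paper's clusters are \emph{edge-induced}, not vertex-induced, subgraphs, which is a weaker guarantee than what trimming would give; your plan silently promises the stronger guarantee without supplying the missing ingredient. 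Relatedly, your proposal to run the matching player via ``the deterministic expander-routing algorithm of this paper so that the vertices of $G'$ can communicate essentially as on a clique'' is circular: expander routing requires $G'$ to already be an expander, which is precisely what $\balcutemb$ is trying to decide. The paper's matching player instead uses the deterministic \emph{nearly} maximal blocking flow of Goldberg--Plotkin--Vaidya (\cref{lem-GPV-basic}), which leaves a small set of leftover vertices, and then modifies the KKOV analysis to tolerate matchings that saturate only a constant fraction of $S$.

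Second, your parameter calculation is internally inconsistent and does not yield the claimed bound. You inherit from~\cite{changS19} the degradation $\phi=(\eps/\log n)^{2^{O(k)}}$, but then in the next sentence balance $n^{O(1/k)}$ against $2^{O(k\log\log n)}$ as if the degradation were only $\log^{O(k)} n$. With the doubly-exponential degradation $2^{O(k)}$ of~\cite{changS19}, no choice of $k$ gives both $n^{O(1/k)}=2^{O(\sqrt{\log n\log\log n})}$ and $\phi\ge 2^{-O(\sqrt{\log n\log\log n})}$: the former forces $k=\Omega(\sqrt{\log n/\log\log n})$, at which point $2^{O(k)}$ is far too large. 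The paper achieves the stated bound precisely because its new recursion (simultaneous KKOV games on $k$ parts, solved by recursing on instances of size $O(n/k)$) loses only a $\poly\log n$ factor in $\phiemb$ per level rather than squaring it; see the recurrence in \cref{thm-solve-recursion}, where $\phiemb=\log^{-O(1/\gamma)}n$ after depth $1/\gamma$, and $\gamma=\sqrt{\log\log n/\log n}$ gives the result.
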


More generally, there is a tradeoff of parameters in \cref{thm-det-expander-decomp}. For any $1 > \gamma \geq \sqrt{\log \log n / \log n}$, there is a deterministic expander decomposition algorithm with round complexity $\epsilon^{-O(1)}  \cdot n^{O(\gamma)}$ with parameter $\phi = \epsilon^{O(1)}  \log^{-O(1/\gamma)}n$.
For example, an  $(\epsilon,\phi)$-expander decomposition of a graph $G=(V,E)$ with  $\phi =  \poly(\epsilon /\log n)$ can be computed deterministically in $\poly\left(\epsilon^{-1}\right)n^{0.001}$ rounds. 

\begin{restatable}[Deterministic routing on expanders]{theorem}{Rrouting}\label{thm-routing-general}
Let $G=(V,E)$ be a graph with $\Phie(G) = \phi$, where vertex $v \in V$ is a source and a destination of $O(L) \cdot \deg(v)$ messages.
Then there is a deterministic algorithm that routes all messages to their destination in $O(L) \cdot \poly\left(\phi^{-1}\right) \cdot  2^{O\left( \log^{2/3} n \log^{1/3} \log n \right)}$ rounds. 
\end{restatable}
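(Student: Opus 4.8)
The plan is to follow the recursive routing framework of Ghaffari--Kuhn--Su and Ghaffari--Li, but to replace every randomized primitive by a deterministic one, using the deterministic expander decomposition of \cref{thm-det-expander-decomp} as the key new ingredient. Recall the high-level structure: to route messages on a $\phi$-expander $G$ on $n$ vertices, one embeds into $G$ a virtual graph $G'$ on roughly $n/2$ ``super-nodes'' (each super-node being a small cluster of real vertices) such that $G'$ is itself an expander and each edge of $G'$ corresponds to a short path in $G$ with low congestion; one then recurses on $G'$. The recursion has depth $O(\log n)$, and the per-level overhead dictates the final bound. The randomized algorithms obtain the per-level embedding via random walks / the cut-matching game, which is exactly what must be derandomized.

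First I would set up the reduction to a single ``matching-style'' embedding step: given the current (virtual) $\phi$-expander, we must deterministically pair up a constant fraction of the super-nodes and route, for each pair, a batch of messages along vertex-disjoint-ish short low-congestion paths. The standard tool here is the cut-or-matching subroutine (the $\cutmatch{\cdot}{\cdot}$ / $\match$ primitives appearing in the macro list of this paper): either we find a large matching embeddable with low congestion and short dilation, or we find a sparse cut. Since $G$ is a $\phi$-expander, the sparse-cut case is impossible (or can only cut off a tiny piece), so we always make progress. To make this deterministic I would invoke \cref{thm-det-expander-decomp} as a black box on the virtual graph: the deterministic expander decomposition yields, in $\poly(\phi^{-1}) n^{o(1)}$ rounds, a clustering into expanders with few inter-cluster edges, and within each expander cluster one can deterministically build the desired matching/embedding by a deterministic flow or deterministic cut-matching argument (again leaning on the decomposition recursively at a smaller scale, or on a deterministic short-path routing inside a known expander). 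This is the conceptual heart: \emph{deterministic expander decomposition $\Rightarrow$ deterministic embedding step $\Rightarrow$ deterministic routing}.

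Next I would bound the round complexity by tracking the recursion. At each of the $O(\log n)$ levels, the embedding step costs $\poly(\phi^{-1}) \cdot 2^{O(\sqrt{\log n \log\log n})}$ rounds (the cost of the deterministic expander decomposition from \cref{thm-det-expander-decomp}), plus the cost of actually pushing the $O(L)\deg(v)$ messages through the current embedding, which contributes an $O(L)$ factor and a $\poly(\phi^{-1})$ congestion/dilation factor. Crucially, the expander parameter degrades geometrically down the recursion: if the decomposition guarantees conductance $\phi' = \poly(\phi) 2^{-O(\sqrt{\log n\log\log n})}$ at one level, then after the full depth-$O(\log n)$ recursion the effective parameter is $\phi^{\,2^{O(\log\log n)}}$-type, i.e. the accumulated overhead multiplies to $2^{O(\log^{2/3} n \log^{1/3}\log n)}$ once one balances the recursion (this is precisely the tradeoff already noted after \cref{thm-det-expander-decomp}: choosing the branching parameter $\gamma$ of each decomposition to optimize the product of $n^{O(\gamma)}$ over $O(\log n)$ levels against the $\log^{-O(1/\gamma)} n$ conductance loss yields the $2^{O(\log^{2/3} n \log^{1/3}\log n)}$ form). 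Multiplying the $O(L)$ message-count factor and $\poly(\phi^{-1})$ through the recursion gives the claimed $O(L)\cdot \poly(\phi^{-1})\cdot 2^{O(\log^{2/3} n\log^{1/3}\log n)}$ bound.

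The main obstacle I anticipate is the deterministic embedding step \emph{inside} a virtual expander cluster whose vertices are themselves clusters of real vertices: one needs to route messages between super-nodes, but the super-nodes do not know global structure and cannot sample random walks. The fix is to bootstrap: route inside each cluster recursively using the \emph{same} theorem at a smaller instance size (the clusters are much smaller than $n$), which is legitimate because the recursion on instance size terminates quickly, while the expander-decomposition call handles the between-cluster structure. Care is needed that (i) the virtual graph $G'$ is represented in $G$ with only $O(\log n)$-bit labels per edge so that $\CONGEST$ bandwidth is respected (this is handled exactly as in~\cite{GhaffariKS17,GhaffariL2018}), and (ii) the deterministic decomposition is applied to $G'$, not $G$, which requires simulating one round of the decomposition algorithm on $G'$ using $\poly(\phi^{-1}) n^{o(1)}$ rounds on $G$ — but that simulation cost is itself just another instance of expander routing at a smaller scale, so it folds into the recursion without changing the asymptotics. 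Once these bookkeeping points are settled, correctness is immediate from the expander property (no large sparse cut can appear, so every embedding step succeeds) and the round bound follows from the recursion analysis above.
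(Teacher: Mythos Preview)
Your proposal has the right high-level shape (recursive GKS-style routing, replace random-walk embeddings by deterministic ones), but the concrete plan has two genuine gaps.

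First, you invoke the wrong primitive with the dependency reversed. The paper does \emph{not} use \cref{thm-det-expander-decomp} as a black box for routing; rather, both \cref{thm-det-expander-decomp} and the routing theorem are built on top of the same lower-level tool, the simultaneous expander embedding $\cutmatch{\phicut}{\phiemb}$ of \cref{lem-simul-emb-det}/\cref{thm-solve-recursion}. An expander \emph{decomposition} removes edges and returns vertex-disjoint clusters; that is the wrong object for routing on an already-expanding graph where no edges may be dropped. What you actually need (and what the paper uses) is: partition $V$ into $k$ ID-contiguous parts $V_1,\dots,V_k$, then run the simultaneous KKOV cut-matching game to embed an expander $H_i$ into a large subset $U_i\subseteq V_i$ of each part; since $\Phie(G)\ge\phi$ and $\phicut<\phi$, the ``cut'' outcome is impossible, so every part succeeds. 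This still leaves leftover vertices $V_i\setminus U_i$ that must be assigned representatives in $U_i$ (via \cref{lem-cut-match-det-noleftover}) and a deterministic load-balancing step (the Ghosh et~al.\ algorithm, \cref{lem-loadbal-main}) before recursing---neither of which appears in your outline, and both are essential for the bound.

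Second, your recursion arithmetic does not reach $2^{O(\log^{2/3}n\log^{1/3}\log n)}$. You propose halving (depth $O(\log n)$), but with the conductance of the embedded expanders being only $\phiemb=2^{-O(\sqrt{\log n\log\log n})}$ (or any $\log^{-O(1/\gamma)}n$), multiplying that loss over $\Theta(\log n)$ levels overshoots the target badly. The paper instead takes a \emph{large} branching factor $k=2^{\Theta(\log^{2/3}n\log^{1/3}\log n)}$ so the recursion depth is only $d=\log_k n=O(\log^{1/3}n/\log^{1/3}\log n)$, and simultaneously sets the internal parameter $\epsilon=\log^{-1/3}n\log^{1/3}\log n$ in \cref{thm-solve-recursion} so that $\phiemb=2^{-O(\log^{1/3}n\log^{2/3}\log n)}$; then the accumulated overhead is $2^{O(d\cdot\log^{1/3}n\log^{2/3}\log n)}=2^{O(\log^{2/3}n\log^{1/3}\log n)}$ as claimed. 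Your ``balance the recursion'' sentence gestures at this but with the wrong depth and without the three-way optimization among $k$, recursion depth, and $\phiemb$.
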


These results open up the possibility of  \emph{derandomization} of randomized distributed algorithms that are based on these techniques~\cite{GhaffariKS17,GhaffariL2018,ChangPZ19,changS19,Daga2019distributed,EFFKO19,izumiLM20,ChatterjeePN20,SuVdsic19,CensorLL20}.

We show that  triangle detection and counting can be solved deterministically in $n^{1 - \frac{1}{\omega} + o(1)} < O(n^{0.58})$ rounds, and triangle enumeration can be solved deterministically  in $n^{\frac{2}{3} +o(1)}$ rounds, by derandomizing the algorithm of~\cite{ChangPZ19} using \cref{thm-det-expander-decomp,thm-routing-general}. 
See \Cref{sect-derand} for a formal definition of these problems.
To the best of our knowledge, before this work, the only known deterministic algorithm for triangle detection is the trivial $O(n)$-round algorithm that asks each vertex $v$ to send $\{ \ID(u) \ | \ u \in N(v) \}$ to all its neighbors. 

\begin{restatable}[Triangle finding]{theorem}{Rtriangle}\label{thm-triangle}
Triangle detection and counting can be solved deterministically in $n^{1 - \frac{1}{\omega} + o(1)} < O(n^{0.58})$ rounds, and triangle enumeration can be solved deterministically  in $n^{\frac{2}{3} +o(1)}$ rounds.
\end{restatable}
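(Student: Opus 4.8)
The plan is to derandomize the triangle-finding algorithm of Chang, Pettie, and Zhang~\cite{ChangPZ19}, exactly as the excerpt indicates, by replacing its two randomized ingredients with the primitives of \Cref{thm-det-expander-decomp,thm-routing-general}. Recall the skeleton of their algorithm. One fixes a degree threshold $\Delta$; all triangles incident to a vertex of degree at most $\Delta$ are found by a purely combinatorial step in which each low-degree vertex forwards its short adjacency list to its neighbors, who test the relevant pairs, costing $\tilde{O}(\Delta)$ rounds. The remaining triangles have all three endpoints of degree more than $\Delta$, and after the degree-bucketing/core-extraction bookkeeping of~\cite{ChangPZ19} one is left with subgraphs of minimum degree $\Omega(\Delta)$. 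On these one computes an $(\epsilon,\phi)$-expander decomposition, and inside each expander cluster $G[E_i]$ one uses the expander routing of~\cite{GhaffariKS17,GhaffariL2018} to simulate a $\CLIQUE$-model triangle subroutine: a fast-matrix-multiplication-based triangle-counting algorithm for detection and counting, and a $\CLIQUE$ triangle-listing algorithm for enumeration. Finally one recurses on the at most $\epsilon|E|$ edges left over by the decomposition, with a bounded number of extra routing rounds that catch the triangles whose three edges straddle two clusters or the leftover set. The randomness is confined to the expander decomposition and the routing: the low-degree step and the bookkeeping are deterministic, the $\CLIQUE$ matrix-multiplication counting routine is deterministic, and the $\CLIQUE$ listing routine is deterministic (its only routing primitive is Lenzen's deterministic routing).

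Thus, at the level of this skeleton, the derandomization is a black-box substitution. Replace the expander decomposition by \Cref{thm-det-expander-decomp} with $\gamma$ set to its smallest allowed value $\Theta(\sqrt{\log\log n/\log n})$, so that $\phi = n^{-o(1)}$ and the decomposition itself runs in $\epsilon^{-O(1)} n^{o(1)}$ rounds. Replace the expander routing by \Cref{thm-routing-general}, which for $\phi = n^{-o(1)}$ delivers any batch in which each vertex $v$ is source and destination of $O(L)\deg(v)$ messages in $O(L)\cdot n^{o(1)}$ rounds. These satisfy the same input/output contract as the randomized versions up to $n^{o(1)}$ factors, so correctness is untouched and every $\poly(\phi^{-1},\log n)$ factor in the running time of~\cite{ChangPZ19} becomes $n^{o(1)}$.

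It remains to redo the accounting. The leftover-edge recursion is a chain, each instance having at most an $\epsilon$-fraction of the edges of its parent, so the $n^{o(1)}$ overheads incurred at the $\poly\log n$ recursion depths add rather than compound and the total overhead stays $n^{o(1)}$. The running time is then governed by the balance between the $\tilde{O}(\Delta)$ low-degree term and the $\CLIQUE$ work on the high-degree expander clusters. A cluster $G[E_i]$ has $n_i \le n$ vertices and, because the ambient subgraph has minimum degree $\Omega(\Delta)$ and the decomposition keeps a constant fraction of each vertex's edges inside its cluster, $m_i = \Omega(n_i \Delta)$ internal edges; hence one step of the simulated $\CLIQUE$ algorithm --- which in a $\CLIQUE$ on $n_i$ nodes uses $\Theta(n_i^2)$ total bandwidth --- is carried out over only $\Theta(m_i)$ bandwidth and thus costs $L = O(n_i^2/m_i) = O(n/\Delta)$ invocations of the routing primitive, i.e.\ $O(n/\Delta)\cdot n^{o(1)}$ rounds. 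For detection and counting, the $\CLIQUE$ matrix-multiplication routine runs in $\tilde{O}(n_i^{1-2/\omega})$ such steps, so a cluster costs $\tilde{O}(n_i^{1-2/\omega})\cdot O(n/\Delta)\cdot n^{o(1)} \le n^{1-2/\omega}\,(n/\Delta)\,n^{o(1)}$ rounds; equating with $\Delta$ gives $\Delta = n^{1-1/\omega}$ and a total of $n^{1-1/\omega+o(1)} < O(n^{0.58})$, and counting is exact because the matrix-multiplication routine counts while the recombination over triangle types is inclusion--exclusion. For enumeration the $\CLIQUE$ listing routine runs in $\tilde{O}(n_i^{1/3})$ steps, so a cluster costs $\tilde{O}(n_i^{1/3})\cdot O(n/\Delta)\cdot n^{o(1)}$; with $\Delta = n^{2/3}$ both this and the low-degree term are $n^{2/3+o(1)}$.

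The main work --- and the reason this does not recover the $\tilde{O}(n^{1/3})$ randomized bound of~\cite{changS19} --- is verifying that the subpolynomial overheads genuinely remain subpolynomial. One must check that (i) every randomized step of~\cite{ChangPZ19} is one of the two replaced primitives, in particular that all intra-cluster load balancing is performed by the routing primitive rather than by ad hoc random hashing; (ii) the triangles that span two clusters or touch the leftover set are still caught deterministically, by a constant number of additional routing rounds plus the single leftover branch, so the recursion stays a chain; and (iii) the product and sum of the relevant factors --- the $2^{O(\log^{2/3}n\log^{1/3}\log n)}$ from \Cref{thm-routing-general}, $\phi^{-1} = n^{o(1)}$, $\epsilon^{-O(1)}$ for a suitably slowly decaying $\epsilon$, and the $\poly\log n$ recursion depth --- stays $n^{o(1)}$. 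Because the deterministic decomposition guarantees only conductance $n^{-o(1)}$ rather than $1/\poly\log n$, the balanced threshold is pushed up to the values above, and this is exactly what yields the $n^{1-1/\omega+o(1)}$ and $n^{2/3+o(1)}$ exponents.
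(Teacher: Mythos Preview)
Your high-level plan---swap in \Cref{thm-det-expander-decomp} and \Cref{thm-routing-general} for the two randomized primitives in~\cite{ChangPZ19}, then rebalance $\Delta$---matches the paper. But there is a real gap in your accounting: the assertion that ``the decomposition keeps a constant fraction of each vertex's edges inside its cluster,'' which you use to conclude $m_i=\Omega(n_i\Delta)$ and hence $L=O(n/\Delta)$, is not a property of an $(\epsilon,\phi)$-expander decomposition. The guarantee is only on the \emph{total} number of removed edges; an individual vertex $v$ may have almost all of $\deg(v)$ in $\Er$ and almost nothing inside its cluster, so your bandwidth bound does not follow. Relatedly, your plan to catch cross-cluster triangles by ``a constant number of additional routing rounds'' is not spelled out and does not obviously work.

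The paper repairs both issues with a single structural step. For each cluster $E_i$ it defines $W_i=\{v\in V_i:\deg_{E_i}(v)\ge \deg_{E\setminus E_i}(v)\}$, then takes $E_i^+$ to be $E_i$ together with all edges incident to $W_i$, and $E_i^-$ to be the edges of $E_i$ with both endpoints in $W_i$. Two short lemmas show that $\Phie(G[E_i^+])\ge \phi/4$ and that $\sum_i|E_i^-|\ge |E|/2$ when $\epsilon\le 1/6$. This is exactly what your argument is missing: routing happens on $G[E_i^+]$, where by construction every surviving high-degree vertex has $\deg_{E_i^+}(v)>d$ (the threshold is applied \emph{inside} $G[E_i^+]$, after the decomposition, not before), giving $L=O(n/d)$ honestly; and any triangle with an edge in $E_i^-$ lies entirely in $E_i^+$, so no extra routing is needed to cross clusters---one recurses on $E\setminus\bigcup_i E_i^-$, which halves the edges and keeps the recursion a chain of depth $O(\log n)$. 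With this fix the balance you computed, $d=n^{1-1/\omega}$ for detection/counting and $d=n^{2/3}$ for enumeration, is exactly what the paper obtains.
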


We show that a minimum spanning tree can be constructed in $\poly\left(\phi^{-1}\right) \cdot n^{o(1)}$ rounds deterministically on graphs $G$ with conductance $\Phie(G) = \phi$, by derandomizing the algorithm of~\cite{GhaffariKS17} using \cref{thm-routing-general}. To the best of our knowledge, before this work, there is no known deterministic algorithm that can take advantage of the fact that the underlying graph has high conductance. Even for the case of $\Phie(G) = \Omega(1)$, the state-of-the-art deterministic algorithm is the well-known one that costs $O(D + \sqrt{n}\log^\ast n)$ rounds, where $D$ is the diameter of the graph~\cite{KUTTEN199840}.

\begin{restatable}[Minimum spanning trees]{theorem}{Rmst}\label{thm-mst}
A minimum spanning tree of a graph $G$ with $\Phie(G) = \phi$ can be constructed deterministically in $\poly\left( \phi^{-1}\right) \cdot n^{o(1)}$ rounds.
\end{restatable}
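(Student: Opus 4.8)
The plan is to \emph{derandomize} the minimum spanning tree algorithm of Ghaffari, Kuhn, and Su~\cite{GhaffariKS17}, which already runs in $\poly(\phi^{-1}) \cdot 2^{O(\sqrt{\log n})}$ rounds on a graph $G$ with $\Phie(G) = \phi$, by isolating its uses of randomness and removing them one at a time. The first use is the expander routing primitive itself; here I would simply substitute \cref{thm-routing-general}, which provides a \emph{deterministic} routing algorithm of round complexity $O(L) \cdot \poly(\phi^{-1}) \cdot 2^{O(\log^{2/3} n \log^{1/3}\log n)}$ for routing $O(L) \cdot \deg(v)$ messages per vertex. The second use is the random load balancing (hashing) that~\cite{GhaffariKS17,GhaffariL2018} employ to set up the auxiliary data structures on top of the routing layer: chiefly (i) a per-component balanced aggregation tree of depth $O(\log n)$ in which each vertex $v$ is responsible for only $O(\deg(v))$ tree nodes, and (ii) the assignment of virtual identifiers to the currently active components so that they can be relabeled after a contraction step. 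A key point is that the Borůvka-style contractions are only \emph{logical}: all communication continues to take place on the fixed graph $G$, whose conductance remains $\phi$ throughout, so \cref{thm-routing-general} stays applicable at every phase.

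With the routing primitive in hand, each Borůvka phase is implemented as follows. Each vertex learns the current component labels of the endpoints of its incident edges by one round of communication along the real edges of $G$, and then proposes, for its own component, the minimum-weight edge leaving that component. Aggregating these proposals to a single minimum per component is the only nontrivial step, since a component may contain up to $n$ vertices, so a naive ``send everything to the leader'' would violate the $\deg(v)$ message budget. Instead I would build, once per phase, a balanced aggregation forest: rank the vertices of each component by identifier (a primitive that costs a polylogarithmic number of routing calls, following~\cite{GhaffariL2018}), lay them out on the leaves of a balanced binary tree of depth $O(\log n)$, and charge the $O(|C|)$ internal tree nodes of a component $C$ to its vertices proportionally to degree (possible since the number of nodes is at most $\vol(C) = \sum_{v \in C} \deg(v)$). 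A single min-scan up the forest followed by a broadcast down then costs $O(\log n)$ routing calls. Relabeling the surviving components (at most half as many) and propagating the new labels is handled by the same ranking/aggregation machinery. Every primitive used here is deterministic once routing is deterministic, so the whole phase is deterministic.

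For the round complexity, each of the $O(\log n)$ Borůvka phases performs a polylogarithmic number of invocations of the deterministic routing algorithm of \cref{thm-routing-general}, each with $L = O(1)$, plus $O(\log n)$ rounds of communication along real edges; this totals $\polylog n \cdot \poly(\phi^{-1}) \cdot 2^{O(\log^{2/3} n \log^{1/3}\log n)} = \poly(\phi^{-1}) \cdot n^{o(1)}$ rounds, as claimed. Correctness is inherited verbatim from the Borůvka framework: with a fixed tie-breaking rule on edge weights, the union over all phases of the chosen minimum outgoing edges is exactly a minimum spanning tree, and no step of the derandomization changes which edges are selected.

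The step I expect to be the main obstacle is the deterministic construction and maintenance of the per-component balanced aggregation structure with a strict $O(\deg(v))$ workload per vertex: one must carefully interleave the within-component ranking, the degree-weighted node layout, and the virtual-identifier bookkeeping so that \emph{every} auxiliary routing call still respects the bandwidth constraint of \cref{thm-routing-general} and so that the number of such calls per phase stays $\polylog n$. Everything else --- plugging in the deterministic routing primitive, keeping contractions logical to preserve conductance, and summing the round complexity --- is routine once this scaffolding is in place.
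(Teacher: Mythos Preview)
Your proposal is correct in outline and reaches the same bound, but it diverges from the paper on the central implementation question of per-component aggregation. The paper does \emph{not} rebuild an aggregation structure each phase via ranking. Instead it maintains, across phases, for each current tree $T_i$ a low-diameter rooted \emph{Steiner tree} $\Ts_i$ whose edges are virtual (realized through the routing primitive of \cref{thm-routing-general}) and whose vertices are drawn only from $V(T_i)$, possibly repeated across layers. The merging step is handled not by pointer jumping on the full hook graph but by first $3$-coloring the auxiliary forest $G^\ast$ on components, taking one color class as an independent set $I^\ast$ that captures at least a third of the oriented edges, and then absorbing each $T_j$ with $T_j \to T_i$ into its unique neighbor $T_i \in I^\ast$ (\cref{lem-find-indset,lem-merge-trees}). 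A careful pairing procedure along $\Ts_i$ merges all the incoming Steiner trees while increasing the height by only $O(\log n)$; after $O(\log n)$ phases the heights are $O(\log^2 n)$, and each phase costs $O(\log^3 n)$ routing calls.

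What each approach buys: your from-scratch aggregation tree requires a deterministic per-component ranking/sorting primitive layered on top of routing, which you correctly flag as the main obstacle and which needs care both to respect the $O(\deg(v))$ budget and to run simultaneously across components; note also that~\cite{GhaffariL2018} is randomized, so you cannot invoke it as-is for this primitive. The paper's incremental Steiner-tree maintenance sidesteps sorting entirely and is self-contained given only the routing black box, at the price of the independent-set trick to keep each merge one-sided. Either route completes the theorem.
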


The techniques used in our deterministic expander decomposition algorithms also enables us to obtain an improved expander decomposition algorithm in the randomized setting. More specifically,   we can afford to have 
$\phi = \Omega\left(\epsilon^{3} \log^{-10} n \right)$ in \cref{thm-rand-expander-decomp}.

\begin{restatable}[Randomized expander decomposition]{theorem}{Rrand}\label{thm-rand-expander-decomp} Let $0 < \epsilon < 1$ be a parameter. An $(\epsilon,\phi)$-expander decomposition of a graph $G=(V,E)$ with  $\phi = 1/ \poly(\epsilon^{-1}, \log n)$ can be computed in $\poly(\epsilon^{-1}, \log n)$ rounds with high probability.
\end{restatable}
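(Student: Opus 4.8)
The plan is to follow the recursive cut-matching framework that underlies the deterministic algorithm of \cref{thm-det-expander-decomp}, but to replace the derandomized sparse-cut/expander primitive by its randomized counterpart, and — crucially — to stop the recursion at depth $O(\log n)$ rather than depth $k$. Recall that the standard approach (as in \cite{changS19}) is: given $G=(V,E)$, run a balanced-sparse-cut-or-certify-expander procedure; if it certifies that $G$ is a $\phi$-expander we are done with a single cluster; if it returns a balanced sparse cut $(A,\bar A)$ of conductance $O(\phi \operatorname{polylog} n)$ we recurse on both sides; and if it returns an unbalanced sparse cut, we remove the small side's edges into $\Er$ and recurse on the large side only. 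The charging argument shows that the total number of edges ever placed in $\Er$ is $\epsilon |E|$ provided the per-level sparsity of the cuts is $\epsilon / (\text{depth} \cdot \operatorname{polylog} n)$. The whole point of this paper's improvement is that the subpolynomial-overhead cut-matching primitive developed for the deterministic result (the \texttt{Cut-or-Expander} subroutine, invoked via $\match$ / expander-routing-style steps) can be run with \emph{polylogarithmic} round complexity and only a $\operatorname{polylog}$-factor loss in conductance when we allow randomization; so we set $\gamma$ effectively to $1/\operatorname{polylog}$, i.e. we do not need the $n^{O(\gamma)}$ term.

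First I would instantiate the randomized balanced-sparse-cut primitive: on an $m$-edge graph, in $\poly(\log n)$ rounds w.h.p., either certify conductance $\ge \phi_0 := 1/\poly(\log n)$ or output a cut of conductance $\le \phi_0$ with the smaller side having volume $\ge \Omega(m/\poly\log n)$ if one exists of conductance $\ll \phi_0$. This is the cut-matching game (Khandekar–Rao–Vazirani style, implemented distributedly via the flow/embedding machinery already set up earlier in the paper for the deterministic algorithm), where the number of rounds of the game is $O(\log^2 n)$ and each round costs $\poly\log n$ distributed rounds, giving total $\poly\log n$; randomness enters only through the random-projection vectors used to pick the bisection in each round of the game. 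Second, I would set up the recursion tree: at each node, apply the primitive with sparsity parameter $\phi_i = \epsilon/\Theta(\log^{c} n)$ for an appropriate constant $c$; the three cases (expander / balanced cut / unbalanced cut) as above. Third, bound the recursion depth: in the unbalanced case the large side retains $\ge (1-1/\poly\log n)$ fraction of the volume but we remove at most a $\phi_i$-fraction of its boundary, and in the balanced case both sides shrink by a constant factor, so along any root-to-leaf path the number of balanced cuts is $O(\log n)$ and the number of consecutive unbalanced cuts between them is $O(\operatorname{polylog} n)$ (each removing a geometrically controlled amount), hence depth $\poly\log n$ and total round complexity $\poly\log n \cdot \poly\log n = \poly(\epsilon^{-1},\log n)$. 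Fourth, the charging argument: each edge is charged to $\Er$ at most $O(\log n)$ times over its lifetime (once per balanced-cut ancestor, plus a telescoping charge for unbalanced cuts), and each charge is a $\phi_i$-fraction of a boundary, so $|\Er| \le \epsilon |E|$ after choosing the constant $c$ in $\phi_i$; the conductance guarantee $\phi = \Omega(\epsilon^3 \log^{-10} n)$ comes from the $\phi_0 = 1/\poly\log n$ certification at the leaves composed with the $\poly\log$ losses of the cut-matching embedding and the $\epsilon/\poly\log$ scaling.

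The main obstacle I expect is controlling the depth of the recursion in the presence of unbalanced cuts while keeping the union bound over all w.h.p.\ events valid. A naive recursion that only removes the smaller side and recurses on the larger could have polynomial depth (e.g.\ each step peels off a $1/\poly\log n$ fraction of volume), which would blow up both the round complexity and the charging. The fix, which must be made carefully, is the standard ``early termination'' trick: once the smaller side of every sparse cut is below a $1/\poly\log n$ fraction of the \emph{original} volume of that recursion subtree's root, we can afford to remove all such small sides and then the remaining graph is certified — or, alternatively, iterate the unbalanced-cut extraction $\poly\log n$ times at a single recursion node (a ``trimming''/``sparse cut peeling'' loop) so that a single node's work removes at most $\epsilon/\log n$ fraction of that node's edges and leaves an expander, making the recursion tree have depth $O(\log n)$ driven purely by balanced cuts. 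Getting the bookkeeping of this inner loop consistent with the distributed implementation — in particular ensuring the $\poly\log n$ iterations compose to $\poly\log n$ rounds and that the w.h.p.\ bounds survive a union bound over $\poly(n)$ invocations (achievable since each failure probability is $1/\poly(n)$ with a tunable polynomial) — is the delicate part; the rest is a direct adaptation of the deterministic construction with $n^{o(1)}$ replaced by $\poly\log n$ everywhere.
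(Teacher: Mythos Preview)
Your proposal has a genuine gap at exactly the point you flag as the ``main obstacle'': handling the unbalanced-cut case. You propose to resolve it via an inner ``trimming'' / ``sparse cut peeling'' loop, but no efficient distributed implementation of expander trimming is known---indeed, bypassing trimming is one of the paper's central technical contributions (see the discussion in the Technical Overview and the remark following \cref{lem-expander-from-random-walk}). Your early-termination alternative (``remove all such small sides and then the remaining graph is certified'') is also incorrect as stated: removing a union of small sparse cuts from $G$ does not certify that the remainder is an expander without precisely the trimming step you are trying to avoid.

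The paper's route is structurally different. The randomized primitive (\cref{lem-rand-cutmatch-main-bounded-deg}) does \emph{not} return one of three outcomes; it returns \emph{both} a cut $C$ \emph{and} an expanding subset $W$, with the coupling $|C|\ge |V\setminus W|/8$. Hence there is no unbalanced-cut case at all: either $W$ already covers a $(1-\beta)$-fraction of $V$ (expander output), or automatically $|C|=\Omega(\beta|V|)$ (balanced cut, recurse on both sides). The recursion depth is then $O(\epsilon^{-1}\log n)$ with no further work. Achieving this coupling is the new idea: after the non-stop RST cut-matching game, the paper extracts $W$ not by trimming $V\setminus C$ but via \cref{lem-expander-from-random-walk}---identifying ``good'' vertices on which the $F_H$-random walk mixes, embedding a random graph on them along those walks, and taking $W$ to be all vertices touched by the embedding paths (so $W$ need not be contained in $V\setminus C$). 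This is what replaces trimming, and it is the step your proposal is missing. A secondary omission: the cut-matching round complexity scales with the Steiner-tree diameter $D$, so the outer loop in \cref{sect-expander-decomposition-main} first applies the low-diameter decomposition of \cref{thm-low-diam-decomp,thm-low-diam-decomp-mod} before each call to the primitive; without this preprocessing the $\poly(\epsilon^{-1},\log n)$ round bound would not follow.
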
 

 This is the \emph{first} polylogarithmic-round distributed algorithm for expander decomposition with $\phi = 1/\poly \log n$. 
 The previous algorithm~\cite{changS19} needs $n^{\Omega(1)}$ rounds if $\phi = 1 / \poly\log n$.
In fact, 
\Cref{thm-rand-expander-decomp} attains better round complexity than the algorithm of~\cite{changS19} in all regimes of parameters. 
Specifically, for any given positive integer $k$, the  algorithm of~\cite{changS19} computes an expander decomposition with $\phi = 1/\poly(\epsilon^{-1},\log n)^{2^{O(k)}}$ 
in $\poly(\epsilon^{-1},\log n)^{2^{O(k)}}\cdot n^{2/k}$ 
rounds, where the exponent $2^{O(k)}$ is enormous even for small constant $k$.
We note, however, that the expander decomposition of~\cite{changS19} has an additional guarantee that each cluster $G[E_i]$ in the decomposition is a \emph{vertex-induced} subgraph. 

\subsection{Preliminaries}\label{sect-prelim}

We  present the graph terminologies     used in this paper.

\paragraph{Basic graph notation.} 
The parameter $n$ always denotes the number of vertices, and the parameter $\Delta$ always denotes the maximum degree. We write ``with high probability'' to denote a success probability of $1 - 1/\poly(n)$.
For any graph $H=(V',E')$, we write $H[S]$ to denote the subgraph of $H$ induced by $S$, where $S$ can be an edge set $S \subseteq E'$ or a vertex set $S \subseteq V'$.

For each vertex $v$, denote $N(v)$ as the set of neighbors of $v$. We write $\dist(u,v)$ to denote the distance between $u$ and $v$. 
For a subset $S \subseteq V$, denote by $E(S)$ the set of all edges with both endpoints in $S$. Similarly, $E(S_1, S_2)$ is the set of all edges between $S_1$ and $S_2$. These terms depend on the underlying graph $G$, which appears subscripted if not clear from context.


\paragraph{Embeddings.}
An \emph{embedding} of a graph $H=(V',E')$ into a vertex set $S \subseteq V$ of $G=(V,E)$ with \emph{congestion} $c$ and \emph{dilation} $d$ consists of the following.
\begin{itemize}
    \item A bijective mapping $f$ between the vertices of $V'$ and $S$.
    \item A set of paths $\PP = \{ P_e \ | \ e \in E' \}$, where  path $P_e$ is a path in $G$ whose two ends are  $f(u)$ and $f(v)$.
    \item Each path $\PP$ has length at most $d$, and each edge $e \in E$ appears in at most $c$ paths in $\PP$. 
\end{itemize}


\paragraph{Steiner trees.} A \emph{Steiner tree} $T$ for a vertex set $S$ is a tree whose leaf vertices are exactly $S$. In this paper we often need to deal with graphs $H=(V',E')$ that have high diameter or are disconnected, but they will be supplied with a Steiner tree $T'$ for $V'$ with small diameter so that the vertices in $V'$ are able to communicate efficiently. Throughout this paper, unless otherwise stated, $D$ denotes the diameter of the Steiner tree of the current graph, not the diameter of the current graph.

\paragraph{Conductance.} Consider a graph $G = (V,E)$. For a vertex subset $S$, we write $\vol(S)$ to denote $\sum_{v \in S} \deg(v)$. 
Let $\partial(S) = E(S, V \setminus {S})$ be the set of edges $e = \{u,v\}$ with $u \in S$ and $v \in V\setminus S$.  The  \emph{conductance} of a cut $S$ is defined as \[\Phie(S) = \frac{|\partial(S)|}{\min\{\vol(S), \vol(V \setminus S)\}}.\]
For the special case of $S = \emptyset$ and $S = V$, we set $\Phie(S) = 0$ .
The \emph{conductance} of a graph $G$ is  \[\Phie(G) = \min_{S \subseteq V  \ \text{s.t.} \  S \neq \emptyset \ \text{and} \  S \neq V} \Phie(S).\] 
In other words, $\Phie(G)$ is the minimum value of $\Phie(S)$ over all non-trivial cuts $S \subseteq V$.


\paragraph{Sparsity.} The  \emph{sparsity} of a cut $S$ is defined as \[\Phiv(S) = \frac{|\partial(S)|}{\min\{ |S|, |V \setminus S|\}}.\]
For the special case of $S = \emptyset$ and $S = V$, we set $\Phiv(S) = 0$ .
The \emph{sparsity} of a graph $G$ is  \[\Phiv(G) = \min_{S \subseteq V  \ \text{s.t.} \  S \neq \emptyset \ \text{and} \  S \neq V} \Phiv(S).\] 
In other words, sparsity is a variant of conductance where the volume of a vertex set is measured by its cardinally. Note that sparsity is also commonly known as \emph{edge expansion}. Sparsity and conductance differs by a factor of at most $\Delta$.


\paragraph{Inner and outer sparsity.}
Consider a partition $\VV = \{V_1, V_2, \ldots, V_x\}$ of $V$ for a graph $G=(V,E)$.
We say that a cut $S$ \emph{respects} $\VV$ if for each $V_i \in \VV$, either $V_i \subseteq S$ or $V_i \subseteq V \setminus S$. The \emph{outer sparsity} $\vPhiOut{G}{\VV}$ of  $G$ is the minimum of $\Phiv_G(S)$ over all cuts $S$ that respects $\VV$.  The \emph{inner sparsity} $\vPhiIn{G}{\VV}$ of  $G$ is the minimum of $\Phiv(G[V_i])$ over all $V_i \in \VV$. 

\paragraph{Expander split graphs.}
The \emph{expander split} $\Gsp=(\Vsp, \Esp)$ of $G=(V,E)$ is constructed as follows.
\begin{itemize}
    \item For each $v \in V$, construct a $\deg(v)$-vertex expander graph $X_v$ with $\Delta(X_v) = \Theta(1)$ and $\Phi(X_v) = \Theta(1)$.
    \item For each vertex $v \in V$,  consider an arbitrary ranking of its incident edges. Denote $r_v(e)$ the rank of $e$ at $v$.
    For each edge $e=\{u,v\} \in E$, add an edge linking the $r_u(e)$th vertex of $X_u$ and the $r_v(e)$th vertex of $X_v$.
\end{itemize}

The concept of expander split graphs and the inner and outer sparsity are from~\cite{chuzhoy2019deterministic}.
Note that in the distributed setting, $\Gsp$ can be simulated in $G$ with no added cost. See \cref{sect-conductance} for properties of expander split graphs. In particular, $\Phiv(\Gsp)$ and $\Phie(G)$ are within a constant factor of each other.

\subsection{Technical Overview}
Throughout the paper, we say that a cut $C \subseteq V$ is \emph{balanced} to informally indicate that $\min\{|V \setminus C|, |C|\}$ or $\min\{ \vol(V \setminus C), \vol(C)\}$ is high, and we say that a graph $G$ is \emph{well-connected} or is an \emph{expander} to  informally indicate that $\Phiv(G)$ or $\Phie(G)$ is high.

The main ingredients underlying our distributed expander decomposition algorithms are efficient randomized and deterministic distributed algorithms solving the following task $\mathcal{P}$. See \cref{lem-rand-cutmatch-main-bounded-deg,lem-det-cutmatch-main} for the precise specifications of the task $\mathcal{P}$ that we use. 
\begin{description}
\item[Input:] A bounded-degree graph $G=(V,E)$ and two parameters $0 < \phiemb < \phicut < 1$.
\item[Output:] Two subsets $W \subseteq V$ and $C \subseteq V$ satisfying the following conditions.
\begin{description}
    \item[Expander:] The induced subgraph $G[W]$ has  $\Phiv(G[W]) \geq \phiemb$.
    \item[Cut:] The cut $C$ satisfies  $0 \leq |C| \leq |V|/2$ and $\Phiv(C) \leq \phicut$.
    \item[Balance:] Either one of the following  is met.
    \begin{itemize}
        \item $|C| = \Omega(1) \cdot |V|$, i.e., $C$ is a \emph{balanced} cut.
        \item $|V \setminus (C \cup W)|$ is \emph{small}.  
    \end{itemize}
\end{description}
\end{description}
Note that the requirement $\Phiv(G[W]) \geq \phiemb > 0$ implies that $G[W]$ must be a connected subgraph of $G$ if $W \neq \emptyset$, and it is possible that  $G[C]$ is disconnected.

\paragraph{Randomized sequential algorithms.} We first review existing sequential algorithms solving  $\mathcal{P}$.
The  task $\mathcal{P}$ can be solved using a technique called the \emph{cut-matching game}, which was
  first introduced by Khandekar,   Rao,  and   Vazirani~\cite{KRV09}. 
The goal of a cut-matching game on $G= (V,E)$ is to either find a \emph{small-congestion embedding} of a well-connected graph $H$ into $V$, certifying that $G$ is also well-connected, or to find a sparse cut $C \subseteq V$.  The game proceeds in iterations.
Roughly speaking, in each iteration, the cut player uses some strategy to produce two disjoint subsets $S\subseteq V$ and $T \subseteq V$ with $|S| \leq |T|$, and then the matching player tries to embed a \emph{matching} between $S$ and $T$ that \emph{saturates all vertices} in $S$ with \emph{small congestion}  to the underlying graph $G$. If the matching player fails to do so because such a small-congestion matching embedding does not exist, then there must be a sparse cut $C$ in  $G$ separating the unmatched vertices in $S$ and the unmatched vertices in $T$.  If the matching player is always successful, then the strategy for the cut player guarantees that after $\poly \log n$ iterations, the union of all  matchings found by the matching player forms a graph $H$ with sparsity $\Phiv(H) = \Omega(1)$, certifying that $G$ itself is well-connected.

The above algorithm does not solve $\mathcal{P}$ yet. To facilitate discussion, let $G$ be a graph that contains a well-connected subgraph $G[W]$ with $|W| = (1-\beta)|V|$ and a sparse cut $C$ with $|C| = \Omega(\beta) \cdot |V|$, for some parameter $0 < \beta < 1$. Intuitively, the presence of the sparse cut $C$ implies the non-existence of a well-connected subgraph $G[W']$ of $G$ with  $|W| = (1- o(\beta))|V|$, and so $G$ itself is not well-connected. If we apply the above cut-matching game algorithm on $G$, then it will return us some sparse cut $C'$, as the matching player will at some point fail to embed a large enough matching. However, there is no guarantee on the balance of the cut $C'$, and $|C'|$ can be arbitrarily small.

R\"{a}cke,  Shah, and T\"{a}ubig~\cite{RST14} considered a variant of the  cut-matching game that deals with this issue. The main difference is that in the RST cut-matching game, we continue the cut-matching game on the remaining part of the graph even if a sparse cut $C_i$ is found in an iteration $i$, unless the union of all sparse cuts $C_1 \cup C_2 \cup \cdots \cup C_i$ found so far already has size $\Omega(|V|)$.  
Hence there are two possible outcomes of the RST cut-matching game. 
If the RST cut-matching game stops early, then the output is a sparse cut $C$ of size  $\Omega(|V|)$, which is good, as we can set $W =\emptyset$ to satisfy the requirements for $\mathcal{P}$. 
Otherwise, the output is a sparse cut $C$ together with a small-congestion embedding of the union of all matchings  $H = M_1 \cup M_2 \cup \cdots \cup M_\tau$ that the matching player found. 
If $C = \emptyset$, then as discussed earlier, the small-congestion embedding of $H$ certifies that $G$ itself is well-connected, then we can set $W = V$  to satisfy the requirements for $\mathcal{P}$. 

If $C \neq \emptyset$, then the embedding of $H$ does not guarantee anything about $\Phiv(G)$.
To deal with this issue,
Saranurak and Wang~\cite{SaranurakW19} showed that $G[V \setminus C]$ is actually {nearly} an expander in the following sense.
A subgraph $G[U]$ of $G$ induced by $U \subset V$ is said to be \emph{nearly} a $\phi$-expander if for all $S \subseteq U$ with $0 < \vol(S) \leq \vol(U) / 2$, we have $|\partial_{G}(S)| \geq \phi \vol(S)$. Note that if we  change the requirement from $|\partial_{G}(S)| \geq \phi \vol(S)$ to  $|\partial_{G[U]}(S)| \geq \phi \vol(S)$, then $G[U]$ would have been a $\phi$-expander.
Moreover, given such a subgraph $G[U]$, they presented an efficient sequential \emph{expander trimming} algorithm that removes a small fraction $U' \subseteq U$ of the vertices in $U$ in such a way that $\Phie(G[U \setminus U']) = \Omega(\phi)$. Applying this expander trimming algorithm to $G[V \setminus C]$, we can obtain a well-connected subgraph $G[W]$, and $C$ and $W$ together satisfy the requirement for the task $\mathcal{P}$. 

\paragraph{Deterministic sequential algorithms.} 
The above sequential algorithm for $\mathcal{P}$ is \emph{randomized} because the strategy of the cut player in the cut-matching game of~\cite{KRV09,RST14} is inherently randomized. Recently,  
 Chuzhoy et al.~\cite{chuzhoy2019deterministic} designed an efficient \emph{deterministic} sequential algorithm for  $\mathcal{P}$ by
considering a different cut-matching by   Khandekar et al.~\cite{khandekar2007cut}, where the cut player can be implemented recursively and deterministically.

In the KKOV cut-matching game~\cite{khandekar2007cut}, the strategy of the cut player in iteration $i$ is to simply find a sparse cut $S \subseteq V$ that is \emph{as balanced as possible} in the graph $H_{i-1} = M_1 \cup M_2 \cup \cdots \cup M_{i-1}$ formed by the union of all matchings found so far, and set $T = V \setminus S$.
Observe that the output cut $C$ of the task $\mathcal{P}$ is also \emph{approximately} as balanced as possible, as the output well-connected subgraph $G[W]$ implies that we cannot find a cut $C'$ that is significantly more balanced than $C$ and significantly sparser than $C$ at the same time. This gives rise to a possibility of solving $\mathcal{P}$ \emph{recursively}.
To realize this idea, the approach taken in~\cite{chuzhoy2019deterministic} is to decompose the current vertex set $V$ into $k$ subsets $\VV = \{V_1, V_2, \ldots, V_k\}$ of equal size, and then run the KKOV cut-matching game simultaneously on each part of $\VV$, and so the cut player can be implemented by solving $\mathcal{P}$ recursively on instances of size $O(n/k)$.

Similar to the case of~\cite{KRV09,RST14}, during the process, it is possible that in some iteration, the matching for some part $V_i \in \VV$ returned by the matching player does not saturate all of $S$. The approach taken in~\cite{chuzhoy2019deterministic} is to consider some threshold $0 < \beta < 1$. If adding at most $\beta |V|$ \emph{fake edges} to the graph is enough to let the cut-matching games in all parts to continue, then we add them to $G$. Otherwise, we can obtain a sparse cut $C$ of size $\Omega(\beta) \cdot |V| \leq |C| \leq |V|/2$, which is already sufficiently balanced.
Suppose we are always able to add a small number of fake edges to let the cut-matching games in all parts continue in each iteration, then in the end we obtain a small-congestion \emph{simultaneous embedding} of an expander $H_i$ to each part $V_i \in \VV$ in the \emph{augmented graph} $G'$ which is formed by adding a small number of fake edges to the original graph $G$.
Now, select $C$ to be a sparse cut of $G$ that \emph{respects} $\VV$  and is as balanced as possible. If $C$ is already sufficiently balanced, then we can return $C$. Otherwise, $|C|$ is small, and so we can add a small amount of fake edges to $G'$ to make it a well-connected graph $G''$. 

Now we face a situation that is similar to the case of~\cite{KRV09,RST14} discussed earlier. That is, we have obtained something that is almost an expander, and we just need to turn it into an expander. 

Specifically, we are given a graph $G$ and a small set of fake edges $E^\ast$ such that the graph resulting from adding all these fake edges to $G$ is well-connected. Saranurak and Wang~\cite{SaranurakW19} showed that in such a situation, there is an efficient deterministic   \emph{expander pruning} algorithm that is able to find a well-connected subgraph $G[W]$ of $G$ such that $\vol(V \setminus W)$ is small.

\paragraph{Distributed algorithms.}
In order to apply these approaches to the distributed setting, we need to deal with the following challenges.
\begin{itemize}
    \item We need an efficient distributed algorithm to implement the matching player in cut-matching games.
    \item There is no known efficient distributed algorithms for expander trimming and  expander pruning. Moreover, we cannot afford to add fake edges to the distributed network $G$.
\end{itemize}

To implement the matching player efficiently, it suffices to be able to solve the following problem efficiently. Given two disjoint sets $S \subseteq V$ and $T \subseteq V$, find a \emph{maximal} set of vertex-disjoint $S$-$T$ paths of length at most $d$. In the randomized setting, this problem can be solved in $\poly(d, \log n)$ rounds with high probability using the augmenting path finding algorithm in the distributed approximate matching algorithm of Lotker, Patt-Shamir, and Pettie~\cite{LotkerPP08}.

In the deterministic setting, we are not aware of an algorithm that can solve this problem in $\poly(d, \log n)$ rounds.\footnote{Cohen~\cite{Cohen95} showed that the $(1-\epsilon)$-approximate maximum flow problem on a directed acyclic graph with depth $r$ can be solved in PRAM deterministically in $\poly(r, \epsilon^{-1}, \log n)$ time with $O(|E|/r)$ processors, and this algorithm can be adapted to the $\CONGEST$ model with the same round complexity $\poly(r, \epsilon^{-1}, \log n)$  if \emph{fractional flow} is allowed. The part of Cohen's algorithm that rounds a fractional flow into an integral flow does not seem to have an efficient implementation in $\CONGEST$, and hence we are unable to use this algorithm.} However, if we allow a small number of \emph{leftover} vertices, then we can solve the problem efficiently using the approach of Goldberg, Plotkin, and Vaidya~\cite{GPV93}. Specifically, given a parameter $0 < \beta < 1$, in  $\poly\left(d, \beta^{-1}, \log n\right)$ rounds we can find a set $B$ of size $|B|  \leq \beta |V \setminus T|$ and a set of vertex-disjoint $S$-$T$ paths $\{P_1, P_2, \ldots, P_k\}$ of length at most $d$ in such a way that any $S$-$T$ path must vertex-intersect either $B$ or some path in $\{P_1, P_2, \ldots, P_k\}$. We  will revise the analysis of the KKOV cut-matching game to show that it still works well if the matching returns by the matching player only saturates a constant fraction of $S$, and so we are allowed to use flow algorithms that have leftover vertices.

The more crucial challenge is the lack of efficient distributed algorithms for expander trimming and  expander pruning. To get around this issue, we will consider a different approach of extracting a well-connected subgraph from the embedding of matchings in the cut-matching game. In the randomized setting that is based on the cut-matching game of~\cite{KRV09,RST14,SaranurakW19}, we will show that we can identify a large subset $U \subseteq V \setminus C$ of \emph{good vertices} so that if we start from $u \in U$ a certain random walk  defined the by the matchings $M_1, M_2, \ldots, M_{\tau}$, then the probability that the walk ends up in $v \in U$ is $\Theta(1/n)$, for all $u,v\in U$. Hence we can use random walks to embed a small-degree random graph to $U$. If we take $W$ to be the set of all vertices involved in the embedding, then $G[W]$ is well-connected because the embedding has small congestion and small dilation. Here we need the embedding to have \emph{small dilation} in order to bound the sparsity of $G[W]$. This is different from the previous works~\cite{KRV09,RST14,SaranurakW19} where we only need the embedding to have small congestion.
Note that $W$ not only includes   vertices in $V \setminus C$, but it might include vertices in $C$ as well. In contrast, the subset $W$ obtained using expander trimming~\cite{SaranurakW19} is guaranteed to be within $V \setminus C$.

 In the deterministic setting that is based on the approach  of~\cite{chuzhoy2019deterministic,khandekar2007cut}, we cannot afford to add fake edges in a distributed network. 
 To get around this issue, we will carry out the  simultaneous execution of KKOV cut-matching games using the ``non-stop'' style of~\cite{RST14} in the sense that we still continue the cut-matching games on the remaining parts of $\VV$ even if the cut-matching games in some parts of $\VV$ have already failed. 
 Specifically, during the simultaneous execution of cut-matching games, 
 we maintain a cut $C$ in such a way that if the cut-matching game in some part $V_i$ fails, then $C$ includes a constant fraction of vertices in $V_i$. Therefore, in the end, either constant fraction of the cut-matching games are successful, or $\Omega(1) \cdot |V| \leq |C| \leq |V|/2$. i.e., $C$ is a sufficiently balanced sparse cut.
 For each part $V_i$ that is successful, it is guaranteed to have a subset $U_i \subseteq V_i$ that is embedded a well-connected graph $H_i$ and $|U_i| \geq (2/3) |V_i|$. Moreover, the overall simultaneous embedding has small congestion and dilation. Now the union of $U_i$ over all successful parts $V_i$ constitutes a constant fraction of vertices in $V$. We apply the deterministic flow algorithm based on~\cite{GPV93} described earlier to enlarge and combine these expander embeddings. During this process, it is possible that a sparse cut is found, and also because of the nature of the approach of~\cite{GPV93}, there might be a small number of leftover vertices that we cannot handle, but we can show that our algorithm always end up with a sparse cut $C$ with $|C| \leq |V|/2$ and a well-connected subgraph $G[W]$ in one of the following situations.
 \begin{itemize}
     \item $|V|/100 \leq |C|$ and $W = \emptyset$, i.e., $C$ is already a balanced cut.
     \item $V = C \cup W$.
     \item $C = \emptyset$ but $|V \setminus W|$ is small, i.e., there is only a small number of leftover vertices that we cannot handle.
 \end{itemize}
This output is already good enough for our purpose.

\paragraph{Expander routing.} 
At a very high level, our deterministic expander routing algorithm follows a similar approach of the randomized routing algorithm of Ghaffari, Kuhn, and Su~\cite{GhaffariKS17}. Consider a distributed network $G=(V,E)$ with high conductance, where each vertex $v$ is a source and a sink for at most $\deg(v)$ messages. The GKS routing algorithm works as follows. First, simulate a $2|E|$-vertex $O(\log n)$-degree random graph $G_0$ to $G$, where each vertex $v$ in $G$ is responsible for $\deg(v)$ vertices in $G_0$, and the edges in $G_0$ are constructed by performing lazy random walks. Next, partition the vertices of $G_0$ into $k$ parts $\VV = \{V_1, V_2, \ldots, V_k\}$ of equal size,  and simultaneously embed an $O(\log n)$-degree random graph $H_i$ to each part $V_i$, with small congestion and dilation, also using lazy random walks. The routing is performed recursively by first routing all the messages between parts in $\VV = \{V_1, V_2, \ldots, V_k\}$, and then recursively route the messages inside each part to their destinations.

Recall that an intermediate product of our deterministic distributed expander decomposition algorithm is a simultaneous embedding of a high-conductance graph $H_i$ into $U_i \subseteq V_i$ with $|U_i| \geq (2/3)|V_i|$ for the parts  $V_i \in \VV = \{V_1, V_2, \ldots, V_k\}$ where the cut-matching game does not return a balanced sparse cut. If the underlying graph is guaranteed to be well-connected, then every cut-matching game cannot return a balanced sparse cut. Therefore, we have a simultaneous expander embedding for all parts in $\VV$.

In order to apply the recursive approach of~\cite{GhaffariKS17}  using this simultaneous expander embedding, we  need to handle the leftover vertices in $V_i \setminus U_i$, as we can only do recursive calls on expanders, and the nature of our approach is that for each part $V_i$, we can only embed an expander on a subset $U_i \subseteq V_i$, and there are always some remaining vertices. By increasing the round complexity, it is possible to reduce the size of $V_i \setminus U_i$, but we cannot afford to make it an empty set.
To deal with these leftover vertices, for each $v \in V_i \setminus U_i$, we will find another vertex $v^\star \in U_i$ that serves as the \emph{representative} of $v$ in all subsequent recursive calls. For each leftover vertex $v$ and its representative $v^\star$, we will establish a communication link between them. In other words, we embed a matching between $V_i \setminus U_i$ and $U_i$ that saturates all vertices in $V_i \setminus U_i$, for each $1 \leq i \leq k$. 

The communication links between $v$ and $v^\star$ and also the communication links between different parts $U_i$ and $U_j$ are based on  deterministic flow algorithms using the approach of~\cite{GPV93}.
The issue that there are always some unmatched source vertices in $S$ in the approach of~\cite{GPV93} can be resolved for the case of $T = V \setminus S$ by increasing the congestion and round complexity from $\poly(\log n)$ to $2^{O(\sqrt{\log n})}$, see \cref{lem-cut-match-det-noleftover}.  Note that we are allowed to have super-polylogarithmic congestion here, as this congestion has no effect on the expander embedding, and so it has no effect on the cost of simulation for recursive calls.




\subsection{Organization}

In \cref{sect-expander-decomposition-main}, we present  our main algorithm for distributed expander decomposition in both the randomized and the deterministic models, using a certain balanced sparse cut algorithm as a black box. In \cref{sect-rand-expander-decomp}, we present our randomized sparse cut algorithm. In \cref{sect-det-expander-decomp}, we present our deterministic sparse cut algorithm. In \cref{sect-routing}, we present our algorithm for deterministic routing in high-conductance graphs. In \cref{sect-derand}, we show two applications of our results in derandomizing distributed graph algorithms.

\paragraph{Technical lemmas.} In  \cref{sect-steiner}, we provide all the basic communication primitives of Steiner trees needed in this paper.  In \cref{sect-low-diam-decomp}, we review prior work on low-diameter decompositions.  In  \cref{sect-conductance},   we provide tools for analyzing the conductance and sparsity of graphs and cuts. In \cref{sect-flow}, we provide distributed algorithms for variants of maximal flow problems and show how  sparse cuts can be obtained if we cannot find a desired  solution for the given flow problems.  In \cref{sect-sparse-cut-tools}, we provide tools  for sparse cut computation, including a technique that allows us to avoid reusing Steiner trees in recursive calls in our deterministic sparse cut algorithm. 
In \cref{sect-potential}, we present the analysis of potential functions in cut-matching games.

\section{Distributed Expander Decomposition}\label{sect-expander-decomposition-main}

The goal of this section is to present the our main algorithm for distributed expander decomposition, which uses the  following task $\ecutemb{\ephicut}{\ephiemb}{\beta}$ as a subroutine.

\begin{definition}[Conductance-based cut or expander]\label{def-sparse-cut-e}
Let $G=(V,E)$ be any graph.  The task \[\ecutemb{\ephicut}{\ephiemb}{\beta}\] asks for either one of the following. 
\begin{description}
    \item[Expander:] A subset  $E^\ast \subseteq E$ such that the induced subgraph $G[E^\ast]$ has  $\Phie(G[E^\ast]) \geq \ephiemb$ and $|E \setminus E^\ast| \leq \beta|E|$.
    \item[Cut:] A cut  $C \subseteq V$ satisfying $\Omega(\beta \vol(V)) \leq \vol(C) \leq \vol(V)/2$ and $\Phie(C) \leq \ephicut$.
\end{description}
\end{definition}

The task defined in \cref{def-sparse-cut-e} on $G=(V,E)$ can be reduced to the following task on the expander split graph $\Gsp=(\Vsp, \Esp)$.

\begin{definition}[Sparsity-based cut or expander]\label{def-sparse-cut-v}
Let $G=(V,E)$ be any graph.  The task \[\vcutemb{\phicut}{\phiemb}{\beta}\] asks for either one of the following. 
\begin{description}
    \item[Expander:] A subset  $W \subseteq V$ such that the induced subgraph $G[W]$ has  $\Phiv(G[W]) \geq \phiemb$ and $|V \setminus W| \leq \beta|V|$.
    \item[Cut:] A cut  $C \subseteq V$   satisfying  $\Omega(\beta|V|) \leq |C| \leq |V|/2$ and $\Phiv(C) \leq \phicut$.
\end{description}
\end{definition}

\cref{lem-rand-cutmatch-main-bounded-deg,lem-det-cutmatch-main} imply the following two theorems. Note that the round complexity of \cref{thm-rand-cut-expander} does not depend on $\beta$.

\begin{theorem}[Randomized cut or expander]\label{thm-rand-cut-expander}
The task $\vcutemb{\phicut}{\phiemb}{\beta}$ can be solved with high probability in $O(D) \cdot \poly(\phicut^{-1}, \log n)$ on bounded-degree graphs, with $\phiemb = 1 / \poly(\phicut^{-1}, \log n)$.
The task $\ecutemb{\ephicut}{\ephiemb}{\beta}$ can be solved with high probability in $O(D) \cdot \poly(\ephicut^{-1}, \log n)$ on   general graphs, with $\ephiemb = 1 / \poly(\ephicut^{-1}, \log n)$.
\end{theorem}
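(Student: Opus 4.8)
The plan is to derive \cref{thm-rand-cut-expander} from \cref{lem-rand-cutmatch-main-bounded-deg}, which gives the precise randomized, bounded-degree guarantee for the task $\mathcal{P}$, and then to lift the bounded-degree statement to the edge/general-graph statement via the expander split graph $\Gsp$. First I would set up an RST-style ``non-stop'' cut-matching game on the bounded-degree graph $G=(V,E)$ with target parameters $\phicut$ and $\phiemb$. In each of $O(\poly\log n)$ iterations, the cut player produces disjoint sets $S,T$ with $|S|\le|T|$ using the standard random-projection strategy of \cite{KRV09}, and the matching player attempts to embed, with congestion and dilation $\poly(\phicut^{-1},\log n)$, a large collection of vertex-disjoint short $S$--$T$ paths. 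The matching player is implemented distributively by the augmenting-path routine of \cite{LotkerPP08}, which runs in $\poly(d,\log n)$ rounds (times an $O(D)$ factor for coordination over the Steiner tree) and either saturates a constant fraction of $S$ or exposes a sparse cut separating the unmatched vertices of $S$ from those of $T$. I would re-examine the KRV/RST potential argument (the clean statement is deferred to \cref{sect-potential}) to confirm that $O(\poly\log n)$ iterations still suffice when matchings saturate only a constant fraction of $S$, which is all the distributed flow subroutine guarantees.

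After the game there are two cases. If the union of sparse cuts found along the way already has size $\Omega(\beta|V|)$, the game stopped early; we return that union as the balanced cut $C$ with $\Phiv(C)\le\phicut$, trimming it if necessary so that $|C|\le|V|/2$ while keeping $|C|=\Omega(\beta|V|)$. Otherwise the game produced a small-congestion, small-dilation embedding of the union of matchings $H=M_1\cup\cdots\cup M_\tau$ into $V$, together with a (possibly empty) cut $C$ of size $o(\beta|V|)$. Here is the step that departs from prior work and that I expect to be the crux: since there is no efficient distributed expander-trimming algorithm, I would extract the expander subset $W$ directly from the embedding rather than trimming $G[V\setminus C]$. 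I would identify a large set $U\subseteq V\setminus C$ of \emph{good} vertices such that the random walk driven by $M_1,\dots,M_\tau$, started at any $u\in U$, lands at any $v\in U$ with probability $\Theta(1/n)$; bounding the number of bad vertices is where the balance guarantee of the non-stop game together with an averaging/Markov argument is needed. Using these walks I would embed a constant-degree random graph on $U$ (an expander with high probability), and take $W$ to be the set of all vertices — in $U$ and possibly in $C$ — carried by the embedding paths. Because the embedding has congestion $\poly(\phicut^{-1},\log n)$ and, crucially and unlike \cite{KRV09,RST14,SaranurakW19}, also dilation $\poly(\phicut^{-1},\log n)$, the congestion/dilation-to-sparsity bound of \cref{sect-conductance} yields $\Phiv(G[W])\ge\phiemb=1/\poly(\phicut^{-1},\log n)$, while $|V\setminus W|\le\beta|V|$ since $U$ is a large fraction of $V\setminus C$ and $|C|$ is small.

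Finally, for the edge/general-graph statement I would run nothing new: apply the bounded-degree algorithm to $\Gsp=(\Vsp,\Esp)$, which has maximum degree $\Theta(1)$ and can be simulated on $G$ with no overhead. An expander subset $W\subseteq\Vsp$ translates, using the correspondence between $\Phiv(\Gsp)$ and $\Phie(G)$ from \cref{sect-conductance} and the fact that each gadget $X_v$ is itself a constant expander, into an edge set $E^\ast\subseteq E$ with $\Phie(G[E^\ast])\ge\ephiemb$ and $|E\setminus E^\ast|\le\beta|E|$; and a volume-balanced cut of $\Gsp$ cutting few edges maps back to a cut $C\subseteq V$ with $\Phie(C)\le\ephicut$ and $\Omega(\beta\vol(V))\le\vol(C)\le\vol(V)/2$, where rounding $C$ to respect the gadgets costs only constant factors. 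The round complexity is $\poly\log n$ iterations each costing $O(D)\cdot\poly(\phicut^{-1},\log n)$, and it does not depend on $\beta$ because $\beta$ enters only the stopping rule and the final size accounting, never the per-iteration work. The principal obstacle throughout is the expander-extraction step: making the good-vertex set provably large and the random-walk embedding provably low-dilation, which is exactly what lets us bypass distributed expander trimming.
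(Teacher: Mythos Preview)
Your approach is correct and follows the paper's strategy (RST cut-matching game, random-walk expander extraction in place of trimming, expander split for general graphs), but you are re-deriving \cref{lem-rand-cutmatch-main-bounded-deg} instead of simply invoking it. The paper's proof of \cref{thm-rand-cut-expander} is only a few lines: call \cref{lem-rand-cutmatch-main-bounded-deg} with $\psi=\phicut$ to obtain $W,C$ with the $\beta$-free guarantee $|V\setminus W|/8\le|C|\le|V|/2$ and $\Phiv(C)\le\phicut$; then output $W$ if $|V\setminus W|\le\beta|V|$, else output $C$, which then automatically satisfies $|C|\ge\beta|V|/8=\Omega(\beta|V|)$. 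No $\beta$-dependent stopping rule is needed anywhere---inside the lemma the game halts at the fixed threshold $(3/11)|V|$, and the extraction step (\cref{lem-expander-from-random-walk}) is invoked only when $|C|<|V|/8$; the $\beta$-comparison happens once, after the fact. Your $\beta$-dependent threshold would work too, but it obscures why the running time is genuinely independent of $\beta$.

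One small correction: in the randomized game the matching player (\cref{lem-matching-player-rand}) either matches every vertex of $S$ or places the unmatched ones into that round's cut $C_i$; the ``only a constant fraction of $S$ is saturated'' issue you flag is a feature of the \emph{deterministic} flow routine, not the randomized one, so no re-examination of the RST potential argument is needed here. The general-graph lift via $\Gsp$ and \cref{lem-expander-split-cut}, \cref{lem-expander-split-high-conductance} is exactly as you describe.
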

\begin{proof}
The result for $\vcutemb{\phicut}{\phiemb}{\beta}$ follows from \cref{lem-rand-cutmatch-main-bounded-deg} immediately.
Run the algorithm of \cref{lem-rand-cutmatch-main-bounded-deg} with parameters $\psi = \phicut$, and
let $W$ and $C$ be the output result.
If $|V \setminus W| \leq \beta|V|$, then $W$ satisfies the requirement of $\vcutemb{\phicut}{\phiemb}{\beta}$ with $\phiemb = \Omega(\phicut^{-3} \log^{-10} n)$. Otherwise, $|V \setminus W| > \beta|V|$, and so $C$ satisfies the requirement of $\vcutemb{\phicut}{\phiemb}{\beta}$, as $|V \setminus W|/8 \leq |C| \leq |V|/2$.

To extend this result to the conductance-based setting on general graphs $G=(V, E)$, we use \cref{lem-expander-split-high-conductance,lem-expander-split-cut}. Specifically, we take the split graph $\Gsp=(\Vsp, \Esp)$, and then run the algorithm $\vcutemb{\phicut}{\phiemb}{\beta}$ on $\Gsp$, and let $C'$ or $W'$ be the result. 

Suppose the output is a cut  $C' \subseteq \Vsp$ with  $\Omega(\beta|\Vsp|) \leq |C'| \leq |\Vsp|/2$ and $\Phiv_{\Gsp}(C') \leq \phicut$. Then the $O(D)$-round algorithm of \cref{lem-expander-split-cut} can turn $C'$ into a cut $C \subseteq V$ in $G$ with $\Omega(\beta \vol(V)) \leq \vol(C) \leq \vol(V)/2$ and $\Phie_{G}(C) = O(\phicut)$. 

Suppose the output is a subset  $W' \subseteq \Vsp$ with $\Phiv(\Gsp[W']) \geq \phiemb$ and $|\Vsp \setminus W'| \leq \beta|\Vsp|$.
Then \cref{lem-expander-split-high-conductance} shows that in zero rounds, we can turn $W'$ into a subset $E^\ast \subseteq E$ in $G$ with  $\Phie(G[E^\ast]) = \Omega(\phiemb)$ and $|E \setminus E^\ast| \leq \beta|E|$. 

To summarize, there exist two universal constants $K_1$ and $K_2$ such that $\Phie_{G}(C) \leq K_1 \phicut$  and  $\Phie(G[E^\ast]) \geq \phiemb / K_2$. Therefore, to solve $\ecutemb{\ephicut}{\ephiemb}{\beta}$ on $G$, we just need to solve $\vcutemb{\phicut}{\phiemb}{\beta}$ on $\Gsp$ with parameters  $\phicut = \ephicut  / K_1$ and $\phiemb = K_2 \ephiemb$.
\end{proof}

\begin{theorem}[Deterministic cut or expander]\label{thm-det-cut-expander}
The task $\vcutemb{\phicut}{\phiemb}{\beta}$ can be solved deterministically in $O(D) + (\beta \phicut)^{-O(1)} \cdot 2^{O(\sqrt{\log n \log \log n})}$ rounds on bounded-degree graphs, with $\phiemb = \phicut^{O(1)} \cdot 2^{-O(\sqrt{\log n \log \log n})}$.
The task $\ecutemb{\ephicut}{\ephiemb}{\beta}$ can be solved deterministically in $O(D) + (\beta \ephicut)^{-O(1)} \cdot 2^{O(\sqrt{\log n \log \log n})}$ rounds  on   general graphs, with $\ephiemb = \ephicut^{O(1)} \cdot 2^{-O(\sqrt{\log n \log \log n})}$. 
\end{theorem}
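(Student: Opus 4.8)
The plan is to mirror the proof of \cref{thm-rand-cut-expander}, replacing the randomized primitive (\cref{lem-rand-cutmatch-main-bounded-deg}) with the deterministic one (\cref{lem-det-cutmatch-main}). First I would run the deterministic algorithm of \cref{lem-det-cutmatch-main} on the bounded-degree graph $G=(V,E)$ with the cut parameter $\psi = \phicut$ and the balance/leftover threshold set so that a part that fails contributes a constant fraction of its vertices to the returned cut; by the technical overview, the output is a cut $C$ with $|C| \leq |V|/2$ and $\Phiv(C) \leq \phicut$, together with (when $C$ is not already balanced) a subset $W \subseteq V$ with $\Phiv(G[W]) \geq \phiemb$ for $\phiemb = \phicut^{O(1)} \cdot 2^{-O(\sqrt{\log n \log \log n})}$ and $|V \setminus W|$ small. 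Exactly as in the randomized case, a case split on whether $|V \setminus W| \leq \beta |V|$ decides which of the two deliverables of $\vcutemb{\phicut}{\phiemb}{\beta}$ we hand back: if $W$ is large enough we return $W$; otherwise $|V \setminus W| > \beta|V|$ forces $|C| = \Omega(\beta|V|)$ (using the guarantee that the leftover set is a constant fraction of the complement of $W$, so $|C| \geq \Omega(|V \setminus W|) = \Omega(\beta|V|)$), and we return $C$. The round complexity is $O(D)$ for the Steiner-tree communication plus the $(\beta\phicut)^{-O(1)} \cdot 2^{O(\sqrt{\log n \log \log n})}$ cost of \cref{lem-det-cutmatch-main}, which is where the $\beta$-dependence (absent in the randomized bound) enters, since the deterministic flow subroutine of~\cite{GPV93} pays $\poly(\beta^{-1})$ to control the number of leftover vertices.

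For the conductance-based task $\ecutemb{\ephicut}{\ephiemb}{\beta}$ on a general graph $G=(V,E)$, the reduction is identical to the one in the proof of \cref{thm-rand-cut-expander}: pass to the expander split graph $\Gsp = (\Vsp, \Esp)$, which has bounded degree and can be simulated in $G$ at no cost, run the bounded-degree algorithm above on $\Gsp$ with parameters $\phicut = \ephicut/K_1$ and $\phiemb = K_2 \ephiemb$, and then translate the output back using \cref{lem-expander-split-cut} (turning a sparse cut $C' \subseteq \Vsp$ into a cut $C \subseteq V$ with $\Omega(\beta\vol(V)) \leq \vol(C) \leq \vol(V)/2$ and $\Phie_G(C) = O(\ephicut)$, in $O(D)$ rounds) and \cref{lem-expander-split-high-conductance} (turning a high-sparsity $W' \subseteq \Vsp$ into an edge set $E^\ast \subseteq E$ with $\Phie(G[E^\ast]) = \Omega(\ephiemb)$ and $|E \setminus E^\ast| \leq \beta|E|$, in zero rounds). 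Here one should check that $|\Vsp| = 2|E|$ and $\vol_{\Gsp}(\Vsp) = \Theta(|E|)$, so that the $\beta$-balance on $\Gsp$ translates to $\beta$-balance with respect to $\vol_G$ up to constants, and similarly $|\Esp| = \Theta(|E|)$ so the leftover-edge bound carries over; these are routine.

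The main obstacle I anticipate is not in this theorem per se — it is entirely a corollary-style combination — but in verifying that the deterministic primitive \cref{lem-det-cutmatch-main} actually delivers the three-way output described in the technical overview (namely: $|C| \geq |V|/100$ with $W = \emptyset$; or $V = C \cup W$; or $C = \emptyset$ with $|V \setminus W|$ small) with the claimed round complexity and with $\phiemb$ as stated, and that the ``small'' leftover bound is $\leq \beta|V|$ for a suitable setting, so the case analysis above goes through cleanly. In particular I would need to confirm that when the algorithm stops with a balanced cut it is genuinely in the range $\Omega(\beta|V|) \leq |C| \leq |V|/2$ rather than merely $|C| \geq |V|/100$, which may require massaging the threshold constant or observing that $|V|/100 = \Omega(\beta|V|)$ for all relevant $\beta \leq 1$. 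Assuming \cref{lem-det-cutmatch-main} is stated with these guarantees, the rest is bookkeeping of constants $K_1, K_2$ identical to the randomized proof.
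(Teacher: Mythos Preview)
Your proposal is correct and follows essentially the same approach as the paper. The paper resolves your uncertainty about the leftover guarantee by invoking \cref{lem-det-cutmatch-main} with $\betaleft = \beta/2$, so that when $|V\setminus W| > \beta|V|$ the bound $|V\setminus(C\cup W)| \leq \betaleft|V|$ from \cref{def-det-sparse-cut} immediately gives $|C| \geq (\beta/2)|V|$; the conductance-based extension is then handled verbatim as in the proof of \cref{thm-rand-cut-expander}.
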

\begin{proof}
The result for $\vcutemb{\phicut}{\phiemb}{\beta}$ follows from \cref{lem-det-cutmatch-main} immediately.
Run the algorithm of \cref{lem-det-cutmatch-main} with parameters $\psi = \phicut$ and $\betaleft = \beta/2$, and
let $W$ and $C$ be the output result.
If $|V \setminus W| \leq \beta|V|$, then $W$ satisfies the requirement of $\vcutemb{\phicut}{\phiemb}{\beta}$ with $\phiemb = \phicut^{O(1)} \cdot 2^{-O(\sqrt{\log n \log \log n})}$. Otherwise, $|V \setminus W| > \beta|V|$, and so we have $|C| \geq (\beta/2)|V|$  due to the requirement $|V \setminus (C \cup W)| \leq \betaleft|V| = (\beta/2)|V|$ in \cref{def-det-sparse-cut}. Therefore, 
$C$ satisfies the requirement of $\vcutemb{\phicut}{\phiemb}{\beta}$, as $(\beta/2)|V| \leq |C| \leq |V|/2$. 
The extension to the conductance-based setting on general graphs is the same as the proof of \cref{thm-rand-cut-expander}.
\end{proof}

\paragraph{Low-diameter decompositions.} 
The round complexity of \cref{thm-rand-cut-expander,thm-det-cut-expander} depends on the Steiner tree diameter $D$, and hence we can afford to apply these algorithms only when $D$ is small.
To cope with this issue, our expander decomposition algorithm employs  the following graph decomposition of Rozho\v{n} and Ghaffari~\cite{RozhonG20}.

Given a parameter $0 < \beta < 1$, the deterministic algorithm of \cref{thm-low-diam-decomp}  decomposes the vertex set $V$ into clusters $V = V_1 \cup V_2 \cup \cdots \cup V_x$ in such a way that 
the number of inter-cluster edges is at most $\beta|E|$, and each cluster $V_i$ is associated with a Steiner tree $T_i$ with diameter    $O(\beta^{-1} \log^3 n)$. Each edge $e \in E$ belongs to at most $O(\log n)$  Steiner trees.
The round complexity of \cref{thm-low-diam-decomp}  is $O(\beta^{-2} \log^6 n)$.

\cref{thm-low-diam-decomp-mod} is a variant of \cref{thm-low-diam-decomp} with a higher round complexity $O(D + \beta^{-2} \log^6 n)$ and a better bound on the  number of inter-cluster edges  $\beta \vol(V \setminus V_{i^\ast})$, where $V_{i^\ast}$ is a cluster with the highest volume.

\subsection{Main Algorithm}

Given an algorithm  of $\ecutemb{\ephicut}{\ephiemb}{\beta}$ as a black box, we can construct an $(\epsilon,\phi)$-expander decomposition of a graph $G=(V,E)$ as follows. We use the following parameters
\[
    \ephicut = \Theta(\epsilon), \ \ 
    \ephiemb = \phi, \ \ 
    \beta_1 = \Theta(\epsilon), \ \
    \beta_2 = \Theta(\epsilon \log^{-1} n), \ \  \text{and} \ \
    \beta_3 = \Theta(\epsilon).
\]


\begin{enumerate}
    \item \label{test1} Compute a low-diameter decomposition $\VV = \{V_1, V_2, \ldots, V_x\}$ of the vertex set $V$.
    If we are at the top level of the recursion, we use
     \cref{thm-low-diam-decomp} with  parameter $\beta_1$.
     Otherwise, we use \cref{thm-low-diam-decomp-mod}  with  parameter $\beta_2$.
    All inter-cluster edges are removed.
    \item \label{test2} For each $V_i \in \VV$, run an algorithm  $\mathcal{A}$ of $\ecutemb{\ephicut}{\ephiemb}{\beta_3}$  on $G' = (V', E') =  G[V_i]$ in parallel. 
    There are two cases.
    \begin{enumerate}
        \item \label{test2a} If the output of  $\mathcal{A}$ is an edge set   $E^\ast \subseteq E'$ with $|E^\ast| \geq (1-{\beta_3}) |E'|$ and $\Phie(G'[E^\ast]) \geq \ephiemb$, then we remove all edges of $G'$ that are not in  $E^\ast$. 
        \item \label{test2b} If the output of  $\mathcal{A}$   is a vertex set $C \subseteq V'$ with $\Omega(\beta_3) \cdot \vol_{G'}(V')\leq \vol_{G'}(C) \leq \vol_{G'}(V')/2$ and $\Phie_{G'}(C) \leq \ephicut$, then we remove the edge set $E(C, V' \setminus C)$ and recurse on $G'[C]$ and $G'[V' \setminus C]$.
    \end{enumerate}
\end{enumerate}

At the end of the algorithm, the set of remaining edges induces connected components with conductance at least $\ephiemb = \phi$, as required.

Recall that \cref{thm-low-diam-decomp-mod} offers a better upper bound on the number of inter-cluster edges than that of \cref{thm-low-diam-decomp}, but it comes at the cost of having an additional $O(D)$ term in the round complexity. To avoid a linear dependence on the graph diameter in the overall round complexity, we have to use \cref{thm-low-diam-decomp} in the top level of recursion.

Whenever \cref{thm-low-diam-decomp} is applied in our algorithm, the current graph under consideration must be $G'[C]$ or $G'[V' \setminus C]$, where $G'=G[V_i]$ for some cluster $V_i$ in the low-diameter decomposition $\VV = \{V_1, V_2, \ldots, V_x\}$ in the previous level of recursion. When we run the algorithm of \cref{thm-low-diam-decomp} on $G'[C]$ and $G'[V' \setminus C]$, we use the Steiner tree $T_i$ associated with $V_i$, which is guaranteed to have a small diameter $D' = O\left(\max\{\beta_1^{-1}, \beta_2^{-1}\} \log^3 n\right) = O(\epsilon^{-1} \log^4 n)$. 

\paragraph{The number of removed edges.}
We argue that the number of removed edges is at most $\epsilon|E|$.  
\begin{enumerate}
    \item  The low-diameter decomposition at the top layer of recursion uses \cref{thm-low-diam-decomp} with  parameter $\beta_1$, and so the number of edges removed due to this  decomposition  is at most $\beta_1 |E|$.
    \item The rest of the low-diameter decompositions use  \cref{thm-low-diam-decomp-mod}  with  parameter $\beta_2$. Suppose that the decomposition of the current vertex set $V$ is $\VV = \{V_1, V_2, \ldots, V_x\}$, and $V_{i^\ast}$ is a cluster with the highest volume. Then the number of edges removed due to this decomposition  is at most $\beta_2  \vol(V \setminus V_{i^\ast})$. We charge a cost of $2 \beta_
   2$ to each edge $e$ not incident to $V_{i^\ast}$, and a cost of $\beta_2$ to each edge $e \in E(V_{i^\ast}, V \setminus V_{i^\ast})$.
   It is clear that throughout the recursion, each edge $e$ is charged for at most $O(\log n)$ times, because we must have $\vol(V_i) \leq \vol(V)/2$ if $V_i \neq V_{i^\ast}$. Therefore, the total number of edges removed due to the low-diameter decompositions based on \cref{thm-low-diam-decomp-mod} is at most $O(\beta_2 |E| \log n)$.
   \item The total number of edges removed due to Step~2(a) is at most $\beta_3|E|$. 
   \item The total number of edges removed due to Step~2(b) is at most $\ephicut \vol(V) = 2\ephicut|E|$.
\end{enumerate}
To summarize, the total number of removed edges is  \[O\left(\beta_1 + \beta_2  \log n +  \beta_3 + \ephicut \right) \cdot |E|.\]
This number can be made at most $\epsilon |E|$ as long as $\beta_1 = \Theta(\epsilon)$,  
$\beta_2 = \Theta(\epsilon \log^{-1} n)$, 
$\beta_3 = \Theta(\epsilon)$, and
$\ephicut = \Theta(\epsilon)$ are chosen to be sufficiently small.


\paragraph{Round complexity.} We now analyze the round complexity of our algorithm.  It is straightforward to see that the depth of recursion is $t = O(\beta_3^{-1} \log n) = O(\epsilon^{-1} \log n)$, as the cut $C$ computed in Step~\ref{test2} satisfies $\Omega(\beta_3) \cdot \vol_{G'}(V')\leq \vol_{G'}(C) \leq \vol_{G'}(V')/2$.
\begin{enumerate}
    \item The round complexity of \cref{thm-low-diam-decomp} is $O(\beta_1^{-2} \log^6 n) = O(\epsilon^{-2} \log^6 n)$. Note that the algorithm of \cref{thm-low-diam-decomp} is only applied once.
    \item The round complexity of \cref{thm-low-diam-decomp-mod} is $O(D' + \beta_2^{-2} \log^6 n) = O(\epsilon^{-2} \log^8 n)$, as we have $D' = O(\epsilon^{-1} \log^4 n)$.  The algorithm of \cref{thm-low-diam-decomp-mod} is   applied in  $t-1 = O(\epsilon^{-1} \log n)$ levels of recursions, and there is a congestion of $c = O(\log n)$ when the algorithm is applied, as they use the Steiner trees in the low-diameter decomposition in the previous level of recursion. Therefore, the total round complexity associated with  \cref{thm-low-diam-decomp-mod} is  $O(\epsilon^{-2} \log^8 n) \cdot  O(\epsilon^{-1} \log n) \cdot O(\log n) = O(\epsilon^{-3} \log^{10} n)$.
\item Denote $T$ as the round complexity of $\mathcal{A}$ with $D' = O(\epsilon^{-1} \log^4 n)$. Since $\mathcal{A}$ is applied in  $t = O(\epsilon^{-1} \log n)$ levels of recursions with a congestion of $c = O(\log n)$, the total round complexity associated with $\mathcal{A}$ is $T \cdot  O(\epsilon^{-1} \log n) \cdot O(\log n) = O(T \epsilon^{-1} \log^{2} n)$.
\end{enumerate}
We can express the round complexity of our expander decomposition algorithm as 
\[ O(\epsilon^{-2} \log^6 n) + O(\epsilon^{-3} \log^{10} n) + O(T \epsilon^{-1} \log^{2} n) = O(\epsilon^{-3} \log^{10} n) + O(T \epsilon^{-1} \log^{2} n).\]


\paragraph{Randomized setting.}
In \cref{thm-rand-cut-expander} we use \[\phi = \ephiemb = 1/ \poly(\ephicut^{-1}, \log n) = 1/ \poly(\epsilon^{-1}, \log n),\] and hence the randomized round complexity of $\mathcal{A}$ is \[T = O(D') \cdot \poly(\epsilon^{-1}, \log n )  
= \poly(\epsilon^{-1}, \log n).\] We conclude the following theorem.

\Rrand*

Here the conductance parameter $\phi$ can be made as large as $\phi = \ephiemb  = \Omega\left(\epsilon^{3} \log^{-10} n \right)$, as we have $\ephicut = \Theta(\epsilon)$  and  $\ephiemb = \Omega\left(\ephicut^3 \log^{-10} n\right)$ in view of \cref{thm-rand-cut-expander,lem-rand-cutmatch-main-bounded-deg}.

\paragraph{Deterministic setting.}
In \cref{thm-det-cut-expander}, we use 
\[\phi = \ephiemb = \poly(\ephicut) 2^{-O(\sqrt{\log n \log \log n})} = \epsilon^{O(1)} 2^{-O(\sqrt{\log n \log \log n})},\]
and hence the deterministic round complexity of $\mathcal{A}$ is \[ T = O(D')+\epsilon^{-O(1)}2^{O(\sqrt{\log n \log \log n})} = \epsilon^{-O(1)} 2^{O(\sqrt{\log n \log \log n})}.\]
We conclude the following theorem.

\Rdet*

More generally, taking into consideration the tradeoff of parameters in \cref{lem-det-cutmatch-main,thm-solve-recursion}, for any $1 \geq \gamma \geq \sqrt{\log \log n / \log n}$, there is a deterministic expander decomposition algorithm with round complexity $\epsilon^{-O(1)}  \cdot n^{O(\gamma)}$ with parameter $\phi = \epsilon^{O(1)}  \log^{-O(1/\gamma)}n$.

\section{Randomized Sparse Cut Computation}\label{sect-rand-expander-decomp}

The goal of this section is to prove \cref{lem-rand-cutmatch-main-bounded-deg}. Intuitively, in \cref{lem-rand-cutmatch-main-bounded-deg}, the set $W$ \emph{certifies} that $C$ is \emph{nearly most balanced} in the following sense. If $|C| \leq |V|/20$, then for any cut $C'$ with $9 |C| \leq |C'| \leq |V|/2$, we have \[\Phiv(C') = \frac{|\partial(C')|}{|C'|}
\geq  \frac{|\partial(C')|}{9 |C' \cap W|}
\geq  \frac{|\partial_{G[W]}(C' \cap W)|}{9 |C' \cap W|}
\geq \frac{\Phiv(G[W])}{9} = \Omega(\psi^{3} \log^{-10}n).\]
In other words, there does not exist a cut $C'$ that is significantly more balanced than $C$ and significantly sparser than $C$ at the same time. 

\begin{theorem}[Randomized sparse cut computation]\label{lem-rand-cutmatch-main-bounded-deg}
Let $G=(V,E)$ be a bounded-degree graph, and let $0 < \psi < 1$ be any parameter. There is a randomized algorithm with round complexity
\[O\left(\psi^{-2} \log^6 n \left(D + \psi^{-3} \log^4 n\right) \right)\]
that finds $W \subseteq V$ and $C \subseteq V$ meeting the following requirements with high probability.
\begin{description}
    \item[Expander:] The induced subgraph $G[W]$ has  $\Phiv(G[W])  = \Omega(\psi^{3} \log^{-10}n)$.
    \item[Cut:] The cut $C$ satisfies  $|V \setminus W|/8 \leq |C| \leq |V|/2$ and $\Phiv(C) \leq \psi$.
\end{description}
\end{theorem}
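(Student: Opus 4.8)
The plan is to implement the randomized version of the RST-style ``non-stop'' cut-matching game, combined with a random-walk-based expander embedding, and show that this yields $W$ and $C$ with the stated guarantees. First I would set up the cut-matching game on $G$: the cut player, using a random projection onto a Gaussian direction as in \cite{KRV09,RST14}, produces in each iteration a bipartition $(S_i, T_i)$ with $|S_i| \le |T_i|$; the matching player then attempts to embed, with small congestion and small dilation, a matching saturating $S_i$ by running the distributed augmenting-path / maximal vertex-disjoint $S$--$T$ path routine of Lotker--Patt-Shamir--Pettie~\cite{LotkerPP08}, which runs in $\poly(d,\log n)$ rounds w.h.p. If at some iteration the matching fails to saturate a constant fraction of $S_i$, the set of unmatched vertices is grown into a sparse cut $C$ with $\Phiv(C) \le \psi$; the ``non-stop'' bookkeeping of~\cite{RST14} (keep going until the accumulated unmatched/cut vertices reach $\Omega(|V|)$, or stop early with a balanced sparse cut) ensures that $C$ is as balanced as the certificate allows. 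The potential-function analysis (deferred to \cref{sect-potential}) guarantees that after $\tau = O(\psi^{-2}\log^2 n)$ iterations — here I would need to track carefully how the extra $\psi^{-1}$ and $\log$ factors enter from the congestion/dilation budget and the requirement that the matching only saturate a constant fraction of $S_i$ — the union $M_1 \cup \dots \cup M_\tau$ forms a graph of sparsity $\Omega(1)$ embedded with congestion $\poly(\psi^{-1},\log n)$ and dilation $\poly(\psi^{-1},\log n)$ into $V \setminus C$.

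The crucial new step, replacing the expander-trimming argument of~\cite{SaranurakW19}, is to extract the actual induced subgraph $G[W]$ with $\Phiv(G[W]) = \Omega(\psi^3 \log^{-10} n)$. Here I would define a lazy random walk on $V \setminus C$ governed by the matchings $M_1, \dots, M_\tau$ (at step $j$, each matched vertex moves across its $M_j$-edge with probability $1/2$), and identify the set $U \subseteq V \setminus C$ of \emph{good} vertices $u$ for which the $\tau$-step walk started at $u$ has endpoint distribution that is $\Theta(1/n)$-uniform on $U$. The potential argument shows $|U|$ is a large constant fraction of $V$; then I would route, via these walks, an embedding of a bounded-degree constant-sparsity random graph on $U$ into $G$ with congestion $\poly(\psi^{-1},\log n)$ and dilation $\tau = \poly(\psi^{-1},\log n)$. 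Taking $W$ to be the set of all vertices appearing on these embedding paths (a superset of $U$, possibly reaching into $C$), the standard embedding argument — a sparse cut of $G[W]$ would be crossed by $\Omega(\text{volume})$ embedded edges, each of whose congestion-$c$, dilation-$d$ path must also cross it — gives $\Phiv(G[W]) = \Omega(1/(cd)) = \Omega(\psi^3 \log^{-10} n)$. Finally one checks the balance condition $|V \setminus W|/8 \le |C| \le |V|/2$: since $U \subseteq W$ and $|U| \ge (1 - \Omega(1))|V|$, and since the non-stop game guarantees $|C| = \Omega(|V \setminus U|)$ whenever $V \setminus W$ is large, this holds up to adjusting constants; if instead $V\setminus W$ is small the bound is immediate and $C$ may be empty.

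The main obstacle I anticipate is the simultaneous control of \emph{both} congestion and dilation of the matching embeddings, since — unlike in~\cite{KRV09,RST14,SaranurakW19}, where only congestion matters — bounding $\Phiv(G[W])$ requires the embedding paths to be short. The distributed augmenting-path routine of~\cite{LotkerPP08} naturally bounds path length by the flow's value of $d$, so I would run it with $d = \poly(\psi^{-1},\log n)$; the delicate part is arguing that capping the path length at $d$ still lets the matching saturate a constant fraction of $S_i$ (or else exposes a sparse cut of sparsity $\le \psi$ via a max-flow/min-cut argument on the length-bounded flow, which is where the $\psi$ in $\Phiv(C) \le \psi$ comes from). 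A secondary obstacle is bounding the number of walk steps and hence the final dilation $d = O(\tau) = O(\psi^{-2}\log^2 n)$ — I would be careful that the round complexity bookkeeping, $O(\tau)$ iterations each costing $O(D + \poly(\psi^{-1},\log n))$ rounds for the matching player plus the final $O(D + \poly)$ rounds for the random-graph embedding, multiplies out to exactly $O\!\left(\psi^{-2}\log^6 n\,(D + \psi^{-3}\log^4 n)\right)$. The ``certificate'' interpretation stated before the theorem (that no $C'$ with $9|C| \le |C'| \le |V|/2$ can be $\Omega(\psi^3\log^{-10}n)$-sparser) then follows for free from $U \subseteq W$, $|C| \ge |V\setminus W|/8 \ge |V \setminus U|/8$, and $\Phiv(G[W]) = \Omega(\psi^3\log^{-10}n)$, via the displayed three-inequality chain.
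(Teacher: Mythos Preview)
Your approach is essentially the same as the paper's: run the RST non-stop cut-matching game (\cref{lem-rand-cut-embed}) with the Lotker--Patt-Shamir--Pettie matching player, then extract $W$ via the random-walk-based expander embedding (\cref{lem-expander-from-random-walk}). The high-level structure is right, including the key insight that $W$ is the vertex support of the random-graph embedding and may leak into $C$.

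There is one genuine quantitative confusion that would derail your bookkeeping. The number of cut-matching iterations is $\tau = O(\log^2 n)$, \emph{independent of $\psi$}: the potential-function argument of \cref{sect-rand-potential} is purely about the matchings $M_1,\dots,M_\tau$ as an abstract graph on $V$ and never sees $\psi$. The parameter $\psi$ enters only through the matching player's embedding of each $M_i$ into $G$ --- congestion $c = O(\psi^{-2}\log^4 n)$ and dilation $d = O(\psi^{-1}\log n)$ per matching. So your guess $\tau = O(\psi^{-2}\log^2 n)$ is wrong, and your later ``final dilation $= O(\tau)$'' conflates two different things: the random walk has $k = \tau = O(\log^2 n)$ steps, but each step is embedded with dilation $d$, so the total embedding dilation is $dk = O(\psi^{-1}\log^3 n)$, and the congestion is $O(ck\log n)$ (\cref{lem-property-random-graph-0}). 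Consequently the sparsity of $G[W]$ is not simply $\Omega(1/(cd))$ but $\Omega(c^{-1}d^{-1}k^{-2}\log^{-1} n)$ (\cref{lem-expander-emb} with $\Delta_R = O(\log n)$, $\psi' = \Omega(\log n)$); plugging in gives exactly $\Omega(\psi^3\log^{-10} n)$.

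A minor wording point: the output of the cut-matching game is not ``the union $M_1\cup\dots\cup M_\tau$ has sparsity $\Omega(1)$'' but rather the mixing guarantee $|p_H(i\rightsquigarrow j) - (\text{average over }V\setminus C)| \le 1/(4n)$ for all $v_i,v_j\in V\setminus C$ (\cref{lem-rand-cut-embed}). You describe this correctly in your second paragraph; just be consistent.
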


We note that the theorem above allows $W = \emptyset$ or $C = \emptyset$.  If $W = \emptyset$, then $|C| = \Omega(|V|)$, i.e., $C$ is a $\Theta(1)$-balanced sparse cut. 
If $C = \emptyset$, then $W = V$, i.e., $G$ is an $\Omega(\psi^{3} \log^{-10}n)$-expander.

\begin{proof}
Apply the algorithm of \cref{lem-rand-cut-embed} with the same parameter $\psi$. If the cut $C \subseteq V$ returned by the algorithm already has $|C| \geq |V|/8$, then output this cut $C$ with $W = \emptyset$. 
Otherwise, the algorithm of \cref{lem-rand-cut-embed} must return both $H$ and $C$.
Run the algorithm of \cref{lem-expander-from-random-walk} with these $H$ and $C$, and parameters $c = O(\psi^{-2} \log^4 n)$, $d = O(\psi^{-1} \log n)$, and $k = O(\log^2 n)$. Since $|C| < |V|/8$, the algorithm of \cref{lem-expander-from-random-walk} must return a subset $W \subseteq V$ with $|C| \geq |V \setminus W| / 4 >  |V \setminus W| / 8$ and $\Phiv(G[W]) = \Omega(c^{-1} d^{-1} k^{-2} \log^{-1} n) = \Omega(\psi^{3} \log^{-10}n)$, as required. The round complexity of the algorithm of \cref{lem-rand-cut-embed} is
\[
O\left(\psi^{-2} \log^6 n \left(D + \psi^{-3} \log^4 n\right) \right),
\]
which dominates the round complexity of \cref{lem-expander-from-random-walk}, which is
\[O(D + ck \log^2 n + dk) = O(D + \psi^{-2} \log^8 n).\qedhere\] 
\end{proof}

It remains to prove  \Cref{lem-rand-cut-embed} and \Cref{lem-expander-from-random-walk}.
In~\cref{sect-RST}, we give a distributed implementation of a version of the cut-matching game~\cite{RST14,SaranurakW19} and prove \cref{lem-rand-cut-embed}. In \cref{sect-expander-emb-rand}, we show how to recover a large subgraph $G[W]$ of high $\Phiv(G[W])$ from the embedding of the matchings in $H = (M_1, M_2, \ldots, M_k)$ found during the cut-matching game, and prove \cref{lem-expander-from-random-walk}.
This lemma is new and  is crucial for bypassing the expander trimming technique from \cite{SaranurakW19} whose distributed implementation is not known.

\subsection{Cut-matching Game}\label{sect-RST}

We assume $V = \{v_1, v_2, \ldots, v_n\}$. Given a matching $M$ of the vertex set $V$, define the doubly stochastic matrix $F_M \in \mathbb{R}^{n \times n}$~\cite{KRV09} as
\begin{equation*}
    F_M[i,j] = \begin{cases}
             1 &   \text{if ($i=j$ and $v_i$ is not matched in $M$),}\\
             
               1/2               & \text{if ($i=j$  and $v_i$ is matched in $M$) or ($i \neq j$ and $\{v_i, v_j\} \in M$),}\\
               0               & \text{if ($i \neq j$ and $\{v_i, v_j\} \notin M$).}
           \end{cases}
\end{equation*}
 Suppose $H = (M_1, M_2, \ldots,  M_k)$ be a  sequence of $k$ matchings. We write 
$F_H = F_{M_k} \cdot F_{M_{k-1}} \cdots F_{M_1}$. For the special case of $H = \emptyset$, $F_H$ is defined as the $n \times n$ identity matrix.
Intuitively, the matrix $F_H$ represents the following random walk $(u_0, u_1, \ldots, u_k)$. Start at $u_0$. Then, for $i = 1, 2, \ldots, k$, do the following. If $u_{i-1}$ is matched to $w$ in the matching $M_i$, then there is a probability of $1/2$ that $u_{i} = w$; otherwise $u_{i} = u_{i-1}$. We call such a random walk an $F_H$-random walk. Note that the ordering of the matchings $H = (M_1, M_2, \ldots,  M_k)$  affects the random walk.
It is straightforward to verify that the probability that an $F_H$-random walk starting at $u_0 = v_j$ ending at $u_k = v_i$ equals $F_H[i, j]$. For convenience, we also write $$p_H(j \rightsquigarrow i) = F_H[i, j]$$ to denote this random walk probability. 
We define $F[i] \in \mathbb{R}^n$ to be the length-$n$ vector corresponding to the $i$th row of $F$. That is, $F[i][j] = F[i, j]$. Since $F$ is doubly stochastic, we have
\[
\sum_{1 \leq j \leq n} p_H(j \rightsquigarrow i) = 
\sum_{1 \leq j \leq n} F[i][j]
= 1.
\]
The goal of \cref{sect-RST} is to prove \cref{lem-rand-cut-embed}.
 
\begin{lemma}[Randomized cut or embedding]\label{lem-rand-cut-embed}
Let $G=(V,E)$ be a bounded-degree graph. Let $0 < \psi < 1$ be any parameter. There is a randomized algorithm with round complexity
\[
O\left(\psi^{-2} \log^6 n \left(D + \psi^{-3} \log^4 n\right) \right)
\]
that achieves the following task with high probability.
\begin{description}
    \item[Cut:] The algorithm is required to output a cut $C$ satisfying  $0 \leq |C| \leq |V|/2$ and $\Phiv(C) \leq \psi$.
    \item[Embedding:] If $|C| < (3/11)|V|$, the algorithm is required to find a sequence of matchings $H = (M_1, M_2, \ldots,  M_k)$ with $k= O(\log^2 n)$, where each $M_i$ is a matching that is embedded into $V$ with congestion $c = O(\psi^{-2} \log^4 n)$ and dilation $d = O(\psi^{-1} \log n)$. Furthermore, 
    $$\forall_{v_i, v_j \in V\setminus C}, \ \left|p_H(i \rightsquigarrow j) - \frac{1}{|V \setminus C|}\sum_{v_l \in V \setminus C} p_H(i \rightsquigarrow l) \right| \leq \frac{1}{4n}.$$
\end{description}
\end{lemma}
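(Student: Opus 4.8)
The plan is to give a distributed realization of the randomized cut-matching game, run for $k = O(\log^2 n)$ iterations, in the ``non-stop'' form of R\"acke--Shah--T\"aubig~\cite{RST14} together with the ``nearly expander'' viewpoint of Saranurak--Wang~\cite{SaranurakW19}. Throughout we maintain an accumulated cut $C$ (initially empty, always kept on its smaller side so $|C|\le |V|/2$) and a growing sequence $H=(M_1,M_2,\ldots)$ of embedded matchings. In iteration $i$ the cut player inspects the random-walk matrix $F_H$ restricted to the surviving set $V\setminus C$ using a random-projection strategy in the style of Khandekar--Rao--Vazirani~\cite{KRV09}: sample a uniform unit vector $r$ orthogonal to the all-ones vector over $V\setminus C$, compute the projections $\langle F_H[v],r\rangle$ (with $F_H[v]$ restricted to the coordinates in $V\setminus C$) by pushing values along the $O(\log^2 n)$ already-embedded matchings --- each has dilation $d$, so one projection costs $O((D+d)\log^2 n)$ rounds --- and from these form disjoint $S_i,T_i\subseteq V\setminus C$ with $S_i\cup T_i=V\setminus C$ and $|S_i|\le |T_i|$. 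The matching player then tries to embed a matching saturating $S_i$ with paths of length at most $d=O(\psi^{-1}\log n)$.

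For the matching step I will invoke the randomized distributed augmenting-path routine of Lotker--Patt-Shamir--Pettie~\cite{LotkerPP08} (equivalently the flow primitive of \cref{sect-flow}) to compute, in $\poly(d,\log n)$ rounds with high probability, a \emph{maximal} family of vertex-disjoint $S_i$--$T_i$ paths of length at most $d$; pairing the endpoints gives the matching $M_i$, appended to $H$ along with its realizing paths. If all of $S_i$ is matched we continue. Otherwise maximality implies there is no short augmenting path, so the unmatched sources are separated from the unmatched sinks by a vertex set of size at most $|S_i\setminus V(M_i)|$ concentrated within distance $d$ of the unmatched sources; since $G$ has bounded degree this yields a cut $A$ with $\Phiv(A)\le\psi$, and we update $C\leftarrow C\cup A$, re-orienting to the smaller side. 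Following~\cite{RST14} we do not stop when such a cut appears, but we halt and output $C$ alone as soon as $|C|\ge(3/11)|V|$, because then no embedding is required of us; careful bookkeeping of the single-iteration increments keeps $|C|\le|V|/2$ in the output.

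It remains to verify the mixing guarantee in the case $|C|<(3/11)|V|$ and to bound the round complexity. For the former I will use the potential argument deferred to \cref{sect-potential}: take a potential $\Pi(H)=\sum_{v\in V\setminus C}\sum_{u\in V\setminus C}(p_H(v\rightsquigarrow u)-\mu_v)^2$ with $\mu_v=\frac{1}{|V\setminus C|}\sum_{u\in V\setminus C}p_H(v\rightsquigarrow u)$, measured against the \emph{current} surviving set; the projection step forces $\mathbb{E}[\Pi(H_i)\mid\Pi(H_{i-1})]\le(1-\Omega(1/\log n))\,\Pi(H_{i-1})$ whenever $M_i$ saturates $S_i$, while moving vertices into $C$ never increases $\Pi$. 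Since $\Pi(\emptyset)\le n$, after $k=O(\log^2 n)$ iterations $\Pi(H)\le n^{-\Omega(1)}$ with high probability, and Cauchy--Schwarz converts this to the per-pair bound $\bigl|p_H(i\rightsquigarrow j)-\frac{1}{|V\setminus C|}\sum_{v_l\in V\setminus C}p_H(i\rightsquigarrow l)\bigr|\le\frac{1}{4n}$ for all $v_i,v_j\in V\setminus C$. For the round complexity, there are $O(\log^2 n)$ iterations, each dominated by the $\poly(d,\log n)$-round flow computation and by materializing the new embedded matching, whose paths accumulate over all iterations to congestion $c=O(\psi^{-2}\log^4 n)$; adding the $O((D+d)\log^2 n)$ projection cost and summing gives the stated $O(\psi^{-2}\log^6 n\,(D+\psi^{-3}\log^4 n))$.

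I expect the main obstacle to be twofold. First, the flow-to-cut conversion must be executed so that a failure of the maximal short-path routine actually produces a cut of sparsity at most $\psi$ with controlled size, and then one must show that taking unions of such cuts over the $O(\log^2 n)$ iterations keeps $C$ on its smaller side with $\Phiv(C)\le\psi$ in aggregate; this is where the flow/cut machinery of \cref{sect-flow} is essential and where bounded degree lets us pass between vertex cuts and sparsity. Second, the potential analysis must be carried out against the \emph{shrinking} ground set $V\setminus C$ rather than a fixed set --- one must show both that deletions into $C$ are ``free'' for $\Pi$ and that the projection still contracts $\Pi$ afterwards --- and the parameters must be set so that the dilation $d=O(\psi^{-1}\log n)$ is small enough to serve the downstream use of $H$ in \cref{lem-expander-from-random-walk} while still being routable at congestion $c=O(\psi^{-2}\log^4 n)$. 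Getting this quantitative balance right is the delicate part.
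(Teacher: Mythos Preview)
Your high-level plan is exactly the paper's: run the non-stop RST cut-matching game for $\tau=O(\log^2 n)$ rounds, implement the cut player by a random projection of $F_{H}$, implement the matching player with the Lotker--Patt-Shamir--Pettie maximal-paths primitive (iterated, as in \cref{lem-matching-player-rand}), and prove mixing via the potential analysis deferred to \cref{sect-potential}. The round-complexity accounting also lines up. Two of your concrete choices, however, differ from the paper in ways that would break the proof as written.

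\textbf{The cut player must not output a bipartition.} You set $S_i\cup T_i=V\setminus C$ with $|S_i|\le|T_i|$; the paper instead selects asymmetric sets $A^l,A^r\subseteq A_i$ with $|A^l|\le|A_i|/8$ and $|A^r|\ge|A_i|/2$ (the four-case rule taken from \cite[Lemma~3.3]{RST14}). This asymmetry is what makes \cref{lem-rand-cut-match-bal-cut} go through: when the game halts because $|C_1\cup\cdots\cup C_i|\ge(3/11)|V|$, one needs $|A_i\setminus C_i|\ge(3/11)|V|$ as well, so that the smaller side is still at least $(3/11)|V|$ and the sparsity survives the factor $8/3$ lost in re-orienting. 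The paper gets this from $|A_i\setminus C_i|\ge |A^r|-|M_i|\ge |A_i|/2-|A_i|/8$. With a near-bisection, $|T_i|-|M_i|$ can be as small as $1$, so after re-orienting you may end with $|C|\ll(3/11)|V|$ yet no usable $H$ on the (now large) set $V\setminus C$. Your ``careful bookkeeping'' paragraph correctly flags this as the hard point; the fix is precisely the asymmetric RST selection.

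\textbf{The potential should be centered at a single mean vector.} The paper takes $\Pi(i)=\sum_{v_j\in A_{i+1}}\|F_i[j]-\mu_i\|^2$ with one vector $\mu_i=\frac{1}{|A_{i+1}|}\sum_{v_j\in A_{i+1}}F_i[j]$ and the inner norm running over all $n$ coordinates. Your version centers each row at its own scalar $\mu_v$ and restricts the inner sum to $V\setminus C$; with that definition the claim ``moving vertices into $C$ never increases $\Pi$'' is no longer automatic, because removing a coordinate changes every remaining $\mu_v$. With the paper's potential, deletion of $C_i$ contributes the nonnegative term $\sum_{v_j\in C_i}\|F_{i-1}[j]-\mu_{i-1}\|^2$ and the contraction step reduces to \cref{lem-projection,lem-rst-property}. (A minor related point: a single LPP call does not directly yield a $\psi$-sparse cut when some sources are unmatched; the matching player in \cref{lem-matching-player-rand} iterates the maximal-paths routine $O(d^2\log^2 n)$ times, and it is this iteration count that gives the per-matching congestion $c=O(\psi^{-2}\log^4 n)$.)
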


\cref{lem-rand-cut-embed} is proved using the \emph{cut-matching game} of~\cite{RST14,SaranurakW19}. The cut-matching game proceeds in iterations $i = 1, 2, \ldots, \tau = \Theta(\log^2 n)$.
\begin{description}
\item[Active vertices:] At the beginning of iteration $i$, there is a set of \emph{active vertices} $A_i$. Initially we set $A_1 = V$. For $i > 1$, we set $A_{i} = A_{i-1} \setminus C_{i-1}$, where $C_{i-1}$ is a cut found during the $(i-1)$th iteration.
\item[Cut player:] In iteration $i$, the \emph{cut player} finds two disjoint subsets $A^l \subseteq A_i$ and $A^r \subseteq A_i$ satisfying $|A^l| \leq |A_i|/8$ and $|A^r| \geq |A_i|/2$.
\item[Matching player:] Given $(A^l, A^r)$, the matching player finds a cut $C_i$ and a matching $M_i$ with its embedding $\PP_i$ satisfying the following conditions.
\begin{description}
\item[Match:] $M_i$  is a matching between $A^l$ and $A^r$,  $\PP_i$ is a set of $A^l$-$A^r$ paths in the subgraph $G[A_i]$ that embeds $M_i$ with congestion $c = O(\psi^{-2} \log^4 n)$ and dilation $d = O(\psi^{-1} \log n)$.
\item[Cut:] $C_i \subseteq A_i$ is a cut of the subgraph   $G[A_i]$ with $\Phiv_{G[A_i]}(C_i) \leq \psi' = (3/8)\psi$. Furthermore, all the unmatched vertices in $A^l$ belong to $C_i$, and all the unmatched vertices in $A^r$ belongs to $A_i \setminus C_i$.
\end{description}
\item[Terminating condition 1:] If $|C_1 \cup C_2 \cup \cdots C_i| \geq (3/11)|V|$, the cut-matching game is terminated, and the output $C$ is the one of $C_1 \cup C_2 \cup \cdots \cup C_i$ and $V \setminus (C_1 \cup C_2 \cup \cdots \cup C_i)$ that has the smaller size. 
\item[Terminating condition 2:] If $i = \tau$ is the last iteration, then output $C = C_1 \cup C_2 \cup \cdots \cup C_\tau$ and $H =  (M_{1}, M_{2}, \ldots,  M_\tau)$, together with the embedding $\PP_j$ of  $M_j$, for each $1 \leq j \leq \tau$. 
\end{description}

Although there are still some missing details of the algorithm, we have enough information to show that  the  cut $C$ returned by the algorithm  satisfies all the requirements stated in \cref{lem-rand-cut-embed}.

\begin{lemma}[Property of the output cut $C$]\label{lem-rand-cut-match-bal-cut}
The cut $C$ returned by the algorithm satisfies  $0 \leq |C| \leq (1/2)|V|$ and $\Phiv(C) \leq \psi$.
Furthermore, if the algorithm 
terminates with a cut $C$ only, then the cut $C$ additionally satisfies  $|C| \geq (3/11)|V|$.
\end{lemma}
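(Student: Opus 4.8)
The plan is to analyze the two possible ways the cut-matching game terminates and verify the claimed properties of $C$ in each case.

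\textbf{Terminating condition 1.} Suppose the game stops because $|C_1 \cup C_2 \cup \cdots \cup C_i| \geq (3/11)|V|$ at some iteration $i$. Write $C^{\cup} = C_1 \cup \cdots \cup C_i$ and $\bar{C}^{\cup} = V \setminus C^{\cup}$, and let $C$ be the smaller of the two; then $0 \leq |C| \leq |V|/2$ is immediate, and since $|C^{\cup}| \geq (3/11)|V|$ we need to check the smaller side also has size $\geq (3/11)|V|$. The first time the union crosses the $(3/11)|V|$ threshold, it was below it at iteration $i-1$, and each $C_j$ is a cut of $G[A_j]$ with $\vol$ or cardinality at most half of $A_j \subseteq V$, so the jump $|C_i|$ is at most $|V|/2$; hence $(3/11)|V| \leq |C^{\cup}| \leq (3/11)|V| + |V|/2 < |V|$, which does not immediately give the lower bound on $\bar{C}^{\cup}$. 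I would instead argue directly that $|A^l| \le |A_i|/8$ forces $C_i \subseteq A_i$ to be small enough: since all unmatched vertices of $A^l$ lie in $C_i$ and $C_i$ is a sparse cut of $G[A_i]$ with $|C_i| \le |A_i|/2$, and $|A_i| \le |V|$, we get a clean bound showing the union stays in $[(3/11)|V|, (1/2 + \text{small})|V|]$; then whichever side is smaller still has size $\ge (3/11)|V|$ as long as the crossing increment is controlled — this is the routine bookkeeping step. For sparsity, $\Phiv(C)$: each edge of $\partial_G(C^{\cup})$ is an edge of some $\partial_{G[A_j]}(C_j)$ (edges leaving $A_j$ were already counted when $A_{j}$ was formed), and $\Phiv_{G[A_j]}(C_j) \le \psi' = (3/8)\psi$, so summing the cut sizes against the disjoint volumes/cardinalities of the $C_j$'s and comparing to $\min\{|C|, |V\setminus C|\}$ yields $\Phiv_G(C) \le \psi$ with room to spare (the factor-$(3/8)$ slack absorbs the loss from passing between $G[A_j]$-sparsity and $G$-sparsity, and from the $C$-vs-$\bar C$ min).

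\textbf{Terminating condition 2.} If the game runs all $\tau$ iterations without triggering condition 1, then $C = C_1 \cup \cdots \cup C_\tau$ with $|C^{\cup}| < (3/11)|V| < |V|/2$, so $|C| \le |V|/2$ and the ``furthermore'' clause is vacuous. The sparsity bound $\Phiv(C) \le \psi$ follows by exactly the same telescoping/charging argument as above, using $\Phiv_{G[A_i]}(C_i) \le (3/8)\psi$ for each $i$ and the disjointness of the $C_i$'s.

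\textbf{Main obstacle.} The delicate point is the charging argument for $\Phiv_G(C) \le \psi$: one must be careful that an edge of $\partial_G(C^{\cup})$ with both endpoints still ``active'' at stage $j$ (i.e., in $A_j$) is charged to exactly one $\partial_{G[A_j]}(C_j)$, while edges that leave the active set entirely never need to be charged (they connect $C^{\cup}$ to $C^{\cup}$ or $\bar C^{\cup}$ to $\bar C^{\cup}$ when restricted appropriately), and that the denominator $\min\{|C|,|V\setminus C|\}$ dominates $\min\{|C_j|, |A_j\setminus C_j|\}$ summed over $j$ up to the constant slack. Getting the constants $(3/11)$, $1/8$, $1/2$, $(3/8)$ to line up so that the final bound is $\psi$ (not, say, $(3/8)\psi \cdot (\text{loss})$ that could exceed $\psi$) is the only real content; everything else is routine. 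I would present the charging lemma first, then plug in the two termination cases.
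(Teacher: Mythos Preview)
Your treatment of Terminating Condition 2 is fine and matches the paper: the telescoping bound $|\partial(C_1\cup\cdots\cup C_i)| \le \sum_j |\partial_{G[A_j]}(C_j)| \le \psi'\sum_j |C_j| = \psi'|C^{\cup}|$ gives $\Phiv(C) \le \psi' < \psi$ directly since $|C^{\cup}|<(3/11)|V|<|V|/2$.

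For Terminating Condition 1, however, there is a real gap. You need to show $|V\setminus C^{\cup}|\ge (3/11)|V|$, and you try to do this by upper-bounding $|C_i|$. But the facts you invoke go the wrong way: ``all unmatched vertices of $A^l$ lie in $C_i$'' is a \emph{lower} bound on $|C_i|$, and the assertion ``$|C_i|\le |A_i|/2$'' is not part of the matching player's specification (the spec only says $\Phiv_{G[A_i]}(C_i)\le \psi'$, which is symmetric in $C_i$ and $A_i\setminus C_i$). So neither observation controls the size of $C_i$ from above, and calling it ``routine bookkeeping'' hides the missing step.

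The correct mechanism, which the paper uses, is to look at the \emph{other} side: all unmatched vertices of $A^r$ lie in $A_i\setminus C_i$. Since $|A^r|\ge |A_i|/2$ and at most $|M_i|\le |A^l|\le |A_i|/8$ of them can be matched, we get $|A_i\setminus C_i|\ge |A_i|/2-|A_i|/8=(3/8)|A_i|$. Combined with $|A_i|\ge (8/11)|V|$ (because condition~1 was not triggered at iteration $i-1$), this gives $|V\setminus C^{\cup}|=|A_i\setminus C_i|\ge (3/11)|V|$. This same inequality is also what makes the sparsity bound go through when $C=V\setminus C^{\cup}$: from $|V\setminus C^{\cup}|\ge (3/8)|C^{\cup}|$ one gets $|\partial(C)|\le \psi'|C^{\cup}|\le (8/3)\psi'|V\setminus C^{\cup}|=\psi|V\setminus C^{\cup}|$, and the constant matches \emph{exactly}, not ``with room to spare''. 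So this one observation about unmatched $A^r$ vertices is the crux of the whole case, and your write-up should lead with it rather than with the $A^l$ side.
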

\begin{proof}
We first consider the case the algorithm is terminated because $i = \tau$ is the last iteration. In this case, the output cut $C$ is $C = C_1 \cup C_2 \cup \cdots \cup C_i$, and it satisfies $|C| = |C_1 \cup C_2 \cup \cdots \cup C_i| < (3/11)|V| < |V|/2$. The following upper bound of $|\partial(C)|$ shows that $\Phiv(C) \leq \psi' = (3/8)\psi < \psi$.
\begin{align*}
|\partial(C)| = 
|\partial(C_1 \cup C_2 \cup \cdots \cup C_i)| &\leq \sum_{1 \leq j \leq i} |\partial_{G[A_i]}(C_i)|\\
&\leq \sum_{1 \leq j \leq i} \psi'|C_i|\\
& = \psi' |C_1 \cup C_2 \cup \cdots \cup C_i|.
\end{align*}
For the rest of the proof, we consider the case the algorithm is terminated because $|C_1 \cup C_2 \cup \cdots \cup C_i| \geq (3/11)|V|$. Since the algorithm does not terminate in iteration $i-1$, we have $|C_1 \cup C_2 \cup \cdots \cup C_{i-1}| < (3/11)|V|$, and so $|A_i| \geq (8/11)|V|$. Recall that the cut $C_i$ of $G[A_i]$ satisfies that all vertices in $A^r$ unmatched in $M_i$ must be in $A_i \setminus C_i$. Since $|M_i| \leq |A^l| \leq |A_i|/8$ and $|A^r| \geq |A_i|/2$, we have $|A_i \setminus C_i| \geq
|A^l| - |M_i| \geq |A_i|/2 - |A_i|/8 =
(3/8)|A_i| \geq (3/11)|V|$. Therefore, we not only have $|C_1 \cup C_2 \cup \cdots \cup C_i| \geq (3/11)|V|$ but also have $|V \setminus (C_1 \cup C_2 \cup \cdots C_i)| = |A_i \setminus C_i| \geq (3/11)|V|$.
Since the output $C$ is the one of $C_1 \cup C_2 \cup \cdots \cup C_i$ and $V \setminus (C_1 \cup C_2 \cup \cdots \cup C_i)$ that has the smaller size, we conclude that \[\frac{3}{11}|V| \leq |C| \leq \frac{1}{2}|V|.\]
It remains to show that $\Phiv(C) \leq \psi$.
In view of the above discussion, we have $|V \setminus (C_1 \cup C_2 \cup \cdots \cup C_i)| \geq (3/8) |C_1 \cup C_2 \cup \cdots \cup C_i|$. Combining this with the upper bound $|\partial(C)| \leq \psi' |C_1 \cup C_2 \cup \cdots \cup C_i|$ above, we have
\begin{align*}
|\partial(C)| &\leq \psi' |C_1 \cup C_2 \cup \cdots \cup C_i| \\
&\leq (8/3)\psi' |V \setminus (C_1 \cup C_2 \cup \cdots \cup C_i)|\\
&= \psi |V \setminus (C_1 \cup C_2 \cup \cdots \cup C_i)|.
\end{align*}
Therefore, 
\[\Phiv(C) = \frac{|\partial(C)|} {\min\left\{|C_1 \cup C_2 \cup \cdots \cup C_i|, |V \setminus (C_1 \cup C_2 \cup \cdots \cup C_i)|\right\}} \leq \psi,
\]
as required.
\end{proof}

To analyze the properties of the sequence of the matchings $H$ returned by the algorithm, we define the following notations for each $1 \leq i \leq \tau$.
\[
H_i = (M_1, M_2, \ldots, M_i), \ \ \ 
\text{and} \ \ \
F_i = F_{H_i} =  F_{M_i} \cdot F_{M_{i-1}} \cdots F_{M_1}.
\]
For the special case of $i= 0$, we have $H_0 = \emptyset$, and $F_0$ is the $n \times n$ identity matrix. 
We consider the following potential function~\cite{KRV09,RST14,SaranurakW19}.
\[\Pi(i) = \sum_{v_j \in A_{i+1}} \| F_i[j] - \mu_i \|^2, \ \ \text{where} \ \ \mu_i = \frac{1}{|A_{i+1}|} \cdot\sum_{v_j \in A_{i+1}} F_i[j].\]

Remember that $F_i[j]$ is the $j$th row of $F_i$, which is the vector \[\left(p_{H_i}(1 \rightsquigarrow j), p_{H_i}(2 \rightsquigarrow j), \ldots, p_{H_i}(n \rightsquigarrow j)  \right)^\top\] of the probabilities of $F_i$-random walks ending in $v_j$, and so $\mu_i$ is the average of  such  vectors among all $v_j \in A_{i+1}$. 
Note that $\Pi(i)$  only takes into account the probability mass that remains in $A_{i+1}$, and so $\Pi(i)$ being very small does not imply that a $F_{H_i}$-random walk is close to the uniform distribution.

\begin{lemma}[Property of the output sequence of matchings $H$]
Let $H = H_\tau = (M_1, M_2, \ldots,  M_{\tau})$,   $C = C_1 \cup C_2 \cup \cdots \cup C_\tau$ and $A_{\tau+1} = V \setminus (C_1 \cup C_2 \cup \cdots \cup C_\tau)$  in the final iteration $i = \tau$. 
\[
\text{If} \ \ \Pi(\tau) \leq \frac{1}{64 n^2}, \ \ \ \text{then} \ \ 
\forall_{v_i, v_j \in V\setminus C}, \ \left|p_H(i \rightsquigarrow j) - \frac{1}{|V \setminus C|}\sum_{v_l \in V \setminus C} p_H(i \rightsquigarrow l) \right| \leq \frac{1}{4n}.
\]
\end{lemma}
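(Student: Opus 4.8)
The plan is to observe that the hypothesis $\Pi(\tau)\le\frac{1}{64n^{2}}$ directly controls, for each retained vertex $v_{j}\in A_{\tau+1}=V\setminus C$, how close the $j$-th row of $F_{\tau}=F_{H}$ is to the barycentre $\mu_{\tau}$ of those rows, and then to read off the desired coordinate-wise estimate with room to spare.

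First I would unpack the potential at the final iteration. Since $A_{\tau+1}=V\setminus(C_{1}\cup\cdots\cup C_{\tau})=V\setminus C$ and $H=H_{\tau}$, we have $\Pi(\tau)=\sum_{v_{j}\in V\setminus C}\|F_{\tau}[j]-\mu_{\tau}\|^{2}$ with $\mu_{\tau}=\frac{1}{|V\setminus C|}\sum_{v_{j}\in V\setminus C}F_{\tau}[j]$. Every summand is nonnegative, so each individual term is at most the whole sum: $\|F_{\tau}[j]-\mu_{\tau}\|^{2}\le\Pi(\tau)\le\frac{1}{64n^{2}}$, hence $\|F_{\tau}[j]-\mu_{\tau}\|\le\frac{1}{8n}$ for every $v_{j}\in V\setminus C$.

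Next I would pass from the Euclidean norm to an entrywise bound: for any coordinate $i$ we have $|F_{\tau}[j][i]-\mu_{\tau}[i]|\le\|F_{\tau}[j]-\mu_{\tau}\|\le\frac{1}{8n}$. The remaining step is pure bookkeeping with the random-walk conventions: $F_{\tau}[j][i]=F_{\tau}[j,i]=p_{H}(i\rightsquigarrow j)$, and the $i$-th coordinate of the average row is $\mu_{\tau}[i]=\frac{1}{|V\setminus C|}\sum_{v_{l}\in V\setminus C}F_{\tau}[l][i]=\frac{1}{|V\setminus C|}\sum_{v_{l}\in V\setminus C}p_{H}(i\rightsquigarrow l)$. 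Substituting yields $\bigl|p_{H}(i\rightsquigarrow j)-\frac{1}{|V\setminus C|}\sum_{v_{l}\in V\setminus C}p_{H}(i\rightsquigarrow l)\bigr|\le\frac{1}{8n}\le\frac{1}{4n}$, which is even a bit stronger than asked and in fact holds for every $v_{j}\in V\setminus C$ and \emph{every} $v_{i}\in V$, a fortiori for $v_{i}\in V\setminus C$.

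The only point that needs care — and it is the sole ``obstacle'' — is keeping the index conventions straight: $F_{i}[j]$ denotes the $j$-th \emph{row}, the entry $F_{H}[a,b]$ equals $p_{H}(b\rightsquigarrow a)$ (columns index walk starts, rows index walk endpoints), and $\mu_{\tau}$ is the average over the rows indexed by $V\setminus C$. Once these are pinned down, the statement is a one-line consequence of the smallness of $\Pi(\tau)$; no concentration inequality or structural property of the matchings is needed at this stage, since all the real work lies in establishing $\Pi(\tau)\le\frac{1}{64n^{2}}$, which is handled separately via the cut-matching-game potential analysis.
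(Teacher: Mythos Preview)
Your proof is correct and follows essentially the same approach as the paper's: both reduce the claim to the trivial observation that a single squared coordinate $(F_\tau[j][i]-\mu_\tau[i])^2$ is bounded by the full sum $\Pi(\tau)$. The paper phrases it as a one-line contradiction, while you argue directly and thereby extract the slightly sharper bound $\tfrac{1}{8n}$; your remark that the estimate actually holds for all $v_i\in V$ (not just $v_i\in V\setminus C$) is also correct and not noted in the paper.
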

\begin{proof}
Consider any $v_i, v_j \in A_{\tau+1}$.
Recall that $$p_{H_\tau}(i \rightsquigarrow j) = F_\tau[j][i]$$ and $$\frac{1}{|A_{\tau+1}|}\sum_{v_l \in A_{\tau+1}} p_{H_\tau}(i \rightsquigarrow l)  = \mu_{\tau}[i].$$
Now suppose $|F_\tau[j][i] - \mu_\tau[i]| > 1/(4n)$. Then 
\[\Pi(\tau) = \sum_{v_i \in A_{\tau+1}} \| F_\tau[j] - \mu_\tau \|^2 \geq \left(F_\tau[j][i] - \mu_\tau[i]\right)^2  > \frac{1}{64n^2},\]
which is a contradiction.
\end{proof}

To make $H = H_\tau$  satisfy the requirement of \cref{lem-rand-cut-embed}, it suffices to have $\Pi(\tau) \leq 1/(64n^2)$. 

\begin{lemma}[Number of iterations]\label{lem-rand-cutmatch-iterations}
Suppose $\tau \geq K \log^2 n$, for some sufficiently large constant $K$. Then we have $\Pi(\tau) \leq 1/(64n^2)$ with high probability.
\end{lemma}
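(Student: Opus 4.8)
The plan is to run the usual potential‑function analysis of the cut–matching game: show that $\Pi(i)$ contracts by a multiplicative factor $1-\Omega(1/\log n)$ in expectation in each iteration, conditioned on the whole history, and then combine the resulting bound on $\E[\Pi(\tau)]$ with Markov's inequality. First I would record the starting value. Since $F_0$ is the identity and $A_1=V$, each row $F_0[j]$ is a standard basis vector, $\mu_0=\tfrac1n\mathbf 1$, and $\|F_0[j]-\mu_0\|^2=1-\tfrac1n$, so $\Pi(0)=n-1<n$.

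The heart of the argument is the per‑iteration drop, which I would extract from the potential analysis in \cref{sect-potential}: for every iteration $i$, conditioned on the history $\mathcal F_{i-1}$ (the active set $A_i$ and the matchings $M_1,\dots,M_{i-1}$ with their embeddings),
\[
\E\bigl[\Pi(i)\mid\mathcal F_{i-1}\bigr]\;\le\;\Bigl(1-\frac{\delta}{\log n}\Bigr)\,\Pi(i-1)
\]
for an absolute constant $\delta>0$. Two points need care here. First, $\Pi$ is taken only over the shrinking active set $A_{i+1}=A_i\setminus C_i$, but restricting the index set never increases it: for a fixed vector $c$, $\sum_{v_j\in A_{i+1}}\|F_i[j]-c\|^2\le\sum_{v_j\in A_i}\|F_i[j]-c\|^2$, and the choice $c=\mu_i$ (the mean over $A_{i+1}$) minimizes the left side over all $c$; likewise the pairwise averaging induced by $F_{M_i}$ is non‑expansive, so in fact $\Pi(i)\le\Pi(i-1)$ deterministically, and the randomness (in the cut player's random‑projection choice of $(A^l,A^r)$, and in the matching routine) only makes the drop substantial on average. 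Second, the matching player is required to saturate only a constant fraction of $A^l$ rather than all of it (because of the leftover vertices in the flow subroutine); this is precisely the regime the revised cut–matching analysis is built to handle, so the displayed inequality still holds in this setting.

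Given the per‑step bound, I iterate with the tower rule: $\Pi(0)$ is deterministic, so $\E[\Pi(\tau)]\le\bigl(1-\delta/\log n\bigr)^{\tau}\Pi(0)\le n\cdot e^{-\delta\tau/\log n}$. Plugging in $\tau\ge K\log^2 n$ gives $\E[\Pi(\tau)]\le n^{1-\delta K}$, and Markov's inequality yields
\[
\Pr\Bigl[\Pi(\tau)>\tfrac{1}{64n^2}\Bigr]\;\le\;64\,n^2\cdot\E[\Pi(\tau)]\;\le\;64\,n^{3-\delta K},
\]
which is $1/\poly(n)$ and can be pushed below any prescribed $n^{-c}$ by taking the constant $K$ large enough; a union bound over the $\tau=\poly\log n$ iterations also absorbs any $1/\poly(n)$ failure probability of the randomized matching subroutine.

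The main obstacle is the per‑iteration potential‑drop inequality itself: this is where the cut player's strategy and the behaviour of the averaging operators $F_{M_i}$ enter, and where one must re‑verify the classical KKOV/KRV/RST bound \cite{KRV09,RST14} in the weakened form where the matching only saturates a constant fraction of $A^l$. I would therefore isolate that inequality as the key technical lemma, proved in \cref{sect-potential}, and treat the remaining ingredients used here — the initial bound $\Pi(0)<n$, monotonicity of $\Pi$ under vertex removal and under $F_{M_i}$, the iteration, and the Markov step — as routine.
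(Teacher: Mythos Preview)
Your approach is essentially the paper's: compute $\Pi(0)=n-1$, invoke the per-iteration potential drop (deferred to \cref{lem-potential-drop} in \cref{sect-rand-potential}), iterate and apply Markov. Two small corrections are worth making. First, the per-step bound actually proved in \cref{sect-rand-potential} is
\[
\Expect[\Pi(i)\mid\mathcal F_{i-1}]\;\le\;\Pi(i-1)\Bigl(1-\frac{1}{K\log n}\Bigr)+n^{-\Omega(K)},
\]
with an additive error coming from the failure probability in the random-projection concentration (\cref{lem-projection}); your clean multiplicative form is not quite what the analysis yields. The fix is trivial---iterating the recursion with the additive term still gives $\Expect[\Pi(\tau)]\le n\cdot(1-1/(K\log n))^{\tau}+\tau\cdot n^{-\Omega(K)}$, and both summands are polynomially small for $\tau\ge K\log^2 n$ with $K$ large---but it is worth stating precisely. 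Second, your remark that ``the matching player is required to saturate only a constant fraction of $A^l$\dots because of the leftover vertices in the flow subroutine'' conflates this randomized RST game with the deterministic KKOV game of \cref{sect-KKOV-single}. Here the matching player (\cref{lem-matching-player-rand}) has no leftover: every vertex of $A^l$ is either matched in $M_i$ or placed in $C_i$, and the $C_i$ vertices are removed from the active set and contribute the term $\sum_{v_j\in C_i}\|F_{i-1}[j]-\mu_{i-1}\|^2$ to the potential drop. No ``weakened'' version of the RST analysis is needed.
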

\begin{proof}
Initially, $H_0$ is the identity matrix and $\mu_0=(1/n, 1/n, \ldots, 1/n)^\top$, and so  \[\Pi(0) = (n^2 -n) \cdot \left(0 - \frac{1}{n}\right)^2 + n \cdot \left(1 - \frac{1}{n}\right)^2 = n-1,\] and so it suffices to show that the potential decreases by a factor of $1 - \Omega(1 / \log n)$ in expectation in each iteration $i$. 
As the analysis is the same as in \cite{RST14,SaranurakW19},
we defer the analysis of the potential drop to \cref{lem-potential-drop} in \cref{sect-rand-potential}.
\end{proof}





\paragraph{Cut player.} We describe the algorithm~\cite{RST14} for the cut player to choose $A^l$ and $A^r$ in   iteration $i$ of the cut-matching game. 
\begin{enumerate}
    \item Select $r \in \mathbb{R}^{n}$ to be a random unit vector, where $r[j]$ is associated with the vertex $v_j \in V$.
    \item Calculate $u = F_{i-1} \cdot r \in \mathbb{R}^{n}$, and $\bar{u} = \sum_{v_j \in A_i} u[j] / |A_i|$.
    \item Define the two sets $L = \{v_j \in A_i \ | \ u[j] < \bar{u} \}$ and $R = A_i \setminus L$. For the $S = L$ or $S = R$,  define the two numbers $P_S = \sum_{v_j \in S}(u[j] - \bar{u})^2$ and $\ell_S = \sum_{v_j \in S} |u[j] - \bar{u}|$. The two sets  $A^l$ and $A^r$  are chosen as follows.
    \begin{enumerate}
        \item If $|L| \leq |R|$ and $P_L \geq P_A / 20$, then $A^l$ is the $|A_i|/8$ vertices in $A_i$ with the smallest $u$-value, and $A^r = R$.
        \item If $|L| \leq |R|$ and $P_L < P_A / 20$, then $A^l$ is chosen as the $\max\{ |R'|, |A_i|/8 \}$ vertices in $R' = \{v_j \in A_i \ | \ u[j] \geq \bar{u} + 6\ell_R / |A_i| \}$ with the largest $u$-value, and $A^r = \{v_j \in A_i \ | \ u[j] \leq \bar{u} + 4\ell_R / |A_i| \}$.
        \item If $|L| > |R|$ and $P_R \geq P_A / 20$, then $A^l$ is the $|A_i|/8$ vertices in $A_i$ with the largest $u$-value, and $A^r = L$.        
        \item If $|L| > |R|$ and $P_R < P_A / 20$, then $A^l$ is chosen as the $\max\{ |L'|, |A_i|/8 \}$ vertices in $L' = \{v_j \in A_i \ | \ u[j] \leq \bar{u} - 6\ell_L / |A_i| \}$ with the smallest $u$-value, and $A^r = \{v_j \in A_i \ | \ u[j] \geq \bar{u} - 4\ell_L / |A_i| \}$.
    \end{enumerate}
\end{enumerate}
This algorithm is taken from the proof of~\cite[Lemma~3.3]{RST14}.
The reason that $A^l$ and $A^r$ are selected this way is to fulfill the requirements in \cref{lem-rst-property}.
The above procedure can be implemented in the distributed setting efficiently. Recall that $c$ and $d$ are the congestion and dilation of the embedding of each matching $M_j$.

\begin{lemma}[Round complexity of the cut player] \label{lem-cut-pl-impl}
The algorithm for the cut player in iteration $i$ costs $O(D \log n + cdi)$ rounds with high probability.
\end{lemma}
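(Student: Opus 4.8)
The plan is to walk through the three steps of the cut player's procedure and bound the round complexity of each, using the Steiner-tree communication primitives (the current graph is supplied with a Steiner tree of diameter $D$) and the fact that each matching $M_j$ is embedded with congestion $c$ and dilation $d$. The only nontrivial quantities to compute are $u = F_{i-1} \cdot r$ and the scalars $\bar u$, $P_L$, $P_R$, $\ell_L$, $\ell_R$; everything else (sorting $u$-values, extracting the top/bottom $|A_i|/8$ vertices, deciding which of the four cases applies) reduces to a constant number of aggregation and ranking operations over the Steiner tree, each of which costs $O(D \log n)$ rounds by the primitives of \cref{sect-steiner}. So the crux is computing the matrix-vector product $F_{i-1} \cdot r$.

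For the matrix-vector product, I would unfold $F_{i-1} = F_{M_{i-1}} \cdots F_{M_1}$ and compute it iteratively: starting from $r^{(0)} = r$, set $r^{(t)} = F_{M_t} \cdot r^{(t-1)}$ for $t = 1, \dots, i-1$, and output $u = r^{(i-1)}$. Crucially, we do not store the length-$n$ vectors explicitly at a single vertex; instead, each vertex $v_\ell \in V$ holds the single coordinate $r^{(t)}[\ell]$. Applying $F_{M_t}$ then amounts to a purely local update: each vertex $v_\ell$ that is matched to $w$ in $M_t$ must learn $r^{(t-1)}$ at $w$ and set $r^{(t)}[\ell] = \tfrac12 r^{(t-1)}[\ell] + \tfrac12 r^{(t-1)}[w]$; unmatched vertices leave their coordinate unchanged. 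Since $M_t$ is embedded into $G$ via the path set $\PP_t$ with congestion $c$ and dilation $d$, every matched pair can exchange its current coordinate by sending one $O(\log n)$-bit message along the embedding path, and all these exchanges for a fixed $t$ run in parallel in $O(cd)$ rounds (dilation $d$ for path length, congestion $c$ for the bandwidth bottleneck on a shared edge). Doing this for $t = 1, \dots, i-1$ costs $O(cd \cdot i)$ rounds in total, which matches the $O(cdi)$ term in the statement.

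Once $u$ is distributed across the vertices, the random unit vector $r$ is generated by having each vertex $v_j$ sample an independent Gaussian $g_j$, then normalizing by $\|g\|$; the norm is a single sum aggregated over the Steiner tree in $O(D)$ rounds, and "with high probability" enters here because we need $\|g\|$ to be well-concentrated (and because the potential-drop analysis of \cref{lem-rand-cutmatch-iterations} is already a high-probability statement). Computing $\bar u = \sum_{v_j \in A_i} u[j]/|A_i|$ is one aggregation ($O(D)$ rounds); computing $P_L, P_R, \ell_L, \ell_R$ after each vertex has learned $\bar u$ is a constant number of further aggregations; determining the sets $L, R, L', R'$ is then local; and selecting the $|A_i|/8$ vertices with smallest or largest $u$-value requires finding a rank threshold, which is again an $O(D \log n)$-round operation (e.g.\ by binary search on the value, each probe being a count aggregation, or by a single sorting-network-style primitive over the tree). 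Summing: $O(D \log n)$ for the $O(1)$ tree aggregations/rankings, plus $O(cdi)$ for the matrix-vector product, gives the claimed $O(D \log n + cdi)$ bound.

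The main obstacle I anticipate is the parallel-exchange step in the matrix-vector product: one must verify that for a \emph{single} application of $F_{M_t}$, routing the coordinate exchanges of \emph{all} matched pairs simultaneously along the fixed embedding $\PP_t$ genuinely takes only $O(cd)$ rounds rather than $O(c \cdot d)$ blowing up further when composed over the $i-1$ matchings — i.e.\ that the embeddings for different $t$ need not be used simultaneously, so their congestions do not add. Since the updates for step $t$ depend on the output of step $t-1$, the applications are inherently sequential, so at any moment only one embedding $\PP_t$ is active, and the per-step cost is indeed $O(cd)$; care is only needed to confirm that $r^{(t-1)}$ is fully computed (and locally available at both endpoints of each matching edge) before step $t$ begins, which is immediate from the sequential schedule. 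A secondary subtlety is that coordinates are real numbers, so strictly one works with $O(\log n)$-bit approximations; the error incurred over $i = O(\log^2 n)$ compositions is polynomially small and does not affect the cut player's guarantees in \cref{lem-rst-property}, but this should be acknowledged.
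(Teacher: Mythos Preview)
Your proposal is correct and follows essentially the same approach as the paper: generate $r$ by sampling Gaussians and normalizing via a tree aggregation, compute $u=F_{i-1}r$ by sequentially applying each $F_{M_t}$ along its embedding for a total of $O(cdi)$ rounds, and then extract $A^l,A^r$ by a constant number of aggregations plus a rank-threshold selection in $O(D\log n)$ rounds.

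One small correction: the ``with high probability'' in the round complexity does not come from concentration of $\|g\|$ (that norm is computed exactly by aggregation, deterministically in $O(D)$ rounds) nor from the potential analysis (which concerns correctness, not round complexity). It comes from the rank-selection step: the $u$-values are arbitrary reals, so the deterministic binary search of \cref{lem-det-bsearch} over an integer range does not apply, and the paper instead invokes the randomized binary search of \cref{lem-rand-bsearch}, which terminates in $O(D\log n)$ rounds with high probability.
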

\begin{proof}
For sampling the unit vector $r$, we can simply let each vertex $v_j \in V$ samples from the standard normal distribution, and then re-scale the sampled values of all vertices to make it a unit vector. This can be implemented in $O(D)$ rounds via a straightforward information gathering along the Steiner tree (\cref{lem-basic}). 

The calculation of $u = F_i \cdot r$ can be done in $O(cdi)$ rounds by executing the matrix-vector multiplication directly by walking along the matchings $M_1, M_2, \ldots, M_{i-1}$. 

The calculation of $A^l$ and $A^r$ can be done in $O(D \log n)$ rounds with high probability, using a binary search (\cref{lem-rand-bsearch}). 
\end{proof}

\paragraph{Matching player.} The task of the matching player can be solved with high probability using \cref{lem-matching-player-rand} with parameters $\psi'$, $S = A^l$, and $T = A^r$. 
The algorithm of \cref{lem-matching-player-rand} either finds a cut $C_i$ or finds a matching $M_i$ with its embedding $\PP_i$ meeting our requirements.
The round complexity of the algorithm is  $O(\psi^{-2} \log^4 n (D + \psi^{-3} \log^4 n) )$ rounds.

Note that the round complexity for algorithm of the cut player $O(cdi + D\log n) = O(\psi^{-3} \log^7 n + D \log n)$ (\cref{lem-cut-pl-impl}) is dominated by the round complexity for the algorithm of the matching player. Since there are $\tau = O(\log^2 n)$ iterations, we conclude that the overall round complexity of the cut-matching game is 
\[O(\tau \cdot \psi^{-2} \log^4 n (D + \psi^{-3} \log^4 n) ) 
= O(\psi^{-2} \log^6 n (D + \psi^{-3} \log^4 n) ).\]

\subsection{Extracting a Well-connected Subgraph}\label{sect-expander-emb-rand}

The goal of this section is to prove \cref{lem-expander-from-random-walk}, which allows us to transform the embeddings for $H = (M_1, M_2, \ldots,  M_k)$ in to a subgraph $G[W]$ with high $\Phiv(G[W])$, for the case $|C| < (1/8)|V|$ in the outcome of the cut-matching game.

\begin{lemma}[Well-connected subgraph via random walk]\label{lem-expander-from-random-walk}
Consider a bounded-degree graph $G = (V,E)$. Suppose we are given a subset $C \subseteq V$ with $|C| < (1/8)|V|$ and a sequence of matchings $H = (M_1, M_2, \ldots,  M_k)$, where each $M_i$ is a matching that is embedded into $V$ with congestion $c$ and dilation $d$. Furthermore, 
    $$\forall_{v_i, v_j \in V\setminus C}, \ \left|p_H(i \rightsquigarrow j) - \frac{1}{|V \setminus C|}\sum_{v_l \in V \setminus C} p_H(i \rightsquigarrow l) \right| \leq \frac{1}{4n}.$$
Then there is a randomized algorithm with round complexity $O(D + ck \log^2 n + dk)$ that, with high probability, finds a subset $W \subseteq V$ satisfying
\begin{align*}
   &|W| \geq |V| - 4|C| \ \ \  \text{and} \ \ \
   \Phiv(G[W]) = \Omega(c^{-1}d^{-1}k^{-2} \log^{-1} n).
\end{align*}
\end{lemma}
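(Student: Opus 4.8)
The plan is to use the random-walk probabilities encoded by $F_H$ to embed an auxiliary bounded-degree expander $R$ into a large subset $U \subseteq V \setminus C$, and then take $W$ to be the set of all vertices that the embedding actually touches (endpoints plus internal path vertices, which may dip into $C$). The near-uniformity hypothesis $|p_H(i\rightsquigarrow j) - \frac1{|V\setminus C|}\sum_{l} p_H(i\rightsquigarrow l)| \le \frac1{4n}$ on $V\setminus C$ says that, restricted to $V\setminus C$, the doubly-stochastic-like action of $F_H$ is flat up to $\pm\frac1{4n}$. Since $|V\setminus C| \ge \frac78|V|$, the ``target mass'' $\frac1{|V\setminus C|}\sum_{v_l\in V\setminus C} p_H(i\rightsquigarrow l)$ is $\Theta(1/n)$ for a constant fraction of sources $i$ (those that don't lose too much mass into $C$); call such sources \emph{good} and let $U$ be the set of good vertices, with $|U| \ge |V| - 4|C|$ after discarding a few bad ones. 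For $v_i,v_j \in U$ we then get $p_H(i\rightsquigarrow j) = \Theta(1/n)$. This is exactly the setup needed to realize an edge of a random graph: to add an edge $\{v_i,v_j\}$ of the target expander $R$, route a token by an $F_H$-random walk from $v_i$ and another from $v_j$ and ``meet in the middle''; each such walk traverses at most $\le k$ of the embedded matchings, so composing it with the embedding $\PP_1,\dots,\PP_k$ gives a path in $G$ of length $O(dk)$, and the number of walks crossing any fixed edge of $G$ is controlled by the congestion $c$ times the number of walks per vertex.

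Concretely, the steps I would carry out, in order: (1) identify $U \subseteq V\setminus C$ with $|U|\ge |V|-4|C|$ and $p_H(i\rightsquigarrow j)=\Theta(1/n)$ for all $v_i,v_j\in U$, using $|C| < |V|/8$; (2) fix a target graph $R$ on vertex set $U$ that is a bounded-degree $\Omega(1)$-expander and is moreover ``random-walk-realizable'': e.g.\ let each vertex $v\in U$ pick $\Theta(\log n)$ independent $F_H$-random walks of length $k$ started at $v$, creating an edge of $R$ to each endpoint that lands in $U$; by the near-uniformity and a Chernoff bound over the $\Theta(n\log n)$ walks, with high probability every $v\in U$ gets $\Theta(\log n)$ edges into $U$ and $R$ is an $\Omega(1)$-sparsity expander (this is the standard ``random regular graph is an expander'' argument, with the one-round $F_H$-walk distribution close enough to uniform); (3) record the actual path in $G$ realizing each walk, by replacing each matching step $M_t$ by the corresponding embedding path $P_e\in\PP_t$; the resulting $\PP_R$ has dilation $O(dk)$ and, since each vertex sources $\Theta(\log n)$ walks and each matching edge is in $\le c$ paths, congestion $O(ck\log n)$ — here the extra $\log n$ can be shaved or absorbed, which is why the final bound carries a $\log^{-1}n$; (4) set $W$ to be all $G$-vertices appearing on some path in $\PP_R$, and invoke the standard embedding-to-sparsity bound: if a $\phi_R$-expander $R$ on vertex set $U\subseteq W$ embeds into $G[W]$ with congestion $c'$ and dilation $d'$, and $R$ has bounded degree while $G$ has bounded degree, then $\Phiv(G[W]) = \Omega(\phi_R/(c' d'))$, giving $\Omega(1/(ck\log n \cdot dk)) = \Omega(c^{-1}d^{-1}k^{-2}\log^{-1}n)$. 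The round complexity is: $O(D)$ to set up and identify $U$ (gathering the needed global statistics along the Steiner tree, \cref{lem-basic}), $O(dk)$ to run each random walk over the matching embeddings, and $O(ck\log^2 n)$ to resolve congestion when simulating the $\Theta(\log n)$ walks per vertex through paths of congestion $c$ — matching the claimed $O(D + ck\log^2 n + dk)$.

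I expect the main obstacle to be Step (4) together with the bookkeeping in Step (3): controlling the \emph{dilation} of the $G$-embedding of $R$, not just its congestion. Unlike the cut-matching analyses of \cite{KRV09,RST14,SaranurakW19}, where only congestion matters for certifying expansion of the \emph{embedded graph}, here we must bound $\Phiv(G[W])$ for the actual vertex set $W$ swept out by the paths, and a single edge of $R$ corresponds to a $G$-path of length up to $O(dk)$; the standard reduction ``small-congestion, small-dilation embedding of an expander $\Rightarrow$ the host is an expander'' loses a factor equal to the product of congestion and dilation, so I must make sure both are genuinely bounded by $\poly(c,d,k,\log n)$ and that $W$ is not much larger than $U$ (it is at most $|U| + O(d'\cdot|E(R)|) = |U| + $ internal path vertices, but these internal vertices are fine to include since they only help connectivity while the sparsity denominator $\min\{|S|,|W\setminus S|\}$ is what we must lower-bound against $|\partial_{G[W]}(S)|$ — a short argument is needed that padding $U$ by path-internal vertices does not destroy the sparsity bound, which follows because every such vertex sits on a path joining two $U$-vertices). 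The Chernoff argument in Step (2) that a near-uniform (to within $\pm\frac1{4n}$) one-step walk distribution still yields an expander is routine but must be stated carefully, since $\frac1{4n}$ is not negligibly smaller than $\frac1n$ — here is where the precise constant ``$1/4$'' in the hypothesis is used.
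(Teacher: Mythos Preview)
Your proposal is essentially the paper's proof: identify a good set $U\subseteq V\setminus C$ with $|U|\ge |V|-4|C|$ and $p_H(i\rightsquigarrow j)=\Theta(1/n)$ on $U$, build a random graph $R$ on $U$ via $\Theta(n\log n)$ $F_H$-random walks, bound the resulting $G$-embedding's congestion by $O(ck\log n)$ and dilation by $O(dk)$, and then invoke the embedding-to-sparsity transfer (\cref{lem-expander-emb}) with $W$ the set of all vertices touched by the embedding. One quantitative slip to fix: the transfer lemma loses a $\Delta_R^{-1}$ factor, and your $R$ has $\Delta_R=\Theta(\log n)$, not $O(1)$; the paper recovers the stated bound because in fact $\Phiv(R)=\Omega(\log n)$ (not merely $\Omega(1)$), and this $\log n$ cancels the $\Delta_R^{-1}$ loss --- so when you carry out Step~(2), prove the stronger $\Phiv(R)=\Omega(\log n)$, which the same Chernoff-plus-union-bound argument gives.
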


Note that the above algorithm guarantees that $W \subseteq V\setminus C$ and works for general graphs, then it would give the same qualitative guarantee as in the expander trimming algorithm by \cite{SaranurakW19} (with worse quantitative guarantees). Here, we relax this requirement.
For the rest of this section, suppose we are given $C$ and $H = (M_1, M_2, \ldots,  M_k)$ described in \cref{lem-expander-from-random-walk}.
 We select the set of \emph{good vertices} $U \subseteq V$ as follows. Each vertex $v \in V \setminus C$ initiates $\Theta(\log n)$ $F_H$-random walks, and then it calculates the fraction of walks that end up in $V \setminus C$. If the fraction is at least $1/2$, then add $v$ to $U$. \cref{lem-property-good-v-chernoff} is a straightforward consequence of a Chernoff bound.

\begin{lemma}\label{lem-property-good-v-chernoff}
With high probability, the following is true for all vertices  $v_i \in V \setminus C$.
\begin{itemize}
    \item If $\sum_{v_l \in V \setminus C} p_H(i \rightsquigarrow l) \geq 2/3$, then $v_i \in U$. 
    \item  If $\sum_{v_l \in V \setminus C} p_H(i \rightsquigarrow l) \leq 1/3$, then $v_i \notin U$. 
\end{itemize}
\end{lemma}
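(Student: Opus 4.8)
The plan is to reduce everything to a single Chernoff bound applied vertex by vertex. Fix $v_i \in V \setminus C$ and set $q_i := \sum_{v_l \in V \setminus C} p_H(i \rightsquigarrow l)$; by the definition of $p_H$ this is exactly the probability that one $F_H$-random walk started at $v_i$ ends inside $V \setminus C$. The algorithm runs $m = \Theta(\log n)$ independent such walks from $v_i$, so if $X_i$ denotes the number of them that terminate in $V \setminus C$, then $X_i$ is a sum of $m$ i.i.d.\ $\Bernoulli(q_i)$ variables with $\E[X_i] = m q_i$, and by construction $v_i$ is added to $U$ precisely when $X_i / m \geq 1/2$.

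First I would handle the case $q_i \geq 2/3$, where I must show $v_i \in U$ w.h.p., i.e.\ that $\Prob[X_i < m/2]$ is tiny. Since $\E[X_i] = m q_i \geq 2m/3$, the event $\{X_i < m/2\}$ is contained in $\{X_i < (3/4)\E[X_i]\}$, and a multiplicative Chernoff bound gives $\Prob[X_i < m/2] \leq \exp(-\Omega(\E[X_i])) = \exp(-\Omega(m))$. The symmetric case $q_i \leq 1/3$ is analogous: now $\E[X_i] = m q_i \leq m/3$, the event $\{X_i \geq m/2\}$ is contained in $\{X_i \geq \E[X_i] + m/6\}$, and an additive Chernoff (Hoeffding) bound gives $\Prob[X_i \geq m/2] \leq \exp(-\Omega(m))$. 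Choosing the hidden constant in $m = \Theta(\log n)$ large enough makes each of these failure probabilities at most $n^{-3}$; a union bound over the at most $n$ vertices $v_i \in V \setminus C$ then yields the stated conclusion with probability at least $1 - n^{-2}$.

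There is essentially no obstacle here — this is exactly the ``straightforward consequence of a Chernoff bound'' that the statement advertises. The only points worth stating explicitly are that the $\Theta(\log n)$ walks started from a given vertex use fresh independent randomness (so the i.i.d.\ assumption is legitimate), and that the walk length $k$ and the congestion/dilation parameters $c,d$ are irrelevant to \emph{this} argument: they only enter the round-complexity bound of \cref{lem-expander-from-random-walk}, since simulating one $F_H$-random walk costs $O(dk)$ rounds and $\Theta(\log n)$ of them can be run in parallel with the stated congestion.
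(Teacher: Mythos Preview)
Your proposal is correct and follows essentially the same approach as the paper: fix a vertex, view the number of its $\Theta(\log n)$ walks landing in $V\setminus C$ as a binomial, apply a Chernoff/Hoeffding bound to separate the $q_i\ge 2/3$ and $q_i\le 1/3$ cases around the threshold $m/2$, and union bound over vertices. The paper does exactly this (spelling out only one direction and saying the other is similar); your additional remarks on independence and on the irrelevance of $c,d,k$ to this particular bound are fine but not needed.
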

\begin{proof}
Suppose each vertex $v_i \in V \setminus C$ initiates $K \log n$ $F_H$-random walks. Let $X$ be the number of walks starting from $v_i$ ending in $V\setminus C$. The criterion for $v_i$ to be included in $U$ is   $X \geq (K/2) \log n$. If $\sum_{v_l \in V \setminus C} p_H(i \rightsquigarrow l) \geq 2/3$, then $\mu = \Expect[X] \geq (2K/3) \log n$, and so the probability that $v_i \notin U$ is  
\[
\Prob\left[X <  (K/2) \log n\right]
= \Prob\left[X <  \left(1 + \frac{1}{3}\right)\mu \right]
\leq \exp\left(- \frac{1}{2} \cdot \frac{1}{3^2} \cdot \mu \right)
= n^{-\Omega(K)},
\]
by a Chernoff bound $\Prob[X \leq (1+\delta)\mu] \leq \exp(-\delta^2 \mu)$. The case of $\sum_{v_l \in V \setminus C} p_H(i \rightsquigarrow l) \leq 1/3$ is similar.
\end{proof}

The following lemma gives a lower bound on the number of good vertices, and it also shows that the random walk probability between any two good vertices must be $\Theta(1/n)$.

\begin{lemma}\label{lem-property-good-v}
With high probability, the set $U$ satisfies 
\begin{align*}
  &|U| \geq |V| - 4 |C| \geq \frac{|V|}{2} \ \ \ \text{and} \ \ \
  \frac{1}{12n} \leq p_H(i \rightsquigarrow j) \leq \frac{85}{84n} \ \ \text{for all $v_i, v_j \in U$}.
\end{align*}
\end{lemma}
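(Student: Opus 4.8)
The strategy is to establish the two conclusions in sequence, using Lemma \ref{lem-property-good-v-chernoff} to control membership in $U$ and the near-uniformity hypothesis on $p_H$ to pin down the transition probabilities. First I would lower bound $|U|$. For $v_i \in V\setminus C$, write $q_i = \sum_{v_l \in V\setminus C} p_H(i \rightsquigarrow l)$ for the probability an $F_H$-random walk from $v_i$ stays inside $V\setminus C$. By Lemma \ref{lem-property-good-v-chernoff}, every $v_i$ with $q_i \geq 2/3$ lies in $U$ whp, so it suffices to bound the number of vertices with $q_i < 2/3$. The key observation is that $\sum_{v_i \in V\setminus C} q_i = \sum_{v_i \in V\setminus C}\sum_{v_l \in V\setminus C} p_H(i \rightsquigarrow l)$, and by double counting (using that $F_H$ is doubly stochastic, so $\sum_{v_i} p_H(i \rightsquigarrow l) = 1$ for each $l$) this equals $\sum_{v_l \in V\setminus C}\big(1 - \sum_{v_i \in C} p_H(i \rightsquigarrow l)\big) \geq |V\setminus C| - \sum_{v_i \in C} 1 = |V\setminus C| - |C|$. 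Hence the ``escaping mass'' $\sum_{v_i\in V\setminus C}(1-q_i) \leq |C|$, and a Markov-type argument shows that the number of $v_i$ with $1 - q_i > 1/3$ is at most $3|C|$. Therefore at least $|V\setminus C| - 3|C| = |V| - 4|C|$ vertices have $q_i \geq 2/3$ and lie in $U$; since $|C| < |V|/8$, this is at least $|V|/2$.

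For the second claim, fix $v_i, v_j \in U$. The near-uniformity hypothesis gives $\big|p_H(i\rightsquigarrow j) - \frac{1}{|V\setminus C|}\sum_{v_l\in V\setminus C} p_H(i\rightsquigarrow l)\big| \leq \frac{1}{4n}$, so it is enough to show the average term $\bar p_i := \frac{1}{|V\setminus C|}\sum_{v_l\in V\setminus C} p_H(i\rightsquigarrow l) = q_i / |V\setminus C|$ is within a small multiplicative window around $1/n$. Since $v_i \in U$, Lemma \ref{lem-property-good-v-chernoff} forces $q_i \geq 1/3$ (contrapositive of the second bullet), and trivially $q_i \leq 1$. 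Combined with $|V\setminus C| \geq |V| - |V|/8 = (7/8)n \geq (7/8)|V\setminus C|$... more carefully, $(7/8)n \leq |V\setminus C| \leq n$, so $\bar p_i = q_i/|V\setminus C|$ lies in $[\frac{1/3}{n}, \frac{1}{(7/8)n}] = [\frac{1}{3n}, \frac{8}{7n}]$. Adding the $\pm \frac{1}{4n}$ slack gives $p_H(i\rightsquigarrow j) \in [\frac{1}{3n} - \frac{1}{4n}, \frac{8}{7n} + \frac{1}{4n}]$, and one checks $\frac{1}{3n}-\frac{1}{4n} = \frac{1}{12n}$ and $\frac{8}{7n}+\frac{1}{4n} = \frac{39}{28n} < \frac{85}{84n}$... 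I would recompute the constants to match the claimed $\frac{1}{12n}$ and $\frac{85}{84n}$ exactly; the lower constant $\frac{1}{12n}$ already matches, and a slightly sharper bound on $|V\setminus C|$ versus $n$ (or on the Chernoff threshold $q_i \geq 1/3$) should produce the stated upper constant.

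The main obstacle I anticipate is not conceptual but bookkeeping: chasing the exact numerical constants $\frac{1}{12n}$, $\frac{85}{84n}$, and the factor $4$ in $|V|-4|C|$ through the chain of inequalities, since they depend on the precise thresholds ($2/3$, $1/3$, $1/8$) chosen in the preceding lemmas and on whether the normalization uses $|V|$ or $|V\setminus C|$. A secondary subtlety is that Lemma \ref{lem-property-good-v-chernoff} is only an ``whp'' statement, so the conclusion here is also ``whp'' — I would take a union bound over the at most $n$ relevant events (one per vertex in $V\setminus C$) to conclude that all the membership implications hold simultaneously with high probability, and then the two deterministic-looking inequalities follow on that event.
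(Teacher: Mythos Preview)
Your proposal is correct and matches the paper's proof essentially step for step: the same Markov-type count on the mass escaping to $C$ gives $|U| \geq |V| - 4|C|$, and the same combination of the near-uniformity hypothesis with the bounds on $|V \setminus C|$ and on $q_i$ yields the transition-probability window. Your unease about the upper constant is in fact justified: the paper obtains $85/(84n)$ by asserting $q_i \leq 2/3$ for $v_i \in U$, which does not follow from Lemma~\ref{lem-property-good-v-chernoff} (only $q_i > 1/3$ does); your trivial bound $q_i \leq 1$ gives the weaker $39/(28n)$, which is still $\Theta(1/n)$ and suffices for every downstream use of the lemma.
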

\begin{proof}
Observe that 
\[\sum_{v_i \in V, \; v_j\in C} p_H(i \rightsquigarrow j) =
\sum_{v_i \in V, \; v_j\in C} p_H(j \rightsquigarrow i)
= |C|,
\]
and so the number of vertices $v_i \in V \setminus C$ with at least $1/3$ probability of $F_H$-random walk landing in $C$ must be at most $3|C|$. Therefore, \cref{lem-property-good-v-chernoff} implies $|U| \geq |V \setminus C| - 3|C| = |V| - 4|C|$. 

For the rest of the proof, we calculate $p_H(i \rightsquigarrow j)$ for any $v_i, v_j \in U$. 
Recall that in the assumption in \cref{lem-expander-from-random-walk}, we have
    $$\left|p_H(i \rightsquigarrow j) - \frac{1}{|V \setminus C|}\sum_{v_l \in V \setminus C} p_H(i \rightsquigarrow l) \right| \leq \frac{1}{4n}.$$
We also have the following two bounds, where the first one is due to the assumption that $|C| < (1/8) |V|$  in \cref{lem-expander-from-random-walk}, and the second one is due to \cref{lem-property-good-v-chernoff}.
\begin{align*}
    & \frac{7n}{8} \leq |V \setminus C| \leq n \ \ \ \text{and} \ \ \
    \frac{1}{3} \leq \sum_{v_l \in V \setminus C} p_H(i \rightsquigarrow l) \leq \frac{2}{3} \ \ \text{for all $v_i \in U$.}
\end{align*}
Now it is clear that for any $v_i, v_j \in U$, we have 
\[
\frac{1}{12n} = \frac{\frac{1}{3}}{n} - \frac{1}{4n}
\leq  p_H(i \rightsquigarrow j) \leq \frac{\frac{2}{3}}{\frac{7n}{8}} + \frac{1}{4n} = \frac{85}{84n},
\]
as required.
\end{proof}

We construct a graph $R$ on the vertex set $U$ by repeating $\Theta(n \log n)$ times the following procedure. Note that it is possible that $R$ has multi-edges, and we do not coalesce  these multi-edges into single edges.

\begin{enumerate}
    \item Choose a vertex $v_i \in U$ uniformly at random.
    \item Do an $F_H$-random walk starting at $v_i$. Let $v_j$ be the end vertex of the random walk. 
    \item If $v_j \in U$, then add the edge $\{v_i, v_j\}$ to $R$.
\end{enumerate}

This construction gives an embedding of $R$ into $U$ in the underlying graph $G$. We can bound the congestion and dilation of the embedding as follows.

\begin{lemma}\label{lem-property-random-graph-0}
The embedding of $R$ into $U$ has congestion $c' = O(ck \log n)$ and dilation $d' = O(dk)$ with high probability.
\end{lemma}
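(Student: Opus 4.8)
The plan is to bound the congestion and dilation of the embedding of $R$ into $U$ that arises from the $\Theta(n \log n)$ random-walk trials. Each trial contributes at most one edge of $R$, and that edge is realized by a single $F_H$-random walk in $G$ (each step of the walk either stays put or traverses one matching edge $M_t$, and that matching edge is itself embedded by a path $P$ in $G$ of length at most $d$ appearing in at most $c$ of the $M_t$-paths). So a walk of $k$ matching-steps traces a path in $G$ of length at most $dk$, giving dilation $d' = O(dk)$ deterministically. For congestion, the key point is to count how many times a fixed edge $e \in E(G)$ is used across all $\Theta(n \log n)$ trials. I would first carry out a per-trial analysis: the probability that a single $F_H$-random walk (started from a uniformly random $v_i \in U$) uses $e$ at step $t$ is at most (number of matching edges of $M_t$ whose embedding path $P_{e'}$ contains $e$) times the chance the walk sits at the relevant endpoint of $M_t$ before step $t$ and then traverses that matching edge; summing over the at most $c$ such matching edges per step and over $k$ steps, and using that the walk distribution is a product of doubly-stochastic matrices so each vertex carries probability $O(1/n)$ at every step (here I can invoke the ``row sums equal $1$'' fact and the near-uniformity bound, or just the crude bound that $F_{H_t}$ is doubly stochastic so the expected load of each vertex is $1/n$), I get that the expected number of trials using $e$ is $O\!\left(\frac{ck}{n} \cdot n \log n\right) = O(ck \log n)$.

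The second step is concentration: the trials are independent, and for a fixed edge $e$ the indicator ``trial $\ell$ uses $e$'' has expectation $O(ck/n)$, so the total load on $e$ is a sum of $\Theta(n\log n)$ independent Bernoulli variables with mean $O(ck\log n)$. A Chernoff bound gives that this load is $O(ck\log n)$ with probability $1 - n^{-\Omega(1)}$ (choosing the hidden constant in $\Theta(n\log n)$, and in the Chernoff deviation, large enough relative to the $O(\log n)$ overhead); a union bound over the at most $\binom{n}{2}$ edges of $G$ then yields $c' = O(ck\log n)$ for the whole embedding with high probability. The round-complexity claim is not part of this lemma's statement, so I do not need it here, but the construction is clearly realizable within the stated budget by walking the matchings as in \cref{lem-cut-pl-impl}.

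\textbf{Main obstacle.} The delicate point is the per-trial congestion bound: one must be careful that a single matching $M_t$ contributes at most $c$ distinct $G$-paths through any fixed edge $e$ (that is exactly the congestion-$c$ guarantee of the embedding $\PP_t$), and that the probability the walk is positioned to traverse any one of those matching edges at step $t$ is $O(1/n)$ — this uses double stochasticity of $F_{H_{t-1}}$, i.e. each row sums to $1$ so the total probability mass is $n$ spread over $n$ coordinates, but since the walk starts at a \emph{uniformly random} vertex of $U$ (not a fixed one), the mass at any vertex after any number of steps is exactly the average of a column of $F_{H_{t-1}}$, which is $\le \frac{1}{|U|} \le \frac{2}{n}$. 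Getting this averaging argument stated cleanly — rather than appealing to the near-uniformity bound, which only holds between good vertices and only in the final matrix — is the part that needs the most care; everything else (dilation, Chernoff, union bound) is routine.
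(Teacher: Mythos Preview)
Your proposal is correct and matches the paper's approach: the dilation bound $d' = O(dk)$ is deterministic, and for congestion the key step is exactly the one you identify in your ``Main obstacle'' paragraph---since the walk starts at a uniformly random vertex of $U$, double stochasticity of each $F_{M_t}$ keeps the mass at every vertex bounded by $1/|U| \le 2/n$ at every step, so a fixed edge $e$ is used at step $t$ with probability $O(c/n)$, and Chernoff plus a union bound over edges finishes. One small imprecision: your per-trial ``indicator that trial $\ell$ uses $e$'' is not literally Bernoulli (a single walk can traverse $e$ up to $k$ times), so either use the bounded-variable form of Chernoff, or do what the paper does and apply Chernoff separately for each step $t$ (where the summands \emph{are} Bernoulli with mean $O(c \log n)$) and then sum the $k$ per-step bounds.
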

\begin{proof}
The dilation upper bound $d' = O(dk)$ follows from the fact that $H = (M_1, M_2, \ldots,  M_k)$, and the given embedding of $M_i$ into $V$ has dilation $d$.

To bound the congestion $c'$, we focus on a specific edge $e \in E$ and a specific matching $M_i$ in $H$. Let $M^\ast \subseteq M_i$ be the subset of $M_i$ whose embedding involve the edge $e$. By the assumption given in \cref{lem-expander-from-random-walk}, we know that $|M^\ast| \leq c$. 
Let $S^\ast \subseteq V$ be the set of vertices incident to $M^\ast$, and we have $|S^\ast| 
\leq 2|M^\ast| = 2c$.

Consider the matrix $F = M_{i-1} \cdot M_{i-2} \cdots M_1$, and a vector $u \in \mathbb{R}^n$ defined by $u[j] = 1/|U|$ if $v_j \in U$, and $v_j = 0$ otherwise. 
Let $u' = F \cdot u$. It is clear that $u'[j]$ is the probability that a $F_H$-random walk starting at a uniformly random vertex in $U$ is at $v_j$ right after the $(i-1)$th transition.
Therefore, the probability that the embedding of a $F_H$-random walk starting at a uniformly random vertex in $U$ uses the edge $e$ to embed the $i$th transition is at most 
\[
\frac{1}{2} \cdot \sum_{v_j \in S^*} u'[j] \leq 
\frac{1}{2} \cdot  |S^\ast| \cdot \max_{1 \leq j \leq n} u'[j] 
\leq \frac{1}{2} \cdot  |S^\ast| \cdot \frac{1}{|U|}
\leq \frac{c}{4n}.
\]
Here we use the fact that $|U| \geq |V|/2$ in \cref{lem-property-good-v} and the fact that $\max_{1 \leq j \leq n} u'[j]  \leq \max_{1 \leq j \leq n} u[j] = 1/|U|$, as $u' = F \cdot u$ for a doubly stochastic matrix F.

Suppose the total number of $F_H$-random walks we initiate in the construction of $R$ is $K n \log n$. Let $X$ be the number of $F_H$-random walks using the edge $e$ to embed their $i$th transition. Then $\mu = \Expect[X] \leq \frac{c}{4n} \cdot K n \log n = (cK/4) \log n$. By a Chernoff bound, we have $\Prob[X > (cK/3) \log n] = n^{-\Omega(cK)}$. That is, with high probability, the edge $e$ is used at most  $(cK/3) \log n$ times to embed the $i$th transition of the $K n \log n$ $F_H$-random walks. Therefore, the congestion of the overall embedding of $R$ is at most $k \cdot K (cK/3) \log n = O(ck \log n)$ with high probability.
\end{proof}

\begin{lemma}\label{lem-property-random-graph}
The graph $R$ has sparsity $\Phiv(R) = \Omega(\log n)$ and maximum degree $\Delta_R = O(\log n)$ with high probability.
\end{lemma}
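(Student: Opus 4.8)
The plan is to prove both bounds by a Chernoff bound together with a union bound, using the random‑walk probability estimates of \cref{lem-property-good-v}. Condition throughout on the high‑probability event that $U$ satisfies the conclusions of \cref{lem-property-good-v}; in particular $N := |U| \ge |V|/2 = n/2$ and $\frac{1}{12n} \le p_H(i \rightsquigarrow j) \le \frac{85}{84n}$ for all $v_i, v_j \in U$. I will write the construction of $R$ as $m = K n \log n$ iterations for a sufficiently large constant $K$, and for $1 \le t \le m$ let $E_t$ be the (possibly absent) edge added in iteration $t$; the $E_t$ are mutually independent, since each iteration uses a fresh uniform start vertex and a fresh $F_H$‑random walk.

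For the degree bound, fix $v \in U$. In iteration $t$ the edge $E_t$ can be incident to $v$ only if the sampled start vertex is $v$, or the start vertex is some $v_i \in U$ whose $F_H$‑random walk ends at $v$. The first has probability $1/N \le 2/n$, and by the upper estimate of \cref{lem-property-good-v} the second has probability $\sum_{v_i \in U}\frac1N p_H(i\rightsquigarrow v) \le \frac{85}{84n} \le 2/n$, so $\Prob[v \in E_t] \le 4/n$. Hence $\deg_R(v)$ is a sum of $m$ independent indicators with mean at most $4K\log n$, and a Chernoff bound gives $\Prob[\deg_R(v) > 8K\log n] \le n^{-\Omega(K)}$; a union bound over the at most $n$ vertices of $U$ then yields $\Delta_R \le 8K\log n = O(\log n)$ with high probability.

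For the sparsity bound, fix a nonempty $S \subseteq U$ with $|S| \le N/2$ and lower‑bound the number of edges of $R$ with exactly one endpoint in $S$. In iteration $t$ such an edge is produced at least whenever the start vertex $v_i$ lies in $S$ and its $F_H$‑random walk ends in $U \setminus S$ (the endpoint being in $U$, the edge is added, and it crosses $S$). Using $|U\setminus S| \ge N/2 \ge n/4$ and the lower estimate of \cref{lem-property-good-v}, this has probability at least $\sum_{v_i \in S}\frac1N\sum_{v_j \in U\setminus S} p_H(i\rightsquigarrow j) \ge \frac{|S|}{N}\cdot\frac n4\cdot\frac{1}{12n} \ge \frac{|S|}{48n}$. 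Therefore $|\partial_R(S)|$ stochastically dominates a sum of $m$ i.i.d.\ $\Bernoulli\!\big(\tfrac{|S|}{48n}\big)$ variables, of mean $\mu_S = \frac{K|S|\log n}{48}$, and a Chernoff bound gives $\Prob[|\partial_R(S)| < \mu_S/2] \le \exp(-\mu_S/8) \le \exp\!\big(-\tfrac{K}{384}|S|\log n\big)$. Since the number of $S \subseteq U$ with $|S| = s$ is at most $\binom{n}{s} \le n^s$, a union bound over all $S$ with $1 \le |S| \le N/2$ bounds the total failure probability by $\sum_{s\ge1} n^{s}\exp(-\tfrac{K}{384}s\log n) \le \sum_{s\ge1} n^{-s} = O(1/n)$ once $K$ is a large enough constant. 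On the complementary event, applying this to whichever of $S$ and $U\setminus S$ is the smaller shows $|\partial_R(S)| \ge \frac{K}{96}\min\{|S|,|U\setminus S|\}\log n$ for every nonempty $S \subsetneq U$, i.e.\ $\Phiv(R) = \Omega(\log n)$.

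The argument is essentially routine; the only delicate point is that the Chernoff tail for a cut $S$ must decay fast enough to survive the union bound over the $\binom{n}{s}$ sets of each size $s$. This works precisely because the expected number of edges of $R$ crossing $S$ is linear in $|S|$ and carries a $\log n$ factor coming from the $\Theta(n\log n)$ random‑walk samples, so choosing the constant $K$ large makes the per‑set failure probability beat $n^{-2|S|}$ and the sum over all $S$ vanishes; taking $K$ proportional to a target exponent $c$ pushes the failure probability down to $n^{-c}$, which is "with high probability" in the usual sense.
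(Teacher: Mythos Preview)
Your proof is correct and follows essentially the same approach as the paper: both arguments condition on \cref{lem-property-good-v}, bound the expected degree and the expected size of $\partial_R(S)$ using the $\Theta(1/n)$ random-walk estimates, apply Chernoff, and take a union bound over vertices (for $\Delta_R$) and over all subsets of each size $s$ (for $\Phiv(R)$). The constants differ slightly, but the structure and ideas are identical.
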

\begin{proof}
Suppose the total number of $F_H$-random walks we initiate in the construction of $R$ is $K n \log n$.
By \cref{lem-property-good-v}, $p_H(i \rightsquigarrow j) \leq (85/84)(1/n)$ for each $v_i, v_j \in U$, and so the expected degree of each vertex $v_i \in U$ in $R$ is at most $(K n \log n) \cdot (85/84)(1/n) = (85K/84) \log n$. By a Chernoff bound, it can be shown that with probability at least $1 - n^{-\Omega(K)}$, the degree of $v_i$ is at most $2K \log n$. Therefore,  $\Delta_R = O(\log n)$ with high probability.

Let $1 \leq s \leq |U|/2$. Fix any $s$-vertex subset $S$ of $U$. 
By \cref{lem-property-good-v}, $p_H(i \rightsquigarrow j) \geq (1/12)(1/n)$ for each $v_i, v_j \in U$, and so
the expected number of edges in $R$ connecting $S$ and $U \setminus S$ is at least 
\[(K n \log n) 
\cdot |S| \cdot |U \setminus S| 
\cdot \frac{1}{|U|} 
\cdot \frac{1}{12n} 
= \frac{K \log n}{12} \cdot \frac{|U \setminus S| |S|}{|U|} 
\geq \frac{K \log n}{12} \cdot \frac{(|V|/2) |S|}{|V|} 
= \frac{K \log n}{24} \cdot |S|.
\] 
By a Chernoff bound, the number of edges in $R$ connecting $S$ and $U \setminus S$ is at least $(K/24) |S| \log n$ with probability $1 - n^{-\Omega(Ks)}$.
Now we take a union bound over all size-$s$ subsets $S \subseteq U$, and a union bound over all $1 \leq s \leq |U|/2$, the probability that there exists a set $S^\ast \subseteq U$ with $0 < |S^\ast| \leq |U|/2$ and $|\partial_{R}(S^\ast)| < (K/30) |S^\ast| \log n$ is at most
\[
\sum_{1 \leq s \leq |U|/2} 
{\genfrac(){0pt}{1}{|U|}{s}}
n^{-\Omega(Ks)} 
\leq \sum_{1 \leq s \leq |U|/2} n^s \cdot n^{-\Omega(Ks)}.
= \sum_{1 \leq s \leq |U|/2} n^{-\Omega(Ks)}
= n^{-\Omega(K)}.
\]
Therefore, with high probability, we have $\Phiv(R) \geq (K/30) \log n = \Omega(\log n)$.
\end{proof}

 We define $W$ as the set of all vertices involved in the embedding of $R$.
 By \cref{lem-expander-emb} with parameters $c' = ck \log n$, $d' = dk$, $\Delta_R = O(\log n)$ and $\psi' = \Omega(\log n)$ (in view of \cref{lem-property-random-graph-0,lem-property-random-graph}), we obtain the following lemma.

\begin{lemma}\label{lem-conductance-rand-embed}
The subgraph $G[W]$ has  $\Phiv(G[W]) = \Omega(c^{-1} d^{-1} k^{-2} \log^{-1} n)$ with high probability.
\end{lemma}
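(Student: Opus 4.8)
The plan is to derive this as an immediate corollary of the generic expander-embedding bound \cref{lem-expander-emb}, applied to the embedding of the random multigraph $R$ into $U$ constructed above, with $W$ taken to be the set of all vertices that this embedding touches. First I would gather into one event all the structural facts established in the preceding lemmas: by \cref{lem-property-good-v}, with high probability $|U| \ge |V|/2$ and $p_H(i \rightsquigarrow j) = \Theta(1/n)$ for every $v_i,v_j \in U$; by \cref{lem-property-random-graph-0}, the embedding of $R$ has congestion $c' = O(ck\log n)$ and dilation $d' = O(dk)$; and by \cref{lem-property-random-graph}, $\Phiv(R) = \Omega(\log n)$ and $\Delta_R = O(\log n)$. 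Each of these fails with probability at most $n^{-\Omega(1)}$, with a constant in the exponent that can be made arbitrarily large by increasing the number of random walks used to build $R$ and to define $U$, so a single union bound over this constant-sized collection of events shows they all hold simultaneously with high probability; I condition on this event for the rest of the argument.

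On this event I would invoke \cref{lem-expander-emb} with sparsity parameter $\psi' = \Omega(\log n)$, maximum degree $\Delta_R = O(\log n)$, congestion $c' = O(ck\log n)$, and dilation $d' = O(dk)$. Since $R$ is embedded into the subset $U \subseteq W$ and $W$ is by construction exactly the image of the embedding (the endpoints together with the internal path vertices), the hypotheses of \cref{lem-expander-emb} are satisfied. Its conclusion lower-bounds $\Phiv(G[W])$ by a constant times $\psi'$ divided by the product of $\Delta_R$, $c'$, and $d'$ (the dilation and $\Delta_R$ factors accounting for the excess $|W|\setminus U$ of Steiner vertices and for the loss in converting an $R$-side sparse cut into one of $G[W]$ via the standard cut-counting argument), and substituting the four values gives $\Omega\!\left(\dfrac{\log n}{\log n \cdot (ck\log n)\cdot(dk)}\right) = \Omega(c^{-1}d^{-1}k^{-2}\log^{-1}n)$, which is exactly the claimed bound.

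The points that require attention are bookkeeping rather than new ideas: ensuring all the ``with high probability'' guarantees are taken over the same randomness (the $\Theta(n\log n)$ walks defining $R$ and the $\Theta(\log n)$ walks per vertex defining $U$) so that they may be intersected, and checking that \cref{lem-expander-emb} is stated for embeddings whose image $U$ is permitted to be a proper subset of the host set $W$ — which is precisely the regime it is designed for, so no modification is needed. I do not expect a genuine obstacle at this step: the substantive work — bounding $\Phiv(R)$ and $\Delta_R$, controlling the congestion and dilation of the walk-based embedding, and lower-bounding $|U|$ via the good-vertex property — was already carried out in \cref{lem-property-good-v,lem-property-random-graph-0,lem-property-random-graph}, and this lemma simply assembles those pieces through \cref{lem-expander-emb}.
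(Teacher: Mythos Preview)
Your proposal is correct and takes essentially the same approach as the paper: define $W$ as the set of all vertices involved in the embedding of $R$, then invoke \cref{lem-expander-emb} with the congestion, dilation, maximum degree, and sparsity bounds supplied by \cref{lem-property-random-graph-0,lem-property-random-graph}. The paper's proof is in fact a one-line application of \cref{lem-expander-emb} with exactly the same parameter substitution you carry out.
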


What remains to do is to bound the   round complexity of finding the subset $W$. The algorithm for constructing $W$ consists of  simulating $O(\log n)$ $F_H$-random walks of $F_H$ from each $v \in V \setminus C$. The random walk simulation costs $O(c' \log n + d') = O(ck \log^2 n + dk)$ rounds with high probability using the routing algorithm of~\cite{ghaffari2015near,leighton1994packet}. 
In the construction of $R$, we need to let each vertex $v \in U$ to know how many $F_H$-random walks it needs to initiate. This can be done in $O(D)$ rounds using \cref{lem-uniform-sample}. Hence the overall round complexity for \cref{lem-expander-from-random-walk} is $O(D + ck \log^2 n + dk)$.

\section{Deterministic Sparse Cut Computation}\label{sect-det-expander-decomp}

We consider a variant of the sparse cut problem that allows a small set of leftover vertices. That is, whenever $|W| \geq (1-\betaleft)|V|$ for a given threshold $\betaleft$, we do not need to find any sparse cut.

\begin{definition}[Balanced sparse cut with leftover]\label{def-det-sparse-cut}
Let $G=(V,E)$ be a graph of maximum degree $\Delta$. Let $0 < \phicut < 1$ and $0 < \phiemb < 1$ be any parameter. The task \[\detbalspcut{\phicut}{\phiemb}{\betacut}{\betaleft}\] asks for two subsets  $W \subseteq V$ and $C \subseteq V$  meeting the following requirements.
\begin{description}
    \item[Expander:] The induced subgraph $G[W]$ has  $\Phiv(G[W]) \geq \phiemb$.
    \item[Cut:] The cut $C$ satisfies  $0 \leq |C| \leq |V|/2$ and $\Phiv(C) \leq \phicut$.
    \item[Balance:] Either one of the following  is met.
    \begin{itemize}
        \item $|C| \geq \betacut|V|$ and $W = \emptyset$.
        \item $|V \setminus (C \cup W)| \leq \betaleft|V|$.
    \end{itemize}
\end{description}
\end{definition}

We write $\Tcut(n, \Delta, D,\phicut, \phiemb,  \betacut, \betaleft)$ to denote the deterministic round complexity for solving the above task. We note again that $D$ is the diameter of a Steiner tree that spans $G$, not the diameter of $G$.

The goal of this section is to prove \cref{lem-det-cutmatch-main}. Similar to its analogous result \cref{lem-rand-cutmatch-main-bounded-deg} in the randomized setting, in \cref{lem-det-cutmatch-main} the set $W$ \emph{certifies} that $C$ is \emph{approximately nearly most balanced}, but only for the case $|C| = \Omega(\betaleft) \cdot |V|$. 

\begin{theorem}[Deterministic sparse cut computation]\label{lem-det-cutmatch-main}
Let $G=(V,E)$ be a bounded-degree graph, and let $0 < \psi < 1$ be any parameter.
For  $\betacut = 1/3$, 
$\phicut = \psi$, and $\phiemb = \poly(\psi) 2^{-O(\sqrt{\log n \log \log n})}$,
The task $\detbalspcut{\phicut}{\phiemb}{\betacut}{\betaleft}$ can be solved deterministically in 
$O(D) + \poly(\psi^{-1}, \betaleft^{-1}) \cdot 2^{O(\sqrt{\log n \log \log n})}$ rounds.
\end{theorem}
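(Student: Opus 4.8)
The plan is to realize the deterministic version of the "cut or expander" primitive through a recursive, simultaneous execution of the KKOV cut-matching game à la Chuzhoy et al., but carried out in the "non-stop" style of R\"acke--Shah--T\"aubig so that we never add fake edges to the distributed network $G$. The skeleton is as follows. First I would partition $V$ into $k$ equal-size parts $\VV = \{V_1,\ldots,V_k\}$, where $k = n^{\Theta(\gamma)}$ is chosen so that the recursion has depth $O(1/\gamma)$ and the subinstances $G[V_i]$ have size $O(n/k)$. On each part we run the KKOV cut-matching game, where the cut player on $V_i$ is implemented by a \emph{recursive} call to $\detbalspcut{\cdot}{\cdot}{\cdot}{\cdot}$ on an instance of size $O(n/k)$: the cut player just needs "the most balanced sparse cut" in the current union-of-matchings graph $H_i$, which is exactly what our primitive provides (up to the expander certificate showing near-optimality of the balance). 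The matching player is implemented by the deterministic flow algorithm based on Goldberg--Plotkin--Vaidya from \cref{sect-flow}: given $S = A^l$ and $T = A^r$ we find a large vertex-disjoint collection of short $S$-$T$ paths, tolerating $O(\betaleft)$ leftover (unsaturated) source vertices, which is why the analysis of the cut-matching game must be (and, per the technical overview, will be) revised to work when the matching only saturates a constant fraction of $S$.

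Second, during the simultaneous execution I would \emph{maintain a global cut} $C$ with the invariant that whenever the cut-matching game on some part $V_i$ "fails" — meaning the matching player cannot even saturate a constant fraction of $S$, and the deterministic flow algorithm instead returns a sparse cut inside $G[V_i]$ — we fold a constant fraction of $V_i$ into $C$. Thus at termination either a constant fraction of the parts are \emph{successful}, or $|C| = \Omega(|V|)$, in which case (after truncating so $|C|\le|V|/2$) we are done with $W = \emptyset$ and $\betacut = 1/3$. For each successful part $V_i$ we have, in $O(1/\gamma)\cdot 2^{O(\sqrt{\log n\log\log n})}$ rounds of cut-matching iterations, a small-congestion, small-dilation simultaneous embedding of a $\Theta(1)$-sparsity graph $H_i$ into a subset $U_i \subseteq V_i$ with $|U_i| \ge (2/3)|V_i|$; crucially the embedding has \emph{small dilation}, which (unlike in the classical cut-matching applications) is what will let us bound $\Phiv$ of the final extracted subgraph. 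Taking $\bigcup_i U_i$ over successful parts gives a constant fraction of $V$ already embedded with a well-connected graph.

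Third, I would \emph{enlarge and stitch together} these per-part expander embeddings into one well-connected subgraph $G[W]$ on nearly all of $V \setminus C$, again using the deterministic GPV-style flow routine to route between the $U_i$'s and to absorb most leftover vertices. Because of the GPV leftover phenomenon, and because a sparse cut may surface during this phase, the process terminates in one of the three advertised configurations: $|V|/100 \le |C|$ with $W=\emptyset$; or $V = C\cup W$; or $C = \emptyset$ with $|V\setminus W|$ small. In all three, $W$ certifies that $C$ is (approximately) the most balanced sparse cut — for any competing cut $C'$ significantly more balanced and sparser, intersecting it with $W$ and using $\Phiv(G[W])\ge\phiemb$ yields a contradiction, exactly as in the displayed computation preceding \cref{lem-rand-cutmatch-main-bounded-deg}. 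Setting $\phiemb = \poly(\psi)\,2^{-O(\sqrt{\log n\log\log n})}$ absorbs the $2^{O(\sqrt{\log n})}$-type losses from (a) the $O(1/\gamma)$ levels of recursion each losing a $\poly$ factor, and (b) the $2^{O(\sqrt{\log n})}$ congestion blow-up from the no-leftover flow routine of \cref{lem-cut-match-det-noleftover} used when stitching; with $\gamma = \sqrt{\log\log n/\log n}$ the round complexity is $O(D) + \poly(\psi^{-1},\betaleft^{-1})\cdot 2^{O(\sqrt{\log n\log\log n})}$.

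\textbf{The main obstacle} I anticipate is the interface between the recursive cut player and the non-stop, leftover-tolerant bookkeeping. Concretely: (i) the cut player needs a cut in $H_i$ that is "as balanced as possible," but our recursive primitive only returns a \emph{near}-optimal balanced cut, so the KKOV potential argument must be re-run with this slack, \emph{compounded} with the slack from the matching player saturating only a constant fraction of $S$ — ensuring the product of these two relaxations still drives the KKOV potential down in the right number of iterations is the delicate quantitative core. (ii) Because Steiner trees supplied to subinstances must not be reused naively across recursion levels (congestion would blow up), the recursion must employ the "avoid reusing Steiner trees" technique from \cref{sect-sparse-cut-tools}; wiring this through the recursion while keeping the $O(D)$ term additive (rather than multiplicative in the recursion depth) is a genuine technical hurdle. (iii) Finally, proving the clean trichotomy at termination — in particular that the GPV leftover can always be pushed into either $W$, $C$, or a bounded "small" remainder, never getting stuck — requires carefully choosing the thresholds ($\betacut = 1/3$, the constant in $|U_i|\ge (2/3)|V_i|$, the $1/100$ in the balanced case) so the arithmetic of the three cases closes. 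I expect (i) to be where most of the real work lies; the rest is careful engineering on top of the lemmas already established in \cref{sect-flow,sect-sparse-cut-tools,sect-potential}.
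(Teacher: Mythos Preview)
Your proposal is essentially correct and follows the same architecture as the paper: recursive KKOV cut-matching games run simultaneously on $k$ parts in the non-stop RST style (\cref{lem-simul-emb-det}), GPV-based deterministic flows for the matching player (\cref{lem-cut-match-det-multi}), stitching the per-part embeddings into one well-connected subgraph with the three-way termination (\cref{lem-merge-subgraphs}), diameter reduction to keep the $O(D)$ term additive (\cref{lem-diam-reduction-main}), and a balance-improvement wrapper (\cref{lem-bal-improve-main}), all tied together by the recursion in \cref{thm-solve-recursion} with $\epsilon=\sqrt{\log\log n/\log n}$.

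Two small corrections: first, the embedded graphs $H_i$ are not $\Theta(1)$-sparse but only $\phiemb$-sparse, since the cut player's recursive call returns an expander with sparsity $\phiemb$ (this is exactly why $\phiemb$ degrades by $\poly(\log n)$ per recursion level and ends up as $2^{-O(\sqrt{\log n\log\log n})}$); second, the no-leftover routine \cref{lem-cut-match-det-noleftover} is \emph{not} used in the sparse-cut algorithm at all---the stitching step (\cref{lem-enlarge}) uses the ordinary leftover-tolerant \cref{lem-cut-match-det}, and the $2^{O(\sqrt{\log n\log\log n})}$ factor in $\phiemb$ comes purely from the $\epsilon^{-1}$ recursion levels each losing $\poly(\log n)$, not from any $2^{O(\sqrt{\log n})}$ congestion term.
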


Note that we can have the round complexity $O(D) + \poly(\psi^{-1}) \cdot 2^{O(\sqrt{\log n \log \log n})}$ with $\betaleft$ being as small as $\poly(\psi) 2^{-\Omega(\sqrt{\log n \log \log n})}$ in \cref{lem-det-cutmatch-main}.

\cref{lem-det-cutmatch-main} is proved using \cref{thm-solve-recursion}. There is a tradeoff between $\phiemb$ and the round complexity in \cref{thm-solve-recursion}. Specifically, for any $1 \geq \epsilon \geq \sqrt{\log \log n / \log n}$, there is a deterministic algorithm with round complexity $O(D) + \poly(\psi^{-1}, \betaleft^{-1}) \cdot n^{O(\epsilon)}$ with  $\phiemb = \poly(\psi) \cdot \log^{-O(1/\epsilon)}n$.


\subsection{KKOV Cut-matching Game}\label{sect-KKOV-single}

In this section, we consider the   cut-matching game of Khandekar,  Khot,  Orecchia and  Vishnoi~\cite{khandekar2007cut} with some slight modifications.  The algorithm $\mathcal{A}$ used in the cut-matching game can be implemented using $\detbalspcut{\phicut}{\phiemb}{\betacut}{\betaleft}$ with $\phicut = 1/2$, $\betaleft = 1/12$ and $\betacut = 1/3$.
In particular, with these choices of parameters, the output $(C,W)$ of $\mathcal{A}$ satisfies either $|C| \geq |V|/4$ or $|W| \geq (2/3)|V|$. 

There is no underlying graph $G=(V,E)$ in \cref{def-kkov}. This definition simply describes the rules for the cut player and the matching player to construct a graph $H^\ast$ on the vertex set $V$. That is, the cut player always uses the output $C$ of $\mathcal{A}$ to produce an instance $(S^i, T^i)$ of a bipartite matching problem, and then the matching player is asked to find a matching that saturates at least half of $S^i$. The construction terminates when  $\mathcal{A}$ outputs $W$. 

\begin{definition}[KKOV cut-matching game]\label{def-kkov}
Let $\mathcal{A}$ be an algorithm that, given an input graph $G'=(V',E')$, returns either one of the following.
\begin{itemize}
    \item A subset $C' \subseteq V'$ with $\Phiv(C') \leq 1/2$ and $(1/4)|V'| \leq |C'| \leq (1/2)|V'|$.
    \item A subset $W' \subseteq V'$ with $\Phiv(G[W']) \geq \phiemb$ and $|W'| \geq (2/3)|V'|$.
\end{itemize}
Given a vertex set $V=\{v_1, v_2, \ldots, v_n\}$, a KKOV cut-matching game consists of a sequence of pairs of vertex subsets $(S^{1},T^{1}), (S^{2},T^{2}), \ldots (S^{\tau},T^{\tau})$ a sequence of matchings $M^1, M^2, \ldots, M^{\tau-1}$ and a graph $H^\ast$  are constructed by the following iterative steps for $i = 1, 2, \ldots$ until the termination condition is met.
\begin{description}
    \item[Graphs:] Initially, $H_0=(V, \emptyset)$ is the graph on the vertex set $V$ with zero edges. For each $j \geq 1$, define $H^j=(V, M^1 \cup M^2 \cup \cdots \cup M^j)$ as the graph on the vertex set $V$ that is the union of $M^1, M^2, \ldots, M^{j}$. 
    \item[Cut player:] For each iteration $i$, 
    the cut player applies the algorithm $\mathcal{A}$ on $H_{i-1}$. There are two possibilities.
    \begin{itemize}
        \item  If the output cut $C^i \subset V$ has $\Phiv_{H_{i-1}}(C^i) \leq 1/2$ and $|C^i| \geq (1/4)|V|$, then $S^i = C^i$ and $T^i = V \setminus C^i$. 
        \item  Otherwise, the output subgraph $H^{i-1}[W^i]$ has $\Phiv(H^{i-1}[W^i]) \geq \phiemb$ and $|W^i| \geq (2/3)|V|$. Set  $H^\ast = H^{i-1}[W^i]$, set $\tau = i$, and the construction is terminated.
    \end{itemize}
    \item[Matching player:] For each iteration $i$, after the cut player computes $S^i$ and $T^i$, the matching player finds an arbitrary $M^i$   between $S^i$ and $T^i$ with size $|M^i| \geq |S^i|/2$. 
\end{description}
\end{definition}


We summarize some basic properties of the KKOV cut-matching game that are direct consequences of \cref{def-kkov}.
\begin{itemize}
\item  The flow instance $(S^i, T^i)$ for the $i$th iteration satisfies $$(1/4)|V| \leq |S^i| \leq (1/2)|V|$$ and $$T^i = V \setminus S^i.$$  
\item The graph $H^i$ for the $i$th iteration has maximum degree $\Delta(H^i) \leq i \leq \tau-1$, because it is a union of $i$ matchings.
\item The final graph $H^\ast = H_{\tau-1}[W^{\tau}]$ is an induced subgraph of $H^{\tau-1}$ on the vertex set $W^{\tau}$ of size  $$|V(H^\ast)| = |W^{\tau}| \geq (2/3)|V|.$$ The graph $H^\ast$ has maximum degree   $\Delta(H^\ast) \leq \tau-1$, and it has sparsity $$\Phiv[H^\ast] \geq \phiemb.$$
\end{itemize}






We will show that any KKOV cut-matching game must have
$\tau = O( \log n)$, and so the final graph $H^\ast$ has maximum degree  $\Delta(H^\ast) = O(\log n)$ and sparsity $\Phiv(H^\ast) =  \phiemb$.


\paragraph{Potential function.}  Similar to the cut-matching game in the randomized setting, the sequence of matchings $(M^1, M^2, \ldots, M^{i})$ gives rise to a doubly stochastic matrix 
\[F = F_{M^{i}} \cdot F_{M^{i}} \cdots F_{M^{1}},\] and we write \[p(i \rightsquigarrow j) = F[j,i]\] as the transition probability from $v_i$ to $v_j$ after applying an $F$-random walk. Recall that the random walk works as follows. From $j = 1, 2, \ldots, i$. In iteration $j$, if you are at a vertex $u$ that belongs to an edge $\{u,v\}\in M_j$, then there is a $1/2$ probability that you move to $v$. As in~\cite{khandekar2007cut},  define the potential function 
\[\Pi(i) = \sum_{1\leq j \leq n, \, 1 \leq l \leq n} p(j \rightsquigarrow l) \log \frac{1}{p(j \rightsquigarrow l)}.\]
This potential function measures the entropy of the current $F$-random walk, and it is different from the potential function used in~\cite{KRV09,RST14,SaranurakW19}.
It is clear that $\Pi(t) \leq n \log n$, and $\Pi(i)$ is maximized when $p(j \rightsquigarrow l) = 1/n$ for all $j, l$. We  have $\Pi(0)= 0$ initially. Recall that when there is no matching, $F$ is the identity matrix, and we have $p(j \rightsquigarrow l) = 1$ if $j=l$, and $p(j \rightsquigarrow l) = 0$ otherwise.

\begin{lemma}[Potential increase]\label{lem-potential-reduction}
We have $\Pi(i) - \Pi(i-1) = \Omega(n)$ for each iteration $1 \leq i \leq \tau-1$.
\end{lemma}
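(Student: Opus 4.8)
The plan is to track the entropy potential $\Pi(i) = \sum_{j,l} p(j \rightsquigarrow l) \log \frac{1}{p(j \rightsquigarrow l)}$ across a single cut-matching iteration and show it must jump by $\Omega(n)$. Fix an iteration $i$ with $1 \le i \le \tau - 1$, so the cut player produced $S^i = C^i$ with $\Phiv_{H_{i-1}}(C^i) \le 1/2$ and $|C^i| \ge |V|/4$, and the matching player produced $M^i$ with $|M^i| \ge |S^i|/2$. The key observation is that the potential change decomposes row-by-row: applying the matching matrix $F_{M^i}$ to the current doubly-stochastic matrix $F$ mixes, for each matched pair $\{u, v\} \in M^i$, the distributions carried by the $u$-column and the $v$-column. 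Since $x \log \frac{1}{x}$ is concave, averaging the two distributions $F[\cdot, u]$ and $F[\cdot, v]$ coordinatewise can only increase the entropy, and the gain is governed by how different these two distributions are. So the strategy is: (i) show that for a typical matched pair the two distributions $F[\cdot,u]$ and $F[\cdot,v]$ are noticeably far apart (in, say, $\ell_1$ or via a second-moment quantity), and (ii) convert that separation into an $\Omega(1)$ entropy gain per matched pair, then sum over the $\ge |V|/8$ matched pairs to get $\Omega(n)$.

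The heart of the argument — step (i) — is where the cut player's guarantee enters. The cut $C^i$ is a \emph{sparse} cut of $H_{i-1}$ (sparsity at most $1/2$, and it is $\Omega(1)$-balanced), so relatively few edges of $H_{i-1}$ cross it. Intuitively this means the $F$-random walk, which only moves along edges of the matchings $M^1, \dots, M^{i-1}$ constituting $H_{i-1}$, cannot have already mixed across $C^i$: a walk started inside $C^i$ still has substantial probability of being inside $C^i$, and symmetrically for $V \setminus C^i$. Concretely I would argue that for most pairs $u \in S^i$, $v \in T^i$ that get matched, the indicator-type statistic ``probability that an $F$-walk from $u$ lands in $C^i$'' versus the same for $v$ differ by $\Omega(1)$; equivalently $\|F[\cdot,u] - F[\cdot,v]\|_1 = \Omega(1)$. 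This is precisely the KKOV-style claim that an imbalanced-looking sparse cut forces the walk distributions on the two sides to be far apart — it should follow from the fact that $\sum_{u \in C^i}(\text{prob. of landing outside } C^i)$ is controlled by $|\partial_{H_{i-1}}(C^i)| \le \frac12 \min\{|C^i|, |V\setminus C^i|\}$, a conductance/mixing bound analogous to the relation between sparsity and mixing time quoted in the preliminaries.

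For step (ii), once $\|F[\cdot,u] - F[\cdot,v]\|_1 \ge \gamma$ for a constant $\gamma$, I would invoke a quantitative strong concavity / Pinsker-type bound for entropy: averaging two probability vectors that are $\ell_1$-distance $\gamma$ apart strictly increases $\sum x\log\frac1x$ by at least some $c(\gamma) > 0$ \emph{per unit of probability mass}, uniformly in dimension. Since each matched pair carries two full unit-mass columns, each ``good'' matched pair contributes $\Omega(1)$ to $\Pi(i) - \Pi(i-1)$; with $\ge |V|/8 = \Omega(n)$ matched pairs and a constant fraction of them good, we get $\Pi(i) - \Pi(i-1) = \Omega(n)$, and $\Pi$ is nondecreasing on the remaining (non-matched, or matched-but-not-good) coordinates by concavity, so nothing cancels. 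The main obstacle I anticipate is step (i): making rigorous and quantitative the claim that ``$C^i$ is a balanced sparse cut of $H_{i-1}$'' implies ``a constant fraction of the eventually-matched pairs have $\Omega(1)$-separated walk distributions.'' One has to be careful that the matching player chooses $M^i$ \emph{adversarially} (``arbitrary'' $M^i$ of size $\ge |S^i|/2$), so the bound must hold for a robust constant fraction of \emph{all} potential $S^i$–$T^i$ matchings, not just a favorable one; I expect this is handled exactly as in~\cite{khandekar2007cut} by a counting/averaging argument over the vertices of $C^i$ using the sparsity bound, plus the observation $|M^i| \ge |S^i|/2 \ge |V|/8$.
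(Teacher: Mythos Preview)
Your high-level plan is sound and tracks the KKOV argument, but there is one technical slip and one genuine structural difference from the paper's proof.

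The slip: left-multiplication by $F_{M^i}$ averages the \emph{rows} $u$ and $v$ of $F_{i-1}$, i.e.\ the in-distributions $p(\cdot\rightsquigarrow u)$ and $p(\cdot\rightsquigarrow v)$, not the out-distributions $p(u\rightsquigarrow\cdot)$ that your statistic ``probability a walk from $u$ lands in $C^i$'' measures. The entropy gain contributed by a matched pair $\{u,v\}$ is exactly $2\,\mathrm{JS}\bigl(p(\cdot\rightsquigarrow u),\,p(\cdot\rightsquigarrow v)\bigr)$, so it is the in-distributions you must separate. The fix is easy---the same load argument bounds $\sum_{j\in T^i,k\in S^i}p(j\rightsquigarrow k)$ symmetrically, so one can use the in-statistic $a_w=\sum_{j\in S^i}p(j\rightsquigarrow w)$ instead---but as written your separation argument targets the wrong object.

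The structural difference: the paper does not aggregate to the pair level at all. It works with \emph{triples} $(j,k,l)$ where $v_j\in S^i$ and $\{v_k,v_l\}\in M^i$, declaring a triple \emph{good} when $p(j\rightsquigarrow k)\ge 2\,p(j\rightsquigarrow l)$. A direct two-variable calculation gives, for each good triple, an entropy gain at that entry of at least $\epsilon\cdot p(j\rightsquigarrow k)$ with $\epsilon=\tfrac12\log\tfrac{32}{27}$; all other entries contribute nonnegatively by concavity of $h(x)=x\log(1/x)$. The sparse-cut load bound is then used only to show $\sum_{\text{good}}p(j\rightsquigarrow k)=\Omega(n)$, and summing the \emph{linear} per-triple bound immediately gives $\Pi(i)-\Pi(i-1)=\Omega(n)$. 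This sidesteps your quadratic Pinsker step and the delicate claim that a constant fraction of adversarially matched pairs are $\Omega(1)$-separated in $\ell_1$. Your route can in principle be completed (e.g.\ Cauchy--Schwarz on $\sum_{\{u,v\}\in M^i}(a_u-a_v)$ lower-bounds $\sum(a_u-a_v)^2$ once the linear sum is $\Omega(|M^i|)$), but the per-triple linear bound is both simpler and more robust to the choice of constants.
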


The proof of \cref{lem-potential-reduction} is left to \cref{sect-det-potential}.
The following lemma is an immediate consequence of \cref{lem-potential-reduction}.

\begin{lemma}[Number of iterations]\label{lem-iteraions-KKOV}
The KKOV cut-matching game has  $\tau = O(\log n)$ iterations, and so the final graph $H^\ast$ has maximum degree  $\Delta(H^\ast) = O(\log n)$.
\end{lemma}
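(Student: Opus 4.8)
The plan is to derive \cref{lem-iteraions-KKOV} directly from the potential bound in \cref{lem-potential-reduction}. First I would record the two extreme values of the potential function: since $F$ starts out as the identity matrix, we have $p(j\rightsquigarrow l)\in\{0,1\}$ for all $j,l$ at the start, so every term $p(j\rightsquigarrow l)\log\frac{1}{p(j\rightsquigarrow l)}$ vanishes (using the usual convention $0\log\frac{1}{0}=0$ and $1\log 1 = 0$), giving $\Pi(0)=0$. For the upper bound, for each fixed source $j$ the vector $\big(p(j\rightsquigarrow 1),\ldots,p(j\rightsquigarrow n)\big)$ is a probability distribution (row of a doubly stochastic matrix), so its entropy $\sum_l p(j\rightsquigarrow l)\log\frac{1}{p(j\rightsquigarrow l)}$ is at most $\log n$, maximized at the uniform distribution; summing over the $n$ choices of $j$ gives $\Pi(i)\le n\log n$ for every $i$.

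Next I would combine these with the monotone increase from \cref{lem-potential-reduction}. That lemma gives a constant $c>0$ with $\Pi(i)-\Pi(i-1)\ge cn$ for every iteration $1\le i\le \tau-1$. Telescoping from $\Pi(0)=0$ yields $\Pi(\tau-1)\ge c n(\tau-1)$. On the other hand $\Pi(\tau-1)\le n\log n$, so $cn(\tau-1)\le n\log n$, i.e.\ $\tau-1\le c^{-1}\log n$, hence $\tau = O(\log n)$.

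Finally, the statement about the maximum degree of $H^\ast$ is immediate from the basic properties of the KKOV cut-matching game already listed after \cref{def-kkov}: the graph $H^j$ is a union of $j$ matchings, so $\Delta(H^j)\le j$; in particular $H^\ast$, being an induced subgraph of $H^{\tau-1}$, satisfies $\Delta(H^\ast)\le \Delta(H^{\tau-1})\le \tau-1 = O(\log n)$.

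I do not expect any real obstacle here: the only nontrivial input, the per-iteration potential gain, is explicitly deferred to \cref{sect-det-potential}, and the rest is the elementary observation that a quantity bounded above by $n\log n$ cannot increase by $\Omega(n)$ more than $O(\log n)$ times. The one point requiring a small amount of care is justifying the two endpoint bounds on $\Pi$ (the entropy upper bound and the fact that the identity matrix gives zero entropy), but both follow from standard properties of the entropy of a probability distribution together with the doubly stochastic structure of $F$.
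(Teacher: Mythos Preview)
Your proposal is correct and follows essentially the same approach as the paper: the paper establishes $\Pi(0)=0$ and $\Pi(i)\le n\log n$ in the paragraph defining the potential, states that \cref{lem-iteraions-KKOV} is an immediate consequence of \cref{lem-potential-reduction}, and the telescoping argument you spell out is exactly what ``immediate consequence'' means here. The degree bound is likewise derived in the paper from $H^\ast$ being an induced subgraph of a union of $\tau-1$ matchings.
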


\subsection{Simultaneous Embedding of Multiple Expanders}\label{sect-simult-emb-expander}

We consider the following task $\cutmatch{\phicut}{\phiemb}$.

\begin{definition}[Simultaneous embedding of multiple expanders]
    Given an input graph $G=(V,E)$ with maximum degree $\Delta$, together with a partition $\VV= \{V_1, V_2, \ldots, V_k\}$ of $V$ with $|V_i| = \Theta(n/k)$ for each $1 \leq i \leq k$, the task $\cutmatch{\phicut}{\phiemb}$ partitions $\VV$ into $\VV = \VVcut \cup \VVemb$ and outputs the following sparse cut and expander embedding. 
\begin{description}
\item[Embedding:]  A simultaneous embedding of a graph $H_i$  into $U_i \subseteq V_i$ for each $V_i \in \VVemb$.
The simultaneous embedding has congestion $c = \poly(\Delta \phicut^{-1}, \log n)$ and dilation $d = \poly(\Delta \phicut^{-1}, \log n)$. 
Furthermore, the following holds for each $V_i \in \VVemb$.
\begin{itemize}
    \item $\Delta(H_i) = O(\log n)$.
    \item $\Phiv(H_i) \geq \phiemb$.
    \item $|U_i| \geq (2/3) |V_i|$.
\end{itemize}
\item[Cut:]  A cut $C \subseteq V$ with $\Phiv(C) \leq \phicut$ and $0 \leq |C| \leq |V|/2$ satisfying either one of the following.
\begin{itemize}
    \item $|C| \geq |V|/3$.
    \item $C$ contains at least $(1/8)|V_i|$ vertices from each $V_i \in \VVcut$.
\end{itemize}
\end{description}
\end{definition} 


The goal of \cref{sect-simult-emb-expander}  is to prove \cref{lem-simul-emb-det}.
The proof of  \cref{lem-simul-emb-det} relies on the following auxiliary lemma, which shows that we can make the dependence on the Steiner tree diameter $D$ additive. This is crucial since the $k$ parallel recursive calls in the simultaneous expander embeddings share the same Steiner tree in the underlying graph. The proof of \cref{lem-diam-reduction-main} is left to \cref{sect-diam-reduce}.

\begin{lemma}[Diameter reduction]
\label{lem-diam-reduction-main}
There is an $O\left(D + \psi^{-2} \Delta^{2} \log^6 n\right)$-round deterministic algorithm $\mathcal{A}_1$ and an $O(D)$-round deterministic algorithm $\mathcal{A}_2$ that allow us to solve \[\detbalspcut{\phicut}{\phiemb}{\betacut}{\betaleft}\] as follows.
\begin{enumerate}
    \item Run the algorithm $\mathcal{A}_1$, which produces a subgraph with a Steiner tree of diameter $D' = O(\psi^{-1}  \Delta \log^3 n)$. 
    \item Solve the task $\detbalspcut{\phicut/2}{\phiemb}{\betacut}{\betaleft}$ on this subgraph.
    \item Run the algorithm $\mathcal{A}_2$, which gives us a solution to $\detbalspcut{\phicut}{\phiemb}{\betacut}{\betaleft}$  by combining the output results in previous steps.
\end{enumerate}
\end{lemma}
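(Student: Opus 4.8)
\emph{Overall approach, and the algorithm $\mathcal{A}_1$.} The idea is to pay the (possibly enormous) dependence on the Steiner-tree diameter $D$ of $G$ exactly once, via a single low-diameter decomposition, then recurse on the one cluster that contains almost all of $V$, folding everything outside that cluster into the output cut or the leftover set. Recall $\psi=\phicut$. For $\mathcal{A}_1$, run \cref{thm-low-diam-decomp-mod} on $G$ with parameter $\beta=\Theta(\psi/\Delta)$, the hidden constant small. This runs in $O(D+\beta^{-2}\log^6 n)=O(D+\psi^{-2}\Delta^2\log^6 n)$ rounds, deletes a set $\erem$ of inter-cluster edges with $|\erem|\le\beta\,\vol(V\setminus V_{i^\ast})\le\beta\Delta\,|V\setminus V_{i^\ast}|$ (using that $G$ has bounded degree), and equips the highest-volume cluster $V_{i^\ast}$ with a Steiner tree of diameter $O(\beta^{-1}\log^3 n)=O(\psi^{-1}\Delta\log^3 n)=D'$. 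Building the global Steiner tree of $G$ in $O(D)$ rounds we learn $|V_{i^\ast}|$. If $V_{i^\ast}$ is \emph{not dominant}, i.e.\ $|V_{i^\ast}|<(1-\Theta(\betaleft))|V|$, then $\mathcal{A}_1$ finishes by itself: a routine greedy argument merges whole clusters into a set $C$ with $\betacut|V|\le|C|\le|V|/2$ (possible since in this regime no single cluster is too large a fraction of $V$) whose boundary lies in $\erem$; as $|\erem|=O(\beta\Delta|V|)$ is a small enough multiple of $\betacut\,\phicut\,|V|$ we get $\Phiv(C)\le\phicut$, so $(W,C)=(\emptyset,C)$ already solves $\detbalspcut{\phicut}{\phiemb}{\betacut}{\betaleft}$ and Steps~2--3 are vacuous. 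Otherwise $\mathcal{A}_1$ outputs the connected subgraph $G[V_{i^\ast}]$ with its diameter-$D'$ Steiner tree, and remembers the decomposition and $\erem$.

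\emph{Step~2 and the sparsity of the combined cut.} Step~2 solves $\detbalspcut{\phicut/2}{\phiemb}{\betacut}{\betaleft}$ on $G[V_{i^\ast}]$ (with a constant-factor-smaller $\betaleft$ if one wants to be pedantic), producing $W^\circ,C^\circ\subseteq V_{i^\ast}$ as in \cref{def-det-sparse-cut}. The combiner $\mathcal{A}_2$ sets $W:=W^\circ$; since $G[V_{i^\ast}]$ is an induced subgraph of $G$ we have $G[W]=(G[V_{i^\ast}])[W^\circ]$, hence $\Phiv(G[W])\ge\phiemb$. For the cut it picks $C$ among the candidates $C^\circ$, $C^\circ\cup(V\setminus V_{i^\ast})$, $V_{i^\ast}\setminus C^\circ$ (and, when $|C^\circ|$ is tiny, $\emptyset$, with $C^\circ$ moved to the leftover set), always taking the side of size at most $|V|/2$. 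The boundary in $G$ of any such candidate is contained in $\partial_{G[V_{i^\ast}]}(C^\circ)$ together with the edges of $\erem$ touching the side inside $V_{i^\ast}$, so
\[
|\partial_G(C)|\;\le\;|\partial_{G[V_{i^\ast}]}(C^\circ)|+|\erem|\;\le\;\tfrac{\phicut}{2}\,|C^\circ|+\beta\Delta\,|V\setminus V_{i^\ast}|.
\]
Every candidate containing $C^\circ$ has at least $|C^\circ|$ vertices, the candidate $V_{i^\ast}\setminus C^\circ$ has at least $|V_{i^\ast}|/2\ge|C^\circ|$ vertices, and $C$ is always the smaller side, so the first term contributes at most $\phicut/2$ to $\Phiv(C)$; every relevant candidate has $\Omega(|V\setminus V_{i^\ast}|)$ vertices, so the second term contributes $O(\beta\Delta)=O(\psi)$, at most $\phicut/2$ once the hidden constant in $\beta$ is small. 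Hence $\Phiv(C)\le\phicut$ and $|C|\le|V|/2$.

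\emph{Balance.} The balance clause is checked by casing on the shape of the inner output. If the inner call returns a balanced cut ($W^\circ=\emptyset$, $\betacut|V_{i^\ast}|\le|C^\circ|\le|V_{i^\ast}|/2$), then, $V_{i^\ast}$ being dominant, one of $C^\circ$, $V_{i^\ast}\setminus C^\circ$, $C^\circ\cup(V\setminus V_{i^\ast})$ has size in $[\betacut|V|,|V|/2]$, and we output it with $W=\emptyset$. If the inner call returns an expander with small leftover ($|V_{i^\ast}\setminus(C^\circ\cup W^\circ)|\le\betaleft|V_{i^\ast}|$), we keep $W=W^\circ$ and absorb $V\setminus V_{i^\ast}$ into the cut (using $C=C^\circ$, discarding a tiny $C^\circ$ to the leftover set when keeping it would spoil sparsity), so the uncovered vertices are $V_{i^\ast}\setminus(C^\circ\cup W^\circ)$ plus at most a negligible remnant, i.e.\ $\le\betaleft|V|$ after the constant rescaling; the case $W^\circ=C^\circ=\emptyset$ is a degenerate instance of these.

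\emph{Round complexity and the main difficulty.} All of $\mathcal{A}_2$ is size- and $\erem$-aggregation along the global Steiner tree plus a broadcast of the final labelling, hence $O(D)$ rounds, and Step~2 runs on a graph of Steiner-tree diameter $D'$; thus the dependence on the possibly huge diameter $D$ of $G$ has been confined to $\mathcal{A}_1$ (cost $O(D+\psi^{-2}\Delta^2\log^6 n)$) and $\mathcal{A}_2$ (cost $O(D)$), as claimed. The one genuine obstacle is the balance/sparsity bookkeeping of the previous two paragraphs: one must verify, across all three inner outcomes and every position of $V\setminus V_{i^\ast}$ relative to $|V|/2$, that the extracted $C$ is simultaneously sparse, of size $\le|V|/2$, and either $\betacut$-balanced or $\betaleft$-almost-covering. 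The only knobs are the small constant hidden in $\beta$ (which bounds $|\erem|$, and hence the extra boundary charged to $C$) and the threshold defining ``dominant'' (which bounds $|V\setminus V_{i^\ast}|$, and hence how much must be pushed into the cut or the leftover); choosing both small enough — at the price of only constant-factor changes in the hidden parameters — makes every case close.
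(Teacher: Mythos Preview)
Your high-level approach---use the modified low-diameter decomposition of \cref{thm-low-diam-decomp-mod} so that $V\setminus V_{i^\ast}$ is itself a sparse cut of sparsity $\le\beta\Delta$, then recurse directly on $G[V_{i^\ast}]$---is a legitimate and somewhat simpler alternative to the paper's route. The paper instead uses the \emph{basic} decomposition of \cref{thm-low-diam-decomp} and then performs an explicit ball-growing step (via \cref{lem-cutfromflow-basic}) from the big cluster to obtain a sparse cut $C_1$ with $\Phiv(C_1)\le\phicut/2$ and a low-diameter Steiner tree on $V\setminus C_1$; your use of \cref{thm-low-diam-decomp-mod} effectively replaces that ball-growing, since the improved bound $|\erem|\le\beta\,\vol(V\setminus V_{i^\ast})$ directly certifies the sparsity of $V\setminus V_{i^\ast}$. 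The combining step is then the same in spirit: apply \cref{lem-cut-combine} with $C_1=V\setminus V_{i^\ast}$ and $C_2=C^\circ$.

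However, your choice of the ``dominant'' threshold is wrong, and this breaks the non-dominant case. You declare $V_{i^\ast}$ dominant when $|V_{i^\ast}|\ge(1-\Theta(\betaleft))|V|$ and claim that otherwise ``no single cluster is too large a fraction of $V$'' so a greedy merge yields $C$ with $\betacut|V|\le|C|\le|V|/2$. This is false: take the parameters of \cref{lem-simul-emb-det} ($\betacut=1/3$, $\betaleft=1/12$) and suppose $|V_{i^\ast}|=0.8|V|$. Then $V_{i^\ast}$ is non-dominant by your criterion, yet $V_{i^\ast}$ alone exceeds $|V|/2$ while all remaining clusters together hold only $0.2|V|<\betacut|V|$, so no cluster-respecting set of size in $[\betacut|V|,|V|/2]$ exists. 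The threshold must be tied to $\betacut$, not $\betaleft$: declaring $V_{i^\ast}$ dominant when $|V_{i^\ast}|>2|V|/3$ fixes the non-dominant case (use $C=V\setminus V_{i^\ast}$ when $|V_{i^\ast}|>|V|/2$, else \cref{lem-bal-partition}), and also guarantees $|V\setminus V_{i^\ast}|<|V|/3$ in the dominant case so that \cref{lem-cut-combine} applies verbatim---at which point your ad hoc ``discard a tiny $C^\circ$ to the leftover'' maneuver, and the accompanying need to shrink $\betaleft$ in Step~2, become unnecessary.
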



In \cref{lem-simul-emb-det}, recall that $\Tcut(n', \Delta', D', \phicut', \phiemb',  \betacut', \betaleft')$ is the deterministic round complexity for solving the problem $\detbalspcut{\phicut'}{\phiemb'}{\betacut'}{\betaleft'}$  defined in \Cref{def-det-sparse-cut}.

\begin{lemma}[Deterministic embedding of multiple expanders] \label{lem-simul-emb-det} Consider a  graph $G=(V,E)$ with maximum degree $\Delta$ and a parameter $0 < \phicut < \Delta$. Given a partition $\VV= \{V_1, V_2, \ldots, V_k\}$ of $V$ with $|V_i| = \Theta(n/k)$ for each $1 \leq i \leq k$, the task $\cutmatch{\phicut}{\phiemb}$  can be solved deterministically with round complexity
\[
\poly\left(D, k, \Delta \phicut^{-1}, \log n\right) +   \poly\left(\Delta \phicut^{-1}, \log n\right) \cdot \Tcut\left(n', \Delta', D', \phicut', \phiemb', \betacut', \betaleft' \right),
\]
where
\begin{align*} 
 n' &= O\left(\frac{n}{k}\right), & \Delta' &= O(\log n),  
 &  D' &= O(\log^4 n),  \\ 
 \phicut' &= \frac{1}{4}, &\phiemb' &= \phiemb, 
 & \betacut' &= \frac{1}{3}, 
 & \betaleft' &= \frac{1}{12}. 
\end{align*}
\end{lemma}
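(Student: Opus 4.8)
The plan is to run the KKOV cut-matching game of \cref{def-kkov} simultaneously and independently on each part $V_i \in \VV$, using the deterministic $\detbalspcut{\cdot}{\cdot}{\cdot}{\cdot}$ algorithm with parameters $\phicut' = 1/4$, $\betacut' = 1/3$, $\betaleft' = 1/12$, $\phiemb' = \phiemb$ as the cut player $\mathcal{A}$. By \cref{lem-iteraions-KKOV}, each game terminates after $\tau = O(\log n)$ iterations; since the cut player acts on the auxiliary graph $H_{i-1}$ (a union of $\le \tau - 1 = O(\log n)$ matchings, hence maximum degree $O(\log n)$) on the vertex set $V_i$ of size $O(n/k)$, and this graph carries a Steiner tree that we build once at diameter $D' = O(\log^4 n)$ via \cref{lem-diam-reduction-main}, each invocation of $\mathcal{A}$ costs $\Tcut(n', \Delta', D', 1/4, \phiemb, 1/3, 1/12)$. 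The matching player's job in each iteration is to find, given $(S^i_j, T^i_j)$ with $T^i_j = V_j \setminus S^i_j$ and $|S^i_j| = \Theta(|V_j|)$, a matching of size $\ge |S^i_j|/2$ embedded in $G[V_j]$ with congestion and dilation $\poly(\Delta \phicut^{-1}, \log n)$. This is exactly the (near-)maximal vertex-disjoint bounded-length path problem; I would invoke the deterministic flow primitive (the $\detbalspcut{}{}{}{}$-free version based on \cite{GPV93}, i.e.\ \cref{lem-cut-match-det-noleftover} or the corresponding lemma in \cref{sect-flow}) that either returns such a matching or returns a sparse cut of $G[V_j]$ separating the unsaturated sides. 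Crucially, since we only need to saturate half of $S^i_j$, leftover vertices from the flow algorithm are tolerable and the revised KKOV analysis (\cref{lem-iteraions-KKOV} together with \cref{lem-potential-reduction}) still guarantees $\tau = O(\log n)$.

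Next I would account for what happens when the matching player fails in some part $V_j$: at that point the flow algorithm exposes a sparse cut $C_j \subseteq V_j$ with $\Phiv_{G[V_j]}(C_j) \le \poly(\Delta \phicut^{-1}, \log n)^{-1}$ and $|C_j| = \Omega(|V_j|)$, which I accumulate into a global set $C$. If at some point $|C| \ge |V|/3$, I stop and output $C$ together with $\VVemb = \emptyset$; since each $C_j$ is internally sparse in $G[V_j]$ and the parts are vertex-disjoint (the inter-part edges having been removed — but wait, the $V_j$ are not vertex-disjoint as subgraphs in general, only as vertex sets, so $\partial_G(C)$ can include cross-part edges). To handle this I would, as in the main expander-decomposition framework, have already charged the cross-part edges separately, or more simply bound $\Phiv_G(C) \le \phicut$ using that $|C| \ge |V|/3$ is balanced and $\Delta = O(1)$ relative to the parameters; the precise bookkeeping is the kind of routine estimate I would defer. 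Otherwise $|C| < |V|/3$, and I collect into $\VVcut$ exactly those parts $V_j$ in which the cut player at some point produced a cut contributing $\ge (1/8)|V_j|$ vertices to $C$, while the remaining parts $V_j \in \VVemb$ are those where the KKOV game ran to completion, yielding via the terminating clause of \cref{def-kkov} an embedded graph $H_j = H_{\tau-1}^{(j)}[W_j]$ with $\Delta(H_j) = O(\log n)$, $\Phiv(H_j) \ge \phiemb$, $|U_j| := |W_j| \ge (2/3)|V_j|$. The simultaneous embedding's congestion and dilation are the union over $O(\log n)$ iterations of the per-iteration flow embeddings, each $\poly(\Delta \phicut^{-1}, \log n)$, giving total $c, d = \poly(\Delta \phicut^{-1}, \log n)$ as claimed.

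For the round complexity: the additive $\poly(D, k, \Delta\phicut^{-1}, \log n)$ term comes from the one-time diameter reduction (\cref{lem-diam-reduction-main}), from broadcasting/coordinating across the $k$ parts along the shared Steiner tree of $G$ (which is why $k$ and $D$ appear, since the $k$ parallel games share bandwidth on that tree), and from the $O(\log n)$ matching-player invocations each costing $\poly(\Delta\phicut^{-1}, \log n)$ on a radius-$D'$ Steiner tree. The multiplicative $\poly(\Delta\phicut^{-1}, \log n) \cdot \Tcut(n', \ldots)$ term is the $\tau = O(\log n)$ cut-player calls, each a black-box invocation of $\detbalspcut{1/4}{\phiemb}{1/3}{1/12}$ on an instance of size $n' = O(n/k)$, maximum degree $\Delta' = O(\log n)$, Steiner diameter $D' = O(\log^4 n)$; the $\poly$ prefactor absorbs the congestion overhead of reusing Steiner trees across recursive calls (handled by the technique of \cref{sect-sparse-cut-tools}) and the $\tau$ factor.

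\textbf{Main obstacle.} The delicate point is twofold: (i) ensuring the analysis of the KKOV game survives the weakening from ``matching saturates $S^i$'' to ``matching saturates half of $S^i$,'' which is needed because the deterministic flow primitive \`a la \cite{GPV93} inherently leaves leftover vertices — this requires checking that \cref{lem-potential-reduction}'s $\Omega(n)$ entropy increase per iteration is robust to a constant-fraction unmatched set; and (ii) the global cut bookkeeping — translating per-part sparse cuts $C_j$ (sparse \emph{within} $G[V_j]$) into a global cut $C$ that is $\phicut$-sparse \emph{in $G$}, while simultaneously guaranteeing the balance dichotomy ($|C| \ge |V|/3$, or $C$ takes an $\Omega(1)$-fraction of every $V_j \in \VVcut$). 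Getting the constants to line up so that ``matching player fails'' always deposits at least $(1/8)|V_j|$ into $C$, and so that the union either reaches $|V|/3$ or stays below $|V|/2$, is where the careful constant-chasing lives; I would set these up so that each failed part contributes a cut of size in $[(1/8)|V_j|, (1/2)|V_j|]$ and invoke the same ``stop early if balanced'' logic as in \cref{lem-rand-cut-match-bal-cut}.
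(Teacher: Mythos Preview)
Your high-level architecture is right: KKOV games in parallel on each $V_j$, recursive $\detbalspcut{}{}{}{}$ as cut player, deterministic flow for the matching player, diameter reduction via \cref{lem-diam-reduction-main} to get $D' = O(\log^4 n)$. But the matching-player step and the global cut bookkeeping, which you flag as the main obstacle, are not handled correctly, and this is not a matter of constant-chasing.

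First, the matchings are \emph{not} embedded in $G[V_j]$. The partition $\VV$ is arbitrary, so $G[V_j]$ has no connectivity or diameter guarantee whatsoever; routing inside $G[V_j]$ can simply fail. The paper embeds all the matchings in the full graph $G$ (this is also what the task specification requires: a simultaneous embedding with global congestion and dilation bounds). Second, and more seriously, your plan to accumulate per-part cuts $C_j$ that are sparse in $G[V_j]$ into a global $C$ does not work: a cut sparse in $G[V_j]$ has no sparsity guarantee in $G$ (the cross-part edges are unbounded), and your fallback of ``$|C|\ge |V|/3$ is balanced and $\Delta=O(1)$'' is false here since $\Delta$ is a free parameter. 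Even if you route in $G$ and obtain, for each failed part, a globally sparse cut via \cref{lem-cut-match-det}, these cuts for different $j$ may overlap arbitrarily and their union need not be sparse in $G$.

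The paper resolves exactly this by treating the matching step in each iteration as a \emph{multi-commodity} instance on $G$: all currently active pairs $(S_j^i, T_j^i)$ are fed together, along with the accumulated cut $\Cin = C_{i-1}$, to the lemma \cref{lem-cut-match-det-multi} (which in turn rests on \cref{lem-cutfromflow-multiple}). That lemma returns either a simultaneous embedding of all matchings in $G$ with the stated congestion/dilation, \emph{or} a single enlarged global cut $\Cout \supseteq \Cin$ with $\Phiv_G(\Cout) \le \phicut/2$ that already contains $\ge |S_j^i|/2 \ge |V_j|/8$ vertices from every part where fewer than half of $S_j^i$ was matched (or else $|\Cout| \ge |V|/3$, in which case one stops). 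Maintaining a single growing globally-sparse cut $C_0 \subseteq C_1 \subseteq \cdots$ across iterations is what makes the ``$(1/8)|V_j|$ from each failed part'' guarantee and the $\Phiv_G(C) \le \phicut$ bound hold simultaneously; this is the piece your proposal is missing.
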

\begin{proof}
The algorithm implements the  cut-matching game on each part of $\VV= \{V_1, V_2, \ldots, V_k\}$ simultaneously by using the flow algorithm of \cref{lem-cut-match-det-multi} to implement the matching player and using an algorithm for $\detbalspcut{\phicut'}{\phiemb'}{\betacut'}{\betaleft'}$ to implement the cut player. Throughout the process, a cut is maintained. Whenever the flow algorithm of \cref{lem-cut-match-det-multi} fails to find a large enough matching for some part $V_i$ to fulfill the task of the matching player in the cut-matching game, then \cref{lem-cut-match-det-multi} guarantees that we can either enlarge the current cut to include at least $1/8$ fraction of vertices of $V_i$ or obtain a balanced sparse cut.

We set up the notations for the algorithm. The algorithm has $\tau = O(\log n)$ iterations. At the beginning of each iteration $i$, we maintain a partition of $\VV$ into three sets.
\begin{description}
    \item[Active parts:] $\VV_{i-1}^{\mathsf{active}} \subseteq \VV$ is the set of active parts where the cut-matching game has not terminated yet.
    \item[Finished parts:] $\VV_{i-1}^{\mathsf{emb}} \subseteq \VV$ is the set of  parts where the cut-matching game has terminated.
    \item[Failed parts] $\VV_{i-1}^{\mathsf{cut}} \subseteq \VV$ is the set of  parts where the cut-matching game cannot continue because at some point the matching player failed to find a large enough matching. 
\end{description}
We also maintain a cut $C_{i-1} \subseteq V$. The induction hypothesis guarantees the following.
\begin{itemize}
    \item $|C_{i-1}| < |V|/3$, i.e., it is not balanced yet.
    \item $C_{i-1}$ contains at least $(1/8)|V_j|$ vertices from each $V_j   \in \VV_{i-1}^{\mathsf{cut}}$, i.e., it includes enough vertices from the parts that we cannot embed an expander into.
    \item $\Phiv(C_{i-1}) \leq \phicut/2$, i.e., the sparsity requirement is met.
\end{itemize}

For each active part $V_j \in \VV_{i-1}^{\mathsf{active}}$, it is guaranteed that we have already finished the first $i-1$ steps of the cut-matching game, but it is not done yet.
We write $M_j^r$ to denote the $r$th matching of the cut-matching game, for each $V_j \in \VV_{i-1}^{\mathsf{active}}$, for each $1 \leq r \leq i-1$.
The induction hypothesis guarantees that for each $1 \leq r \leq i-1$, we have a simultaneous embedding of $\bigcup_{V_j \in \VV_{i-1}}^{\mathsf{active}} M_j^r$ with congestion $c = O(\Delta^2 \phicut^{-2} \log n \log k)$ and dilation $d = O(\Delta  \phicut^{-1} \log n)$.

Suppose that all the above requirements are met at the beginning of iteration $i$. We design the algorithm for iteration $i$ as follows.

\paragraph{Cut player.}
We implement the  cut player by solving $\detbalspcut{\widehat{\phicut}}{\widehat{\phiemb}}{\widehat{\betacut}}{\widehat{\betaleft}}$ with 
$\widehat{\phicut} = 1/2$, $\widehat{\phiemb} = \phiemb$, $\widehat{\betacut} = 1/3$, and $\widehat{\betaleft} = 1/12$
on the graph $H_j^{i-1} = M_j^1 \cup M_j^2 \cup \cdots \cup M_j^{i-1}$, in parallel for all $V_j \in \VV_{i-1}^{\mathsf{active}}$, using \cref{lem-diam-reduction-main}. The number of vertices and the maximum degree for each part are $n' = O(n/k)$ and $\Delta' = i-1 = O(\log n)$.

The parallel simulation suffers from congestion $(i-1)c$ and dilation $d$. Moreover, for the parts $\mathcal{A}_1$ and $\mathcal{A}_2$ in \cref{lem-diam-reduction-main}, we have to share the underlying Steiner tree $T$, which causes a congestion of $k$. Taking into these overheads into consideration, the total round complexity for the parts $\mathcal{A}_1$ and $\mathcal{A}_2$ in \cref{lem-diam-reduction-main} is
\[
c(i-1) \cdot d \cdot k \cdot O\left(D + {\widehat{\phicut}}^{-2} {\Delta'}^{2} \log^6 n\right)
= \poly\left(D, k, \Delta \phicut^{-1}, \log n\right).
\]

Because $\phicut' = \widehat{\phicut}/2$, $\phiemb' = \widehat{\phiemb}$, $\betacut' = \widehat{\betacut}$, and $\betaleft' = \widehat{\betaleft}$, the part of solving $\detbalspcut{\widehat{\phicut}/2}{\widehat{\phiemb}}{\widehat{\betacut}}{\widehat{\betaleft}}$ with $D' = O\left({\widehat{\phicut}}^{-1} \Delta' \log^3 n\right) = O(\log^4 n)$ in \cref{lem-diam-reduction-main} has round complexity
\begin{align*}
   &c(i-1) \cdot d \cdot \Tcut\left(n', \Delta', D', \phicut', \phiemb', \betacut', \betaleft' \right)\\
   &=\poly(\Delta \phicut^{-1}, \log n) \cdot \Tcut\left(n', \Delta', D', \phicut', \phiemb', \betacut', \betaleft' \right).
\end{align*} 

After that, the cut player either return    $C_j^{i}$ or return $W_j^{i}$ for each $V_j \in \VV_{i-1}^{\mathsf{active}}$. If $W_j^{i}$ is returned, we add $V_j$ to $\VV_{i}^{\mathsf{emb}}$, as the cut-matching game on this part has finished.

\paragraph{Matching player.} We let $\VV^\ast = \VV_{i-1}^{\mathsf{active}} \setminus \VV_{i}^{\mathsf{emb}}$ be the set of active parts where the cut player returns $C_j^{i}$. As in \cref{def-kkov}, set  $S_j^{i} = C_j^{i}$ and $T_j^{i} = V_j \setminus S_j^{i}$. Now the task of the matching player is to find a matching $M_j^i$ of size at least $|S_j^{i}|/2$ between $S_j^{i}$ and $T_j^{i}$ to fulfill the requirement in \cref{def-kkov}, for each $V_j \in \VV^\ast$. Furthermore, we want to have a simultaneous  embedding of the union of $M_j^i$ over all $V_j \in \VV^\ast$ with congestion $c$ and dilation $d$. 

We apply the algorithm of \cref{lem-cut-match-det-multi} with parameters $\psi = \phicut$, $\beta = \Omega(1/k)$ and $\Cin = C_{i-1}$.
The algorithm has round complexity
\[
O\left(Dk \Delta^{2} \psi^{-2} \log n \log \frac{1}{\beta} + k \Delta^6 \psi^{-6} \beta^{-1} \log^2\Delta \log^2 n \log  \frac{1}{\beta} \right) =  \poly(D, k, \Delta \phicut^{-1}, \log n).
\]

We set $C_i = \Cout$ resulting from the algorithm of \cref{lem-cut-match-det-multi}. One possibility is that we already have $|V|/3 \leq |C_i| \leq |V|/2$ and $\Phiv(C_i) \leq \phicut$, and we are done in this case by outputting $\VVemb = \emptyset$, $\VVcut = \VV$, and $C = C_i$.

The other possibility is that $\Phiv(C_i) \leq \phicut/2$, $|C_i| < |V|/3$, $C_{i-1} \subseteq C_i$, and for each $V_j \in \VV^\ast$, if less than $|S_j^{i}|/2$ vertices in $S_j^{i}$ are matched by the matching player, then $C_i$ contains at least $|S_j^{i}|/2 \geq |V_j|/8$ vertices in  $S_j^{i} \cup T_j^{i} = V_j$. If this is the case, $V_j$ is added to $\VV_i^{\mathsf{cut}}$.
All parts in $\VV^\ast \setminus \VV_i^{\mathsf{cut}}$ are added to $\VV_{i}^{\mathsf{active}}$.

\paragraph{Summary.} 
After $\tau = O(\log n)$ iterations, we must have $\VV_{\tau}^{\mathsf{active}} = \emptyset$ (\cref{lem-iteraions-KKOV}). Now we can output $\VVcut = \VV_{\tau}^{\mathsf{cut}}$, $\VVemb = \VV_{\tau}^{\mathsf{emb}}$ and
  $C = C_\tau$.
By induction hypothesis, each part $V_j \in \VVcut$ must have at least $|V_j|/8$ of its vertices in $C$.
Also by  induction hypothesis, $\Phiv(C) \leq \phicut/2 < \phicut$ and $0 \leq |C| < |V|/3$.
Therefore, $C$ is a valid output cut.

For each $V_j \in \VVemb$, the induction hypothesis implies that they have successfully finished the entire cut-matching game, and therefore it has found an induced subgraph $$H^\ast = H_j^{\tau}[W_j^\tau] \subseteq H_j^{\tau} = M_j^1 \cup M_j^2 \cup \cdots \cup M_j^\tau,$$ and it has sparsity $\Phiv(H^\ast) \geq \phiemb$. We set $U_j = W_j^\tau$ and $H_j = H^\ast$, and from the specification of the cut-matching game, we have $|U_j| \geq (2/3)|V_j|$. Since there are $\tau = O(\log n)$ iterations, this  graph $H_j$ is a union of $O(\log n)$ matchings, and so it has maximum degree $O(\log n)$.

Since all matchings in one iteration can be embedded simultaneously with congestion $c$ and dilation $d$, the well-connected graphs $H_j$ for all $V_j \in \VVemb$ can be embedded simultaneously with congestion $c \tau = O(c \log n) = \poly(\Delta \phicut^{-1}, \log n)$ and dilation $d = \poly(\Delta \phicut^{-1}, \log n)$ simultaneously, as required.
\end{proof}

\subsection{Combining Well-connected Subgraphs}\label{sect-merge-subgraphs}

The goal of \cref{sect-merge-subgraphs} is to prove \cref{lem-merge-subgraphs}. Note that the output of \cref{lem-merge-subgraphs} solves the task
\begin{align*}
    &\detbalspcut{\phicut'}{\phiemb'}{\betacut'}{\betaleft'} \ \ \ \  \text{with parameters}\\
    & \  \phicut' = \phicut, \ \
\phiemb' = \frac{ \phiemb} {\poly(\Delta \phicut^{-1}, \log n)}, \ \
\betacut' = \frac{1}{100}, \ \
\betaleft' =  O(\phicut\Delta^{-1}k^{-1}).
\end{align*}

\begin{lemma}[Combining well-connected subgraphs]\label{lem-merge-subgraphs}
Consider a graph $G=(V,E)$  with a partition $\VV= \{V_1, V_2, \ldots, V_k\}$ of $V$ with $\floor{n/k}\leq |V_i| \leq \ceil{n/k}$ for each $1 \leq i \leq k$. Given a solution of the task $\cutmatch{\phicut}{\phiemb}$,
 there is a deterministic algorithm with round complexity
\[
\poly(D, k, \Delta \phicut^{-1}, \log n)
\]
that outputs $C^\ast \subseteq V$ and $W^\ast \subseteq V$ satisfying 
\[0 \leq |C^\ast| \leq \frac{|V|}{2} \ \ \ \text{and} \ \ \  \Phiv(C^\ast) \leq \phicut \ \ \ \text{and} \ \ \ \Phiv(G[W^\ast]) \geq \frac{ \phiemb} {\poly(\Delta \phicut^{-1}, \log n)}.\]
Furthermore, at least one of the following is met.
    \begin{itemize}
        \item $|C^\ast| \geq \frac{|V|}{100}$ and $W^\ast = \emptyset$
        \item $V = C^\ast \cup W^\ast$.
        \item $C^\ast = \emptyset$ and $|V \setminus W^\ast| = O(\Delta^{-1}k^{-1} \phicut) \cdot |V|$.
    \end{itemize}
\end{lemma}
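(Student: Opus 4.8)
The strategy is to dispose of the case that the input cut $C$ is already balanced, and otherwise to glue the given expander embeddings $G[U_i]$ together into one expander $G[W^\ast]$, absorbing almost all remaining vertices, using the deterministic maximal-flow/sparse-cut subroutines of \cref{sect-flow,sect-sparse-cut-tools}. If $|C|\ge|V|/100$ we immediately output $C^\ast=C$, $W^\ast=\emptyset$: since $\Phiv(C)\le\phicut$ and $|C|\le|V|/2$ by hypothesis, this meets the first bullet, at no round cost. So assume $|C|<|V|/100$. Then $|C|<|V|/3$, so the cut guarantee of $\cutmatch{\phicut}{\phiemb}$ cannot be met by its first option, and hence $C$ contains at least $(1/8)|V_i|$ vertices of every $V_i\in\VVcut$. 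Consequently $\sum_{V_i\in\VVcut}|V_i|\le 8|C|<(8/100)|V|$, so the embedded parts carry almost all the mass: writing $U:=\bigcup_{V_i\in\VVemb}U_i$ we get $|U|\ge(2/3)\sum_{V_i\in\VVemb}|V_i|\ge\tfrac35|V|$, and in particular $|V\setminus(C\cup U)|<\tfrac25|V|<|U|$.

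For each $V_i\in\VVemb$ the induced subgraph $G[U_i]$ is certified to satisfy $\Phiv(G[U_i])\ge\phi'':=\phiemb/\poly(\Delta\phicut^{-1},\log n)$ by feeding the embedding of $H_i$ (congestion $c$, dilation $d$, $\Delta(H_i)=O(\log n)$, $\Phiv(H_i)\ge\phiemb$) into the embedding-to-expansion lemma \cref{lem-expander-emb}; these subgraphs are vertex-disjoint. Fix, identically at all vertices via the cluster indices, a constant-degree $\Omega(1)$-sparsity graph $Y$ on the index set of $\VVemb$, with $|E(Y)|=O(k)$. The target $W^\ast$ will be the union of the $G[U_i]$, together with a bundle of $\Theta(|V|/(k|E(Y)|))$ vertex-disjoint short paths of $G$ realizing each edge of $Y$ between the corresponding $U_i,U_j$, together with one short path attaching, as a pendant, each vertex of $V\setminus(C\cup U)$ to $U$. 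Granting that all these paths are found with congestion and dilation $\poly(\Delta\phicut^{-1},\log n)$, a routine estimate — internal cuts of a cluster controlled by $\phi''$; cuts splitting clusters controlled by $\Phiv(Y)=\Omega(1)$ times the bundle width over the congestion; cuts isolating pendants controlled by the fact that the pendant mass is $<|U|\le$ the core mass — gives $\Phiv(G[W^\ast])\ge\phiemb/\poly(\Delta\phicut^{-1},\log n)$, as required.

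The paths are produced by the deterministic flow algorithms of \cref{sect-flow} (in the spirit of \cref{lem-cut-match-det-multi}), run with congestion budget $\poly(\Delta\phicut^{-1},\log n)$, leftover parameter $\beta=\Theta(\Delta^{-1}k^{-1}\phicut)$, and initial cut $\Cin=C$; a constant number, $O(\log k)$, of invocations — executed sequentially so they can share the Steiner tree — either (i) route all bundles of $Y$ and all but at most $\beta|V|$ of $V\setminus(C\cup U)$, or (ii) return a cut $\Cout\supseteq C$ with $\Phiv(\Cout)\le\phicut$ and $|\Cout|\ge|V|/100$. In case (ii) we output $C^\ast=\Cout$, $W^\ast=\emptyset$ (first bullet). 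In case (i) we set $W^\ast$ to the union above; then $V\setminus(C\cup W^\ast)$ is the set of $\le\beta|V|$ unrouted vertices. If $C=\emptyset$ we output $C^\ast=\emptyset$ and this $W^\ast$, so $|V\setminus W^\ast|\le\beta|V|=O(\Delta^{-1}k^{-1}\phicut)|V|$ (third bullet); if $C\neq\emptyset$, the flow subroutine also hands back a small vertex separator enclosing the leftover, which we fold together with the leftover into $C$ while keeping the cut sparse and of size $\le|V|/2$, giving $V=C^\ast\cup W^\ast$ (second bullet). The round complexity is dominated by the $O(\log k)$ flow computations, each $\poly(D,\beta^{-1},\mathrm{dilation},\log n)=\poly(D,k,\Delta\phicut^{-1},\log n)$. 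The main obstacle is entirely inside this flow step: arranging the flow instances so that a routing failure is necessarily a \emph{balanced} ($\ge|V|/100$) sparse cut — which is why the routing must be done globally, toward all of $\bigcup U_i$ at once and realizing the expander $Y$ on the clusters, rather than stitching cluster pairs one at a time (a pairwise failure would only yield a $\Theta(n/k)$-size cut) — and, when $C\neq\emptyset$, showing the small unrouted leftover can be absorbed into the cut without spoiling sparsity via that separator. The case split on $|C|$, the mass bound on $\VVcut$, and the expander-of-expanders-plus-pendants estimate are all routine.
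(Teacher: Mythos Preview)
Your proposal has a genuine gap in the inter-cluster step. You want to realize a fixed auxiliary expander $Y$ on the cluster indices by routing a bundle of $\Theta(n/k)$ paths for each edge of $Y$ via \cref{lem-cut-match-det-multi}. But that lemma, in its ``second option'', only promises that $\Cout$ picks up $\Theta(n/k)$ vertices from each \emph{failing} pair. If only a constant number of $Y$-edges fail to route, $\Cout$ has size $\Theta(n/k)\ll |V|/100$, so you get neither a balanced cut nor a realization of $Y$: a constant-degree expander on $k'$ vertices can lose its expansion (or even disconnect) after deleting $O(1)$ edges, so the ``routed'' subgraph of $Y$ need not certify anything about $\Phiv(G[W^\ast])$. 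Your remark that routing must be done ``globally'' correctly identifies why pairwise routing fails, but the global multi-commodity call has exactly the same issue for a small failure set, and the promised ``$O(\log k)$ invocations'' do not explain how the bad clusters are peeled off while keeping the remaining index graph an expander. There is also a secondary slip: \cref{lem-expander-emb} does \emph{not} give $\Phiv(G[U_i])\ge\phi''$; it gives expansion for $G[W_i]$ where $W_i$ is the set of all vertices on the embedding paths of $H_i$, which generally spills outside $V_i$ and may overlap other $W_j$'s.

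The paper's proof avoids the inter-cluster routing problem entirely. After the pendant step (your \cref{lem-cut-match-det} call, which is the paper's \cref{lem-enlarge}), it does \emph{not} try to embed any auxiliary $Y$. Instead it gathers the $O(k^2)$ pairwise edge counts $|E(U_i^\ast,U_j^\ast)|$ and the sizes $|U_i^\ast|$ to a single vertex in $O(D+k^2)$ rounds, and there computes centrally a \emph{most balanced} sparse cut $C_2$ respecting the partition $\{U_i^\ast\}$ (among those with $\Phiv\le\phicut/8$, resp.\ $\le\phicut/4$ if already $\ge|V'|/3$). If $C_2$ is large you output it and stop. If $C_2$ is small, its maximality immediately gives $\vPhiOut{G''}{\UU^\ast}>\phicut/8$ on the surviving clusters (\cref{lem-conductance-combined graph}), and then \cref{lem-expander-emb-multi} --- using the actual $G$-edges between the $U_i^\ast$, not any routed bundles --- yields $\Phiv(G[W^\ast])\ge\phiemb/\poly(\Delta\phicut^{-1},\log n)$, where $W^\ast$ is the set of all vertices in the embeddings of the surviving $H_j^\ast$. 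The three bullets then fall out of a short case analysis on whether $C_1=\emptyset$ and whether $C_2=\emptyset$ (\cref{lem-cut-case-1,lem-cut-case-2}). The key idea you are missing is this maximal-respecting-cut argument: it converts ``no large sparse cluster-cut exists'' into a lower bound on outer sparsity, which is exactly what \cref{lem-expander-emb-multi} needs, with no inter-cluster routing at all.
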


To prove  \cref{lem-merge-subgraphs}, recall the specification of the output of $\cutmatch{\phicut}{\phiemb}$. The partition $\VV= \{V_1, V_2, \ldots, V_k\}$ of $V$  is further partitioned into $\VV = \VVcut \cup \VVemb$, and we have the following simultaneous embedding and sparse cut.
\begin{description}
\item[Embedding:]  A simultaneous embedding of a graph $H_i$ into $U_i \subseteq V_i$ for each $V_i \in \VVemb$.
The simultaneous embedding has congestion $c = O(\poly(\Delta \phicut^{-1}, \log n))$ and dilation $d = O(\poly(\Delta \phicut^{-1}, \log n))$. 
Furthermore, the following holds for each $V_i \in \VVemb$.
\begin{itemize}
    \item $\Delta(H_i) = O(\log n)$.
    \item $\Phiv(H_i) \geq \phiemb$.
    \item $|U_i| \geq (2/3) |V_i|$.
\end{itemize}
\item[Cut:]  A cut $C \subseteq V$ with $\Phiv(C) \leq \phicut$ and $0 \leq |C| \leq |V|/2$ satisfying either one of the following.
\begin{itemize}
    \item $|C| \geq |V|/3$.
    \item $C$ contains at least $(1/8)|V_i|$ vertices from each $V_i \in \VVcut$.
\end{itemize}
\end{description}

\paragraph{The easy case.}
  If the cut $C$   already has $|C| \geq |V|/100$, then we are done solving \cref{lem-merge-subgraphs} already by outputting $C^\ast = C$ and $W^\ast = \emptyset$. 
  
  For the rest of the proof, we focus on the case $|C| < |V|/100$, and so we must have $|\VVcut| < k/10$, since \[n/100 > |C| \geq \frac{1}{8} \cdot \left\lfloor{\frac{n}{k}}\right\rfloor  \cdot |\VVcut|, \ \ \text{and it implies} \ \ 
  |\VVcut| < \left(\frac{8}{100} +o(1)\right)k < \frac{k}{10}.
  \]
  Therefore, we have $|\VVemb| > (9/10)k$. From now on, we re-order $\VV$ in such a way that $\VVemb = \{V_1, V_2, \ldots, V_{k'}\}$ consists of the first $k' = |\VVemb| > (9/10)k$ parts of $\VV$.  
Since $|U_i| \geq (2/3) |V_i|$ for each $1 \leq i \leq k'$, the set
$U_1 \cup U_2 \cup \cdots \cup U_{k'}$ has size \[
|U_1 \cup U_2 \cup \cdots \cup U_{k'}| \geq \frac{2}{3} \cdot \left\lfloor{\frac{n}{k}}\right\rfloor \cdot \frac{9k}{10} = \left(\frac{3}{5} - o(1)\right)n > \frac{11n}{20}.\] For the rest of the proof, we abandon the cut $C$, and we will try combine these expander embeddings of $H_1, H_2, \ldots, H_{k'}$ to $U_1, U_2, \ldots, U_{k'}$ together to obtain a well-connected subgraph $G[W^\ast]$. 

\paragraph{Expanding the expander embeddings.} 
Given that $|U_1 \cup U_2 \cup \cdots \cup U_{k'}| > (11/20)|V|$, \cref{lem-enlarge} shows that in $\poly(D, k, \Delta \phicut^{-1}, \log n)$ rounds, we can enlarge the current simultaneous expander embedding to $U_i^\ast \supseteq U_i$ and $H_i^\ast \supseteq H_i$ for each $1 \leq i \leq k'$.

\begin{lemma}\label{lem-enlarge}
Suppose $T = U_1 \cup U_2 \cdots \cup U_{k'}$ has $|T| > (11/20)|V|$.
In $\poly(D, k, \Delta \phicut^{-1}, \log n)$ rounds, we can find a simultaneous embedding of $H_i^\ast$ into $U_i^\ast$ and a cut $C_1$ satisfying the following conditions.
\begin{description}
\item[Embedding:]
The sets $U_1^\ast, U_2^\ast, \ldots, U_{k'}^\ast$ are disjoint, $U_i^\ast \supseteq U_i$ and $H_i^\ast \supseteq H_i$ for each $1 \leq i \leq k'$. The simultaneous embedding has congestion and dilation $\poly(\Delta \phicut^{-1}, \log n)$.  Furthermore, for each $1 \leq i \leq k'$, we have \[\Phiv(H_i^\ast) = \Omega(1) \cdot \Phiv(H_i) \ \ \ \text{and} \ \ \   \Delta(H_i^\ast) \leq \Delta(H_i) + 1   \ \ \  \text{and} \ \ \   |U_i^\ast| \leq 2 \cdot |U_i|.\]
\item[Cut:] The cut $C_1 \subseteq V$ has $0 \leq |C_1| \leq |V|/2$ and $\Phiv(C_1) \leq \phicut/4$, and it satisfies either one of the following.
\begin{itemize}
    \item $|C_1| \geq (1/10)|V|$.
    \item $C_1$ contains all vertices not in any of $U_1^\ast, U_2^\ast, \ldots, U_{k'}^\ast$.
    \item $C_1 = \emptyset$, and $B = V \setminus (U_1^\ast, U_2^\ast, \ldots, U_{k'}^\ast)$ has size $|B| < \frac{\phicut}{4 \Delta} \cdot \floor{n/k} = O(\Delta^{-1} k^{-1} \phicut ) \cdot |V|$.
\end{itemize}
\end{description}
\end{lemma}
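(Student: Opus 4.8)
The plan is to absorb the leftover vertices $B := V\setminus T$, where $T=U_1\cup\cdots\cup U_{k'}$, into the current expander embeddings by routing vertex-disjoint paths from $B$ into $T$ with small congestion and dilation; each vertex of $T$ then ends at most one path, so no part $U_i$ is asked to absorb more than $|U_i|$ new vertices, and attaching the resulting matching to $H_i$ keeps it an expander up to a constant factor.

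First note the feasibility bound $|B| = |V|-|T| < \tfrac{9}{20}|V| < \tfrac{11}{20}|V| < |T|$, so $T$ has enough room. I would then run the deterministic GPV-style flow tool of \cref{sect-flow} (the one used to implement the matching player, cf.\ the proof of \cref{lem-simul-emb-det}) with source set $B$, sink set $T$, sparsity parameter chosen so that any cut it returns has sparsity at most $\phicut/4$, dilation bound $d=\poly(\Delta\phicut^{-1},\log n)$, and leftover parameter $\beta=\Theta(\Delta^{-1}k^{-1}\phicut)$ with a small enough constant; this costs $\poly(D,k,\Delta\phicut^{-1},\log n)$ rounds. Its output falls into one of three cases. (i) A sparse cut $C_1$ with $|C_1|\le|V|/2$ that is already $\Omega(1)$-balanced, say $|C_1|\ge|V|/10$: output $C_1$ with the trivial enlargement $U_i^\ast=U_i$, $H_i^\ast=H_i$ (all embedding conditions holding vacuously), matching the first cut alternative. (ii) A sparse cut $C_1$ together with a routing in which every unrouted vertex of $B$ lies in $C_1$: route the matching as below; then $B^\ast:=V\setminus\bigcup_iU_i^\ast$ is exactly the unrouted part of $B$, hence $B^\ast\subseteq C_1$, matching the second cut alternative. (iii) A routing of all but at most $\beta|B|$ vertices of $B$, with no cut: set $C_1=\emptyset$, route the matching, and observe $|B^\ast|\le\beta|B|<\tfrac{\phicut}{4\Delta}\floor{n/k}$ once the constant in $\beta$ is small enough, matching the third alternative.

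Given the set $\PP$ of vertex-disjoint $B$-$T$ paths of length $\poly(\Delta\phicut^{-1},\log n)$ produced above, the enlargement is built as follows: for each routed $v\in B$, let $u\in T$ be the $T$-endpoint of its path and $i$ the index with $u\in U_i$, put $v$ in $B_i$, set $U_i^\ast=U_i\cup B_i$, and let $H_i^\ast$ be $H_i$ together with, for each $v\in B_i$, one new edge $\{v,u\}$ embedded along $v$'s path. Then vertex-disjointness gives that each $u\in T$ ends at most one path, so $|B_i|\le|U_i|$, hence $|U_i^\ast|\le 2|U_i|$; the new edges form a matching between $B_i$ and $U_i$, so $\Delta(H_i^\ast)\le\Delta(H_i)+1$; by the standard fact that attaching a matching to a graph whose vertex set is at least as large as the set of newly attached vertices loses only a constant factor in sparsity (\cref{sect-conductance}), $\Phiv(H_i^\ast)=\Omega(1)\cdot\Phiv(H_i)$; and the new paths being edge-disjoint add congestion $1$ and dilation $\poly(\Delta\phicut^{-1},\log n)$, so superimposed on the given simultaneous embedding the congestion and dilation stay $\poly(\Delta\phicut^{-1},\log n)$. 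The $U_i^\ast$ are pairwise disjoint (the $U_i$ are, the $B_i$ are, and $B_i\subseteq V\setminus T$ while $U_j\subseteq T$), and $U_i^\ast\supseteq U_i$, $H_i^\ast\supseteq H_i$ by construction. The round complexity is dominated by the single call to the flow tool.

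The delicate point is making the flow tool's outcomes line up with the three cut alternatives with the stated constants — in particular, arguing that whenever the flow fails to route a constant fraction of $B$ the cut it certifies is either already $|V|/10$-balanced or may be taken to contain all unrouted vertices of $B$ (this is the ``sparse cut separating the unmatched sources from the sinks'' guarantee, but one must follow the sparsity loss $\phicut\to\phicut/4$, the balance threshold, and the convention that the cut is returned as the smaller of its two sides), as well as checking that the flow tool's internal computation does not reuse the shared Steiner tree in a way that blows up the $\poly(D,k,\Delta\phicut^{-1},\log n)$ bound.
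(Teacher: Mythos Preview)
Your proposal is correct and follows essentially the same route as the paper: invoke the deterministic cut-or-match tool (\cref{lem-cut-match-det}) with $S=V\setminus T$, $T=T$, $\psi=\phicut/4$, and $\beta=\Theta(\Delta^{-1}k^{-1}\phicut)$; in the leftover case set $C_1=\emptyset$, in the cut case take $C_1$ to be the smaller side of the returned cut (yielding either the balanced option or the ``contains all unmatched sources'' option, since $|V\setminus\tilde C|\ge|T|-|S|>|V|/10$); and in every case attach each matched $v\in B$ to its partner $u\in U_i$ as a pendant matching edge in $H_i^\ast$.

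One terminological slip to fix: the paths returned by \cref{lem-cut-match-det} are \emph{not} vertex-disjoint (congestion $1$); they have congestion $c=O(\Delta^2\phicut^{-2}\log^4 n)$ because the tool iterates \cref{lem-GPV-basic} many times. Your downstream claims survive anyway: $|B_i|\le|U_i|$ holds because $M$ is a \emph{matching} (each $u\in T$ is the endpoint of at most one matching edge), not because of path-disjointness; and the added congestion is $\poly(\Delta\phicut^{-1},\log n)$ rather than $1$, which is still within the stated bound. The paper, incidentally, always uses the matching-based enlargement even in the balanced-cut case (rather than your trivial $U_i^\ast=U_i$), but either choice satisfies the embedding conditions.
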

\begin{proof}
Apply \cref{lem-cut-match-det} to $(S,T)$ with $S = V\setminus T$ to find a matching between $V\setminus T$ and $T$.
 with $\beta = O(\phicut\Delta^{-1}k^{-1})$ being sufficiently small, and $\psi = \phicut/4$, to embed a matching between $S$ and $T$. The embedding has the required congestion $O(\poly(\Delta \phicut^{-1}, \log n))$ and dilation $O(\poly(\Delta \phicut^{-1}, \log n))$. The round complexity is  $\poly(D, k, \Delta \phicut^{-1}, \log n)$.

\paragraph{Leftover.}
By the specification of \cref{lem-cut-match-det}, there are two possibilities about the unmatched vertices in $S$. One is that almost all of $S$ are matched except at most $\beta|V| = O(\phicut\Delta^{-1}k^{-1}) \cdot |V|$ of them. We set $B$ to be these vertices. Selecting $\beta$ to be small enough, we can make $|B| < \frac{\phicut}{4 \Delta} \cdot \floor{n/k} = O(\phicut\Delta^{-1}k^{-1}) \cdot |V|$.

\paragraph{Cut.}
The other possibility is that we found a cut $\tilde{C}$ with $\Phiv(C) \leq \psi = \phicut/4$ such that 
all the unmatched vertices in $S$ are included in $\tilde{C}$, and all the unmatched vertices in $T$ are in $V \setminus \tilde{C}$.
In particular, 
 we must have $|V \setminus \tilde{C}| \geq |T| - |S| > |V|/10$. 
 
  If we also have $|\tilde{C}| > |V|/10$, then we can simply set $C_1$ to be the one of $\tilde{C}$ and $V \setminus \tilde{C}$ that has the smaller size, and this is a valid output as $|V|/10 \leq |C_1| \leq |V|/2$ and $\Phiv(C_1) \leq \phicut/4$.
Otherwise, we have $|\tilde{C}| < |V|/10 < |V|/2$. As $\tilde{C}$ contains all unmatched vertices in $S$, and so choosing $C_1 = \tilde{C}$  is also  a valid output.

\paragraph{Embedding.}
For each $1 \leq i \leq k'$, set $U_i^\ast$ to be the set $U_i$ together with the vertices matched to $U_i$, and set $H_i^\ast$ to be $H_i$ together with the edges in the matching that are incident to $U_i$. It is clear that this only affects the sparsity by a constant factor, increases the degree of each vertex by at most 1, and the size of $U_i^\ast$ is at most twice the size of $U_i$.
\end{proof}

\paragraph{Combining the expander embeddings.}
Let $C_1$ and $B$ be the sets specified in the algorithm of \cref{lem-enlarge}. We assume $|C_1| < |V|/100$, since otherwise we are done already. 
Define \[G' = (V',E') = G[U_1^\ast \cup U_2^\ast \cup \cdots \cup U_{k'}^\ast]\]
as the subgraph induced by the vertex set $V' = U_1^\ast \cup U_2^\ast \cup \cdots \cup U_{k'}^\ast$.
We must have $|V'| \geq (99/100)|V|$, since otherwise  $|C_1| > |V|/100$.

We select a cut $C_2$ in $G'$ as any cut  respecting the decomposition $\UU^\ast = \{U_1^\ast, U_2^\ast, \ldots, U_{k'}^\ast\}$ with the maximum possible size among those cuts satisfying either one of the following condition.
\begin{itemize}
    \item $|C_2| \geq |V'|/3$ and   $\Phiv_{G'}(C_2) \leq \phicut/4$.
    \item $0 \leq |C_2| < |V'|/3$ and   $\Phiv_{G'}(C_2) \leq \phicut/8$.
\end{itemize}
 
Define $G''= (V'', E'')$ as the subgraph of $G'$ resulting from removing the parts of $\UU^\ast$ that are in $C_2$.

\begin{lemma}\label{lem-conductance-combined graph}
If $|C_2| < |V'|/3$, then $\vPhiOut{G''}{\UU^\ast} > \phicut/8$.
\end{lemma}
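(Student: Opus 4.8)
The plan is to argue by contradiction, leveraging the fact that $C_2$ was chosen to have \emph{maximum possible size} among cuts meeting one of the two stated conditions. Suppose $\vPhiOut{G''}{\UU^\ast} \le \phicut/8$. Then there is a nontrivial cut $S \subseteq V''$ respecting $\UU^\ast$ with $\Phiv_{G''}(S) \le \phicut/8$, and after possibly replacing $S$ by its complement inside $V''$ we may assume $|S| \le |V''|/2$, so $|\partial_{G''}(S)| \le (\phicut/8)|S|$. Since $G''$ arises from $G'$ by deleting exactly the parts of $\UU^\ast$ lying in $C_2$, both $S$ and $V'' \setminus S$ are unions of whole parts of $\UU^\ast$; hence $C_2 \cup S$ and its $G'$-complement $V' \setminus (C_2 \cup S) = V'' \setminus S$ are cuts of $G'$ respecting $\UU^\ast$. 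The first key step is the boundary estimate: every edge of $G'$ crossing $C_2 \cup S$ either crosses $C_2$ (and so lies in $\partial_{G'}(C_2)$) or has both endpoints in $V''$ and crosses $S$ (and so lies in $\partial_{G''}(S)$). Combining this with $|\partial_{G'}(C_2)| \le (\phicut/8)|C_2|$, which follows from the case $|C_2| < |V'|/3$ in the definition of $C_2$, gives
\[
|\partial_{G'}(C_2 \cup S)| \;\le\; |\partial_{G'}(C_2)| + |\partial_{G''}(S)| \;\le\; \frac{\phicut}{8}\bigl(|C_2|+|S|\bigr) \;=\; \frac{\phicut}{8}\,|C_2 \cup S|.
\]

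Next I would split on the size of $C_2 \cup S$. If $|C_2 \cup S| \le |V'|/2$, then the displayed bound shows $\Phiv_{G'}(C_2 \cup S) \le \phicut/8 \le \phicut/4$, so $C_2 \cup S$ satisfies the first condition in the definition of $C_2$ when $|C_2 \cup S| \ge |V'|/3$ and the second condition otherwise; but $|C_2 \cup S| = |C_2| + |S| > |C_2|$ (as $S$ is nonempty and disjoint from $C_2$), contradicting maximality. If $|C_2 \cup S| > |V'|/2$, I would instead examine the complement $V'' \setminus S$, of size $m := |V'| - |C_2| - |S|$. On the one hand $m < |V'|/2$; on the other, $|V''| = |V'| - |C_2| > \tfrac{2}{3}|V'|$ (using $|C_2| < |V'|/3$) and $|V'' \setminus S| \ge |V''|/2$, so $m > |V'|/3$. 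Since $\partial_{G'}(V'' \setminus S) = \partial_{G'}(C_2 \cup S)$ and $|C_2| + |S| = |V'| - m < \tfrac{2}{3}|V'| < 2m$, the boundary bound yields $\Phiv_{G'}(V'' \setminus S) < \phicut/4$; together with $m > |V'|/3$ this means $V'' \setminus S$ satisfies the first condition in the definition of $C_2$, while $|V'' \setminus S| = m > |V'|/3 > |C_2|$ again contradicts maximality.

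Since both cases lead to a contradiction, $\vPhiOut{G''}{\UU^\ast} > \phicut/8$. The routine parts are checking that unions and complements of part-respecting sets are part-respecting and that the edge boundary of $C_2 \cup S$ decomposes as claimed; I expect the only point requiring care to be the second case, where one must verify that the complement $V'' \setminus S$ falls in the window $\bigl(|V'|/3,\,|V'|/2\bigr)$ so that it qualifies under the weaker sparsity threshold $\phicut/4$, and that its size strictly exceeds $|C_2|$ — both of which hinge on the standing hypothesis $|C_2| < |V'|/3$.
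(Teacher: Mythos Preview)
Your argument is correct and follows essentially the same approach as the paper: assume a $\UU^\ast$-respecting cut of sparsity at most $\phicut/8$ exists in $G''$, combine it with $C_2$, and in both size regimes produce a $\UU^\ast$-respecting cut in $G'$ of strictly larger size that still meets the selection criterion for $C_2$, contradicting maximality. The paper packages the boundary-combination step and the two-case size analysis by invoking \cref{lem-cut-combine}, whereas you spell these computations out directly; otherwise the proofs coincide.
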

\begin{proof}
Suppose there is a cut $C_3$ in $G''$ respecting the decomposition $\UU^\ast$ with $\Phiv_{G''}(C_3) \leq \phicut/8$.
Without loss of generality, assume $|C_3| \leq |V''|/2 = |V' \setminus C_2| / 2$.
By \cref{lem-cut-combine}, we have the following two cases.
\begin{itemize}
    \item If $|C_2 \cup C_3| \leq |V'|/2$, then $C = C_2 \cup C_3$ satisfies $\Phiv_{G'}(C) \leq \phicut/8$ and $|C| \leq |V'|/2$. 
    \item If $|C_2 \cup C_3| > |V'|/2$, then $C = V' \setminus (C_2 \cup C_3)$ satisfies $\Phiv_{G'}(C) \leq \phicut/4$ and $|V'|/3 \leq |C| \leq |V'|/2$.
\end{itemize}
In any case, we obtain a $\UU^\ast$-respecting cut $C$ in $G'$  violating the maximality of our choice of $C_2$ in its definition, as $|V'|/2 \geq |C| > |C_2|$ and $C$ also meets the criterion for selecting $C_2$.
\end{proof}

\begin{lemma}\label{lem-cut-case-1}
If $C_1 \neq \emptyset$, then $C_3 = C_1 \cup C_2$ satisfies $\Phiv(C_3) \leq \phicut$, $0 <  |C_3| \leq (2/3)|V|$, and $C_3$ includes all vertices not in $V' = U_1^\ast \cup U_2^\ast \cup \cdots \cup U_{k'}^\ast$.
\end{lemma}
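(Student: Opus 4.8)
The plan is to check the three asserted properties of $C_3 = C_1 \cup C_2$ one at a time; only the bound $\Phiv(C_3) \le \phicut$ requires any work.

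First I would pin down which outcome of \cref{lem-enlarge} produced $C_1$. We are in the standing case $|C_1| < |V|/100$, so the first alternative of \cref{lem-enlarge} (namely $|C_1| \ge |V|/10$) is ruled out, and the hypothesis $C_1 \neq \emptyset$ rules out the third alternative. Hence the second alternative holds: $C_1$ contains every vertex outside $V' = U_1^\ast \cup \cdots \cup U_{k'}^\ast$, i.e. $V \setminus V' \subseteq C_1 \subseteq C_3$. This is exactly the third assertion, and it also gives $|C_3| \ge |C_1| > 0$. For the size bound, recall that $|C_1| \le |V|/2$ (in fact $< |V|/100$) by \cref{lem-enlarge} and that $|C_2| \le |V'|/2 \le |V|/2$ by construction of $C_2$; hence $|C_3| \le |C_1| + |C_2| < |V|/100 + |V|/2 < (2/3)|V|$.

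For the sparsity bound I would first note $V \setminus C_3 \subseteq V' \setminus C_2$: indeed $V\setminus C_3 \subseteq V \setminus C_1 \subseteq V'$ since $V \setminus V' \subseteq C_1$, and $V \setminus C_3$ is disjoint from $C_2$. Now take any $e = \{u,v\} \in \partial_G(C_3)$ with $u \in C_3$ and $v \in V \setminus C_3 \subseteq V' \setminus C_2$. If $u \in C_1$ then $e \in \partial_G(C_1)$; otherwise $u \in C_2 \subseteq V'$, so $e$ is an edge of $G' = G[V']$ with one endpoint in $C_2$ and the other in $V' \setminus C_2$, i.e. $e \in \partial_{G'}(C_2)$. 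Therefore $\partial_G(C_3) \subseteq \partial_G(C_1) \cup \partial_{G'}(C_2)$, so
\[
|\partial_G(C_3)| \;\le\; |\partial_G(C_1)| + |\partial_{G'}(C_2)| \;\le\; \frac{\phicut}{4}\,|C_1| + \frac{\phicut}{4}\,|C_2| \;\le\; \frac{\phicut}{2}\max\{|C_1|,|C_2|\},
\]
where the middle inequality uses $\Phiv_G(C_1) \le \phicut/4$ (from \cref{lem-enlarge}) and $\Phiv_{G'}(C_2) \le \phicut/4$ (from the definition of $C_2$), together with the fact that, since $|C_1| < |V|/2$ and $|C_2| \le |V'|/2$, the relevant minima in those sparsities are $|C_1|$ and $|C_2|$.

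It then remains to divide by $\min\{|C_3|,|V\setminus C_3|\}$. If $|C_3| \le |V|/2$ then this minimum is $|C_3| \ge \max\{|C_1|,|C_2|\}$, so $\Phiv(C_3) \le \phicut/2 \le \phicut$. If $|C_3| > |V|/2$, then from $|C_3| < (51/100)|V|$ we get $|V \setminus C_3| > (49/100)|V|$, while $|\partial_G(C_3)| \le (\phicut/2)\max\{|C_1|,|C_2|\} \le (\phicut/2)(|V|/2) = \phicut|V|/4 < \phicut|V\setminus C_3|$; so again $\Phiv(C_3) \le \phicut$. The only place that needs care is the edge bookkeeping in the displayed step: the edges running from $C_2$ into $V \setminus V'$ are \emph{not} counted in $\partial_{G'}(C_2)$, but they also do not leave $C_3$ since $V \setminus V' \subseteq C_1 \subseteq C_3$, so they never contribute to $\partial_G(C_3)$; everything else is routine.
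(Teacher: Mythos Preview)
Your proof is correct and follows essentially the same approach as the paper. The only cosmetic difference is that the paper handles the sparsity bound in a single stroke via $\min\{|C_3|,|V\setminus C_3|\}\ge |C_3|/2$ (from $|C_3|\le (2/3)|V|$), whereas you split into the cases $|C_3|\le |V|/2$ and $|C_3|>|V|/2$; your explicit justification that edges from $C_2$ into $V\setminus V'$ never contribute to $\partial_G(C_3)$ is in fact more careful than the paper's somewhat terse ``$|\partial(C_3)|\le|\partial(C_1)|+|\partial(C_2)|$''.
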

\begin{proof}
First of all, $|C_3| \leq |C_2| + |C_1| \leq |V'|/2 + |V|/100 \leq (1/2)|V| + (1/100)|V|< (2/3)|V|$. To calculate $\Phiv(C_3)$,
\begin{align*}
&\Phiv(C_3) = \min\left\{
\frac{|\partial(C_3)|}{|C_3|},
\frac{|\partial(C_3)|}{|V \setminus C_3|}
\right\}\\
&\leq \frac{2|\partial(C_3)|}{|C_3|} & |C_3| \leq (2/3)|V|\\
&\leq \frac{2|\partial(C_1)|}{|C_3|} + \frac{2|\partial(C_2)|}{|C_3|}\\
&\leq \frac{2|\partial(C_1)|}{|C_1|} + \frac{2|\partial(C_2)|}{|C_2|} \leq 2\cdot \frac{\phi}{4} + 2\cdot \frac{\phi}{4} = \phi.
\end{align*}
Finally, the fact that $C_3$ includes all vertices not in $V' = U_1^\ast \cup U_2^\ast \cup \cdots \cup U_{k'}^\ast$ is due to the requirements on $C_1$ in \cref{lem-enlarge} (note that we assume $|C_1| < |V|/100$).
\end{proof}

\begin{lemma}\label{lem-cut-case-2}
If $C_1 = \emptyset$ and $C_2 \neq \emptyset$, then $C_3 = B \cup C_2$ satisfies $\Phiv(C_3) \leq \phicut$, $0 < |C_3| \leq (2/3)|V|$, and $C_3$ includes all vertices not in $V' = U_1^\ast \cup U_2^\ast \cup \cdots \cup U_{k'}^\ast$.
\end{lemma}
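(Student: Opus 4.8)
The plan is to mirror the proof of \cref{lem-cut-case-1}, with the leftover set $B$ playing the role of $C_1$. First I would record the facts available in this case. Since $C_1=\emptyset$, the third alternative in the cut guarantee of \cref{lem-enlarge} yields $B=V\setminus V'$, where $V'=U_1^\ast\cup\cdots\cup U_{k'}^\ast$, with $|B|<\frac{\phicut}{4\Delta}\floor{n/k}$; and since we are in the branch $|C_1|<|V|/100$, we also have $|V'|\ge (99/100)|V|$, hence $|B|\le |V|/100$. From the definition of $C_2$ — in either of its two cases — we have $\Phiv_{G'}(C_2)\le\phicut/4$, and, as in the proof of \cref{lem-cut-case-1}, $|C_2|\le |V'|/2$. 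Moreover, since $C_2\ne\emptyset$ respects $\UU^\ast$, it must contain some full part $U_i^\ast\supseteq U_i$, so $|C_2|\ge |U_i|\ge (2/3)|V_i|\ge (2/3)\floor{n/k}$.

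Granting these facts, the containment and size claims are immediate. As $C_2\subseteq V'$ is disjoint from $B=V\setminus V'$, the set $C_3=B\cup C_2$ contains all of $B$, i.e.\ all vertices outside $V'=U_1^\ast\cup\cdots\cup U_{k'}^\ast$, and $|C_3|=|B|+|C_2|>0$; furthermore $|C_3|\le |V|/100 + |V'|/2 \le (51/100)|V| < (2/3)|V|$.

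What remains is the sparsity bound, the only point that needs a little care with constants. The key observation — the place where the loose estimate $\min\{|C_3|,|V\setminus C_3|\}\ge \tfrac12|C_3|$ would lose a fatal factor of two — is that $V\setminus C_3 = V'\setminus C_2$, so $|V\setminus C_3| = |V'|-|C_2| \ge |C_2|$ while also $|C_3|\ge |C_2|$, giving $\min\{|C_3|,|V\setminus C_3|\}\ge |C_2|$. For the numerator, $\partial_G(C_3)$ breaks into the edges incident to $B$ (at most $\Delta|B|$ of them) together with the $G'$-edges between $C_2$ and $V'\setminus C_2$, of which there are $|\partial_{G'}(C_2)| \le \Phiv_{G'}(C_2)\,|C_2| \le (\phicut/4)|C_2|$. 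Hence
\[
\Phiv(C_3)\ \le\ \frac{\Delta|B| + (\phicut/4)|C_2|}{|C_2|}\ =\ \frac{\Delta|B|}{|C_2|} + \frac{\phicut}{4},
\]
and substituting $|B|<\frac{\phicut}{4\Delta}\floor{n/k}$ together with $|C_2|\ge (2/3)\floor{n/k}$ bounds the first term by $(3/8)\phicut$, so $\Phiv(C_3) < (5/8)\phicut < \phicut$, as required. The one real obstacle is exactly this constant-chasing; everything else is routine bookkeeping that parallels \cref{lem-cut-case-1}.
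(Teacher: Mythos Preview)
Your proof is correct and follows essentially the same route as the paper's: bound $|C_3|$ using $|B|$ small and $|C_2|\le|V'|/2$, then split $|\partial_G(C_3)|$ into edges touching $B$ plus $|\partial_{G'}(C_2)|$ and compare against a lower bound on the cut size coming from $|C_2|$. Your handling of the denominator is in fact a bit sharper than the paper's --- you use $\min\{|C_3|,|V\setminus C_3|\}\ge |C_2|$ directly, whereas the paper loses a factor of two via $\Phiv(C_3)\le 2|\partial(C_3)|/|C_3|$ and then compensates by invoking the (somewhat optimistic) bound $|C_2|\ge\lfloor n/k\rfloor$; your version needs only $|C_2|\ge(2/3)\lfloor n/k\rfloor$, which is what actually follows from $U_i^\ast\supseteq U_i$.
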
 
\begin{proof}
First of all, $|C_3| \leq  |C_2| + |B| \leq |V'|/2 +  o(1) \cdot |V| < (2/3)|V|$. To calculate $\Phiv(C_3)$,
\begin{align*}
&\Phiv(C_3) = \min\left\{
\frac{|\partial(C_3)|}{|C_3|},
\frac{|\partial(C_3)|}{|V \setminus C_3|}
\right\}\\
&\leq \frac{2|\partial(C_3)|}{|C_3|} & |C_3| \leq (2/3)|V|\\
&\leq \frac{2|\partial(B)|}{|C_3|} + \frac{2|\partial(C_2)|}{|C_3|}\\
&\leq \frac{2 \cdot \Delta \cdot \frac{\phicut}{4 \Delta} \cdot \left\lfloor \frac{n}{k}\right\rfloor }{|C_3|} + \frac{2|\partial(C_2)|}{|C_2|} \leq 2\cdot \frac{\phi}{4} + 2\cdot \frac{\phi}{4} = \phi. & |C_3| \geq |C_2| \geq \left\lfloor \frac{n}{k}\right\rfloor
\end{align*}
Finally, the fact that $C_3$ includes all vertices not in $V' = U_1^\ast \cup U_2^\ast \cup \cdots \cup U_{k'}^\ast$ is due to the requirements for the case $C_1 = \emptyset$ in \cref{lem-enlarge}.
\end{proof}

If $|C_2| < |V'|/3$, define $W^\ast$ as the set of all vertices involved in the embedding of $H_j^\ast$, for each $U_j^\ast \in \UU^\ast$ with $U_j^\ast \cap C_2 = \emptyset$. That is, we only consider the parts that are not covered in $C_2$.

\begin{lemma}\label{lem-property-final-w}
If $|C_2| < |V'|/3$, then
$\Phiv(G[W]) \geq  \phiemb / \poly(\Delta \phicut^{-1}, \log n)$.
\end{lemma}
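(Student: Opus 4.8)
The plan is to treat $G[W^\ast]$ as a graph that is partitioned — up to a $\poly(\Delta\phicut^{-1},\log n)$ blow-up in piece sizes — into the pieces $\{W_j\}$, one per surviving part $U_j^\ast$ (those $j$ with $U_j^\ast\cap C_2=\emptyset$, where $W_j$ is the set of vertices involved in the embedding of $H_j^\ast$), and then to combine an \emph{inner} sparsity guarantee on each piece with an \emph{outer} sparsity guarantee across pieces, exactly in the spirit of the inner/outer sparsity machinery of \cite{chuzhoy2019deterministic}. For the inner bound, each $W_j$ carries the embedding of $H_j^\ast$ with congestion and dilation $\poly(\Delta\phicut^{-1},\log n)$, with $\Delta(H_j^\ast)\le\Delta(H_j)+1=O(\log n)$ and $\Phiv(H_j^\ast)=\Omega(1)\cdot\Phiv(H_j)=\Omega(\phiemb)$ (by \cref{lem-enlarge} and the specification of $\cutmatch{\phicut}{\phiemb}$). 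Feeding this into \cref{lem-expander-emb} gives $\Phiv(G[W_j])\ge\phiemb/\poly(\Delta\phicut^{-1},\log n)=:\phi_{\mathrm{in}}$ for every surviving $j$. For the outer bound, we are in the case $|C_2|<|V'|/3$, so \cref{lem-conductance-combined graph} gives $\vPhiOut{G''}{\UU^\ast}>\phicut/8$; since each $W_j$ has size within a $\poly(\Delta\phicut^{-1},\log n)$ factor of $|U_j^\ast|=\Theta(n/k)$ and a $\{W_j\}$-respecting cut of $G[W^\ast]$ restricts to a $\UU^\ast$-respecting cut of $G''$ with no larger normalized boundary, this transfers to $\vPhiOut{G[W^\ast]}{\{W_j\}}\ge(\phicut/8)/\poly(\Delta\phicut^{-1},\log n)=:\phi_{\mathrm{out}}$.

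It then remains to prove the purely combinatorial claim that $\phi_{\mathrm{in}}$ and $\phi_{\mathrm{out}}$ together force $\Phiv(G[W^\ast])=\Omega(\phi_{\mathrm{in}}\phi_{\mathrm{out}}/\Delta)$, which is $\phiemb/\poly(\Delta\phicut^{-1},\log n)$ as required. Fix a nonempty $S\subseteq W^\ast$ with $|S|\le|W^\ast|/2$, and set $\delta=\sum_j\min(|S\cap W_j|,|W_j\setminus S|)$, the disagreement of $S$ with the piece partition. The within-piece edges already give $|\partial_{G[W^\ast]}(S)|\ge\sum_j|\partial_{G[W_j]}(S\cap W_j)|\ge\phi_{\mathrm{in}}\delta$. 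If $\delta\ge|S|/2$ this is $\ge\phi_{\mathrm{in}}|S|/2$ and we are done. Otherwise let $\tilde S$ be the union of the pieces whose majority lies in $S$; a short count gives $|S|-\delta\le|\tilde S|\le|S|+\delta$, hence $|S|/2<|\tilde S|<3|S|/2$ and $\min(|\tilde S|,|W^\ast\setminus\tilde S|)\ge|S|/2$, so $|\partial_{G[W^\ast]}(\tilde S)|\ge\phi_{\mathrm{out}}|S|/2$ because $\tilde S$ is $\{W_j\}$-respecting. Every edge of $\partial_{G[W^\ast]}(\tilde S)\setminus\partial_{G[W^\ast]}(S)$ is incident to one of the $\delta$ ``minority'' vertices (those on the wrong side of $S$ inside their own piece), and such a vertex has degree at most $\Delta$, so $|\partial_{G[W^\ast]}(S)|\ge\phi_{\mathrm{out}}|S|/2-\Delta\delta$. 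Taking the better of the two bounds $\phi_{\mathrm{in}}\delta$ and $\phi_{\mathrm{out}}|S|/2-\Delta\delta$ according to whether $\delta$ is at least or at most $\phi_{\mathrm{out}}|S|/(4\Delta)$ yields $|\partial_{G[W^\ast]}(S)|\ge\tfrac{\phi_{\mathrm{in}}\phi_{\mathrm{out}}}{4\Delta}|S|$ in both cases (using $\phi_{\mathrm{in}}\le1\le\Delta$), which completes the argument.

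I expect the main obstacle to be the bookkeeping around the pieces $\{W_j\}$ rather than the sparsity estimate itself. One must check that the sets $W_j$ for distinct surviving parts are (essentially) disjoint — this should follow because the cut-matching-game embedding of $H_j^\ast$ into $U_j^\ast$ and the augmenting matching of \cref{lem-enlarge} keep their routing paths inside a region that is disjoint across parts — and that $|W_j|$ exceeds $|U_j^\ast|$ by at most a $\poly(\Delta\phicut^{-1},\log n)$ factor (at most $d'\cdot|E(H_j^\ast)|$ path-interior vertices), which is precisely where the polynomial gap between $\phiemb$ and the final bound enters. A secondary point is to confirm that \cref{lem-expander-emb} applies with output set $W=W_j$, i.e.\ that the vertex map of the embedding restricts to a bijection $V(H_j^\ast)\to U_j^\ast$; this holds because $H_j^\ast$ is, by construction, an induced subgraph of a union of matchings on $U_j^\ast$ together with the single extra matching added in \cref{lem-enlarge}.
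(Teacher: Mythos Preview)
Your combinatorial inner/outer sparsity argument is fine in the abstract, but the whole plan rests on treating the sets $W_j$ as a partition of $W^\ast$, and this fails here. The embedding of $H_j^\ast$ into $U_j^\ast$ is \emph{not} confined to any region disjoint across parts. Concretely, in the matching-player step of \cref{lem-simul-emb-det} the paths for all pairs $(S_j,T_j)$ are found by \cref{lem-cut-match-det-multi}, which (via \cref{lem-cut-match-det-multi-oneit} and \cref{lem-GPV-basic}) routes through the \emph{entire} graph $G$; the only disjointness is that in a single flow iteration the paths are vertex-disjoint, but across iterations and across matchings $M_j^1,M_j^2,\ldots$ the paths freely share vertices and edges---that is precisely why the embedding carries a congestion bound $c=\poly(\Delta\phicut^{-1},\log n)$ rather than $c=1$. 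The augmenting matching of \cref{lem-enlarge} likewise routes through all of $G$. So for $i\neq j$ the sets $W_i$ and $W_j$ can overlap on $\Theta(cd\cdot|E(H_j^\ast)|)$ vertices, and your partition-respecting-cut transfer from $\UU^\ast$ to $\{W_j\}$ does not even type-check.

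The paper avoids this by invoking \cref{lem-expander-emb-multi}, which is designed exactly for simultaneous embeddings with congestion. Its proof starts from the graph $G_0$ obtained from $G[U]$ by replacing each $G[U_j^\ast]$ with $H_j^\ast$ (the $U_j^\ast$ \emph{are} disjoint, so this is well-posed), bounds $\Phiv(G_0)$ via the inner/outer sparsity lemma \cref{lem-conductance-v} with $\phi_i=\Omega(\phiemb)$ and $\phi_o=\phicut/8$ from \cref{lem-conductance-combined graph}, then subdivides (\cref{lem-subdivide}) and \emph{contracts} (\cref{lem-contraction}) to reach $G[W^\ast]$. The contraction step is where the overlap between the embeddings of different $H_j^\ast$ is paid for, at a cost of the congestion factor $c$. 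If you want to salvage your approach, you would need to replace the ``$W_j$ are disjoint'' assumption by exactly this contraction argument, at which point you are reproving \cref{lem-expander-emb-multi}.
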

\begin{proof}
Use \cref{lem-expander-emb-multi} with the following parameters.
\begin{itemize}
    \item $\UU$ consists of the parts  $U_j^\ast \in \UU^\ast$ with $U_j^\ast \cap C_2 = \emptyset$.
    \item The congestion $c$ and the dilation $d$ are $\poly(\Delta \phicut^{-1}, \log n)$.
    \item $\hat{\Delta} = O(\log n)$ is the upper bound for maximum degree of each graph $H_j^\ast$.
    \item $\phi_i = \Omega(\phiemb)$ is the lower bound for the sparsity of each graph $H_j^\ast$.
    \item $\phi_o = \phicut/8$ in view of \cref{lem-conductance-combined graph}.
\end{itemize}
Then we have\[\Phiv(G[W])
=\Omega(\hat{\Delta}^{-2} c^{-1} d^{-1} \phi_i \phi_o)
= \phicut \phiemb / \poly(\Delta \phicut^{-1}, \log n)
= \phiemb / \poly(\Delta\phicut^{-1}, \log n). \qedhere\]
\end{proof}

We now describe how we select the output $C^\ast$ and $W^\ast$. Note that \cref{lem-property-final-w} shows that $\Phiv(G[W^\ast])$ has the required sparsity bound.
\begin{itemize}
    \item If $|C_2| \geq |V'|/3$, then we select $C^\ast$ to be one of $C_3$ and $V \setminus C_3$ of the smaller size, where $C_3$ is defined in \cref{lem-cut-case-1,lem-cut-case-2}. It is straightforward to see that we have $|V|/100 < |V'|/3 \leq |C^\ast| \leq |V|/2$ and $\Phiv(C^\ast) \leq \phicut$, and so we can set $W^\ast = \emptyset$.
    \item If $|C_2| < |V'|/3$ and $C_1 \cup C_2 \neq \emptyset$, then the set $W^\ast$ is defined as above, and $C^\ast$ is selected as $C_3$ defined in \cref{lem-cut-case-1,lem-cut-case-2}.
    Since $|C_1| < |V|/100$ and $|B| < \frac{\phicut}{4 \Delta} \cdot \floor{n/k}$, we must have $|C^\ast| \leq |V|/2$. Note that all vertices outside of $W^\ast$ are covered in $C^\ast$.
    \item If $C_1 \cup C_2 = \emptyset$, then the set $W^\ast$ is defined as above, and $C^\ast = \emptyset$. All vertices outside of $W^\ast$ are covered in $B$, and it has size $|B| < \frac{\phicut}{4 \Delta} \cdot \floor{n/k} = O(k^{-1} \Delta^{-1} \phicut) \cdot |V|$.
\end{itemize}

For the distributed implementation of finding $C_2$, we can simply gather all the needed information to one vertex $v^\ast$, and compute the cut $C_2$ there. The vertex $v^\ast$ only need to know the following information.
\begin{itemize}
    \item The number of edges between  $U_i^\ast$ and $U_j^\ast$, for each $1 \leq i < j \leq k'$.
    \item The size $|U_i^\ast|$ of $U_i^\ast$, for each $1 \leq i \leq k'$.
\end{itemize}
 We can use \cref{lem-basic} to calculate these numbers in $O(D + k^2)$ rounds and have them sent to $v^\ast$.
 
\paragraph{A note on local computation time.} 
The above procedure for calculating $C_2$ is efficient in terms of round complexity but it is inefficient in that it requires a brute-force search overall possible cuts respecting the partition $\UU^\ast = \{U_1^\ast, U_2^\ast, \ldots, U_{k'}^\ast\}$. 
This requires $2^{O(k')}$ time. We note that this issue can be solved by applying an approximate balanced sparse cut algorithm of~\cite[Theorem 2.7]{gao2019deterministic}, which costs only $\poly(k')$ time, and the approximation only causes $\phiemb$ to decrease by a factor of at most $\poly\left(\Delta\phicut^{-1}, \log n\right)$, and so it does not affect the analysis in this paper.


\subsection{Round Complexity Analysis}

\cref{lem-bal-improve-main} is an auxiliary lemma showing that the balance parameter $\betacut$ can be improved to $1/3$ using $O\left(\betacut^{-1} \right)$ iterations of $\detbalspcut{\phicut/2}{\phiemb}{\betacut}{\betaleft}$. The proof of \cref{lem-bal-improve-main} is left to \cref{sect-Bal-improve}.

\begin{lemma}[Balance improvement]
\label{lem-bal-improve-main}
\begin{align*}
   &\Tcut\left(n, \Delta, D, \phicut, \phiemb, \frac{1}{3}, \betaleft \right)\\
    &\leq O\left(D \betacut^{-1} \right) 
    + O\left(\betacut^{-1} \right) \cdot \Tcut\left(n, \Delta, D, \frac{\phicut}{2}, \phiemb,  \betacut, \betaleft\right)
\end{align*}
\end{lemma}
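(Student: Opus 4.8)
The plan is the standard ``peel off balanced sparse cuts'' boosting argument. We maintain an accumulated cut $\mathcal{C}$, initially empty. At iteration $i$ we set $V_{i-1}=V\setminus\mathcal{C}$ and run the given algorithm for $\detbalspcut{\phicut/2}{\phiemb}{\betacut}{\betaleft}$ on $G[V_{i-1}]$, reusing the Steiner tree $T$ of $G$, which still spans $V_{i-1}$ with diameter at most $D$; note $\Delta(G[V_{i-1}])\le\Delta$ and $|V_{i-1}|\le n$. Assuming, as in all our applications, that $\betaleft<1/2$, the output $(W,C_i)$ is of exactly one type. If $W=\emptyset$ (\emph{type A}): the first balance branch must hold, since otherwise the second branch would force $|C_i|\ge(1-\betaleft)|V_{i-1}|>|V_{i-1}|/2$, contradicting $|C_i|\le|V_{i-1}|/2$; thus $\betacut|V_{i-1}|\le|C_i|\le|V_{i-1}|/2$ and $\Phiv_{G[V_{i-1}]}(C_i)\le\phicut/2$. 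If $W\neq\emptyset$ (\emph{type B}): the second balance branch holds, so $\Phiv(G[W])\ge\phiemb$, $\Phiv_{G[V_{i-1}]}(C_i)\le\phicut/2$, $|C_i|\le|V_{i-1}|/2$, and $|V_{i-1}\setminus(C_i\cup W)|\le\betaleft|V_{i-1}|$. In either case we append $C_i$ to $\mathcal{C}$; we halt right after a type-B step, or right after a type-A step that makes $|\mathcal{C}|\ge|V|/3$.

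The single estimate driving the analysis is a no-overshoot bound. Right before any iteration $i$ we have $|\mathcal{C}|<|V|/3$ (else we would already have halted), so $|V_{i-1}|>(2/3)|V|$, and a telescoping over the successive induced subgraphs gives, for every prefix, $|\partial_G(C_1\cup\cdots\cup C_j)|\le\sum_{r\le j}|\partial_{G[V_{r-1}]}(C_r)|\le(\phicut/2)\sum_{r\le j}|C_r|=(\phicut/2)\,|C_1\cup\cdots\cup C_j|$. Since $|C_i|\le|V_{i-1}|/2=(|V|-|\mathcal{C}|)/2$ always, after any iteration the accumulated size $s:=|\mathcal{C}|$ (now including $C_i$) satisfies $s<|V|/6+|V|/2=(2/3)|V|$. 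Hence $\mathcal{C}$ or its complement is always an admissible output cut: if $s\le|V|/2$ then $\Phiv_G(\mathcal{C})\le\phicut/2\le\phicut$; if $|V|/2<s<(2/3)|V|$ then $V\setminus\mathcal{C}$ has size in $(|V|/3,|V|/2)$ and $\Phiv_G(V\setminus\mathcal{C})=|\partial_G(\mathcal{C})|/(|V|-s)\le(\phicut/2)\,s/(|V|-s)\le\phicut$, where the last step uses that $s\le(2/3)|V|$ is equivalent to $s/(|V|-s)\le2$.

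It remains to read off the termination cases and bound the cost. If we halted on a type-A step with $s\ge|V|/3$: output $W=\emptyset$ and $C=\mathcal{C}$ if $s\le|V|/2$, else $C=V\setminus\mathcal{C}$; either way $|V|/3\le|C|\le|V|/2$ and $\Phiv_G(C)\le\phicut$, so the first balance branch of $\detbalspcut{\phicut}{\phiemb}{1/3}{\betaleft}$ is met. If we halted on a type-B step with $s\le|V|/2$: output $C=\mathcal{C}$ and the returned $W$; then $\Phiv(G[W])\ge\phiemb$, $\Phiv_G(C)\le\phicut$, $|C|\le|V|/2$, and $V\setminus(C\cup W)=V_{i-1}\setminus(C_i\cup W)$ has size $\le\betaleft|V|$, so the second balance branch holds. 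If we halted on a type-B step with $s>|V|/2$: output $W=\emptyset$ and $C=V\setminus\mathcal{C}$, which again lies in $(|V|/3,|V|/2)$ with $\Phiv_G(C)\le\phicut$. For the iteration count, each type-A step runs on $G[V_{i-1}]$ with $|V_{i-1}|>(2/3)|V|$ and removes $|C_i|\ge\betacut|V_{i-1}|>(2/3)\betacut|V|$ vertices while $|\mathcal{C}|$ stays below $|V|/3$, so there are at most $1/(2\betacut)=O(\betacut^{-1})$ type-A steps, hence $O(\betacut^{-1})$ iterations total. Each iteration costs one call to the $\detbalspcut{\phicut/2}{\phiemb}{\betacut}{\betaleft}$ algorithm on a subgraph with at most $n$ vertices, degree at most $\Delta$, and Steiner-tree diameter at most $D$ — at most $\Tcut(n,\Delta,D,\phicut/2,\phiemb,\betacut,\betaleft)$ rounds by monotonicity of $\Tcut$ in $n$ and $D$ — plus $O(D)$ rounds of aggregation along $T$ to update $|\mathcal{C}|$, test the halting condition, and broadcast the chosen output; summing over the $O(\betacut^{-1})$ iterations yields the claimed bound. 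The main obstacle is exactly the no-overshoot bookkeeping of the second paragraph: one must check the accumulated cut cannot jump past $(2/3)|V|$ in a single step, and that in the awkward window $(|V|/2,(2/3)|V|)$ passing to the complement restores both $|C|\le|V|/2$ and $\Phiv_G(C)\le\phicut$ while giving a leftover-free first-branch output.
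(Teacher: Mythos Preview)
Your proof is correct and follows essentially the same ``peel off sparse cuts and recurse on the remainder'' approach as the paper's proof in \cref{sect-Bal-improve}. The only differences are cosmetic: you carry out the cut-combining calculation (your telescoping bound and the no-overshoot inequality $s<(2/3)|V|$) inline, whereas the paper invokes \cref{lem-cut-combine}; and you branch on whether $W=\emptyset$ versus $W\neq\emptyset$, while the paper branches on whether $|C_i|\ge\betacut|V_{i-1}|$ --- under the standing assumption $\betaleft<1/2$ these are equivalent, as you noted.
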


In \cref{lem-simul-emb-det}, the recursive calls of $\detbalspcut{\phicut'}{\phiemb'}{\betacut'}{\betaleft'}$ have many of its parameters fixed. To simplify the analysis, we define 
\[
\Tcut^\star(n, \phiemb) = 
 \Tcut\left(n, O(\log n), O\left(\log^4 n \right), \frac{1}{4}, \phiemb, \frac{1}{3}, \frac{1}{12}\right).
\]
Writing $\Temb(n,\Delta,D,k,\phicut,\phiemb)$ to denote the round complexity of \cref{lem-simul-emb-det},  we have
\begin{align*}
&\Temb\left(n,\Delta,D,k,\phicut,\phiemb\right)\\
&\leq
\poly\left(\Delta\phicut^{-1},D,k, \log n\right)
+
\poly\left(\Delta\phicut^{-1}, \log n\right)  \cdot 
\Tcut\left( O\left(\frac{n}{k}\right), O(\log n), O\left(\log^4 n \right), \frac{1}{4}, \phiemb, \frac{1}{3}, \frac{1}{12}\right) \\
&= 
\poly\left(\Delta\phicut^{-1},D,k, \log n\right) +
\poly\left(\Delta\phicut^{-1}, \log n\right) \cdot 
\Tcut^\star\left( O\left(\frac{n}{k}\right), \phiemb \right),
\end{align*}
where the  
inequality is due to \cref{lem-simul-emb-det}. 
Using \cref{lem-merge-subgraphs,lem-bal-improve-main}, we can bound $\Tcut$ recursively as follows.
\begin{align*}
   &\Tcut\left(n, \Delta, D, \phicut, \phiemb, \frac{1}{3}, O\left(\phicut\Delta^{-1}k^{-1}\right) \right)\\
    &\leq O\left(D  \right) 
    + O\left(1\right) \cdot \Tcut\left(n, \Delta, D, \frac{\phicut}{2}, \frac{1}{100},  \betacut, O\left(\phicut\Delta^{-1}k^{-1}\right) \right)\\
&\leq O\left(D + \phicut^{-2} \Delta^2 \log^6 n \right) 
    + O\left(1\right) \cdot \Tcut\left(n, \Delta, O\left(\phicut^{-1} \Delta \log^3 n \right) , \frac{\phicut}{4}, \frac{1}{100},  \betacut, O\left(\phicut\Delta^{-1}k^{-1}\right) \right)\\
&\leq O\left(D\right) 
+ \poly\left(\Delta\phicut^{-1}, k, \log n\right)\\ 
& \ \ + O\left(1\right) \cdot 
\Temb\left(n,\Delta,O\left(\phicut^{-1} \Delta \log^3 n \right),k, \frac{\phicut}{4},\phiemb \cdot \poly\left(\Delta\phicut^{-1}, \log n\right)\right)\\
&\leq O\left(D\right) 
+ \poly\left(\Delta\phicut^{-1}, k, \log n\right)\\
& \ \ +
\poly\left(\Delta\phicut^{-1}, \log n\right) \cdot 
\Tcut^\star\left( O\left(\frac{n}{k}\right), \phiemb \cdot \poly\left(\Delta\phicut^{-1}, \log n\right) \right),
\end{align*}
where the first three inequalities are due to 
\cref{lem-bal-improve-main},
\cref{lem-diam-reduction-main}, and 
\cref{lem-merge-subgraphs}, respectively. 

For any given parameter $0 < \betaleft < 1$, we can use $k = \max\left\{1, O\left(\phicut\Delta^{-1}\betaleft^{-1}\right) \right\}$ in the above inequality to ensure that $\betaleft = \Omega\left(\phicut\Delta^{-1}k^{-1}\right)$, and so
\begin{align*}
   &\Tcut\left(n, \Delta, D, \phicut, \phiemb, \frac{1}{3}, \betaleft \right)\\
&\leq O\left(D\right) 
+ \poly\left(\Delta\phicut^{-1}, \log n, \betaleft^{-1}\right)  +
\poly\left(\Delta\phicut^{-1}, \log n\right) \cdot 
\Tcut^\star\left( n, \phiemb \cdot \poly\left(\Delta\phicut^{-1}, \log n\right) \right).
\end{align*}

The recurrence relation becomes much simpler if we restrict our attention to $\Tcut^\star$.
\begin{align*}
\Tcut^\star\left( n, \phiemb \right)   
&\leq   \poly\left( k, \log n\right)  +
\poly\left( \log n\right) \cdot 
\Tcut^\star\left( O\left(\frac{n}{k}\right), \phiemb \cdot \poly\left( \log n\right) \right).    
\end{align*}
For the base case of $n = O(1)$, there must be a  constant $\psi_0$ such that $\Tcut^\star\left( n, \phiemb \right) = O(1)$ when $\phiemb \geq \psi_0$. Suppose we always fix $k = 2^{\epsilon \log n}$ in all recursive calls, where $0<\epsilon<1$, and the parameter $n$ is the one in the top level of recursion, i.e., we do not change $k$ in recursive calls because the parameter  $n$ changes. Then the depth of recursion in order to reduce the number of vertices from $n$ to $O(1)$ is $d = \epsilon^{-1}$. Therefore, we have
\[
\Tcut^\star\left( n, 2^{-O\left( \epsilon^{-1} \log \log n \right)} \right)
=
\Tcut^\star\left( n, \log^{-O( \epsilon^{-1} )} n \right)
\leq    2^{O(\epsilon \log n)} \cdot \log^{O(\epsilon^{-1})} n
=     2^{O\left(\epsilon \log n + \epsilon^{-1} \log \log n \right)}.
\]
Combining this with the previous calculation, we have the following theorem.  

\begin{theorem}[Round complexity analysis]\label{thm-solve-recursion}
 For any $0<\epsilon<1$, the following holds.
\begin{align*}
 &\Tcut\left(n, \Delta, D, \phicut, \frac{1}{ \left(\Delta  \phicut^{-1} \right)^{O(1)} 2^{O\left( \epsilon^{-1} \log \log n \right)}}, \frac{1}{3}, \betaleft \right)\\
 & \ = O\left(D\right) + 
 \poly\left(\Delta \phicut^{-1}, \log n, \betaleft^{-1} \right)
 +  \left(\Delta  \phicut^{-1} \right)^{O(1)} \cdot 2^{O\left(\epsilon \log n + \epsilon^{-1} \log \log n \right)} \\
 &\Temb\left(n,\Delta,D,k,\phicut,\frac{1}{  2^{O\left( \epsilon^{-1} \log \log n \right)}}\right)\\
 & \ =  
 \poly\left(D, k, \Delta \phicut^{-1}, \log n\right)
 +  \left(\Delta  \phicut^{-1} \right)^{O(1)} \cdot 2^{O\left(\epsilon \log n + \epsilon^{-1} \log \log n \right)}
\end{align*}
\end{theorem}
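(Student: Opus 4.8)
\emph{Proof approach.} The statement packages together the two recurrences assembled immediately above it, so the proof is largely a matter of fixing the recursion parameter $k$ and unrolling. The starting point is the recurrence
\[
\Tcut^\star\left( n, \phiemb \right) \le \poly\left( k, \log n\right) + \poly\left( \log n\right) \cdot \Tcut^\star\left( O\!\left(n/k\right), \phiemb \cdot \poly\left( \log n\right) \right),
\]
together with the reduction $\Temb \to \Tcut^\star$ from \cref{lem-simul-emb-det} and the reduction $\Tcut \to \Tcut^\star$ obtained by chaining \cref{lem-bal-improve-main,lem-diam-reduction-main,lem-merge-subgraphs}; this chain expresses $\Tcut$ with balance $1/3$ as $O(D) + \poly(\Delta\phicut^{-1},\log n,\betaleft^{-1})$ plus $\poly(\Delta\phicut^{-1},\log n)$ times a single call to $\Tcut^\star$ with $\phiemb$ degraded by a $\poly(\Delta\phicut^{-1},\log n)$ factor. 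The plan is to solve the $\Tcut^\star$ recurrence first and then substitute the solution into both reductions.

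\emph{Unrolling the recurrence.} I would fix $k = 2^{\epsilon \log n} = n^{\epsilon}$ once and for all, where $n$ always denotes the \emph{top-level} instance size and is \emph{not} re-evaluated on the shrunken instances. Each recursive level divides the instance size by $k$, so after $i$ levels the instance has $n^{1 - i\epsilon}$ vertices and the base case $n = O(1)$ is reached at depth exactly $d = \epsilon^{-1}$. At the base case all quantities are $O(1)$: for a constant-size instance both $\Delta' = O(\log n)$ and $D' = O(\log^4 n)$ are $O(1)$, and the task $\detbalspcut{1/4}{\phiemb}{1/3}{1/12}$ is feasible provided the target sparsity $\phiemb$ is subconstant. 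Two quantities accumulate along the $d$ levels: the multiplicative round-complexity overhead $\poly(\log n)$ per level, which compounds to $\log^{O(\epsilon^{-1})} n$, and the loss of a $\poly(\log n)$ factor in $\phiemb$ per level, which also compounds to $\log^{O(\epsilon^{-1})} n$. Hence a top-level target $\phiemb = \log^{-O(\epsilon^{-1})} n = 2^{-O(\epsilon^{-1}\log\log n)}$ is still subconstant at the bottom level, so the base case applies; and summing the additive terms $\poly(k,\log n) = n^{O(\epsilon)}\poly\log n$ over the $d$ levels, each weighted by the overhead accumulated above it, yields
\[
\Tcut^\star\!\left( n, \log^{-O(\epsilon^{-1})} n\right) \le \epsilon^{-1} \cdot \log^{O(\epsilon^{-1})} n \cdot n^{O(\epsilon)} = 2^{O\!\left(\epsilon \log n + \epsilon^{-1} \log\log n\right)}.
\]

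\emph{Substituting back.} Plugging this estimate into the $\Temb \to \Tcut^\star$ relation of \cref{lem-simul-emb-det} gives the second displayed bound of the theorem, with the $\poly(D,k,\Delta\phicut^{-1},\log n)$ term being the non-recursive cost of the simultaneous cut-matching game and the $(\Delta\phicut^{-1})^{O(1)}$ and $\poly\log$ factors coming from the congestion/dilation overheads in \cref{lem-simul-emb-det,lem-merge-subgraphs}. For the first displayed bound, I would feed the same estimate into the $\Tcut \to \Tcut^\star$ chain above, choosing $k = \max\{1, O(\phicut\Delta^{-1}\betaleft^{-1})\}$ so that the leftover parameter $O(\phicut\Delta^{-1}k^{-1})$ produced by \cref{lem-merge-subgraphs} is at most the desired $\betaleft$; absorbing the $\poly(\Delta\phicut^{-1},\log n)$ degradation of $\phiemb$ into the exponent gives the stated conductance parameter $1/\big((\Delta\phicut^{-1})^{O(1)}\,2^{O(\epsilon^{-1}\log\log n)}\big)$. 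The $O(D)$-dependence remains additive throughout precisely because \cref{lem-diam-reduction-main} is applied to make it additive before each recursive call.

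\emph{Main obstacle.} All the content is in the bookkeeping, and the single place where a careless argument goes wrong is the recursion depth: if $k$ were reset to $(n')^{\epsilon}$ on each shrunken instance $n'$, the depth would become $\Theta(\epsilon^{-1}\log\log n)$ rather than $\epsilon^{-1}$, blowing up both the accumulated $\poly\log$ overhead and the $\phiemb$-loss to a super-polylogarithmic factor and breaking the claimed bounds; freezing $k = n^{\epsilon}$ is what pins the depth at $\epsilon^{-1}$. The remaining care is (i) checking that the $\poly(\log n)$ factors in round complexity and in $\phiemb$ never worsen to anything super-polylogarithmic over the $\epsilon^{-1}$ levels, and (ii) keeping the $\Temb$-recursion and the $\Tcut\to\Tcut^\star$ chain in lockstep, one level of each per recursive step — which is exactly what the frozen-parameter definition of $\Tcut^\star$ is engineered to make routine.
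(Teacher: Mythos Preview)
Your proposal is correct and follows essentially the same route as the paper: freeze $k = 2^{\epsilon\log n}$ at the top-level value of $n$, unroll the $\Tcut^\star$ recurrence to depth $\epsilon^{-1}$ so that both the round-complexity overhead and the $\phiemb$-loss accumulate to $\log^{O(\epsilon^{-1})} n$, and then substitute the resulting bound into the $\Temb\to\Tcut^\star$ and $\Tcut\to\Tcut^\star$ reductions (the latter with $k=\max\{1,O(\phicut\Delta^{-1}\betaleft^{-1})\}$). Your observation that resetting $k$ at each level would blow up the depth is exactly the reason the paper insists on keeping $k$ fixed across recursive calls.
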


In particular, for bounded-degree graphs, with $\epsilon = \sqrt{\log \log n /  \log n}$, 
\[\detbalspcut{\phicut}{\phiemb}{\betacut}{\betaleft}\] can be solved in \[O(D) + \poly\left(\phicut^{-1}, \betaleft^{-1}\right) \cdot 2^{O\left(\sqrt{ \log n \log \log n} \right)}\] rounds with
\begin{align*}
    \phiemb &= \poly\left(\phicut\right) 2^{-O\left(\sqrt{ \log n \log \log n} \right)},
\end{align*} and hence we conclude the proof of \cref{lem-det-cutmatch-main}.
\section{Expander Routing}\label{sect-routing}

In this section, we consider a routing task on a network $G = (V,E)$, where each vertex $v \in V$ is the source and the destination of at most $\deg(v)$ messages of $O(\log n)$ bits.
The address of the destination of a message is given by unique identifiers of vertices, and we assume that each vertex $v \in V$ is associated with a distinct $O(\log n)$-bit identifier $\ID(v)$.

Ghaffari, Kuhn, and Su~\cite{GhaffariKS17} showed that this problem can be solved in $\mix(G) \cdot 2^{O(\sqrt{\log n \log \log n})}$ rounds with high probability, where $\mix(G) = O(\phi^{-2} \log n)$ for any graph $G$ with conductance $\phi$. This round complexity was subsequently improved to $\mix(G) \cdot 2^{O(\sqrt{\log n})}$ by Ghaffari and Li~\cite{GhaffariL2018}. Both of these algorithms are randomized and rely heavily on random walks. The goal of this section is to give an efficient \emph{deterministic} algorithm for this routing task.

\paragraph{A note on unique identifiers.}
The first step of our algorithm is to take the expander split graph $\Gsp$ of $G$ and then simulate $\Gsp$ on $G$, and so we can restrict our attention to bounded-degree graphs. However, care has to be taken to handle the unique identifiers. Specifically, how do we reduce a given routing task on $G$ to a routing task on $\Gsp$?

First of all,  each $v$ in $G$ corresponds to $\deg(v)$ vertices $X_v \subseteq \Vsp$ in $\Gsp$, and we can simply assign them the identifiers $(\ID(v), 1), (\ID(v), 2), \ldots, (\ID(v), \deg(v))$. The vertex $v$ initially holds $\deg(v)$ messages, and $v$ can simply distribute these messages to its  $\deg(v)$ corresponding vertices in $\Gsp$ so that each of them holds at most one message.

One remaining issue is how we assign the destination address to each message. If randomness is allowed, one thing we could do is the following.
In $O(D + \log n)$ rounds, we can re-assign distinct $\ID$s to $V$ in such a way that each vertex $v \in V$ is able to calculate $\lfloor \log \deg(u) \rfloor$ given $\ID(u)$~\cite{ChangPZ19}. For each message $\ID(v) \rightsquigarrow \ID(u)$, the vertex $v$ can reset its destination from $\ID(u)$ to $(\ID(u), x)$ by sampling $x$ uniformly at random from $\{1, 2, \ldots,  \lfloor \log \deg(u) \rfloor \}$.
After this destination assignment, it can be shown by a Chernoff bound that each vertex in $\Gsp$ is the destination of at most $O(\log n)$ messages.

This strategy~\cite{ChangPZ19} does not work in the deterministic setting. To handle this issue, we will change the definition of the  routing problem to allow the destination of a message $m$ to be specified by a \emph{range of numbers} $[i, j]$ instead of a specific number $k$. It means that the message $m$ can be delivered to any vertex $u$ whose $\ID$ belongs to the range $[i,j]$.
For a message $m$ with destination range $[i, j]$, we say that its \emph{$u$-weight} is 0 if $\ID(u) \notin [i,j]$, and is $1/x$ otherwise,  where $x = |\{v \in V \ | \ID(v) \in [i,j] \ \}|$. Given a set of messages $\mathcal{M}$, the summation of their $u$-weight is the \emph{expected} number of messages that $u$ receive if all messages $\mathcal{M}$ are sent to a uniformly random destination within their  destination range.

It is straightforward to reduce a   routing instance on $G$, where each vertex $v$ is a source and a destination of at most $O(L) \cdot \deg(v)$ messages, to a  routing instance on $\Gsp$ where the destination of a vertex is a range of numbers in such a way that each vertex $u$ in $\Gsp$ is a source of at most $O(L)$ messages, and the summation of $u$-weight over all messages is also $O(L)$. All we need to do is to reset the destination of each message from $\ID(u)$ to the range $[(\ID(u),1), (\ID(u), n)]$.

A feature of this reduction is that the range of destination of any two messages are either identical or disjoint. Having this property makes things simpler but we note that our deterministic algorithm does not depend this property and is able to work with overlapping destination ranges.

\paragraph{Notation.}
Let $G=(V, E)$ be the current graph under consideration, and let $\MM$ be a set of messages, where each $m \in \MM$ consists of the following.
\begin{itemize}
    \item An $O(\log N)$-bit message.
    \item A destination range $[L_m, U_m]$.
\end{itemize}

Each message $m \in \MM$ is initially located at some vertex $v \in V$. The goal of the routing is to re-distribute the messages in such a way that each $m \in \MM$ is sent to a vertex $u$ with $\ID(u) \in [L_m, U_m]$.
We describe some parameters relevant to us in the routing task.
\begin{description}
    \item[Basic parameters:] As before, $n$, $\Delta$, and $D$ are the number of vertices, maximum degree, and the Steiner tree diameter of the current  graph $G=(V,E)$ under consideration.
    Since the diameter of a graph $G$ with $\Phiv(G) \geq \psi$ is always $\poly
    \left(\Delta\psi^{-1}, \log n\right)$,  we can get rid of the parameter $D$ by always replacing it with $\poly
    \left(\Delta\psi^{-1}, \log n\right)$.
    
    \item[Graph sparsity:] $\Delta > \Phiv(G) \geq  \psi$ is the sparsity lower bound of the current  graph. 
    \item[Range of identifiers:]  The range of the unique identifiers is $\{1, 2, \ldots, N\}$. We assume that the  length of IDs $O(\log N)$ fits into one message, and we assume $n \leq N$.
    
    \item[Maximum load at a source:]  Each vertex $v \in V$ is a source of at most $\Ls$ messages $m \in \MM$ initially. 
    
     \item[Maximum expected load at a destination:] Consider the distribution where each message   $m \in \MM$ is sent to a uniformly random vertex in $\{ v \in V \ | \ \ID(v) \in [L_m, U_m] \}$. Then $\Lt$ is defined as the  maximum expected number of messages that a vertex $v \in V$ receives. Using the terminologies of previous discussion,  the summation of $v$-weight 
 over all messages in $\MM$ is at most $\Lt$, for each $v \in V$. 
 
     \item[Maximum load at a destination:] We write $\Lf$ to be the maximum allowed number of messages sent to a vertex after the routing algorithm is finished. The parameter $\Lf$  depends on the algorithm, and we usually have $\Lf \gg \Lt$.
\end{description}

As discussed earlier, the  routing problem~\cite{GhaffariKS17,GhaffariL2018} on a  graph $G=(V,E)$ where each vertex $v\in V$ is a source and a destination of at most $O(L) \cdot \deg(v)$ messages  can be reduced to the aforementioned routing task on the expander split graph $\Gsp=(\Vsp, \Esp)$ with parameters $n = 2|E|$, $N = \poly(n)$,  $\Delta = O(1)$, $\Ls = O(L)$,  $\Lt = O(L)$, and $\psi = \Phiv(\Gsp) = \Omega(\Phie(G))$. The goal of this section is to prove the following result.

\begin{theorem}[Deterministic routing on bounded-degree expanders]\label{thm-routing-bounded-deg}
Let $G=(V,E)$ be a bounded-degree graph with $\Phiv(G) = \psi$.
Suppose each vertex $v \in V$ is a source of $\Ls = O(L)$ messages,
and the destination of each message $m$ is specified by a range of identifiers $[L_m, U_m]$ in such a way that the expected number of messages that a vertex receives is $\Lt = O(L)$ if all messages are delivered to a uniformly random vertex whose ID is within the allowed destination range.
Then there is a deterministic algorithm with round complexity
\[O(L) \cdot \poly\left(\psi^{-1}\right) \cdot  2^{O\left( \log^{2/3} n \log^{1/3} \log n \right)}\]
that sends each message $m$ to a vertex $v \in V$ with $\ID(v) \in [L_m, U_m]$, and each vertex receives at most \[\Lf = O(L) \cdot 2^{O\left( \log^{1/3} n \log^{-1/3} \log n \right)}\] messages.
\end{theorem}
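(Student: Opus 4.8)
The plan is to prove the bounded-degree statement \cref{thm-routing-bounded-deg}; \cref{thm-routing-general} then follows immediately from the expander-split reduction already set up above, which turns a routing instance on $G$ with per-vertex load $O(L)\deg(v)$ into one on $\Gsp$ with $n=2|E|$, $\Delta=O(1)$, $\Ls,\Lt=O(L)$, and $\psi=\Omega(\Phie(G))$, at no asymptotic cost. Following the recursive skeleton of Ghaffari--Kuhn--Su, I would fix a branching parameter $k$ and an internal accuracy parameter $\epsilon$, partition the current (bounded-degree, $\psi$-expanding) host graph into $k$ equal parts $\VV=\{V_1,\dots,V_k\}$ with $|V_i|=\Theta(n/k)$, and run the deterministic simultaneous-embedding routine of \cref{lem-simul-emb-det} (with the parameter tradeoff of \cref{thm-solve-recursion}) with $\phicut$ chosen strictly below $\psi$. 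Since every cut of a $\psi$-expander has sparsity at least $\psi>\phicut$, the ``Cut'' outcome can only return $C=\emptyset$, which forces $\VVcut=\emptyset$; hence for every part $V_i$ we obtain a simultaneous embedding of an $O(\log n)$-degree graph $H_i$ with $\Phiv(H_i)\ge\phiemb$ into a subset $U_i\subseteq V_i$ with $|U_i|\ge(2/3)|V_i|$, the embedding having congestion and dilation $\poly(\psi^{-1},\log n)$ while $\phiemb=\poly(\psi)\cdot\log^{-O(1/\epsilon)}n$.

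Next I would install the two auxiliary communication fabrics needed to compensate for the fact that the embedding covers only a $2/3$-fraction of each part. \emph{Leftovers}: for each $i$ apply the deterministic no-leftover matching lemma \cref{lem-cut-match-det-noleftover} with $S=V_i\setminus U_i$, $T=U_i$ to embed a matching saturating all of $V_i\setminus U_i$ into $U_i$ (congestion $2^{O(\sqrt{\log n})}$, which is affordable since it does not enter the recursive simulation), giving every leftover vertex a private representative $v^\star\in U_i$; redirecting each message whose source or destination is a leftover $v$ to $v^\star$ increases $\Ls$ and $\Lt$ by only a constant factor because distinct leftover vertices receive distinct representatives. \emph{Inter-part fabric}: using the GPV-style deterministic flow machinery of \cref{sect-flow} and the expansion of the host graph, establish among the $k$ super-parts enough vertex-disjoint short paths to support a routing network on the quotient, at cost $\poly(k,\psi^{-1},\log n)+2^{O(\sqrt{\log n})}$. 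The algorithm then first moves every message to the unique part whose ID-range contains its destination (a routing instance on $k$ super-parts, solved over the inter-part fabric), and recurses in parallel inside each $U_i$ with $H_i$ as the next-level host graph; one virtual round at the next level costs $O(\mathrm{congestion}+\mathrm{dilation})=\poly(\psi^{-1},\log n)$ actual rounds, and the recursion terminates once the host has $O(1)$ vertices.

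Two quantities must be tracked through the recursion. The maximum load grows by only an $O(1)$ factor per level (a constant from the representative step, a constant from splitting traffic among parts), so after the $O(1/\epsilon)$ recursion levels it is $\Lf=O(L)\cdot 2^{O(1/\epsilon)}$. The round complexity is, up to lower-order terms, $\sum_{\ell}(\text{per-level work at level }\ell)\cdot\prod_{j<\ell}(\text{per-level simulation overhead at level }j)$, where the per-level work is dominated by the embedding cost $\poly(\psi^{-1},\log n)\cdot 2^{O(\epsilon\log n+\epsilon^{-1}\log\log n)}$ from \cref{thm-solve-recursion} and the per-level simulation overhead is $\log^{O(1/\epsilon)}n=2^{O(\epsilon^{-1}\log\log n)}$ coming from $\phiemb$; with branching $k=n^{\epsilon}$ (so depth $1/\epsilon$) this overhead compounds to $2^{O(\epsilon^{-2}\log\log n)}$, and the total becomes $O(L)\cdot\poly(\psi^{-1})\cdot 2^{O(\epsilon\log n+\epsilon^{-2}\log\log n)}$. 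Choosing $\epsilon=(\log\log n/\log n)^{1/3}$ equalizes the two exponents, each becoming $\Theta\!\left((\log n)^{2/3}(\log\log n)^{1/3}\right)$, which yields exactly the claimed round complexity $O(L)\cdot\poly(\psi^{-1})\cdot 2^{O(\log^{2/3}n\,\log^{1/3}\log n)}$ and the claimed $\Lf=O(L)\cdot 2^{O(\log^{1/3}n\,\log^{-1/3}\log n)}$.

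The main obstacle is precisely this three-way balance, combined with the leftover bookkeeping. Because the deterministic embedding covers only a $2/3$-fraction $U_i$ of each part and produces only sub-constant conductance $\phiemb=\log^{-O(1/\epsilon)}n$, one cannot make the per-level simulation overhead polylogarithmic --- doing so would merely reproduce the GKS exponent $\sqrt{\log n\log\log n}$ --- so the representative-and-flow mechanism has to be threaded through all $1/\epsilon$ levels while simultaneously keeping the accumulated dilation at $2^{O(\epsilon^{-2}\log\log n)}$ and the load at $2^{O(1/\epsilon)}$; a secondary subtlety is arranging the conductance loss at each level so that the $\poly(\psi^{-1})$ factor does not compound into $\poly(\psi^{-1})^{1/\epsilon}$. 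Once these are handled, assembling the pieces (invoking \cref{lem-simul-emb-det}, the flow lemmas of \cref{sect-flow}, and the recursion solved via \cref{thm-solve-recursion}) and substituting the optimal $\epsilon$ gives \cref{thm-routing-bounded-deg}, hence \cref{thm-routing-general}.
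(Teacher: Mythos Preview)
Your recursive skeleton, choice of $\epsilon=(\log\log n/\log n)^{1/3}$, and the three-way balance in the round-complexity analysis all match the paper. The substantive gap is in your control of the per-vertex load across levels.

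First, \cref{lem-cut-match-det-noleftover} does \emph{not} give you a matching in the sense you claim. It returns, for each $s_i\in S$, a path to \emph{some} vertex of $T$, with the guarantee that every vertex lies on at most $\poly(\Delta\psi^{-1})\cdot 2^{O(\sqrt{\log n})}$ paths; distinct sources are not sent to distinct sinks. So your assertion that ``distinct leftover vertices receive distinct representatives'' is unjustified. The paper instead allows each vertex of $U_i$ to represent up to $\poly(\phiemb^{-1},\log n)$ leftovers (\cref{lem-comm-links-leftover}), and then splices those virtual vertices into $H_i$ to form $H_i^\star$, degrading $\Phiv$ by only a $\poly(\phiemb^{-1},\log n)$ factor.

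Second, and more seriously, even granting distinct representatives, your claim that ``the maximum load grows by only an $O(1)$ factor per level'' fails at the inter-part step. After routing every message to its target part via the $U_i\rightsquigarrow U_j$ links of \cref{lem-comm-links}, a single vertex of $U_j$ can accumulate $M'=O(\Ls)\cdot\poly(k)\cdot 2^{O(\sqrt{\log n})}$ messages, because each invocation concentrates up to $2^{O(\sqrt{\log n})}$ messages per destination and there are $\poly(k)$ invocations. If you recurse with $\Ls'=M'$, this factor compounds over the $d=\Theta(\log^{1/3}n/\log^{1/3}\log n)$ levels to $n^{\Theta(1)}$, destroying both the round bound and the $\Lf$ bound. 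The paper fixes this with the deterministic load-balancing lemma of Ghosh et~al.\ (\cref{lem-loadbal-main}): after the inter-part move, it rebalances tokens on each $H_i^\star$ so that the maximum load drops back to $\Lt+O(\phiemb^{-1}\Delta'^2\log n)=\Lt+\poly(\phiemb^{-1},\log n)$ before recursing. This single ingredient is what makes the $O(1)$-per-level growth of $\Lt$ (and hence the claimed $\Lf$) go through; it is entirely absent from your proposal.

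A smaller omission: you do not say how a message whose range $[L_m,U_m]$ straddles several parts is assigned to a single $V_i$ while keeping $\Lt'=O(\Lt)$; the paper handles this with \cref{lem-msg-dest}.
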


By applying the algorithm of \cref{thm-routing-bounded-deg} to the expander split graph $\Gsp$ using the straightforward reduction described earlier, we have the following result for general graphs.

\Rrouting*


\paragraph{Load balancing.}
A crucial ingredient of our routing algorithm is a deterministic \emph{load balancing} algorithm of Ghosh et al.~\cite{GhoshLMMPRRTZ99}.
Consider a graph $G=(V,E)$ with $\Phiv(G) \geq \psi$, and $n$ is the number of vertices, and $\Delta$ is the maximum degree. Suppose each vertex $v \in V$ initially has $t_v$ tokens. Let $L = \sum_{v \in V} t_v / n$ be the average load, and let $M =\max_{v \in V} t_v$ be the maximum load. Ghosh et al.~\cite{GhoshLMMPRRTZ99} showed that there is an $O(\psi^{-1} M)$-round algorithm that redistributes the tokens in such a way that the maximum load at a vertex is at most $L + O(\psi^{-1} \Delta^2  \log n)$.

\begin{lemma}[{Load balancing~\cite[Theorem 3.5]{GhoshLMMPRRTZ99}}]\label{lem-loadbal-main}
There is an $O(\psi^{-1} M)$-round deterministic algorithm that redistribute the tokens in such a way that the maximum load at a vertex is  at most $L + O(\psi^{-1} \Delta^2  \log n)$. During the algorithm, at most one token is sent along each edge in each round.
\end{lemma}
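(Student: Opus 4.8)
The plan is to implement the algorithm as a purely local diffusion and to control a quadratic potential. Fix a proper edge-coloring of $G$ with $O(\Delta)$ colors; in round $r$ activate the color class $r \bmod O(\Delta)$ and, along every active edge $\{u,v\}$ with $t_u>t_v$, move exactly one token from $u$ to $v$. Since each vertex is incident to at most one active edge per round, this respects the ``at most one token per edge per round'' requirement, keeps every load nonnegative, conserves the total (so the average $L$ is invariant), and one sweep through all colors costs $O(\Delta)$ rounds. The quantities to track are the discrepancy $\operatorname{disc}_r=\max_v t_v^{(r)}-L$ and the potential $\Phi_r=\sum_v\bigl(t_v^{(r)}-L\bigr)^2$; a one-line computation shows that moving a token across an edge with load gap $a-b\ge 1$ changes $\Phi$ by $-2(a-b)+2\le 0$ and never raises $\max_v t_v$ or lowers $\min_v t_v$, so both $\operatorname{disc}_r$ and $\Phi_r$ are non-increasing.

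The core estimate is a per-sweep progress bound driven by edge expansion. Summing the per-edge change above over one sweep, $\Phi_r$ decreases by at least $2\sum_{\{u,v\}\in E}\bigl(|t_u^{(r)}-t_v^{(r)}|-1\bigr)$ restricted to edges with gap $\ge 2$, i.e.\ by $\Omega\bigl(\sum_{e\in E}|t_u^{(r)}-t_v^{(r)}|-\Delta n\bigr)$. By the coarea formula and $\Phiv(G)\ge\psi$,
\[
\sum_{\{u,v\}\in E}\bigl|t_u^{(r)}-t_v^{(r)}\bigr|=\int_{-\infty}^{\infty}\bigl|\partial S_\theta\bigr|\,d\theta\ \ge\ \psi\int_{-\infty}^{\infty}\min\bigl\{|S_\theta|,\,n-|S_\theta|\bigr\}\,d\theta,\qquad S_\theta:=\{v:t_v^{(r)}>\theta\},
\]
so as long as the load vector is noticeably non-flat --- quantitatively, as long as $\operatorname{disc}_r\ge C\Delta^2\psi^{-1}\log n$ --- the right-hand side exceeds the $\Theta(\Delta n)$ integrality slack by a definite amount, and one converts this into a decrease of $\operatorname{disc}_r$ itself by following where the maximum load sits. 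Iterating the progress bound shows that within $O(M\psi^{-1})$ rounds the discrepancy has fallen to $O(\Delta^2\psi^{-1}\log n)$.

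It then remains to argue that the residual level $O(\Delta^2\psi^{-1}\log n)$ is genuinely reached and not merely approached, i.e.\ that the (oscillating, non-convergent) integer process cannot stall with a hotter spike. Here I would run a geometric-decay argument on the upper level sets: once the potential has settled, for every $k\ge 1$ the excess mass of $S_{L+k}$ is $\sum_{v\in S_{L+k}}(t_v-(L+k-1))\ge|S_{L+k}|$, it can only leave across $\partial S_{L+k}$ at a bounded per-round rate, and edge expansion forces $|\partial S_{L+k}|\ge\psi|S_{L+k}|$; balancing these two facts yields $|S_{L+k+O(\Delta^2/\psi)}|\le\tfrac12|S_{L+k}|$, and after $O(\log n)$ such steps the level set is empty. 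The symmetric statement for the lower level sets gives the claimed $L\pm O(\Delta^2\psi^{-1}\log n)$ window, and for bounded-degree graphs the $\Delta$-factors collapse.

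The main obstacle is the tension between integrality and expansion: the heat-equation intuition does not apply verbatim because an edge cannot be equalized in one step, the ``$-1$ per edge'' slack has to be charged carefully, and the level-set decay in the last step has to be established against the oscillatory behavior of the discrete process rather than a true fixed point. Making the residual bound exactly $O(\psi^{-1}\Delta^2\log n)$ together with the linear-in-$M$ running time --- sharper than the $O(\psi^{-2}\log n)$ a plain spectral-gap argument would give --- is precisely the content of the tight analysis of Ghosh et al.~\cite{GhoshLMMPRRTZ99}, which I would invoke rather than re-derive.
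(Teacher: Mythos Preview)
The paper does not prove this lemma at all: it is stated as a black-box citation of \cite[Theorem~3.5]{GhoshLMMPRRTZ99}, with no argument given. Your proposal therefore goes well beyond what the paper does, and in fact your final sentence---invoking the Ghosh et~al.\ analysis rather than re-deriving it---is already the entirety of the paper's ``proof.'' The sketch you give of the underlying ideas (edge-colored diffusion, quadratic potential, coarea/expansion lower bound on per-sweep progress, and a level-set argument for the residual) is a reasonable outline of the Ghosh et~al.\ analysis, though you correctly identify that turning the heuristic inequalities into the sharp $O(\psi^{-1}M)$ time and $L+O(\psi^{-1}\Delta^2\log n)$ residual bounds requires the careful accounting in their paper.
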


\subsection{Graph Partitioning} \label{sect-rout-partition}
The first step of the routing algorithm is to
partition the vertex set $V$ into $\VV = \{V_1, V_2, \ldots, V_k\}$ of $V$   in such a way that \[\left\lfloor \frac{n}{k} \right\rfloor \leq |V_i| \leq  \left\lceil \frac{n}{k} \right\rceil, \ \ \ \text{for each $1 \leq i \leq k$}\] and \[\max_{v \in V_i} \ID(v) < \min_{v \in V_j} \ID(v) \ \ \ \text{for each $1 \leq i < j \leq k$,}\]
where $k$ is some parameter to be determined. The computation of the partition takes $O(Dk \log N)$ rounds deterministically via $k-1$ binary searches using \cref{lem-det-bsearch}.

We remark that the purpose of having $\max_{v \in V_i} \ID(v) < \min_{v \in V_j} \ID(v)$ for each $1 \leq i < j \leq k$ is to ensure that any vertex $v \in V$ can locally calculate which part $V_j$ an arbitrary vertex $u$ belongs to, given the information $\ID(u)$. This can be done once we let everyone learn $\max_{v \in V_i} \ID(v)$ for each $1 \leq i \leq k-1$. 
This is crucial as we need to be able to route each message $m \in \MM$ to a part $V_i$ where $[L_m, U_m] \cap \{ \ID(v) \ | \ v \in V_i \} \neq \emptyset$.

As in~\cite{GhaffariKS17}, the routing will be done recursively in each part  $\VV = \{V_1, V_2, \ldots, V_k\}$ of $V$, and so we
do a simultaneous expander embedding using $\cutmatch{\phicut}{\phiemb}$ with some parameter $\phiemb$ to be determined and  $\phicut = \psi/2$. Recall from the specification of $\cutmatch{\phicut}{\phiemb}$ in \cref{sect-simult-emb-expander} that it outputs both a simultaneous expander embedding and a cut $C$ with $\Phiv(C) \leq \phicut$. Since $\Phiv(G) = \psi > \psi/2 = \phicut$ by our choice of $\phicut$, this forces $C =\emptyset$.


Then $\cutmatch{\phicut}{\phiemb}$ returns $U_i \subseteq V_i$ for each $1 \leq i \leq k$ with $|U_i| \geq (2/3) |V_i|$, and a simultaneous embedding of $H_1, H_2, \ldots, H_k$ to $U_1, U_2, \ldots, U_k$ with congestion $c = \poly(\Delta\phicut^{-1}, \log n)$ and dilation $d = \poly(\Delta\phicut^{-1}, \log n)$.  Each graph $H_i$ is guaranteed to have sparsity $\Phiv(H_i) \geq \phiemb$ and maximum degree $\Delta(H_i) = O(\log n)$.

By \cref{thm-solve-recursion}, the round complexity of $\cutmatch{\phicut}{\phiemb}$ is 
\begin{align*}
 &\Temb\left(n,\Delta,D,k,\phicut,\frac{1}{  2^{O\left( \epsilon^{-1} \log \log n \right)}}\right) \\
 & \  = 
 \poly\left(D, k, \Delta \phicut^{-1}, \log n\right)
 +  \left(\Delta  \phicut^{-1} \right)^{O(1)} \cdot 2^{O\left(\epsilon \log n + \epsilon^{-1} \log \log n \right)}  \\
 & \ \ \ \text{for} \ \  \phiemb = \frac{1}{  2^{O\left( \epsilon^{-1} \log \log n \right)}}, \ \ \text{where} \ \ 0<\epsilon<1.
\end{align*}
In our routing algorithm, we fix $\epsilon = \log^{-1/3}n \log^{1/3}\log n$, and so we have
\[\phiemb = \frac{1}{  2^{O\left( \log^{1/3}n \log^{2/3}\log n \right)}},\]
and the round complexity of $\cutmatch{\phicut}{\phiemb}$ becomes
\[\poly\left(k, \Delta \psi^{-1}, \log n\right)
 +  \left(\Delta  \psi^{-1} \right)^{O(1)} \cdot 2^{O\left(\log^{2/3}n \log^{1/3}\log n   \right)}. \]
Note that the parameter $D$ is omitted because it is $\poly\left(\Delta\psi^{-1}, \log n\right)$.

\subsection{Updating Destination Ranges} For each message $m$ whose destination range $[L_m, U_m]$  is completely within the ID range of one part $V_i$, i.e., $[L_m, U_m] \subseteq [\min_{v \in V_i} \ID(v), \max_{v \in V_i} \ID(v)]$, then we can simply send $m$  to any vertex in $U_i$ using \cref{lem-comm-links}.
If $[L_m, U_m]$  overlaps with more than one part, then care needs to be taken when deciding which part the message $m$ is sent to, so that the new parameter $\Lt'$ in recursive calls is within a constant factor of the current $\Lt$. Intuitively, we do not want a part $V_i$ to receive significantly more messages then the expected number of messages that it receives, if the messages are sent to a uniformly random destination in their destination ranges.

\begin{lemma}[Update destination ranges of messages]\label{lem-msg-dest}
There is a deterministic algorithm with round complexity $O(k^3 D \log^3 N)$  that resets the destination range of each message $m \in \MM$ from $[L_m, U_m]$ to $[L_m, U_m] \cap [\min_{v \in V_i} \ID(v), \max_{v \in V_i} \ID(v)]$  for some $1 \leq i \leq k$ in such a way that the new parameter $\Lt'$ is within at most a constant factor of the old $\Lt$.
\end{lemma}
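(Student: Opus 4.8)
The plan is to carry out the redistribution of destination ranges one ``boundary'' at a time, processing the $k-1$ boundaries between consecutive parts in sequence, and at each boundary solving a small fractional-assignment / load-balancing problem so that no part ends up overloaded relative to its fair share. First I would set up notation: for each part $V_i$ let $I_i = [\min_{v\in V_i}\ID(v),\ \max_{v\in V_i}\ID(v)]$ be its ID interval, and note that by the partition guarantee these intervals are contiguous and pairwise disjoint in ID-space, so every message's destination range $[L_m,U_m]$ meets a set of \emph{consecutive} parts $V_a, V_{a+1}, \dots, V_b$. Only the messages whose range spans a boundary need to be reassigned, and a message spanning parts $a,\dots,b$ has its $v$-weight spread uniformly over the $x = |\{v : \ID(v)\in[L_m,U_m]\}|$ vertices in $\bigcup_{i=a}^b (I_i \cap [L_m,U_m])$.

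The key quantity to preserve is, for each part $V_i$, its \emph{expected load} $\ell_i := \sum_{m\in\MM} (\text{total $V_i$-weight of }m)$, i.e.\ the expected number of messages landing in $V_i$ under uniform routing; by hypothesis $\sum_i \ell_i \le n\cdot\Lt/(\text{something})$ and more importantly each per-vertex expected load is $\le \Lt$, so $\ell_i \le \Lt\cdot|V_i| = O(\Lt\cdot n/k)$. When we reassign a boundary-spanning message $m$ to a single part $V_i$ among $\{a,\dots,b\}$, that part absorbs $m$'s entire $|I_i\cap[L_m,U_m]|$-fraction worth of weight \emph{plus} the weight that was spread over the other parts; the new per-vertex expected load in $V_i$ could blow up if too many messages are dumped there. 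So at each boundary $j$ (between $V_j$ and $V_{j+1}$) I would collect the multiset of messages crossing exactly that boundary, gather their count and the relevant interval-lengths at a single coordinator vertex (via \cref{lem-basic} along the Steiner tree), and compute a balanced split: assign each crossing message to the left side or right side so that the \emph{additional} expected load handed to $V_j$ and to $V_{j+1}$ is proportional to $|V_j|$ and $|V_{j+1}|$ respectively — a one-dimensional water-filling that keeps both new per-vertex expected loads at $O(\Lt)$. Because there are only $k-1$ boundaries and a message crosses at most $k-1$ of them but is reassigned at most once (once it is pinned to a part it no longer spans a boundary), iterating over boundaries $j=1,\dots,k-1$ in order and each time shrinking ranges to stop at the chosen boundary terminates with every range inside a single $I_i$. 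Each boundary's computation requires learning $O(k)$ aggregate numbers and broadcasting the $O(k)$-bit decision pattern, which costs $O(D\log N)$ rounds per boundary after an initial $O(D\,k\log N)$-round pass to make the part ID-ranges globally known; the stated $O(k^3 D\log^3 N)$ bound is a generous upper bound covering the binary searches needed to count vertices inside arbitrary ID-subintervals (\cref{lem-det-bsearch}) that are invoked $\mathrm{poly}(k)$ times with $O(\log N)$ search depth and $O(\log N)$-bit arithmetic.

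The main obstacle I expect is the \emph{accounting} that guarantees $\Lt' = O(\Lt)$ rather than the round-complexity bookkeeping, which is routine. The subtlety is that a single message can cross many boundaries, and naive greedy rounding at one boundary can cascade: after pinning a message to the left of boundary $j$ we may have shifted its entire weight onto parts $\le j$, and if many messages do this the cumulative load on part $V_j$ could be $\omega(\Lt)$. The fix is to make the split at each boundary respect \emph{fair shares}: think of each crossing message as a unit of ``demand'' that must be routed to one side, and route it so that the total demand sent to side $V_{j+1},\dots$ is at most a constant times $\sum_{i>j}|V_i|$ times (demand density), which is exactly what the initial per-vertex expected-load bound $\Lt$ gives us globally; processing boundaries left-to-right and always balancing against the \emph{remaining} capacity on each side is the discrete analogue of a max-flow feasibility argument, and one shows by a direct summation (telescoping over boundaries) that the final $\ell_i$ per vertex is at most $c\Lt$ for an absolute constant $c$. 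Writing out that telescoping inequality carefully — tracking how weight migrates across boundaries and confirming it never piles up by more than a constant factor in any $I_i$ — is the real content of the proof; everything else (the binary searches, the Steiner-tree gathers, the broadcast of decision patterns) is standard and fits comfortably in the claimed round bound.
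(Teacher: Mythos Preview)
Your approach is genuinely different from the paper's, and the part you yourself flag as ``the real content of the proof'' is exactly where the gap lies. Processing boundaries left-to-right and doing a balanced split at each one does not obviously control the \emph{per-vertex} expected load $\Lt'$: when a message spanning parts $a,\dots,b$ is pushed left of boundary $j$, its weight is not ``handed to $V_j$'' but to all of $V_a,\dots,V_j$, and after you later truncate it at boundaries $a,\dots,j-1$ it may land entirely in a narrow suffix of some $I_{i}$, concentrating weight on a few vertices. Your water-filling sketch balances aggregate demand against aggregate capacity, but the guarantee you need is pointwise over vertices, and you never write down the telescoping inequality that would show this. There is also an algorithmic loose end: you say the coordinator broadcasts an ``$O(k)$-bit decision pattern'', but messages crossing a given boundary can have arbitrarily many distinct ranges, so it is unclear how $O(k)$ bits suffice to tell every message holder which side to take.

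The paper avoids both issues with a cleaner device. It defines $Z_{m,i}=\lfloor \log |\{v\in V_i:\ID(v)\in[L_m,U_m]\}|\rfloor$, a coarse logarithmic bucketing of the overlap of $m$ with each part, and observes that every vertex can compute all its $Z_{m,i}$ after learning $O(k\log n)$ quantile IDs (the $2^j$-th smallest and largest ID in each $V_i$), found by binary search in $O(kD\log^2 N)$ rounds. Crucially there are only $O(k^2\log^2 n)$ distinct $Z$-vectors (a range meets a contiguous run of parts, so only the identity of the two endpoint parts and the two endpoint buckets vary). Messages are then grouped by $Z$-vector, and each class is distributed to parts in proportion to the $Z$-weights using \cref{lem-partition} in $O(kD)$ rounds, giving the $O(k^3 D\log^3 N)$ bound. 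Because all messages in a class have overlaps within a factor $2$ of each other in every part, the proportional split immediately gives $\Lt'=O(\Lt)$ with no telescoping needed. If you want to rescue your boundary-by-boundary scheme, you would need both a per-vertex charging argument and a classification of crossing messages fine enough to make the split broadcastable; at that point you are essentially reinventing the $Z$-vector classification.
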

\begin{proof}
If $[L_m, U_m] \subseteq  [\min_{v \in V_i} \ID(v), \max_{v \in V_i} \ID(v)]$  for some $1 \leq i \leq k$ already, then nothing needs to be done for $m$.
Consider the case $[L_m, U_m]$ overlaps the ID range for more than one part of $\VV$. If we consider the easy case where the destination ranges of any two messages are either identical or disjoint, then there can be at most $k-1$ distinct distance ranges that overlaps more than one part of $\VV$. Hence we can afford to  deal with them individually. Specifically, let $[a, b]$ be a distance range that overlaps more than one part of $\VV$, then  we calculate the size $s_i$ of $[a, b] \cap \{ \ID(v) \ | \ v \in V_i\}$ for each $V_i \in \VV$ whose ID range overlaps with  $[a, b]$. Let $\MM'$ be the set of messages with this distance range $[a,b]$. We assign the messages in  $\MM$ to different parts in $\VV$ according to the distribution weighted by $s_1, s_2, \ldots, s_k$ using \cref{lem-partition}. Clearly the new parameter $\Lt'$ is within at most a constant factor of the old $\Lt$, as the error is only caused by rounding fractional values, and this can be implemented in $O(kD)$ rounds.

For the rest of the proof, suppose we are in the more challenging setting where the destination range  $[L_m, U_m]$ can be arbitrary. We define   \[Z_{m,i} = \left\lfloor \log |\{v \in V_i \ | \ \ID(v) \in [L_m, U_m] \}| \right\rfloor.\] It is straightforward to see that we can let each vertex $v \in V$ calculates $Z_{m,i}$ for each message $m$ at $v$ and for each $1 \leq i \leq k$ in $O(k D \log^2 n)$ rounds by learning  these $O(k \log n)$ numbers: the  $2^j$th smallest ID and the $2^j$th largest ID of vertices in $V_i$, for each $1 \leq j \leq \lfloor \log |V_i| \rfloor$, for each $1 \leq i \leq k$. 
These numbers can be calculated  using  binary search of \cref{lem-det-bsearch} in  $O(k D \log^2 N)$ rounds. 

Now we can classify the messages $m \in \MM$ based on their vectors $(Z_{m,1}, Z_{m,2}, \ldots, Z_{m,k})$.
Observe that there are at most $O(k^2 \log^2 n)$ possible vectors, and so we can afford to deal with each of them individually using the approach for the easy case described earlier. That is, we use \cref{lem-partition} to distribute the messages of each class to different parts in $\VV$ according to the weighted distribution corresponding to the $Z$-vector. The round complexity  of this step is \[O(k^2 \log^2 n) \cdot O(kD) = O(k^3 D \log^3 n) \leq O(k^3 D \log^3 N),\]
which is the dominating term in the overall round complexity. It is clear that  the new parameter $\Lt'$ is within at most a constant factor of the old $\Lt$, as each value $Z_{m,i}$ is a 2-approximation of the actual size of 
$[L_m, U_m] \cap    \{ \ID(v) \ | \ v \in V_i\}$.
\end{proof}

\subsection{Establishing Communication Links} \label{sect-rout-comm} 
We write $u \rightsquigarrow v$ to denote the task of routing one message from $u$ to $v$. For two vertex subsets $U \subseteq V$ and $W \subseteq V$, we write $U \rightsquigarrow W$ to denote the task of routing one message from each $u \in U$ to vertices in $W$, and it does not matter which vertices in $W$ are the destinations. The two main quality measures of a routing algorithm are the round complexity and the maximum number of messages that a vertex receives.
We show how to solve the routing task $U_i \rightsquigarrow U_j$ efficiently.

\begin{lemma}[Communication links between parts]\label{lem-comm-links}
For any $i, j \in [k]$, the routing task $U_i \rightsquigarrow U_j$ can be solved in 
$\poly\left(k, \Delta\psi^{-1} \right) \cdot 2^{O\left(\sqrt{\log n}\right)}$
rounds deterministically in such a way that each vertex is a destination of at most $2^{O\left(\sqrt{\log n}\right)}$ messages.
\end{lemma}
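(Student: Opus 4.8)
The plan is to route $U_i \rightsquigarrow U_j$ by embedding, inside the ambient graph $G$ — which at this point is a $\psi$-expander, since the cut returned by $\cutmatch{\phicut}{\phiemb}$ with $\phicut=\psi/2$ is forced to be empty — a family of small-congestion, small-dilation paths connecting $U_i$ to $U_j$, and then pushing the messages along these paths with the packet-routing primitive of~\cite{ghaffari2015near,leighton1994packet}. Recall that $|U_i|,|U_j| = \Theta(n/k)$, so in a $\psi$-expander a flow that sends one unit out of every vertex of $U_i$ and absorbs $O(1)$ units at every vertex of $U_j$ exists with congestion and dilation $\poly(\Delta\psi^{-1},\log n)$; the whole difficulty is to realize such a routing \emph{deterministically}.

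First I would reduce to a ``balanced matching'' subproblem: since $|U_i| \le 3|U_j|$, split $U_i$ into a constant number of blocks, each of size at most $|U_j|$, and handle them one at a time, so that each stage only asks for a matching of some $S \subseteq U_i$ with $|S| \le |U_j|$ into $U_j$, embedded with small congestion and dilation. For each stage I invoke the deterministic cut-or-flow machinery of \cref{sect-flow} — \cref{lem-cut-match-det}, and for the residual part its no-leftover strengthening \cref{lem-cut-match-det-noleftover} — on the pair $(S,U_j)$ with a sparsity threshold $\psi' < \psi$. The routine returns either a small-congestion, small-dilation embedding of a matching that saturates all but a small fraction of $S$ into $U_j$, or a cut of sparsity below $\psi'$; the latter cannot happen because $\Phiv(G)=\psi>\psi'$, so we always obtain the matching. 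The source vertices the GPV-style flow misses are gathered and re-processed, and after $O(\log n)$ such rounds — the final sliver being absorbed with the $T=V\setminus S$ primitive, which is where the $2^{O(\sqrt{\log n})}$ overhead enters — all of $U_i$ is matched into $U_j$, with total congestion $c$ and dilation $d$ both $2^{O(\sqrt{\log n})}$, using $\poly(k,\Delta\psi^{-1})\cdot 2^{O(\sqrt{\log n})}$ rounds.

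Having obtained this embedding, pushing the one message out of each vertex of $U_i$ along its path costs $2^{O(\sqrt{\log n})}$ rounds by~\cite{ghaffari2015near,leighton1994packet} (or trivially $O(c\,d)$ rounds), and each vertex of $U_j$ is an endpoint of at most the sink-capacity many paths, hence receives $2^{O(\sqrt{\log n})}$ messages. If one wants a cleaner bound on the per-destination load, one extra run of the load-balancing algorithm \cref{lem-loadbal-main} inside $U_j$ on the embedded expander $H_j$ redistributes the received messages down to max load $O(1)+O(\phiemb^{-1}\Delta(H_j)^2\log n)=2^{O(\sqrt{\log n})}$ in $O(\phiemb^{-1})\cdot 2^{O(\sqrt{\log n})}=2^{O(\sqrt{\log n})}$ further rounds. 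Adding up the binary searches that set up the block decomposition, the flow invocations, the message pushing, and (optionally) the load balancing yields the claimed $\poly(k,\Delta\psi^{-1})\cdot 2^{O(\sqrt{\log n})}$ round complexity.

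The main obstacle is the deterministic construction in the second paragraph: the GPV-style flow subroutine is only guaranteed to saturate a source set up to a leftover, and the clean ``no leftover'' statement we can call on (\cref{lem-cut-match-det-noleftover}) is tailored to $T=V\setminus S$ rather than to an arbitrary target set such as $U_j$. Bootstrapping an arbitrary-target matching out of that primitive together with the leftover-killing iteration — while keeping the congestion, the dilation, and the number of iterations all bounded by $2^{O(\sqrt{\log n})}$ — is the delicate part; everything downstream (pushing the messages, bounding the load, summing the round complexity) is routine.
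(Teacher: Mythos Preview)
You correctly identify that $\Phiv(G)\ge\psi$ forces the cut output of \cref{lem-cut-match-det} to be empty, so a large partial matching from a subset of $U_i$ into $U_j$ is always available, and you correctly isolate the real obstacle: \cref{lem-cut-match-det-noleftover} only applies when $T=V\setminus S$, not to an arbitrary target like $U_j$.

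The gap is in your proposed resolution. Iterating \cref{lem-cut-match-det} on the residual sources does not geometrically shrink the leftover: the leftover bound there is $\beta\,|V\setminus T|=\beta\,|V\setminus U_j|=\Theta(\beta n)$, which is independent of the current source size, so halving the leftover requires halving $\beta$, and the round complexity carries a $\beta^{-1}$ factor; after $\Theta(\log n)$ halvings the cost is polynomial in $n$, far beyond the budget. Your fallback of ``absorbing the final sliver with the $T=V\setminus S$ primitive'' in $G$ does not help either: that primitive lands the sliver at arbitrary vertices of $V$, not in $U_j$.

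The paper resolves the obstacle by a different two-stage composition that avoids the arbitrary-target issue altogether. One call to \cref{lem-cut-match-det} with $T=U_j$, $S\subseteq U_i$ of size $\min\{|U_i|,|U_j|\}$, $\psi'=\psi/2$, and $\beta'=\Theta(1/k)$ already matches at least a $13/24$ fraction of $U_i$ into $U_j$, with congestion and dilation $\poly(\Delta\psi^{-1},\log n)$. For the remaining unmatched set $S'\subseteq U_i$ (of size $\le 11/24\,|U_i|<|U_i|/2$), the paper routes $S'$ \emph{inside the embedded expander $H_i$} to the already-matched part $U_i\setminus S'$: within $H_i$ this is precisely the $T=V(H_i)\setminus S'$ situation, so \cref{lem-cut-match-det-noleftover} applies verbatim there (and this is where the $2^{O(\sqrt{\log n})}$ factor enters). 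Composing the two stages---first move every message within $H_i$ to a matched vertex of $U_i$, then hop along the first-stage matching into $U_j$---gives the full routing with the stated bounds. The key idea you are missing is to run the no-leftover primitive in $H_i$, not in $G$; inside $H_i$ the leftover problem is exactly of the $T=V\setminus S$ form, so no bootstrapping is needed.
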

\begin{proof}
The first part of the algorithm is to run the algorithm of \cref{lem-cut-match-det} with $T = U_j$, and $S$ being an arbitrary subset of $U_i$ with size $\min\{|U_i|, |U_j|\}$, and so $|S| \leq |T|$.
    Note that $|S| \geq (2/3)\lfloor n/k \rfloor \geq (2/3)|U_i| - 1 > (7/12) |U_i|$. We set $\beta' = (1/24)|U_i| / |V| = \Omega(k^{-1})$ and $\psi' = \psi/2$ for  the algorithm of \cref{lem-cut-match-det}. The choice of  $\psi'$ ensures   that the output cut $C$ of \cref{lem-cut-match-det} must be empty. The choice of $\beta'$ ensures that at most $(1/24)|U_i|$ vertices in $S$ are not matched. Therefore, in the end, at least $13/24$ fraction of the vertices in $U_i$ are matched to a vertex in $U_j$ in the output matching $M$ that can be embedded with congestion $O(\Delta^2 \psi^{-2} \log^4 n)$ and dilation   $O(\Delta \psi^{-1} \log n)$.  The round complexity of this part is $\poly\left(\Delta\psi^{-1}, k, \log n\right)$.

The second part is to handle the remaining $11/24$ fraction of the unmatched vertices in $U_i$ by applying \cref{lem-cut-match-det-noleftover} within the graph $H_i$, with $S$ being the set of unmatched vertices in $U_i$, and $T = U_i \setminus S$. As $\Phiv(H_i) \geq \phiemb$, the algorithm of \cref{lem-cut-match-det-noleftover} costs 
$\poly(\phiemb^{-1}) \cdot 2^{O\left(\sqrt{\log n}\right)}$
rounds in $H_i$, and it solves the routing task $S \rightsquigarrow T$  with congestion $\poly(\phiemb^{-1}) \cdot 2^{O(\sqrt{\log n})}$ and dilation $\poly(\phiemb^{-1}, \log n)$. Recall that the maximum degree of $H_i$ is $O(\log n)$, so we can eliminate this parameter in the above complexities. Furthermore, each $v \in T$ is a destination of at most   $\poly(\phiemb^{-1}) \cdot 2^{O(\sqrt{\log n})}$ messages from $S$.

Recall that $H_i$ is embedded into the underlying graph $G$ with congestion $c = \poly(\Delta\psi^{-1}, \log n)$ and dilation $d = \poly(\Delta\psi^{-1}, \log n)$, and so the actual round complexity, congestion, and dilation of the second part have to be multiplied by $\poly(\Delta\psi^{-1}, \log n)$.

Combining the communication links of the first part and the second part solves the required routing problem. We first use the communication links of the second part to route all messages in $U_i$ to the subset of $U_i$ that is matched by the matching $M$ of the first part, and then we use $M$ to deliver all of them to $U_j$. 
 The overall  round complexity can be upper bounded by
$\poly\left(k, \Delta\psi^{-1}, \phiemb^{-1} \right) \cdot 2^{O\left(\sqrt{\log n}\right)} =  \poly\left(k, \psi^{-1}  \right) \cdot 2^{O\left(\sqrt{\log n}\right)}$, and each vertex is a destination of at most  $\poly(\phiemb^{-1}) \cdot 2^{O\left(\sqrt{\log n}\right)} = 2^{O\left(\sqrt{\log n}\right)}$ messages, as $\phiemb = 1/  2^{O\left( \log^{1/3}n \log^{2/3}\log n \right)}$.
\end{proof}


We can only do recursive calls on expanders, and the nature of our approach is that for each part $V_i$, we can only embed an expander on $U_i \subseteq V_i$, and there are always some \emph{leftover} vertices $V_i \setminus U_i$. By increasing the round complexity, it is possible to reduce the size of $V_i \setminus U_i$, but we cannot afford to make it an empty set using the techniques in this paper.
To deal with these leftover vertices, for each $v \in V_i \setminus U_i$, we will find another vertex $v^\star \in U_i$ that serves as the \emph{representative} of $v$ in all subsequent recursive calls. For each leftover vertex $v$ and its representative $v^\star$, we will establish a communication link between them.

\begin{lemma}[Communication links for leftover vertices]\label{lem-comm-links-leftover}
There is a deterministic algorithm with round complexity 
$\poly\left(k, \Delta\psi^{-1} \right) \cdot 2^{O\left(\sqrt{\log n}\right)}$
that finds a representative $v^\star \in U_i$ for each $v \in V_i \setminus U_i$,  for each $1 \leq i \leq k$. Moreover, each vertex serves as the representative of at most $\poly\left(\phiemb^{-1}, \log n\right) = 2^{O\left( \log^{1/3}n \log^{2/3}\log n \right)}$ vertices.
The algorithm also establishes communication links between them that allows us to solve the routing tasks $\{ v \rightsquigarrow v^\star \ | \ v \in V \setminus (U_1 \cup U_2 \cup \cdots \cup U_k)\}$
and $\{ v^\star \rightsquigarrow v  \ | \ v \in V \setminus (U_1 \cup U_2 \cup \cdots \cup U_k)\}$ deterministically  with round complexity 
$\poly\left(k, \Delta\psi^{-1} \right) \cdot 2^{O\left(\sqrt{\log n}\right)}$.
\end{lemma}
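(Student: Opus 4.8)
The plan is to reduce the problem of finding representatives and their communication links to two applications of the flow-based matching machinery already available: first a coarse phase that routes every leftover vertex into a good part via \cref{lem-cut-match-det}, then a fine phase that assigns each leftover vertex a distinct-enough representative inside a single $U_i$ using the no-leftover matching routine \cref{lem-cut-match-det-noleftover} inside the embedded expander $H_i$. First I would observe that $|V \setminus (U_1 \cup \cdots \cup U_k)| \le \sum_i |V_i \setminus U_i| \le \sum_i \tfrac13 |V_i| = \tfrac{n}{3}$, so the set $L = V \setminus (U_1 \cup \cdots \cup U_k)$ of leftover vertices is a minority of $V$. Because each message in a recursive call must stay inside its part, it is cleanest to handle each part $V_i$ separately in parallel: inside $V_i$ we want to match $V_i \setminus U_i$ (size $\le \tfrac13 |V_i|$) into $U_i$ (size $\ge \tfrac23 |V_i|$). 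The complication is that $G[V_i]$ need not be an expander — only $G[U_i]$ carries the embedded expander $H_i$ — so I cannot directly run a flow routine on $G[V_i]$. The fix is the standard two-hop routing already used in \cref{lem-comm-links}: use \cref{lem-cut-match-det} in the whole graph $G$ with $S = L$ (or $S = V_i \setminus U_i$ handled part-by-part) and $T = \bigcup_i U_i$, choosing $\psi' = \psi/2$ and $\beta' = \Theta(1/k)$ small enough. Since $\Phiv(G) = \psi > \psi' $, the output cut $C$ of \cref{lem-cut-match-det} must be empty, so all but a $\beta'$-fraction of $L$ gets matched into $\bigcup_i U_i$ along paths of congestion $O(\Delta^2 \psi^{-2}\log^4 n)$ and dilation $O(\Delta\psi^{-1}\log n)$, at cost $\poly(k,\Delta\psi^{-1},\log n)$ rounds; the tiny unmatched remainder of size $\le \beta'|V| = O(|V|/k)$ is then mopped up by invoking the no-leftover routine \cref{lem-cut-match-det-noleftover}, but to stay inside one part at the end I instead route this remainder back to its originating $V_i$ and finish inside $H_i$.

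The key steps, in order, are: (1) set $L = V \setminus \bigcup_i U_i$, verify $|L| \le n/3$, and show this is all done in parallel across parts so that each message stays inside its part; (2) run \cref{lem-cut-match-det} with $S = V_i \setminus U_i$, $T = U_i$, parameters $\psi' = \psi/2$, $\beta' = \Theta(1/k)$ inside the global graph $G$ (the cut output is forced empty), obtaining a partial matching that saturates all but $\le \beta' |V_i|$ of the leftover vertices of $V_i$; (3) for the unsaturated remainder inside each $V_i$, first route them via the existing intra-part communication primitives to vertices of $U_i$, then apply \cref{lem-cut-match-det-noleftover} within the embedded expander $H_i$ (which has $\Phiv(H_i) \ge \phiemb$ and $\Delta(H_i) = O(\log n)$) with $S$ the images of the leftovers and $T = U_i \setminus S$, at cost $\poly(\phiemb^{-1}) \cdot 2^{O(\sqrt{\log n})}$ rounds multiplied by the embedding congestion/dilation $\poly(\Delta\psi^{-1},\log n)$; (4) compose the two hops: first map each leftover vertex to a vertex of $U_i$, then along the matching from step (2) onto a final representative $v^\star \in U_i$; (5) bound the load: a vertex of $U_i$ can be the image of at most the congestion $\poly(\phiemb^{-1},\log n) = 2^{O(\log^{1/3}n\log^{2/3}\log n)}$ leftover vertices in step (3) and at most $O(1)$ under step (2)'s matching, so each vertex serves as representative of at most $\poly(\phiemb^{-1},\log n) = 2^{O(\log^{1/3}n\log^{2/3}\log n)}$ vertices; (6) observe that all matchings and embeddings found are symmetric objects, so the reverse routing $v^\star \rightsquigarrow v$ runs along the same paths backwards with the same congestion, dilation and therefore round complexity; (7) sum: the whole procedure costs $\poly(k,\Delta\psi^{-1},\phiemb^{-1}) \cdot 2^{O(\sqrt{\log n})} = \poly(k,\Delta\psi^{-1}) \cdot 2^{O(\sqrt{\log n})}$ rounds, absorbing $\phiemb^{-1} = 2^{O(\log^{1/3}n\log^{2/3}\log n)} = 2^{o(\sqrt{\log n})}$ into the $2^{O(\sqrt{\log n})}$ factor (here I am using $\epsilon = \log^{-1/3}n\log^{1/3}\log n$ fixed in \cref{sect-rout-partition}, so $\log^{1/3}n \log^{2/3}\log n = o(\sqrt{\log n \log\log n})$, indeed $= o(\sqrt{\log n})$ once $\log\log n$ is small relative to $\log^{1/3}n$; more precisely $\log^{1/3}n\log^{2/3}\log n \le \sqrt{\log n}$ for all large $n$).

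The main obstacle I expect is getting every message to stay inside the correct part while still using a \emph{global} flow routine in step (2): \cref{lem-cut-match-det} as stated runs on all of $G$, not on $G[V_i]$, and $G[V_i]$ itself is not known to be an expander, so one cannot simply restrict. The resolution — the thing that needs to be spelled out carefully — is that the matching produced by \cref{lem-cut-match-det} uses short paths in $G$ whose endpoints we \emph{choose}: by taking $S = V_i \setminus U_i$ and $T = U_i$ as subsets of the \emph{same} part $V_i$ and running these $k$ instances with disjoint source/sink sets simultaneously, the matchings we get connect each leftover vertex of $V_i$ to a representative of $V_i$; the paths themselves may leave $V_i$, but that is fine — they are used only to set up the communication link $v \leftrightarrow v^\star$, and the representative relation (which is what the recursion uses) stays within each part. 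One must double-check that the $k$ parallel flow instances can indeed share the network with only a $\poly(k)$ congestion blow-up (they do, since each uses congestion $\poly(\Delta\psi^{-1},\log n)$ and there are $k$ of them), and that the "$\beta'$-fraction unsaturated" slack in \cref{lem-cut-match-det} is small enough that the step-(3) clean-up inside $H_i$ is applicable (it is, since $\beta'|V_i| = O(|V_i|/k)$ is well below $|U_i|/3$). A secondary subtlety is that step (3) routes \emph{into} $U_i$, and the no-leftover routine \cref{lem-cut-match-det-noleftover} only guarantees each destination in $T = U_i\setminus S$ receives $\poly(\phiemb^{-1})\cdot 2^{O(\sqrt{\log n})}$ messages; I would note this is precisely the per-vertex representative bound claimed and requires no further argument.
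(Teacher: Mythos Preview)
Your approach has a genuine gap in the representative-load bound. You claim in step (5) that each vertex of $U_i$ receives at most $\poly(\phiemb^{-1},\log n)$ leftover vertices, but \cref{lem-cut-match-det-noleftover} only guarantees that each vertex lies on at most $O(\Delta^2\psi^{-2})\cdot 2^{O(\sqrt{\log n})}$ paths. Applied inside $H_i$ with $\Delta(H_i)=O(\log n)$ and sparsity $\phiemb$, this gives a per-vertex load of $\poly(\phiemb^{-1},\log n)\cdot 2^{O(\sqrt{\log n})}$, which is $2^{O(\sqrt{\log n})}$, not $2^{O(\log^{1/3}n\log^{2/3}\log n)}$. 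You yourself note this at the end (``each destination receives $\poly(\phiemb^{-1})\cdot 2^{O(\sqrt{\log n})}$ messages'') but then incorrectly assert this ``is precisely the per-vertex representative bound claimed''; it is not. This matters downstream: the representative multiplicity feeds into the degree blow-up of $H_i^\star$ and hence the new $\psi'$ for recursive calls in \cref{sect-routing-outline}, and a $2^{O(\sqrt{\log n})}$ bound here would wreck the recursion.

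The paper closes this gap with an ingredient you are missing entirely: after getting all messages into the correct $U_i$ with a crude load of $M_2 = \poly(k,\Delta\psi^{-1})\cdot 2^{O(\sqrt{\log n})}$, it runs the deterministic load-balancing algorithm of \cref{lem-loadbal-main} inside each expander $H_i$, which reduces the maximum load to the average plus $O(\phiemb^{-1}\Delta(H_i)^2\log n) = \poly(\phiemb^{-1},\log n)$ --- exactly the bound claimed. The paper's overall structure also differs from yours: it first applies \cref{lem-cut-match-det-noleftover} once \emph{globally} on $G$ (with $S$ equal to all leftovers and $T = \bigcup_i U_i$, leaving no residual at all), then uses the already-established inter-part links $U_i \rightsquigarrow U_j$ from \cref{lem-comm-links} over all $O(k^2)$ pairs to move each message to its own $U_i$, and only then load-balances. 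This sidesteps your step (3), which as written is circular: you propose to ``route the unsaturated remainder via the existing intra-part communication primitives to vertices of $U_i$'' before any such primitive exists --- that primitive is precisely what the lemma is constructing.
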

\begin{proof}
We only focus on the case of transmitting messages from each leftover vertices $v \in V_i \setminus U_i$ to a vertex $v^\star \in U_i$. 
The reverse direction can be done by re-using the communication paths used in the forward direction.
The routing algorithm  has three parts. 

The first part is the routing  \[V \setminus (U_1 \cup U_2 \cup \cdots  \cup U_k) \rightsquigarrow U_1 \cup U_2 \cup \cdots \cup U_k.\]
 This is done using \cref{lem-cut-match-det-noleftover}, which costs 
$T_1 = \poly\left(\Delta\psi^{-1}\right)\cdot 2^{O\left(\sqrt{\log n}\right)}$ rounds. After the routing, the number of messages at each vertex in $U_1 \cup U_2 \cup \cdots \cup U_k$ is at most $M_1 = \poly\left(\Delta\psi^{-1}\right)\cdot 2^{O\left(\sqrt{\log n}\right)}$.

The second part is the routing  \[U_i \rightsquigarrow U_j, \ \ \ \text{for each} \ \ i, j \in [k],\]
as we would like to route the message originally from each $v \in V_i \setminus U_i$ to some vertex in $U_i$.
This is done by applying the algorithm of \cref{lem-comm-links} sequentially for all $O(k^2)$ pairs $i, j \in [k]$. 
The round complexity is $T_2 = O(k^2) \cdot \poly\left(k, \Delta\psi^{-1} \right) \cdot 2^{O\left(\sqrt{\log n}\right)} = \poly\left(k, \Delta\psi^{-1} \right) \cdot 2^{O\left(\sqrt{\log n}\right)}$. After the routing, the number of messages at each vertex is at most $M_2 = M_1 \cdot O(k^2) \cdot 2^{O\left(\sqrt{\log n}\right)}
$.

The third part is to do a load balancing to reduce the numbers of messages per vertex from $M_2$ to $M_3 = \poly\left(\phiemb^{-1}, \log n\right)$, so that 
each vertex in $U_i$ serves as a representative for at most $\poly\left(\phiemb^{-1}, \log n\right) =  2^{O\left( \log^{1/3}n \log^{2/3}\log n \right)}$ leftover vertices, for each $1 \leq i \leq k$.
We apply the algorithm of \cref{lem-loadbal-main} on the virtual graph $H_i$ in parallel for all $1 \leq i \leq k$ with parameters $M = M_2$ and $L \leq 1/2$, as $|U_i| \geq 2 |V_i \setminus U_i|$. 
The round complexity of this algorithm is $T_3 = \poly\left(\Delta \psi^{-1}, \log n\right) \cdot O\left( \phiemb^{-1}  M \right) = \poly\left(k, \Delta\psi^{-1} \right) \cdot 2^{O\left(\sqrt{\log n}\right)}$, where  $\poly\left(\Delta \psi^{-1}, \log n\right)$ is the cost of simulating one round of $H_i$ on $G$, and recall that $\Delta(H_i) = O(\log n)$. 
After the routing, the number of messages in each vertex is at most $M_3 = L + O\left(\phiemb^{-1} \left( \Delta(H_i)\right)^2 \log n\right) = \poly\left(\phiemb^{-1}, \log n\right)$ by \cref{lem-loadbal-main}.
\end{proof}

\subsection{Routing the Messages}\label{sect-routing-outline}
We are in a position to describe the entire routing algorithm.

\paragraph{Preprocessing step.} Run the graph partitioning algorithm as described in \cref{sect-rout-partition} to obtain a partition $\VV$, and then apply $\cutmatch{\phicut}{\phiemb}$ to embed an expander $H_i$ to a subset $U_i \subseteq V_i$ of each part $V_i \in \VV$. Apply the algorithm of \cref{lem-msg-dest} to reset the destination range of each message $m \in \MM$. The overall round complexity of the preprocessing step is \[\poly\left(k, \Delta \psi^{-1}, \log N\right)
 +  \left(\Delta  \psi^{-1} \right)^{O(1)} \cdot 2^{O\left(\log^{2/3}n \log^{1/3}\log n   \right)}, \]
as we note that $D = \poly\left(\Delta\psi^{-1}, \log n\right)$ and $\log n \leq \log N$. 

\paragraph{Sending the messages between parts.} 
After the above preprocessing step, we assign a representative $v^\star$ for each leftover vertex $v$ and route all the messages at $v$ to $v^\star$ using \cref{lem-comm-links-leftover}. Then we route the messages between $U_1, U_2, \ldots, U_k$ so that each message $m \in \MM$ goes to the part $U_i$ with $[L_m, U_m] \in \{ \ID(v) \ | \ v \in V_i \}$. This is done using the routing $U_i \rightsquigarrow U_j$, for each $i, j \in [k]$ by applying the algorithm of \cref{lem-comm-links} sequentially for all $O(k^2)$ pairs $i, j \in [k]$. The overall round complexity of this step is
\[O(\Ls) \cdot \poly\left(k, \Delta\psi^{-1} \right) \cdot 2^{O\left(\sqrt{\log n}\right)},\]
and each vertex $v \in U_1 \cup U_2 \cup \cdots U_k$ is the destination of at most 
\[M' = O(\Ls) \cdot \poly\left(k  \right) \cdot 2^{O\left(\sqrt{\log n}\right)}\]
messages.

\paragraph{Preparation for recursive calls.} 
Several things need to be done before starting the recursive calls. Specifically, we need to do the following two things.
\begin{enumerate}
    \item For each $1 \leq i \leq k$, each $u \in U_i$ needs to locally simulate each leftover vertex $v$ such that $u = v^\star$ is the representative of $v$. 
    \item Balance the number of messages at each vertex.
\end{enumerate}

To handle the leftover vertices, we  modify the virtual subgraph $H_i$ as follows. For each $u \in U_i$ served as representative for $s = \poly\left(\phiemb^{-1}, \log n\right)$ leftover vertices, replace $u$ by an arbitrary $(s+1)$-vertex bounded-degree graph. These $s$ new virtual vertices have the same IDs as the other $s$ leftover vertices whose representative is $u$. Denote the resulting graph by $H_i^\star$. Now the set of IDs in $H_i^\star$ is identical to the set of IDs in $V_i$. This modification  worsens the sparsity of $H_i$ by a factor of at most $\poly\left(\phiemb^{-1}, \log n\right)$. This is not very bad, as we still have $\Phiv(H_i^\star) \geq  \Phiv(H_i) / \poly\left(\phiemb^{-1}, \log n\right)  =  \phiemb / \poly\left(\phiemb^{-1}, \log n\right) = 1 / 2^{O\left(\log^{2/3}n \log^{1/3}\log n   \right)}$. Note that this modification does not change the congestion and dilation of the simultaneous embedding.

Next, we apply the load balancing algorithm of \cref{lem-loadbal-main} on each graph $H_i^\ast$ to balance the load at each vertex. This allows us to reduce the maximum number of messages per vertex from the above $M'$ to 
\[M'' = \Lt + \poly\left(\phiemb^{-1}, \log n\right) = \Lt + 2^{O\left(\log^{2/3}n \log^{1/3}\log n   \right)},\]
and the round complexity is
\[\poly\left(\Delta \psi^{-1}, \log n\right) \cdot O\left( \phiemb^{-1},  {M'} \right) = \poly\left(\Ls, k, \Delta\psi^{-1} \right) \cdot 2^{O\left(\sqrt{\log n}\right)},\]
where $\poly\left(\Delta \psi^{-1}, \log n\right)$ is the overhead of simulating $H_1^\ast, H_2^\ast, \ldots, H_{k}^\ast$ in $G$ in parallel.



\paragraph{Recursive calls.}
Now we can route the messages in $H_i^\ast$ to their destinations by a recursive call on $H_i^\ast$, in parallel for each $1 \leq i \leq k$.
We have the following parameters for this recursive call: 
\begin{align*}
  n' &= \lceil n/k \rceil, \\
  N' &= N,\\
  \Delta' &= O(\log n),\\
  \Ls' &= M'' = \Lt + \poly\left(\phiemb^{-1}, \log n\right) = \Lt + 2^{O\left(\log^{2/3}n \log^{1/3}\log n   \right)},\\
  \Lt' &= O(\Lt),\\
  \psi' &= 1 / \poly\left(\phiemb^{-1}, \log n\right) = 1 / 2^{O\left(\log^{2/3}n \log^{1/3}\log n   \right)},\\
  \Lf' &= \Lf,
\end{align*}
and each round in the parallel recursive calls can be simulated in $ \poly\left(\phicut^{-1}, \log n\right)$ rounds in the current graph $G$.

\paragraph{Postprocessing.}
After the recursive call, we just need to route the messages whose destination is a leftover vertex $v \in V_i \setminus U_i$ from its representative $v^\ast$ to $v$ using \cref{lem-comm-links-leftover}, and then we are done. This routing costs
\[O(\Lf) \cdot \poly\left(k, \Delta\psi^{-1} \right) \cdot 2^{O\left(\sqrt{\log n}\right)}\]
rounds.

\subsection{Round Complexity Analysis}

We are now in a position to analyze the round complexity of the routing algorithm and determine what value of $\Lf$ we can use.

We denote $\Troute(n, \psi, \Lt, \Lf)$ to denote the round complexity for solving the routing problem with these parameters with $\Delta = O(\log n)$  and $\Ls = O(\Lt) + \poly\left(\psi^{-1}, \log n\right)$.
The reason that we can have $\Ls = O(\Lt) + \poly\left(\psi^{-1}, \log n\right)$
is that we always do a load-balancing before starting applying the routing algorithm, see the part of preparation for recursive calls in \cref{sect-routing-outline}.
The parameter $N = \poly(n)$ is omitted as it is the same for all recursive calls. Remember that at the top level of recursion, we have $\Lt = O(L)$.


In view of the calculation in \cref{sect-routing-outline}, we have
\begin{align*}
&\Troute(n, \psi, \Lt, \Lf)\\
 & \leq \poly\left(\Lt, \Lf, \psi^{-1}, k, \log N \right) \cdot 2^{O\left(\log^{2/3}n \log^{1/3}\log n   \right)}\\
& \  \  +
 \poly\left(\psi^{-1}, \log n \right) \cdot
 \Troute\left( O\left(\frac{n}{k}\right), \frac{1}{  2^{O\left( \log^{1/3} n \log^{2/3} \log n \right)}}, O(\Lt), \Lf \right).
\end{align*}

We fix \[k = 2^{O\left( \log^{2/3} n \log^{1/3} \log n \right)}\] in all recursive calls, where $n$ is the number of vertices at the top level of recursion. Then it is clear that the depth of recursion is \[d = \log_k n = O\left(\frac{\log^{1/3} n }{\log^{1/3} \log n}\right).\]
At the bottom level of recursion with $n = O(1)$, we can solve the routing task by brute force and have $\Lf = O(\Lt)$. As $\Lt$ increases by at most a constant factor in each level of recursion, we always have $\Lt = O(L) \cdot 2^{O(d)} = O(L) \cdot  2^{O\left( \log^{1/3} n \log^{-1/3} \log n \right)}$, as we have $\Lt = O(L)$ at the top level of recursion. Thus, we can set $\Lf = O(L) \cdot  2^{O\left( \log^{1/3} n \log^{-1/3} \log n \right)}$ in all recursive calls.

Note that we have $\psi = 1 /  2^{O\left( \log^{1/3} \log^{2/3} \log n \right)}$ for all recursive calls except the top level. In these recursive calls, all of $\Lt, \Lf, \psi^{-1}, k, \log N$ are upper bounded by $O(L) \cdot  \poly\left(\psi^{-1}\right) \cdot 2^{O\left( \log^{2/3} \log^{1/3} \log n \right)}$. Therefore, we have
\begin{align*}
&\Troute\left(n, \psi, O(L), O(L) \cdot  2^{O\left( \log^{1/3} n \log^{-1/3} \log n \right)}\right)\\
&=    O(L) \cdot  \poly\left(\psi^{-1}\right) \cdot 2^{O\left( \log^{2/3} \log^{1/3} \log n \right)}  \cdot 2^{O(d) \cdot O\left( \log^{1/3} n \log^{2/3} \log n \right)}\\
& = O(L) \cdot  \poly\left(\psi^{-1}\right) \cdot  2^{O\left( \log^{2/3} n \log^{1/3} \log n \right)}.
\end{align*}
This completes the proof of \cref{thm-routing-bounded-deg}.
\section{Derandomization}\label{sect-derand}

In this section, we present two simple applications of our results in derandomizing distributed graph algorithms.

\subsection{Triangle Finding}
In this paper, we consider the following variants of the distributed triangle finding problems. We say that a vertex $v$ found a triangle $\{x,y,z\}$ if $v$ learned the three edges $\{x,y\}$, $\{y,z\}$, and $\{x,z\}$.
\begin{description}
\item[Triangle detection:] If the graph contains at least one triangle, then at least one vertex finds a triangle.
\item[Triangle counting:] Each vertex $v$ outputs a number $t_v$ such that $\sum_{v\in V}t_v$ equals the number of triangles in the graph.
\item[Triangle enumeration:] Each triangle in the graph is found by at least one vertex.
\end{description}
We prove the following theorem.

\Rtriangle*

Our triangle finding algorithm follows the high-level idea of~\cite{ChangPZ19}. We start with an expander decomposition, and consider a threshold $d$. For vertices with degree at most $d$, we can apply the trivial $O(d)$-round triangle listing algorithm, For the remaining high-degree vertices, we simulate a known $\CLIQUE$ algorithm for triangle finding on each high-conductance component of the expander decomposition. 

An   $(\epsilon,\phi)$-expander decomposition of a graph $G=(V,E)$
is a partition of its edges $E=E_{1}\cup E_2 \cup \cdots\cup E_{k} \cup \Er$.  We write $G_i = (V_i, E_i) = G[E_i]$ to denote the subgraph induced by $E_i$. Consider the following sets.
\begin{align*}
 W_i &= \{ \,  v \in V_i \ | \  \deg_{E_i}(v) \geq \deg_{E \setminus E_i}(v) \, \}\\
 E_i^+ &= \{ \,  e=\{u,v\} \in E \ | \ (u \in W_i) \vee (v \in W_i) \vee (e \in E_i) \, \}\\
 E_i^- &= \{ \,  e=\{u,v\}  \in E_i \ | \ \{u,v\} \subseteq W_i \, \}
\end{align*}

In other words, $W_i$ is a subset of $V_i$ containing vertices whose majority of incident edges are in $E_i$,  $E_i^+$ is the set of all edges that are incident to a vertex in $W_i$ or belong to $E_i$, $E_i^-$ is the set of edges in $E_i$ whose both endpoints are contained in $W_i$.

The following lemma allows us to focus triangles with at least one edge in $E_1^-,  E_2^-, \ldots, E_k^-$, and then recurse on the remaining edges, and the depth of the recursion is at most $O(\log n)$.

\begin{lemma}[Number of remaining edges is small]
If $\epsilon \leq  1/6$, then $|E_1^- \cup E_2^- \cup \cdots \cup E_k^-| \geq |E|/2$.
\end{lemma}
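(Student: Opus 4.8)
The plan is to bound from above the number of edges that belong to \emph{no} set $E_i^-$. Since the clusters $E_1,\dots,E_k$ are pairwise disjoint subsets of $E$ and $E_i^-\subseteq E_i$, the sets $E_i^-$ are themselves pairwise disjoint, so
\[
|E_1^-\cup\cdots\cup E_k^-| \;=\; \sum_{i=1}^k |E_i^-| \;=\; \bigl(|E|-|\Er|\bigr) - \sum_{i=1}^k |E_i\setminus E_i^-| .
\]
Hence it suffices to prove $\sum_i |E_i\setminus E_i^-| < 2\epsilon|E|$: combined with the defining bound $|\Er|\le\epsilon|E|$ of the expander decomposition, this forces $|E_1^-\cup\cdots\cup E_k^-| > (1-3\epsilon)|E| \ge |E|/2$ as soon as $\epsilon\le 1/6$.

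The one structural fact I would invoke is that the vertex sets $V_1,\dots,V_k$ of the clusters are pairwise disjoint, and every edge of $G$ incident to a vertex $v\in V_i$ lies either in $E_i$ or in $\Er$ — it cannot lie in $E_j$ for $j\ne i$, since that would place $v$ in $V_j$ as well, contradicting disjointness. Therefore $\deg_{E\setminus E_i}(v)=\deg_{\Er}(v)$ for every $v\in V_i$, so the condition defining $W_i$ becomes: $v\in V_i\setminus W_i$ if and only if $\deg_{E_i}(v)<\deg_{\Er}(v)$.

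Now every edge of $E_i\setminus E_i^-$ is an edge of $E_i$ with at least one endpoint in $V_i\setminus W_i$, so $|E_i\setminus E_i^-|\le \sum_{v\in V_i\setminus W_i}\deg_{E_i}(v) < \sum_{v\in V_i\setminus W_i}\deg_{\Er}(v)$ by the equivalence above. Summing over $i$ and using that the $V_i$ are disjoint subsets of $V$,
\[
\sum_{i=1}^k |E_i\setminus E_i^-| \;<\; \sum_{i=1}^k \sum_{v\in V_i\setminus W_i}\deg_{\Er}(v) \;\le\; \sum_{v\in V}\deg_{\Er}(v) \;=\; 2|\Er| \;\le\; 2\epsilon|E|,
\]
which is exactly the bound needed; substituting into the identity of the first paragraph finishes the argument.

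I do not anticipate a genuine obstacle here. The only step needing care is the claim $\deg_{E\setminus E_i}(v)=\deg_{\Er}(v)$ for $v\in V_i$, i.e., that no edge leaves a cluster except into the removed set $\Er$ — this is immediate from the vertex-disjointness of the clusters $G[E_1],\dots,G[E_k]$. The boundary case $\epsilon=1/6$ is also fine, since the chain of inequalities above is strict, giving $|E_1^-\cup\cdots\cup E_k^-|>(1-3\epsilon)|E|\ge |E|/2$.
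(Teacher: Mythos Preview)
Your proof is correct and follows essentially the same approach as the paper: both bound $\sum_i |E_i\setminus E_i^-|$ by $2|\Er|$ via the observation that each edge of $E_i\setminus E_i^-$ touches a vertex $v\in V_i\setminus W_i$, which in turn has more $\Er$-edges than $E_i$-edges, and then sum using vertex-disjointness of the clusters. Your write-up is in fact more detailed than the paper's, which simply asserts that $|E_i\setminus E_i^-|$ is at most the number of $\Er$-edges incident to $V_i$ and that each $\Er$-edge is incident to at most two clusters. (One tiny nitpick: your strict inequality $\sum_{v\in V_i\setminus W_i}\deg_{E_i}(v) < \sum_{v\in V_i\setminus W_i}\deg_{\Er}(v)$ becomes $0<0$ when $V_i\setminus W_i=\emptyset$ for all $i$, but in that degenerate case $E_i^-=E_i$ and the conclusion is immediate anyway; the non-strict chain already suffices for the lemma as stated.)
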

\begin{proof}
Observe that $|E_i \setminus E_i^-|$ must be at most the number of edges in $\Er$ incident to $V_i$. Since each edge $e \in \Er$ can be incident to at most two parts $V_i, V_j$, we have $|E| - |\Er| - |E_1^- \cup E_2^- \cup \cdots \cup E_k^-| = \sum_{1 \leq i \leq k} |E_i \setminus E_i^-| \leq 2|\Er|$, and so $|E_1^- \cup E_2^- \cup \cdots \cup E_k^-| \geq |E| - 3|\Er| \geq (1 - 3 \epsilon) |E| \geq |E|/2$.
\end{proof}

Note that for any triangle with at least one edge in $E_i^-$, all its three edges must be completely within $E_i^+$. The following lemma shows that $\Phie(G[E_i^+])$ is high.

\begin{lemma}[{$G[E_i^+]$ has high conductance}]
$\Phie(G[E_i^+]) \geq \phi/4$.
\end{lemma}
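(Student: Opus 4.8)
The plan is to fix the cluster index $i$, write $H = G[E_i^+]$ and let $V(H)$ be its vertex set, and take an arbitrary cut $S$ of $H$ with $\emptyset \neq S \neq V(H)$; by symmetry assume $\vol_H(S) \le \vol_H(V(H))/2$, so it suffices to prove $|\partial_H(S)| \ge (\phi/4)\,\vol_H(S)$. The bridge back to $G_i = G[E_i]$ is this: writing $A = S \cap V_i$, every edge of $E_i$ joining $A$ to $V_i \setminus A$ lies in $E_i^+$ (as $E_i \subseteq E_i^+$) and has exactly one endpoint in $S$ (the endpoint in $A \subseteq S$; the other lies in $V_i \setminus S$). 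Hence $|\partial_H(S)| \ge |\partial_{G_i}(A)|$, and when $\emptyset \neq A \neq V_i$ the hypothesis $\Phie(G_i) \ge \phi$ gives $|\partial_H(S)| \ge \phi\cdot\min\{\vol_{G_i}(A),\,\vol_{G_i}(V_i\setminus A)\}$. The two degenerate cases $A=\emptyset$ and $A=V_i$ I would dispose of at the end; in both of them every edge incident to the smaller side of the cut is already a cut edge, so the bound is trivial.

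The core of the argument is the volume bound
\[
\vol_H(T) \;\le\; 3\,\vol_{G_i}(T \cap V_i) \;+\; |\partial_H(T)| \qquad\text{for every } T \subseteq V(H),
\]
call it $(\star)$. To prove $(\star)$ I would split $T$ into $T\cap W_i$, $(T\cap V_i)\setminus W_i$, and $T\setminus V_i$ and bound $H$-degrees piecewise. For $v \in W_i$, $E_i^+$ contains every edge at $v$, so $\deg_H(v) = \deg_G(v) = \deg_{E_i}(v)+\deg_{E\setminus E_i}(v) \le 2\deg_{E_i}(v)$ by the defining inequality of $W_i$; this part contributes at most $2\vol_{G_i}(T\cap W_i)$. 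For $v\in(T\cap V_i)\setminus W_i$ every incident $H$-edge is either in $E_i$ or joins $v$ to $W_i$; the $E_i$-edges contribute at most $\vol_{G_i}((T\cap V_i)\setminus W_i)$. For $v \in T\setminus V_i$ every incident $H$-edge joins $v$ to $W_i$. What remains to account for are the $H$-edges incident to $W_i$ not yet counted — those leaving $(T\cap V_i)\setminus W_i$ or $T\setminus V_i$ and landing in $W_i$. Splitting them by whether their $W_i$-endpoint lies in $T$: the ones whose $W_i$-endpoint is outside $T$ are distinct cut edges of $T$, so there are at most $|\partial_H(T)|$ of them; the ones whose $W_i$-endpoint lies in $W_i\cap T$ are all edges at $W_i\cap T$ not in $E_i$, and using $\deg_{E\setminus E_i}(w)\le\deg_{E_i}(w)$ for $w\in W_i$ their number is at most $\vol_{G_i}(W_i\cap T)$. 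Summing the four bounds and using $W_i\cap T \subseteq T\cap V_i$ yields $(\star)$.

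It then remains to combine $(\star)$ with the cut bound by a short case split. If $\vol_{G_i}(A)\le\vol_{G_i}(V_i)/2$, then $\vol_{G_i}(A)\le|\partial_H(S)|/\phi$, so $(\star)$ with $T=S$ gives $\vol_H(S)\le 3|\partial_H(S)|/\phi + |\partial_H(S)| \le 4|\partial_H(S)|/\phi$ (using $\phi\le 1$). If $\vol_{G_i}(A)>\vol_{G_i}(V_i)/2$, then $\vol_{G_i}(V_i\setminus A)\le|\partial_H(S)|/\phi$, so $(\star)$ with $T = V(H)\setminus S$ gives $\vol_H(V(H)\setminus S)\le 4|\partial_H(S)|/\phi$, and since $\vol_H(S)\le\vol_H(V(H)\setminus S)$ the same bound holds for $\vol_H(S)$. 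In every case $|\partial_H(S)|\ge(\phi/4)\vol_H(S)$, proving $\Phie(H)\ge\phi/4$.

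I expect the main obstacle to be establishing $(\star)$ with exactly the constant $3$ (anything larger degrades the $\phi/4$ in the statement). The delicate points are that the vertices of $V(H)$ lying outside $V_i$ are not \emph{free} volume — each attaches only to $W_i$, and its edges are charged either to $\partial_H(T)$ or, through the balancedness condition defining $W_i$, to $\vol_{G_i}(W_i\cap T)$ — and that every edge incident to $W_i\cap T$ but not in $E_i$ must be charged exactly once, not against both of its endpoints. Once that bookkeeping is done carefully, the rest is routine.
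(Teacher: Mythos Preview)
Your proof is correct and follows the same core idea as the paper: bound $\vol_{G[E_i^+]}$ in terms of $\vol_{G_i}$ by observing that every edge of $E_i^+\setminus E_i$ touches $W_i$ and can be charged to $\vol_{G_i}$ through the defining inequality $\deg_{E\setminus E_i}(w)\le\deg_{E_i}(w)$ for $w\in W_i$. The packaging differs slightly: the paper assumes $|\partial_H(C)|\le\tfrac12\vol_H(C)$ (otherwise the conductance bound is immediate), introduces counts $a,b,c,d$ for the four edge types, and derives $\vol_H(C)\le 4\vol_{G_i}(C\cap V_i)$ algebraically, whereas you prove the unconditional inequality $(\star)$ and then do an explicit case split on which side of $A$ is smaller in $G_i$. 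Your version is arguably cleaner in that $(\star)$ separates the structural volume bound from the conductance hypothesis, and you handle the degenerate cases $A=\emptyset$, $A=V_i$ and the which-side-is-smaller issue explicitly, which the paper leaves implicit.
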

\begin{proof}
Recall that $G[E_i] = (V_i, E_i)$ has $\Phie(G[E_i]) \geq \phi$.
We write $G[E_i^+] = (V_i^+, E_i^+)$. To prove the lemma, it suffices to show that for any cut $C \subseteq V_i^+$ of $G[E_i^+]$ with $\Phie_{G[E_i]}(C) \leq 1/2$, we always have $C \cap V_i \neq \emptyset$ and
\[
\frac{|\partial_{G[E_i^+]}(C)|}{\vol_{G[E_i^+]}(C)}
\geq \frac{1}{4} \cdot  \frac{|\partial_{G[E_i]}(C \cap V_i)|}{\vol_{G[E_i]}(C \cap V_i)}.
\]
Since $E_i \subseteq E_i^+$, we  have $|\partial_{G[E_i^+]}(C)| \geq |\partial_{G[E_i]}(C)| \geq |\partial_{G[E_i]}(C \cap V_i)|$.
Therefore, we only need to show that 
\[
\vol_{G[E_i^+]}(C) \leq 4 \cdot  \vol_{G[E_i]}(C \cap V_i).
\]
We define the following four numbers.
\begin{align*}
    a &= \left|\{ \, e =\{u,v\} \in E_i \ | \  u \in C, v \notin C \, \}\right| \\
    b &= |\{ \, e =\{u,v\} \in E_i \ | \ \{u,v\} \subseteq C \, \}| \\
    c &= |\{ \, e =\{u,v\} \in E_i^+ \setminus E_i \ | \ u \in C, v \notin C \, \}|  \\
    d &= |\{ \, e =\{u,v\} \in E_i^+ \setminus E_i \ | \ \{u,v\} \subseteq C \, \}|  
\end{align*}
It is clear that $a+2b = \vol_{G[E_i]}(C \cap V_i)$ and $a+2b+c+2d = \vol_{G[E_i^+]}(C)$.

Since  $\Phie_{G[E_i]}(C) \leq 1/2$, we have 
$2a+2c = 2\cdot |\partial_{G[E_i^+]}(C)| \leq \vol_{G[E_i^+]}(C) = a+2b+c+2d$.
Observe that all edges in $\{ \, e =\{u,v\} \in E_i^+ \setminus E_i \ | \ \{u,v\} \subseteq C \, \}$ are outside of $E_i$ and are incident to $C \cap W_i \subseteq C \cap V_i$. In view of the definition of $W_i$,   the size $d$ of this set $\{ \, e =\{u,v\} \in E_i^+ \setminus E_i \ | \ \{u,v\} \subseteq C \, \}$ is at most $a+b$, since $a+b$ is the total number of edges in $E_i$ incident to $C \cap V_i$. 
Using these two inequalities $2a+2c \leq a+2b+c+2d$ and $d \leq a+b$, we have
\begin{align*}
    c &\leq -a +2b+2d  & 2a+2c \leq a+2b+c+2d\\
    c+2d &\leq -a +2b+4d  \\
    c+2d &\leq 3a +6b  & d \leq a+b\\
    a+2b+c+2d &\leq 4a+8b
\end{align*}
Therefore, $\vol_{G[E_i^+]}(C) \leq 4 \cdot \vol_{G[E_i]}(C \cap V_i)$, as required.
\end{proof}

In view of the above lemma, we are able to apply the routing algorithm of \cref{thm-routing-general} on $G[E_i^+] = (V_i^+, E_i^+)$. Note that each edge $e \in E$ belongs to at most two sets $E_i^+$ and $E_j^+$, and so congestion of processing $G[E_i^+]$ for $1 \leq i \leq k$ in parallel is not an issue. We deal with low-degree vertices and high-degree vertices separately.

\paragraph{Low-degree vertices.} Consider the $S_i = \{ v \in V_i \ | \  \deg_{G[E_i^+]}(v) \leq d \}$, where $d$ is a threshold to be determined. It is clear that by having all vertices in $v \in S_i$ sending  $\{ \ID(u) \ | \ \{u,v\} \in E_i^+ \}$ to all its neighbors, we are able to list all triangles in $G[E_i^+]$ involving at least one edge incident to $S_i$, in the sense that any such triangle is found by some vertex in $G[E_i^+]$. This algorithm takes $O(d)$ rounds deterministically.
Furthermore, we can additionally require that each triangle is listed by exactly one vertex by introducing a tie breaking mechanism. For a triangle $\{x,y,z\}$ in $G[E_i^+]$  with  $\{x,y,z\} \cap S_i \neq \emptyset$, we let the vertex in $\{x,y,z\} \cap S_i$ that has the largest ID to list the triangle $\{x,y,z\}$.

\paragraph{High-degree vertices.}
Now, we let $G_i^\ast=(V_i^\ast, E_i^\ast)$ to denote the subgraph of $G[E_i^+]$ induced by $V_i^\ast = V_i^+ \setminus S_i$. All triangles in $G[E_i^+]$ that are not listed yet belong to this graph $G_i^\ast$.  We will handle these triangles by simulating a known $\CLIQUE$ algorithm. Specifically, triangle enumeration can be solved in $O(n^{1/3} / \log n)$ rounds~\cite{DolevLP12} deterministically in  $\CLIQUE$, and triangle detection and counting can be solved in $n^{1 - 2\omega^{-1} + o(1)} < O(n^{0.158})$ rounds~\cite{Censor2016} deterministically in  $\CLIQUE$, where $\omega < 0.2373$ is the exponent for the complexity of matrix multiplication.

Let $n' = |V_i^\ast| \leq n$. To simulate  $\CLIQUE$ algorithms on $G_i^\ast$, we first need to re-assign the IDs of vertices in $V_i^\ast$ to $\{1, 2, \ldots, n'\}$.  ID re-assignment can be done straightforwardly in $O(D) = O(\phi^{-2} \log n)$ rounds deterministically, see~\cite[Lemma 4.1]{ChangPZ19}.

To simulate one round of  $\CLIQUE$ on $G_i^\ast$, 
we need to be able to let each vertex in $v \in V_i^\ast$ to route a separate $O(\log n')$-bit message to all other vertices in $V_i^\ast$. Since each vertex   $v \in V_i^\ast$ has degree  $\deg_{G_i^+}(v) > d$, one invocation of the routing algorithm of  \cref{thm-routing-general} on $G[E_i^+]$ with $L = O(n/d)$ allows each vertex 
  $v \in V_i^\ast$ to send and receive $O(n)$ messages, which is enough to  simulate one round of $\CLIQUE$. Therefore, the overhead of simulation   is $O(n/d)\cdot \poly\left(\phi^{-1}\right) \cdot n^{o(1)}$.

\paragraph{Round complexity analysis.}
We set $\epsilon = 1/6$ and $\phi =  1 / 2^{O(\sqrt{\log n \log \log n})}$. An $(\epsilon,\phi)$-expander decomposition of a graph $G=(V,E)$ can be found in $2^{O(\sqrt{\log n \log \log n})}$ rounds deterministically using \cref{thm-det-expander-decomp}.
Note that the round complexity for finding an $(\epsilon, \phi)$-expander decomposition is negligible comparing with other costs.

For triangle enumeration, the overall round complexity is $O(d) +  O(n^{1/3} /\log n) \cdot O(n/d) \cdot n^{o(1)}$ by simulating the  $O(n^{1/3} / \log n)$-round $\CLIQUE$ algorithm of~\cite{DolevLP12}.
Setting $d = n^{2/3}$, we obtain the overall round complexity  $n^{(2/3) +o(1)}$. 

For triangle counting and detection,  the overall round complexity is $O(d) +  n^{1 - 2\omega^{-1} + o(1)} \cdot O(n/d) \cdot n^{o(1)}$ by simulating the  $n^{1 - 2\omega^{-1} + o(1)}$-round $\CLIQUE$  algorithm of~\cite{Censor2016}.
Setting $d = n^{1 - \omega^{-1}}$, we obtain the overall round complexity 
$n^{1 - \omega^{-1} + o(1)} < O(n^{0.158})$. Hence we conclude the proof of \cref{thm-triangle}.

\paragraph{Isolating a triangle.} We note that the triangle detection algorithm of~\cite{Censor2016} only lets us know whether a triangle exists, but it does not find one explicitly when there is at least one triangle. To be able to find a triangle explicitly, we can apply the following strategy.
Suppose that the algorithm of~\cite{Censor2016} tells us that there is at least one triangle in   $G_i^\ast=(V_i^\ast, E_i^\ast)$. Then we partition the edge set  $E_i^\ast$ into four parts $E^1, E^2, E^3, E^4$ of equal size. We apply the same triangle detection algorithm for the subgraph induced by $E_i^\ast \setminus E^j$, for each $1 \leq j \leq 4$. Note that one of these four edge sets $E_i^\ast \setminus E^j$ must contain a triangle, and the triangle detection algorithm is able to tell us which of them has at least one triangle, and then we recurse on that edge set. It is clear that after $O(\log n)$ iterations, we are able to isolate exactly one triangle of $G_i^\ast$, and this only adds an $O(\log n)$ factor in the round complexity.

\paragraph{Avoiding repeated counting.} To solve the triangle counting problem, we need to avoid counting a triangle more than once. Therefore, when counting triangles in $G_i^\ast=(V_i^\ast, E_i^\ast)$, we need to make sure that we only count the triangles with at least one edge in $E_i^-$. This can be done by first calculating the number $n_1$ of triangles in $G_i^\ast$, and then calculating the number $n_2$ of triangles in the subgraph of $G_i^\ast$ induced by the edges $E_i^\ast \setminus E_i^-$. Then $n_1 - n_2$ equals the number of  triangles in $G_i^\ast$ with at least one edge in $E_i^-$.

\subsection{Minimum Spanning Trees}

We prove the following theorem.

\Rmst*

Our MST algorithm follows the approach of~\cite{GhaffariKS17}, which implements \emph{Boruvka's greedy algorithm}. During the algorithm, we maintain a forest $F$. Initially $F$ is the trivial forest consisting of 1-vertex trees for each vertex $v \in V$. When $F$ contains more than one tree, pick any tree $T$ in the forest, and let $e$ be any smallest weight edge connecting $T$ and $G \setminus T$, and then add this edge $e$ to the forest $F$. Note that in each step, two trees in the forest are merged into one. It is well-known that at the end of this process, we obtain a minimum spanning tree~\cite{GhaffariKS17}.

\paragraph{Distributed implementation.}
The distributed implementation in~\cite{GhaffariKS17} is randomized. Here we consider the following distributed version of Boruvka's greedy algorithm that can be implemented deterministically.
\begin{enumerate}
    \item Suppose currently the forest $F$ consists of the trees $T_1, T_2, \ldots, T_k$. Let $e_i$ be any minimum weight edge connecting $T_i$ and $G \setminus T_i$. 
    \item Let $G^\ast=(V^\ast, E^\ast)$ be a graph defined by $V^\ast = \{ T_1, T_2, \ldots, T_k\}$ and $\{T_i, T_j\} \in E^\ast$ if either $e_i$ or $e_j$ connects $T_i$ and $T_j$. It is straightforward to see that $G^\ast$ is a forest, and $|E^\ast| \geq k/2$.
    \item For each edge  $\{T_i, T_j\} \in E^\ast$, if it is added to $E^\ast$ because of $e_i$, then we orient this edge as $T_i \rightarrow T_j$; if it is  added to $E^\ast$ because of $e_j$, then we orient this edge as $T_j \rightarrow T_i$. If an edge $\{T_i, T_j\}$ can be oriented in both directions, then it is oriented arbitrarily. Define $\indeg(T_i)$ to be the number of edges in $E^\ast$ oriented towards $T_i$.
    \item Find an independent set $I^\ast$ of $G^\ast$ with $\sum_{T_i \in I^\ast} \indeg(T_i) \geq |E^\ast|/3$.
    \item For each $T_i \in I^\ast$ and for each $T_j$ with $T_j \rightarrow T_i$, add $e_j$ to the current forest.
\end{enumerate}

Since $\sum_{T_i \in I^\ast} \indeg(T_i) \geq |E^\ast|/3 \geq k/6$, the number of trees in the forest $F$ is reduced by a factor of $5/6$ in each iteration of the above algorithm. Therefore, $O(\log n)$ iterations suffice to obtain an MST. For the rest of the proof, we focus on the implementation detail of this algorithm, and we will show that $\poly(\log n)$ invocations of the routing algorithm of  \cref{thm-routing-general} with $L=O(1)$ are enough. In subsequent discussion, we write $\tau_0 = \poly\left(\phi^{-1}\right) \cdot n^{o(1)}$ to denote the round complexity of the routing algorithm of  \cref{thm-routing-general} with $L=O(1)$.

\paragraph{Maintenance of low-diameter Steiner trees.} During the algorithm, for each tree $T_i$ in the current forest $F$, we maintain a low-diameter Steiner tree $\Ts_i$ such that the leaf vertices of  $\Ts_i$ are exactly the set of vertices $V_i$ of $T_i$. Note that the diameter of $T_i$ can be very large, and so we need a low-diameter Steiner tree $\Ts_i$ to enable efficient communication between the vertices in $T_i$. 
\begin{itemize}
    \item The Steiner tree $\Ts_i$  is rooted, and denote $r_i$ the root of  $\Ts_i$.
    \item Each vertex $v$ in $\Ts_i$ has at most two children, and $v$ knows the IDs of its children. Each vertex $v \neq r_i$ in $\Ts_i$ also knows the ID of its parent. 
    \item We say that a vertex $v$ in $\Ts_i$ is in layer $j$ if $\dist_{\Ts_i}(v, r_i) = j$. Define $\ell_i = \max_{v \in \Ts_i} \dist(v, r_i)$ as highest layer number. We assume that each vertex in  $\Ts_i$ knows its layer number  and $\ell_i$.
    \item Each edge in   $\Ts_i$ does not need to be an edge in the underlying graph $G$. We use the routing algorithm of \cref{thm-routing-general} for vertices in $\Ts_i$ to communicate.
    \item We require that  $\Ts_i$  contains only the vertices in $V_i$. We allow each vertex $v \in V_i$ to correspond to multiple vertices in $\Ts_i$, but they have to belong to different layers.
\end{itemize}

Given this implementation, each tree $T_i$ can finds its minimum weight  outgoing edge $e_i$ in $O(\ell_i \tau_0)$ rounds by a bottom-up information gathering. In the end, all vertices in $V_i$ knows $e_i$.

Intuitively, if the algorithm under consideration is sufficiently simple, then one round of $G^\ast$ can be simulated by $O(\ell^\ast \tau_0)$ rounds in $G$, where $\ell^\ast = \max_{1 \leq i \leq k} \ell_i$, and $k$ is the number of trees $T_1, T_2, \ldots, T_k$ in the current forest $F$.

\begin{lemma}[Finding an independent set]\label{lem-find-indset}
An independent set $I^\ast$ of $G^\ast$ with $\sum_{T_i \in I^\ast} \indeg(T_i) \geq |E^\ast|/3$ can be found in $O(D + \ell^\ast \tau_0 \log n)$ rounds deterministically.
\end{lemma}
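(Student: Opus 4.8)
The plan is to show that the structural task---finding an independent set $I^\ast$ of the forest $G^\ast$ that captures a constant fraction of the oriented edges---reduces to a small constant number of low-diameter-decomposition-type rounds on $G^\ast$, each simulated on $G$ with the stated overhead. First I would recall that $G^\ast$ is a \emph{forest}: in a forest, properly $3$-coloring the vertices is trivial (even $2$-coloring suffices, since forests are bipartite), and among the (at most three, or two) color classes one must carry at least a third of the total in-degree $\sum_i \indeg(T_i) = |E^\ast|$, so picking the best color class as $I^\ast$ immediately gives $\sum_{T_i \in I^\ast}\indeg(T_i) \ge |E^\ast|/3$, and a color class of a proper coloring is by definition independent. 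The remaining work is purely about \emph{computing} such a coloring (or, more directly, a large-weight independent set) of $G^\ast$ distributedly within the claimed round budget.

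The key steps, in order, are: (1) Each tree $T_i$ computes its minimum-weight outgoing edge $e_i$ by bottom-up aggregation on its Steiner tree $\Ts_i$; as noted in the preceding discussion this costs $O(\ell_i \tau_0)$ rounds, and in parallel over all $i$ this is $O(\ell^\ast \tau_0)$ rounds, since the Steiner trees of distinct $T_i$ are vertex-disjoint in the sense required for parallel routing. (2) Orient the edges of $G^\ast$ as described and have each $T_i$ learn $\indeg(T_i)$ and the identities of its (at most few) neighbors in $G^\ast$; this is another constant number of rounds of $G^\ast$-communication, each simulated in $O(\ell^\ast \tau_0)$ rounds of $G$. (3) Since $G^\ast$ has constant maximum degree is \emph{not} guaranteed---a tree $T_i$ can have many neighbors in $G^\ast$---I would instead use a deterministic low-diameter / MIS-type primitive for \emph{bounded-arboricity} graphs (a forest has arboricity $1$), for instance the Rozho\v{n}--Ghaffari-style deterministic decomposition (\cref{thm-low-diam-decomp}) applied to $G^\ast$, which in $\poly\log n$ rounds partitions $V(G^\ast)$ into low-diameter clusters with few inter-cluster edges; within each low-diameter cluster an exact maximum-weight (by $\indeg$) independent set can be gathered and computed at a cluster leader, and the $O(\log n)$-factor congestion and the $\poly\log n$ round count are absorbed into the stated bound. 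Each round of this $G^\ast$-level procedure costs $O(D + \ell^\ast \tau_0)$ rounds on $G$ (the $D$ term because establishing the cluster structure uses the global Steiner tree of $G$), giving the overall $O(D + \ell^\ast \tau_0 \log n)$.

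The main obstacle I anticipate is not the coloring/independent-set combinatorics---that is easy for forests---but rather the \emph{simulation bookkeeping}: making precise that one round of communication in $G^\ast$ (where a ``vertex'' is a whole tree $T_i$ with its own Steiner tree $\Ts_i$, and an ``edge'' $\{T_i,T_j\}$ corresponds to the concrete edge $e_i$ or $e_j$ of $G$) can be carried out in $O(D + \ell^\ast \tau_0)$ rounds of $G$ with the right congestion, using the routing algorithm of \cref{thm-routing-general} with $L=O(1)$ together with bottom-up/top-down sweeps on the $\Ts_i$'s. One has to check that messages destined for a tree $T_j$ can be delivered to a canonical representative of $T_j$ (e.g.\ the root $r_j$) and then broadcast down $\Ts_j$, and that the number of simultaneous routing instances stays $O(1)$ per step so the $\tau_0$ overhead is not inflated. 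Once that simulation lemma is in hand, the proof is a short composition of it with the trivial forest-coloring argument and with \cref{thm-low-diam-decomp} to handle the (possibly large) degrees in $G^\ast$.
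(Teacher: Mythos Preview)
Your initial instinct is exactly the paper's proof: properly $3$-color the forest $G^\ast$, then by pigeonhole one color class $I^\ast$ carries at least $|E^\ast|/3$ of the total in-degree. The paper invokes the deterministic $O(\log n)$-round $3$-coloring of trees of Barenboim--Elkin, simulates it on $G^\ast$ (each $G^\ast$-round costing $O(\ell^\ast\tau_0)$ rounds on $G$), and then spends $O(D)$ rounds to compute $\sum_{T_i\in I_j}\indeg(T_i)$ for $j=1,2,3$ and pick the best class. That is the whole argument.

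Your detour through low-diameter decomposition is both unnecessary and broken. The worry that motivates it---that $G^\ast$ may have unbounded degree---is valid as a fact but irrelevant as an obstacle: $3$-coloring a forest in $O(\log n)$ (indeed $O(\log^\ast n)$) deterministic rounds does \emph{not} require bounded degree. The point is that a forest admits an out-degree-$1$ orientation, and here that orientation is handed to you for free (each $T_i$ points along $e_i$), so Cole--Vishkin-style color reduction applies directly. Your parenthetical ``even $2$-coloring suffices'' is a trap: $2$-coloring a tree requires $\Omega(\text{diameter})$ rounds, which can be $\Omega(n)$, so it is far too slow. As for the LDD route itself: computing a maximum-weight independent set inside each cluster and taking the union does not yield a global independent set, because inter-cluster edges of $G^\ast$ may connect chosen vertices from different clusters; and even ignoring correctness, Rozho\v{n}--Ghaffari costs $\poly\log n$ rounds of $G^\ast$, which when simulated gives $(D+\ell^\ast\tau_0)\cdot\poly\log n$, overshooting the claimed $O(D+\ell^\ast\tau_0\log n)$. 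Drop the LDD step entirely and stick with the $3$-coloring argument you sketched first.
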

\begin{proof}
A proper 3-vertex coloring of a tree can be found in $O(\log n)$ rounds deterministically~\cite{BarenboimE10}. We simulate this algorithm on $G^\ast$, and the simulation takes $O(\ell^\ast \tau_0 \log n)$ rounds on $G$. After that, we go over each color class $I_1$, $I_2$, and $I_3$ to calculate $\sum_{T_i \in I_j} \indeg(T_i)$ for each $1 \leq j \leq 3$ in $O(D)$ rounds. One $I_j$ of these three sets must have $\sum_{T_i \in I_j} \indeg(T_i) \geq |E^\ast|/3$. We set $I^\ast = I_j$.
\end{proof}

Given  $I^\ast$, we know the set of edges that we need to add to $F$, but we still need to update the Steiner trees. We focus on one tree $T_i \in I^\ast$, and denote $\TT_i$ as the set of trees $T_j$ such that $T_j \rightarrow T_i$. We need to merge the Steiner tree $\Ts_i$ of $T_i$ and the Steiner tree $\Ts_j$ of all $T_j \in \TT$ together into a new Steiner tree.

\begin{lemma}[Merging the Steiner trees]\label{lem-merge-trees}
For each $T_i \in I^\ast$ in parallel, we can merge the Steiner tree $\Ts_i$ of $T_i$ and the Steiner trees $\Ts_j$ of all $T_j \in \TT_i$ into a new Steiner tree in $O(\ell^\ast \tau_0 \log n)$ rounds. Moreover, the height of the new Steiner tree is at most $\ell^\ast + O(\log n)$.
\end{lemma}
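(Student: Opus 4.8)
The plan is to design a merging procedure that, for each chosen center tree $T_i \in I^\ast$, builds a single new Steiner tree spanning $V_i \cup \bigcup_{T_j \in \TT_i} V_j$ by hanging all the incoming Steiner trees $\Ts_j$ underneath $\Ts_i$ via their roots. First I would recall that every $T_j$ with $T_j \to T_i$ joined $I^\ast$'s neighbor because of its own outgoing MST edge $e_j = \{x_j, y_j\}$ with $x_j \in V_j$, $y_j \in V_i$; the vertices of $T_j$ already all know $e_j$ from the bottom-up gathering step. So each $\Ts_j$ has a designated ``attachment vertex'' $x_j$ inside it, and a target $y_j$ inside $\Ts_i$. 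The merge should splice $\Ts_j$ into $\Ts_i$ at a point near $y_j$. The subtlety is that in the bullet-list model of Steiner trees used here, each physical vertex $v$ may appear in multiple layers of a Steiner tree, children are capped at two, and the height must stay controlled; so I cannot naively make $r_j$ a child of $y_j$ for all $j$ simultaneously, because $y_j$ might already have two children and many different $T_j$'s might attach at the same $y_j$.

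The key steps, in order: (1) For each $T_i \in I^\ast$, broadcast within $\Ts_i$ that $T_i$ is a center, and have each attachment target $y_j$ in $\Ts_i$ learn the identity $r_j$ of the root of the corresponding $\Ts_j$; this is $O(\ell^\ast \tau_0)$ rounds using the routing primitive of \cref{thm-routing-general} along $\Ts_i$ together with one round of communication along the edges $e_j$. (2) Build, as a side gadget rooted at $r_i$, a balanced binary ``collector'' tree of depth $O(\log n)$ whose leaves are the roots $r_j$ of all the incoming $\Ts_j$ (the number of such $j$ is at most $k \le n$, so depth $O(\log n)$ suffices, and since each $r_j$ participates in exactly one merge there is no congestion across different $T_i$). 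Realizing the edges of this collector gadget only requires vertices to exchange $O(\log n)$-bit names, which costs $O(\ell^\ast \tau_0 \log n)$ rounds via the routing algorithm because the collector edges are virtual and routed through $G$. (3) Attach the old root $r_i$ (now carrying $\Ts_i$ below it, which has height $\le \ell^\ast$) and the collector gadget under a single new root $r_i'$; each $\Ts_j$ hangs off its leaf $r_j$ of the collector, contributing its own height $\le \ell^\ast$. (4) Re-root / re-layer: recompute layer numbers and $\ell$ for the new tree by a top-down pass of depth $O(\ell^\ast + \log n)$, again simulated in $O((\ell^\ast + \log n)\tau_0)$ rounds; ensure the two-children invariant by using the binary structure of the collector and, where a vertex would exceed two children, inserting $O(1)$ copies of itself in consecutive layers (this only adds $O(1)$ to height locally, and can be absorbed into the $O(\log n)$ slack). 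The resulting tree has height at most $(\text{height of }\Ts_i) + O(\log n) + (\text{height of deepest }\Ts_j) \le \ell^\ast + O(\log n) + \ell^\ast$; but since both $\Ts_i$ and every $\Ts_j$ have height $\le \ell^\ast$ and they are stacked \emph{through} the $O(\log n)$-depth collector rather than concatenated linearly, a more careful accounting — hanging $\Ts_j$ from a leaf that sits $O(\log n)$ below $r_i$ and whose subtree has its own height $\le \ell^\ast$ — gives total height $\ell^\ast + O(\log n)$ as claimed, provided we note $\ell^\ast$ is a global bound and the "deep side" is only traversed once.

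The main obstacle I expect is controlling the height additively (getting $\ell^\ast + O(\log n)$ rather than $2\ell^\ast + O(\log n)$) while simultaneously respecting the bounded-children and single-physical-vertex-per-layer constraints, all under the congestion bound of processing every $T_i \in I^\ast$ in parallel. The clean way to handle the height is to observe that we are free to re-root each $\Ts_j$ so that its attachment vertex $x_j$ becomes its root before splicing (re-rooting a Steiner tree of height $h$ yields one of height $O(h)$, but here we only need that the new root is $x_j$ and the height stays $\le \ell^\ast$ because we can instead reuse the original root and route $x_j$-to-$r_j$ communication internally); then the collector of depth $O(\log n)$ bridges all roots to $r_i$, and because $\Ts_i$ itself already has height $\le \ell^\ast$ and each spliced $\Ts_j$ also has height $\le \ell^\ast$, the union has height $\le \ell^\ast + O(\log n)$. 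The congestion is not an issue because the trees $\{T_j\}$ merged into distinct centers are vertex-disjoint, so all the per-center merges use disjoint vertex sets of $G$ and the only shared resource is $G$'s edges during routing, which the routing algorithm of \cref{thm-routing-general} is designed to tolerate with the stated $\tau_0 = \poly(\phi^{-1}) n^{o(1)}$ overhead. Summing the four steps gives round complexity $O(\ell^\ast \tau_0 \log n)$, completing the proof.
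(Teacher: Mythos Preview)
Your high-level picture is right and matches the paper's: attach all the $\Ts_j$'s under a depth-$O(\log n)$ binary ``collector'' so that the final height is $\ell^\ast + O(\log n)$, and pay $\tau_0$ per virtual edge traversal. The place where your argument has a genuine gap is step~(2), ``build a balanced binary collector tree over the roots $r_j$.'' You assert this costs $O(\ell^\ast \tau_0 \log n)$ rounds but never say how the $r_j$'s find each other to form the pairing. They have no direct links and no shared structure except through $\Ts_i$; the naive approach of shipping all $|\TT_i|$ root-IDs to $r_i$ and letting $r_i$ compute the binary tree layout fails, because $|\TT_i|$ can be $\Theta(n)$ and the routing primitive only lets a single vertex absorb $O(\deg)$ messages per $\tau_0$ rounds. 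So the round bound is not justified.

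The paper resolves exactly this step, and the mechanism is the real content of the lemma: instead of building the collector from scratch, it uses the existing Steiner tree $\Ts_i$ as a pairing device. Each leaf $y_j$ of $\Ts_i$ starts with the list of root-IDs $r_j$ attached to it; in one bottom-up sweep over the $\ell_i$ layers, every vertex pairs up the IDs currently in its list (telling each pair of roots to merge their Steiner trees under a new common root), and forwards at most one leftover ID to its parent. Since each internal vertex has at most two children, lists at internal vertices stay $O(1)$-sized, so one sweep costs $O(\ell_i \tau_0)$ rounds and roughly halves the number of surviving trees. Repeating $O(\log n)$ times collapses all of $\TT_i$ into a single tree of height $\ell^\ast + O(\log n)$, after which it is merged with $\Ts_i$. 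This local-pairing-plus-leftover-percolation is precisely what lets you avoid any centralized aggregation; your proposal needs either this idea or an equivalent one to close the gap.
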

\begin{proof}
Recall that $V_i$ denotes the vertex set of $T_i$, and $V_i$ is identical to the leaf vertices of $\Ts_i$. We assume that at the beginning of the algorithm, each leaf $v$ of $\Ts_i$ holds a list $L_v$ indicating the set of all $\ID(r_j)$, for each edge $e_j$ added to the forest that are incident to $v$. Note that for each $T_j \in \TT_i$, $r_j$ belongs to the list $L_v$ for the unique vertex $v \in V_i$ incident to the edge $e_j$. If $v  \in \Ts_i$ it not a leaf, then we assume $L(v) = \emptyset$ initially.

Consider the following algorithm based on a bottom-up traversal of $T_i$. 
From $j = \ell_i$ to $j = 0$, all vertices $v$ at layer $j$ of $\Ts_i$ do the following. It organizes the elements in its set $L_v$ into pairs. If $|L_v|$ is odd, then there will be one leftover element, then $v$ sends it to its parent $u$ to have it added to the list $L_u$, unless $v$ itself is the root $r_i$ of $\Ts_i$. For each pair $(\ID(r_x), \ID(r_y))$,  $v$ inform the two vertices $r_x$ and $r_y$ to ask them to merge $\Ts_x$ and $\Ts_y$ into a new tree $T'$ by  selecting any one of $r' \in \{r_x, r_y\}$ to be the new root, and  adding the two new edges $\{r, r_x\}$ and  $\{r, r_y\}$. Note that we allow a vertex to appear multiple times in a Steiner tree, so long as all of its appearances are in different layers. After processing all pairs, the vertex $v$ resets its set $L_v$ as the set of IDs of the roots of the merged trees. 

It is clear if $|\TT_i|$ is an even number, then all trees are merged, and if  $|\TT_i|$ is an odd number, then all trees except one of them are merged. Therefore, if we continue this process with the new $L$-sets, then we are done merging all trees in $\TT_i$ in $O(\log n)$ iterations. In the end, we merge $T_i$ with the tree resulting from combining all trees in $\TT_i$. Overall, the algorithm costs  $O(\ell_i \log n) = O(\ell^\ast \log n)$ rounds, and the final tree resulting from merging has height at most $\ell^\ast + O(\log n)$, since the number of iterations is $O(\log n)$.
\end{proof}

Since the overall algorithm has $O(\log n)$ iterations, \cref{lem-merge-trees} implies that the height of all the Steiner trees in the algorithm is at most $O(\log^2 n)$. Therefore, \cref{lem-find-indset} costs $O(D +   \tau_0 \log^3 n)$ rounds and \cref{lem-merge-trees} costs $O(\tau_0 \log^3 n)$ rounds in each iteration. Recall that the diameter of a graph with conductance $\phi$ is at most $D = O\left(\phi^{-2} \log n \right)$.
Hence the overall round complexity is $O(D \log n +   \tau_0 \log^4 n) = \poly\left( \phi^{-1} \right) \cdot n^{o(1)}$. We conclude the proof of \cref{thm-mst}. 
\section{Conclusions and Open Questions}\label{sect:conclusion}
In this paper, we give the first subpolynomial-round deterministic distributed algorithms for expander decomposition and routing, and we also give the first polylogarithmic-round randomized distributed algorithms for expander decomposition.

The main obstacle that we overcome is the lack of efficient distributed algorithms for expander trimming and expander pruning.
For the randomized setting, we develop a new technique of extracting an expander from a near-expander resulting from the cut-matching game, without using expander trimming. To use this technique, it is crucial that the embedding of the matchings has small dilation. For the deterministic setting, we carry out the  simultaneous execution of KKOV cut-matching games using the ``non-stop'' style of~\cite{RST14}. This allows us to avoid adding fake edges to the graph as in~\cite{chuzhoy2019deterministic}.

We believe that our end-results and the techniques therein are of interest beyond the $\CONGEST$ model of distributed computing. Since we do not abuse the unlimited local computation power in the definition of  $\CONGEST$, our expander decomposition algorithms can be adapted to PRAM and the \emph{massively parallel computation} (MPC) model~\cite{KSV10} by a straightforward simulation. Note that PRAM algorithms can be simulated in the MPC model efficiently with the same round complexity~\cite{GSZ11,KSV10}.

Our deterministic distributed expander decomposition algorithm implies that an   $(\epsilon,\phi)$-expander decomposition of an $m$-edge $n$-vertex graph 
with  $\phi =  \poly(\epsilon) 2^{-O(\sqrt{\log n \log \log n})}$ can be computed in $O(m) \cdot \poly(\epsilon^{-1}) 2^{O(\sqrt{\log n \log \log n})}$ work and   $\poly(\epsilon^{-1}) 2^{O(\sqrt{\log n \log \log n})}$ depth deterministically in PRAM. This improves the tradeoff between conductance and  time complexity of the  deterministic \emph{sequential} algorithm of~\cite{chuzhoy2019deterministic}, although the expander decomposition of~\cite{chuzhoy2019deterministic} is stronger in that each expander in the decomposition is a \emph{vertex-induced} subgraph.  

\paragraph{Open questions.}
The ultimate goal of this research is to give deterministic polylogarithmic-round distributed algorithms for both expander decomposition and routing.

Currently we only have \emph{randomized} polylogarithmic-round algorithm for expander decomposition, and it is still open whether expander routing can be solved in $\poly(\phi^{-1}, \log n)$ rounds even randomness is allowed. 

A shortcoming of our expander decomposition algorithms  is that each expander in the decomposition is not a vertex-induced subgraph. 
To transform any expander decomposition to an expander decomposition where each expander is a vertex-induced subgraph, it suffices to use  \emph{expander trimming}~\cite{SaranurakW19}. It is an open question whether expander trimming can be solved efficiently in $\CONGEST$.

A key ingredient in our deterministic algorithms is the nearly maximal flow algorithm of~\cite{GPV93}, which results in a small number of leftover vertices. It is an open question whether we can do it without leftover vertices. Specifically, given any two subsets $S \subseteq V$ and $T \subseteq V$, the goal is to find a maximal set of vertex-disjoint paths of length at most $d$. Can we solve this problem deterministically in $\poly(d, \log n)$ rounds?
An affirmative answer to this question will simplify our deterministic expander decomposition and routing quite a bit.
 
A common version of the low-diameter decomposition is a partition 
$V = V_1 \cup V_2 \cup \cdots \cup V_x$ such that the number of inter-cluster edges is at most $\beta|E|$, and the (strong) diameter of $G[V_i]$ is $\poly(\beta^{-1}, \log n)$ for each $1 \leq i \leq x$. Such a decomposition can be computed in $\poly(\beta^{-1}, \log n)$ rounds deterministically in the $\LOCAL$ model, and it is open whether it can also be computed in $\poly(\beta^{-1}, \log n)$ rounds in $\CONGEST$ deterministically~\cite{RozhonG20}. An affirmative answer to this question will also simplify the proofs in this paper as this allows us to get rid of Steiner trees.

Our expander routing algorithm uses the  load balancing algorithm of Ghosh~et~al.~\cite{GhoshLMMPRRTZ99}. To the best of our knowledge, this is the first time this technique is applied to the $\CONGEST$ model of distributed computing.
It will be interesting to see more applications of this technique. In particular, does it give us some kind of distributed local graph clustering like PageRank~\cite{AndersenCL08,spielman2004nearly}?


%
\newpage
\appendix
\section*{\LARGE Appendix}
\vspace{0.5cm}
\section{Communication Primitives of Steiner Trees}\label{sect-steiner}

In this section, we provide some basic communication primitives of Steiner trees. We consider the following setup. Let $G=(V, E)$ be any $n$-vertex graph, and $T$ is a Steiner tree of diameter $D$ whose leaf vertices are $V$. Each $v \in V$ holds a number $t_v$ that can be represented in $O(\log n)$ bits. Note that all lemmas below also work if $T$ is a spanning tree of $V$, as we can pretend that $T$ is a Steiner tree by having each vertex locally simulate a leaf corresponding to itself. 

The algorithm of \cref{lem-basic} is a simple bottom-up information gathering along the Steiner tree. The algorithm of \cref{lem-det-bsearch} is a distributed implementation of the binary search, where each iteration is implemented in $O(D)$ rounds using the Steiner tree.
The algorithm of \cref{lem-rand-bsearch} implements a version of the binary search where the middle element is selected uniform at random, allowing us to handle \emph{arbitrary} numbers.

\begin{lemma}[Information gathering]\label{lem-basic}
Suppose each vertex $v$ holds a number $t_v$.
There is a deterministic algorithm that computes $\sum_{v \in V} t_v$, $\max_{v \in V} t_v$, and $\min_{v \in V} t_v$.
The algorithm terminates in $O(D)$ rounds.
By pipelining, we can solve $k$ instances of this problem in $O(D + k)$ rounds.
\end{lemma}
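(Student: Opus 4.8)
The plan is to implement the textbook \emph{convergecast--then--broadcast} pattern on $T$ and then observe that it pipelines. Throughout I use the standing convention that $T$ is treated as a rooted tree on which a message can be sent along any tree edge at $O(1)$ cost per round in $G$, and (as noted just before the statement) that each value $t_v$ is attached to a distinct leaf, so no value is ever double‑counted.

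First I would root $T$ at an arbitrary fixed vertex $r$; if no root is supplied with $T$, one is obtained by flooding from an arbitrary vertex for $D$ rounds and taking, say, the discovered vertex of minimum $\ID$. During this phase every tree node learns its parent edge and its depth $\mathrm{depth}(v)=\dist_T(v,r)\le D$, at a cost of $O(D)$ rounds. Next comes the convergecast: processing layers from the deepest upward, a node waits until it has received a partial answer from every child, combines these with its own value using the relevant associative operator ($+$, $\max$, or $\min$), and forwards the result to its parent. Each node sends at most one $O(\log n)$‑bit message along its (unique) parent edge per round, and since $T$ has depth at most $D$, after $O(D)$ rounds $r$ holds the global aggregate. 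A final top‑down broadcast of this value along the parent edges, again in $O(D)$ rounds, informs every vertex. Hence one instance is solved in $O(D)$ rounds.

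For $k$ instances I would pipeline. Index the instances $1,\dots,k$, and schedule node $v$ to forward the partial answer of instance $j$ in round $\mathrm{height}(v)+j$, where $\mathrm{height}(v)$ is the length of the longest downward path from $v$. A short induction on $\mathrm{height}(v)$ shows the schedule is feasible: leaves send instance $j$ in round $j$; a node of height $h$ has, by the inductive hypothesis, received each child's instance‑$j$ report by round $(h-1)+j$, and it sent instance $j-1$ in round $h+j-1$, so it can send instance $j$ in round $h+j$ — never placing two messages on one edge in a single round. Thus $r$ holds all $k$ aggregates by round $D+k$, and the analogously pipelined broadcast delivers all of them within a further $O(D+k)$ rounds, for $O(D+k)$ total.

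The argument is entirely routine; the only step meriting a line of care is checking that the pipelining is not bottlenecked by a high‑degree internal node, which must collect reports from all its children before it can aggregate. It is not: the children sit on distinct incident edges, so all of their round‑$((h-1)+j)$ messages arrive simultaneously, and no single edge is overloaded. Everything else — the rooting, the reduction from ``Steiner tree'' to ``simulated rooted tree'', and the fact that $+$, $\max$, $\min$ are associative and commutative so the computation is insensitive to the tree's shape — is bookkeeping.
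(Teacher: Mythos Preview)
Your proposal is correct and matches the paper's approach: the paper does not give a formal proof of this lemma, remarking only that it is ``a simple bottom-up information gathering along the Steiner tree,'' which is exactly the convergecast-then-broadcast you spell out, together with the standard pipelining argument.
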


\begin{lemma}[Deterministic binary search]\label{lem-det-bsearch}
Suppose each vertex $v$ holds an integer $t_v \in [N]$.
Let $1 \leq s \leq n$ be any integer.
There is a deterministic algorithm that finds the subset $U \subseteq V$ consisting of any $s$ vertices in $V$ with the highest $t$-values, breaking tie arbitrarily. The algorithm terminates in $O(D \log N)$ rounds.
By pipelining, we can solve $k$ instances of this problem in $O((D + k) \log N)$ rounds.
\end{lemma}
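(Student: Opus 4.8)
The plan is a distributed binary search over the space of possible keys, where each query is answered in $O(D)$ rounds by the information-gathering primitive of \cref{lem-basic}. First, to dispose of ties cleanly, I would not search on $t_v$ directly but on the lexicographic key $\mathrm{key}(v) = (t_v, \ID(v))$, viewed as an integer in a range of size $N' = N \cdot 2^{\Theta(\log n)}$. Since the identifiers are pairwise distinct, so are the keys, hence there is a \emph{unique} threshold $\kappa \in [N']$ for which exactly $s$ vertices satisfy $\mathrm{key}(v) \geq \kappa$; taking $U = \{\, v \in V \mid \mathrm{key}(v) \geq \kappa \,\}$ is then a valid output (it is $s$ vertices with the highest $t$-values, with ties broken by ID). In all applications of the lemma $n \leq \poly(N)$, so $\log N' = O(\log N)$ and a binary search over $[N']$ uses $O(\log N)$ iterations.

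Next, maintain at the root $r$ of the Steiner tree $T$ an interval $[\ell, h] \subseteq [N']$ with the invariant $\kappa \in [\ell, h]$, initialized to $[1, N']$. In each iteration, $r$ broadcasts the midpoint $m = \lceil (\ell + h)/2 \rceil$ down $T$ in $O(D)$ rounds; every vertex $v$ forms the bit $b_v = \mathds{1}[\mathrm{key}(v) \geq m]$; the root computes $c = \sum_{v \in V} b_v$ by a bottom-up aggregation in $O(D)$ rounds via \cref{lem-basic}; if $c \geq s$ the root sets $\ell \leftarrow m$, otherwise $h \leftarrow m - 1$. Because the count $c$ is monotone non-increasing in $m$, the invariant is preserved, and after $O(\log N)$ iterations $\ell = h = \kappa$. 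The root then broadcasts $\kappa$ in $O(D)$ rounds, and each $v$ locally decides membership in $U$ by comparing $\mathrm{key}(v)$ with $\kappa$. The total cost is $O(D \log N)$.

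For the pipelined version with $k$ instances, run the $k$ binary searches in lockstep: instance $j$ keeps its own interval $[\ell_j, h_j]$ and midpoint $m_j$, and in each of the $O(\log N)$ iterations we broadcast the vector of $k$ midpoints and aggregate the vector of $k$ counts. By the pipelining clause of \cref{lem-basic}, a single iteration then costs $O(D + k)$ rounds instead of $O(D)$, giving $O((D + k)\log N)$ overall.

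\textbf{Main obstacle.} There is no deep obstruction here; the only two points that need care are (i) the tie-breaking and the requirement that \emph{every} vertex end up knowing whether it belongs to $U$ — both handled by the lexicographic-key reduction, which makes the target set uniquely determined by a single scalar threshold that is broadcast at the end — and (ii) verifying that folding the $\Theta(\log n)$-bit identifiers into the key does not blow up the iteration count, i.e. that $\log n = O(\log N)$, which holds in every setting in which the lemma is invoked. Everything else is a direct, routine application of \cref{lem-basic} together with the standard correctness argument for binary search.
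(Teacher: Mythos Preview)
Your proposal is correct and matches the paper's intended approach. The paper in fact gives no proof of this lemma beyond the one-line remark preceding it (``a distributed implementation of the binary search, where each iteration is implemented in $O(D)$ rounds using the Steiner tree''); your write-up is a faithful and more detailed expansion of exactly that idea, and your lexicographic-key trick for tie-breaking is a clean way to make the ``breaking ties arbitrarily'' clause precise.
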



\begin{lemma}[Randomized binary search~\cite{ChangPZ19}]\label{lem-rand-bsearch}
Suppose each vertex $v$ holds a number $t_v$.
Let $1 \leq k \leq n$ be any integer.
There is a deterministic algorithm that finds the subset $U \subseteq V$ consisting of any $k$ vertices in $V$ with the highest $t$-values, breaking tie arbitrarily. The algorithm terminates in $O(D \log n)$ rounds with high probability.
\end{lemma}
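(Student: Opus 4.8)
The plan is to implement a distributed randomized quickselect, binary-searching on \emph{ranks} rather than on values, so that the round bound is independent of the magnitudes of the $t_v$. To sidestep equal values, first fix the strict total order $\prec$ on $V$ by declaring $u \prec v$ iff $t_u < t_v$, or $t_u = t_v$ and $\ID(u) < \ID(v)$; then ``the $k$ vertices with the highest $t$-values (ties broken arbitrarily)'' can be taken to be the $k$ largest vertices under $\prec$, which is now uniquely determined, and every vertex can compare itself to another vertex under $\prec$ once it knows that vertex's $t$-value and $\ID$.

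The algorithm maintains an \emph{active set} $S \subseteq V$, initially $S = V$, together with an integer $g$, initially $g = 0$, under the invariant that $S$ is a contiguous block in the $\prec$-order, that exactly $g$ vertices of $V$ are $\prec$-larger than every vertex of $S$, and that $g < k \le g + |S|$; hence the $k$-th largest vertex overall is the $(k-g)$-th largest vertex of $S$. Each iteration works as follows. Using \cref{lem-basic} we recompute $|S|$; each vertex of $S$ draws a uniform $O(\log n)$-bit string, and via another invocation of \cref{lem-basic} the vertex $p \in S$ holding the lexicographically largest (string, $\ID$) pair is chosen, which is a uniformly random element of $S$ up to a $1/\poly(n)$ collision probability (on collision we repeat). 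We broadcast $t_p$ and $\ID(p)$ in $O(D)$ rounds, let every vertex compare itself to $p$ under $\prec$, and use \cref{lem-basic} to count $a = |\{v \in S : v \succ p\}|$. If $a = k - g - 1$, then $p$ is the $k$-th largest vertex overall; we output $U = \{v \in V : v \succeq p\}$ and halt. If $a \ge k - g$, the target lies strictly above $p$ in $S$, so we replace $S$ by $\{v \in S : v \succ p\}$ and keep $g$ unchanged. If $a \le k - g - 2$, the target lies strictly below $p$, so we replace $S$ by $\{v \in S : v \prec p\}$ and increase $g$ by $a+1$. A short calculation (using $a \le k-g-2$, resp.\ $a \ge k-g$, together with the old invariant) shows the invariant is preserved in both branches, and that $|U| = g + (k-g) = k$ at termination; each iteration costs $O(D)$ rounds.

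For the running time, observe that in an iteration with $|S| = m$ the size of the new active set is either $a$ or $m-1-a$, where $a$ is uniform on $\{0,1,\dots,m-1\}$, so with probability at least $1/2$ the new size is at most $\tfrac34 m$. Thus among any $C\log n$ consecutive iterations, a Chernoff bound shows that at least $\log_{4/3} n$ of them shrink $|S|$ by a factor $\tfrac34$, except with probability $n^{-\Omega(C)}$; after that many successful shrinks $|S| \le 1$ and the algorithm has halted. Multiplying by the $O(D)$ cost per iteration yields the claimed $O(D\log n)$ rounds with high probability, and the output is always exactly correct because the invariant is maintained deterministically (only the number of iterations is random).

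I expect the main obstacle to be the careful bookkeeping rather than any deep idea: maintaining the rank offset $g$ consistently across the two recursive branches, making the tie-breaking by $\ID$ uniform so that the target set is well-defined throughout, and — most importantly — proving the \emph{high-probability} (not merely expected) $O(\log n)$ bound on the number of iterations, which requires the Chernoff argument on the count of ``balanced'' pivots rather than a naive linearity-of-expectation bound on $|S|$.
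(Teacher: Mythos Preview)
The paper does not actually give a proof of this lemma; it simply cites~\cite{ChangPZ19} and states the result. So there is no ``paper's own proof'' to compare against here.

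That said, your proposal is correct and is precisely the intended approach: the lemma's name ``randomized binary search'' refers exactly to randomized quickselect on the rank, where each pivot is a uniformly random element of the current active set. Your invariant on $(S,g)$ is right, the per-iteration cost of $O(D)$ via \cref{lem-basic} is right, and the Chernoff argument on the number of balanced pivots is the standard way to upgrade the expected $O(\log n)$ iteration bound to a high-probability one. One small remark: broadcasting $t_p$ in $O(D)$ rounds implicitly assumes $t_p$ fits in an $O(\log n)$-bit message, which the paper also assumes tacitly (the only use of this lemma is in \cref{lem-cut-pl-impl} for the cut player's $u$-values); you may want to state this assumption explicitly.
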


\begin{lemma}[Balanced partition]\label{lem-bal-partition}
Suppose each vertex $v$ holds a number $t_v$ such that $\max_{v \in V} t_v \leq (1/2) \sum_{v \in V} t_v$. There is a deterministic algorithm that computes a subset $U \subseteq V$ such that $(1/3) \sum_{v \in V} t_v \leq \sum_{v \in U} t_v \leq (1/2) \sum_{v \in V} t_v$.
The algorithm terminates in $O(D)$ rounds.
\end{lemma}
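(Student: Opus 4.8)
The plan is to reduce the problem to selecting a prefix (or the complement of a prefix) of a fixed linear order of $V$, after first handling the single vertex of largest weight by hand. Write $S = \sum_{v\in V}t_v$ and $t_{\max}=\max_{v\in V}t_v$; by hypothesis $t_{\max}\le S/2$. If $S=0$ we may output $U=V$, so assume $S>0$.

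First I would gather the global quantities. Using \cref{lem-basic} in $O(D)$ rounds, compute $S$ and $t_{\max}$ and broadcast both to every vertex; simultaneously have the Steiner-tree root learn and rebroadcast the identifier of a vertex $v_{\max}$ attaining $t_{\max}$ (ties broken by ID). If $t_{\max}\ge S/3$, output $U=\{v_{\max}\}$: then $\sum_{v\in U}t_v=t_{\max}\in[S/3,S/2]$ and we are done. So the interesting case is $t_{\max}<S/3$, i.e.\ every $t_v<S/3$. Here I fix the DFS order $v_1,v_2,\dots,v_n$ of the leaves of $T$ obtained by ordering the children of each node by ID, and compute the prefix sums $P(v_i)=\sum_{j\le i}t_{v_j}$ by one bottom-up sweep of $T$ (each node learns the total $t$-weight of leaves in its subtree, as in \cref{lem-basic}) followed by one top-down sweep (each node passes to its $j$-th child an offset equal to its own incoming offset plus the subtree weights of its first $j-1$ children; a leaf receiving offset $o$ sets $P:=o+t_v$). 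Both sweeps cost $O(D)$ rounds and each message carries $O(\log n)$ bits. Note $P$ is strictly increasing after breaking ties among equal prefixes by ID, $P(v_n)=S$, and every increment $t_{v_i}=P(v_i)-P(v_{i-1})$ is $<S/3$.

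Finally I would select the output. Let $v^\star$ be the unique vertex with $P(v^\star)\ge S/3$ and $P(v^\star)-t_{v^\star}<S/3$; each vertex tests this locally, and $v^\star$ broadcasts $P(v^\star)$ in $O(D)$ rounds. Since the predecessor prefix is $<S/3$ and the last increment is $<S/3$, we get $S/3\le P(v^\star)<2S/3$. If $P(v^\star)\le S/2$, output $U=\{v: P(v)\le P(v^\star)\}$, which has sum $P(v^\star)\in[S/3,S/2]$. Otherwise $P(v^\star)\in(S/2,2S/3)$, and I output the complement $U=\{v:P(v)>P(v^\star)\}$, whose sum is $S-P(v^\star)\in(S/3,S/2)\subseteq[S/3,S/2]$. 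Every vertex decides its membership from its own $P$-value and the broadcast value $P(v^\star)$, so the whole algorithm runs in $O(D)$ rounds.

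The main obstacle is the scan in the third paragraph: implementing a deterministic parallel prefix-sum over the leaves of a Steiner tree in $O(D)$ rounds within the $O(\log n)$-bit bandwidth. The bottom-up aggregation is a routine \cref{lem-basic}-style computation, but the top-down distribution of offsets must process each node's children in a globally consistent order (ID order fixes a canonical DFS order) so that ``prefix'' is well defined and so that ties/zero weights do not spoil strict monotonicity. The accompanying case split is exactly what makes the arithmetic close: by separating out $v_{\max}$ we ensure that in the remaining case every increment of the prefix sum is below $S/3$, so the first prefix reaching $S/3$ lies in $[S/3,2S/3)$ — hence it is either already in $[S/3,S/2]$ or overshoots into $(S/2,2S/3)$, in which case its complement sum lands in $(S/3,S/2)$.
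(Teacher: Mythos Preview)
Your proof is correct and follows essentially the same scheme as the paper: first isolate the case $t_{\max}\ge S/3$ by outputting $\{v_{\max}\}$, and in the remaining case (all increments $<S/3$) find a ``prefix'' whose weight lands in $[S/3,2S/3)$ and complement it if it overshoots $S/2$. The paper realizes this by locating a tree node $v^\ast$ whose subtree sum is $\ge M/3$ while every child's subtree sum is $<M/3$, and then choosing a prefix of $v^\ast$'s children; you instead flatten the leaves in DFS order and compute full prefix sums via a bottom-up plus top-down sweep. Both are $O(D)$-round implementations of the same prefix-and-complement idea; your version is a little more explicit about the ordering and the per-leaf decision, whereas the paper localizes the choice at a single internal node.
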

\begin{proof}
Let $u = \arg \max_{v \in V}t_v$. If $t_u > (1/3) \sum_{v \in V} t_v$, then we can simply return $U = \{u\}$.
In what follows, we assume $\max_{v \in V} t_v \leq (1/3) \sum_{v \in V} t_v$.

Let $T$ be the underlying Steiner tree. Make $T$ a rooted tree. For each $v$ in $T$, let $T_v$ be the subtree rooted at $v$, and let $n_v$ to be the summation of $t_u$ over all vertices $u$ in $T_v$. Denote $M = \sum_{v \in V} t_v$. By a simple bottom-up information gathering, in $O(D)$ rounds, we can let each vertex $v$ to calculate $n_v$, and make $M$ global knowledge.

There exists a vertex $v^\ast$ in $T$ such that (1) $n_{v^\ast} \geq M/3$, and (2) $n_u < M/3$ for each child $u$ of $v^\ast$. Let $u_1, u_2, \ldots, u_k$ be the children of $v^\ast$.
We select $r$ such that $M/3 \leq \sum_{1 \leq l \leq r} n_{u_l} < 2M/3$.  If $\sum_{1 \leq l \leq r} n_{u_l} \leq M/2$, then we can select $U$ to be the set of leaf vertices in $T_{u_1} \cup T_{u_2} \cup \cdots \cup T_{u_r}$. Otherwise, we have $M/2 < \sum_{1 \leq l \leq r} n_{u_l} \leq M/3$, and so we can select $U$ to be the set of leaf vertices not in $T_{u_1} \cup T_{u_2} \cup \cdots \cup T_{u_r}$.
\end{proof}

\begin{lemma}
[Partition]\label{lem-partition}
Suppose each vertex $v$ holds an integer $t_v \geq 0$ indicating the number of tokens it has. 
Given a vector $(p_1, p_2, \ldots, p_k)$  of non-negative numbers with $\sum_{1 \leq i \leq k}p_i = 1$,
there is a deterministic algorithm that partition the tokens into $k$ subsets $S_1, S_2, \ldots, S_k$ in such a way that \[\left(p_i \cdot \sum_{v \in V} t_v\right) - 1 < |S_i| < \left(p_i \cdot \sum_{v \in V} t_v\right) + 1,\]  and each vertex knows the number of its tokens in each subsets.
The algorithm terminates in $O(kD)$ rounds.
\end{lemma}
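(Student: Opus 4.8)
\textbf{Proof proposal for \Cref{lem-partition}.}

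The plan is to reduce the token-partitioning problem to a prefix-sum computation along the Steiner tree $T$, which I can perform using \Cref{lem-basic}, and then have each vertex decide locally how to split its own tokens based on the prefix sums it has learned. First I would fix a rooted ordering of the Steiner tree $T$ (say, a canonical Euler-tour/DFS ordering of its leaves, which are exactly $V$); this ordering can be made global knowledge with identifiers broadcast along $T$ in $O(D)$ rounds. Relative to this fixed leaf ordering $v_1, v_2, \ldots, v_n$, define the prefix sums $P_0 = 0$ and $P_j = \sum_{\ell \le j} t_{v_\ell}$, and let $M = P_n = \sum_{v\in V} t_v$. Using the information-gathering primitive of \Cref{lem-basic} (a single bottom-up/top-down pass costs $O(D)$ rounds), each leaf $v_j$ can be made to learn both $P_{j-1}$ and $P_j$, i.e.\ the half-open interval $[P_{j-1}, P_j)$ of ``global token indices'' that its $t_{v_j}$ tokens occupy. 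It also learns $M$.

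Next I would globally fix the cut points of the target partition: set $Q_0 = 0$ and $Q_i = \big\lfloor (p_1 + p_2 + \cdots + p_i)\cdot M \big\rfloor$ for $1 \le i \le k$, so that $Q_k = M$ and subset $S_i$ is declared to consist of all global token indices in $[Q_{i-1}, Q_i)$. The numbers $Q_1, \ldots, Q_k$ can be computed by a single vertex once $M$ is known and broadcast to all vertices in $O(D)$ rounds, or each vertex can recompute them locally from $M$ and the given vector $(p_1,\dots,p_k)$ — either way this is $O(D)$ rounds, and in fact the $\{Q_i\}$ are well-defined without any communication beyond learning $M$. Now each leaf $v_j$ knows its own interval $[P_{j-1}, P_j)$ and all cut points $Q_0 \le Q_1 \le \cdots \le Q_k$, so it can locally compute, for each $i$, the number of its tokens falling into $S_i$, namely $\big|[P_{j-1},P_j) \cap [Q_{i-1},Q_i)\big|$. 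This is purely local arithmetic. The size of $S_i$ is then exactly $Q_i - Q_{i-1}$, and since $Q_i - Q_{i-1} = \lfloor (\sum_{\ell\le i}p_\ell)M\rfloor - \lfloor(\sum_{\ell<i}p_\ell)M\rfloor$, standard floor inequalities give $p_i M - 1 < Q_i - Q_{i-1} < p_i M + 1$, which is exactly the claimed bound $\big(p_i\sum_{v\in V}t_v\big) - 1 < |S_i| < \big(p_i\sum_{v\in V}t_v\big)+1$.

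The only place where the $O(kD)$ round bound (rather than $O(D)$) is needed is a subtle point I should be careful about: computing the prefix sums $P_{j-1}, P_j$ at every leaf simultaneously. A single aggregate like $\sum_v t_v$ or $\max_v t_v$ costs $O(D)$ by \Cref{lem-basic}, but distributing a \emph{distinct} prefix value to each of the $n$ leaves is not one aggregate — however, it factors through the tree: each internal node $u$ of $T$ needs only learn the total token count of the subtree rooted at $u$ (one bottom-up $O(D)$ pass) and the prefix-sum offset entering $u$ from the left (one top-down $O(D)$ pass in which each node passes to its children, in sibling order, the running sum), after which each leaf knows its prefix. So this part is actually $O(D)$ rounds too. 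The reason the statement allows $O(kD)$ is presumably to also let each vertex \emph{report} or route its $k$ per-subset counts if needed, or to allow a less clever implementation; in any case $O(kD) \ge O(D)$ so the bound as stated holds comfortably. The main obstacle — and the only thing requiring genuine care — is making sure the leaf ordering and the prefix-sum offsets are computed consistently (every vertex agreeing on the same global token-index assignment), which is why I emphasize fixing a canonical ordering of $T$'s leaves up front; once that is pinned down, the rest is local computation plus two Steiner-tree passes, and correctness of the size bounds is the floor-inequality observation above.
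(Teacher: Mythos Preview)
Your proposal is correct. The approach differs from the paper's, though both rest on the same underlying idea of assigning tokens global indices via a tree traversal and then splitting at fixed cut points. The paper reduces the problem to a simpler primitive---``select exactly $K$ tokens''---and solves that in $O(D)$ rounds by a top-down search through the rooted Steiner tree (find the node whose subtree sum first crosses $K$, commit the earlier siblings, recurse into that child with the residual); it then invokes this primitive $k$ times to peel off $S_1,\dots,S_k$, which is where the $O(kD)$ comes from. Your route instead computes all leaf prefix sums in a single bottom-up/top-down pass and then does the $k$-way split purely locally via the floor cut points $Q_i$, which actually gives $O(D)$ rounds and makes the $O(kD)$ bound slack. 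Your floor-difference argument for the size bound is exactly right, and the one point you flagged as needing care (consistent ordering and prefix offsets) is indeed the only nontrivial step; ordering children by identifier suffices and requires no global broadcast. In short: same skeleton, but you do it in one shot rather than $k$ sequential extractions, at the cost of spelling out the prefix-sum pass explicitly rather than reusing a black-box ``select $K$'' subroutine.
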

\begin{proof}
It suffices to consider the following simpler task.
Each vertex $v$ holds an integer $t_v \geq 0$.
We are given an integer $0 \leq K \leq \sum_{v \in V} t_v$, and the goal is to let each vertex computes an integer $s_v \geq 0$ in such a way that $s_v \leq t_v$ and $\sum_{v \in V} s_v = K$. If we can solve this problem in $O(D)$ rounds, then we can solve the problem given in the lemma in  $O(kD)$ rounds.

Similar to the proof of \cref{lem-bal-partition},
let $T$ be the underlying Steiner tree, and  make $T$ a rooted tree. For each $v$ in $T$, let $T_v$ be the subtree rooted at $v$, and let $n_v$ to be the summation of $t_u$ over all vertices $u$ in $T_v$.  By a simple bottom-up information gathering, in $O(D)$ rounds, we can let each vertex $v$  calculate $n_v$, and make $M = \sum_{v \in V} t_v$ global knowledge. Pick $v$ to be any vertex such that $n_v \geq K$ and $n_u < K$ for all children $u$ of $v$. 
Let $u_1, u_2, \ldots, u_x$ be its children. Pick $i$ to be the smallest index such that $\sum_{1 \leq j \leq i}n_{u_j} \geq K$.
All vertices $w \in V$ in the subtrees rooted at $u_1, u_2, \ldots, u_{i-1}$ set $s_w = t_w$. Then we recurse on the vertex $u_i$ by replacing $K$ by $K - \sum_{1 \leq j \leq i-1}n_{u_j}$. For the base case $v \in V$ is a leaf, it sets $s_v = K$.
\end{proof}

\begin{lemma}[Uniform sampling~\cite{ChangPZ19}] \label{lem-uniform-sample}
There are $s$ balls and $n$ bins, where each bin is associated with a vertex $v \in V$. Each ball is independently placed into a bin uniformly at random. This process can be simulated in $O(D)$ rounds, and in the end each vertex knows the number of balls in its bin. 
By pipelining, we can solve $k$ instances of this problem in $O(D + k)$ rounds.
\end{lemma}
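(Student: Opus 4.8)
The plan is to use the supplied Steiner tree $T$ as a \emph{sampling tree} and draw all of the ball counts in a single top-down pass. First I would fix a root $r$ of $T$ (a BFS from an arbitrary vertex, $O(D)$ rounds), and then perform one bottom-up information-gathering pass (\cref{lem-basic}) so that every node $v$ of $T$ learns the number $\ell_v$ of leaves lying in its subtree $T_v$; since the leaves of $T$ are exactly $V$, we get $\ell_r = n$. This preprocessing costs $O(D)$ rounds. Conceptually the root now ``holds'' all $s$ balls.

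The top-down pass proceeds layer by layer. An internal node $v$ with children $u_1,\dots,u_d$, once it has been told that $s_v$ balls fall inside $T_v$, uses its local randomness to sample a vector $(a_1,\dots,a_d)$ from the multinomial distribution $\mathrm{Mult}\left(s_v;\, \ell_{u_1}/\ell_v,\dots,\ell_{u_d}/\ell_v\right)$, and forwards $a_i$ to $u_i$. A leaf that is told it receives $s_v$ balls simply records $s_v$ as the count in its bin. Correctness rests on the standard fact that if $s$ items are thrown independently and uniformly into a set of bins partitioned into groups $B_1,\dots,B_d$, then the vector of per-group counts is exactly $\mathrm{Mult}(s;\, |B_1|/\sum_j|B_j|, \dots, |B_d|/\sum_j|B_j|)$, and, conditioned on those counts, the items inside each group are again i.i.d.\ uniform on that group. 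Applying this recursively along $T$ shows that the joint distribution of the leaf counts produced by the algorithm is identical to the target product distribution, not merely close to it. The top-down pass again costs $O(D)$ rounds, for a total of $O(D)$.

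For the pipelined statement I would perform the bottom-up computation of the $\ell_v$ once and then release the top-down passes of the $k$ independent instances from the root at a rate of one per round; a node at depth $j$ then handles instance $i$ around round $j+i$, so all $k$ instances finish within $O(D+k)$ rounds. In each round a node transmits at most one count along each tree edge, so messages remain $O(\log n)$ bits as long as $s = \poly(n)$, which holds in all our applications.

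The hard part here is not algorithmic but bookkeeping: one must verify carefully that the recursive multinomial split reproduces the product distribution exactly (so that downstream uses of this lemma can treat the balls as genuinely independent and uniform), and one must track the message size, since the intermediate counts $a_i$ can be as large as $s$; if one insisted on allowing arbitrary $s$, each count would have to be sent over $O(\log s / \log n)$ rounds, inflating the round complexity correspondingly, so the polynomial bound on $s$ is what keeps the claimed $O(D)$ (resp.\ $O(D+k)$) bound.
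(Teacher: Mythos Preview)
Your proposal is correct and follows essentially the same approach as the paper: root the Steiner tree, compute subtree leaf counts by a bottom-up pass, then distribute balls top-down by multinomial splitting according to those counts. Your write-up is in fact more detailed than the paper's (which phrases the split as ``send each ball to child $u$ with probability $n_u/n_v$'' and then observes only counts need to be transmitted), and your remarks on exactness of the distribution, pipelining, and the $s=\poly(n)$ message-size caveat are all sound additions.
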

\begin{proof}
Root the Steiner tree $T$ arbitrarily, and let each vertex $v$ in the Steiner tree calculates the number $n_v$ of vertices in $V$ within the subtree $T_v$ rooted at $v$. This can be done by a bottom-up traversal in $O(D)$ rounds. Next, we do a top-down traversal to distribute the balls to the vertices as follows. Initially, there are $s$ balls in the root. Whenever a intermediate vertex $v$ receives a ball, it sends the ball to its child $u$ with probability $n_u / n_v$. Note that $v$ only needs to inform each of its children the number of balls it receives, and so this process can be done in $O(D)$ rounds.
\end{proof}
\section{Low-diameter Decomposition}\label{sect-low-diam-decomp}

A common version of the \emph{low-diameter decomposition} of $G=(V,E)$ is a partition of the vertex set $V$ into clusters of small diameter such that the number of inter-cluster edges is small. It is well-known~\cite{LinialS93,miller2013parallel} that for any given parameter $0 < \beta < 1$, there is an $O(\beta^{-1} \log n)$-round randomized algorithm that computes a low diameter decomposition such that the diameter of each cluster is $O(\beta^{-1} \log n)$, and the \emph{expected} number of inter-cluster edges is $\beta|E|$.

\begin{lemma}[Randomized low-diameter decomposition~\cite{LinialS93,miller2013parallel}]
\label{thm-low-diam-decomp-rand}
Given a parameter $0 < \beta < 1$, there is a randomized algorithm that decomposes the vertex set $V$ into clusters $V = V_1 \cup V_2 \cup \cdots \cup V_x$ in $O(\beta^{-1}  \log n)$ rounds meeting the following conditions.
\begin{itemize}
   \item The expected number of inter-cluster edges is at most $\beta|E|$.
    \item The diameter of the subgraph $G[V_i]$ induced by each cluster $V_i$ is at most $O(\beta^{-1} \log n)$.
\end{itemize}
Moreover, we can make the guarantee on the number of inter-cluster edges to hold with high probability by increasing the round complexity to $O(D + \beta^{-1}  \log n)$.
\end{lemma}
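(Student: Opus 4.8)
The final statement to prove is \cref{thm-low-diam-decomp-rand}, the randomized low-diameter decomposition. Let me think about how I would prove this.

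This is the classic Linial--Saks / Miller--Peng--Xu ball-growing decomposition. Let me sketch the standard proof.

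**Plan:** The plan is to use the exponential-delay ball-growing scheme of Miller, Peng, and Xu. Each vertex $v$ independently draws a random shift $\delta_v$ from an exponential distribution with parameter $\beta$ (or rather, a truncated version so the radius is bounded), and then each vertex $u$ is assigned to the cluster of the vertex $v$ minimizing $\dist(u,v) - \delta_v$. Equivalently, imagine each source $v$ "wakes up" at time $-\delta_v$ and begins broadcasting a BFS wave at unit speed; $u$ joins whichever wave reaches it first. This is implementable in $\CONGEST$ in $O(\max_v \delta_v)$ rounds since the waves travel at unit speed and a vertex only needs to forward the currently-best (source, arrival-time) pair — ties broken by source ID — and the number of distinct messages that matter at any vertex is $O(1)$ after pipelining, so congestion is not an issue.

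**Key steps, in order:**

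First, truncate the exponential: let $\delta_v = \min\{\mathrm{Exp}(\beta), c\beta^{-1}\log n\}$ for a suitable constant $c$. A union bound shows that with high probability every $\delta_v \le c\beta^{-1}\log n$, so the truncation changes nothing whp and the round complexity is $O(\beta^{-1}\log n)$. The cluster diameter is then $O(\beta^{-1}\log n)$ because every vertex in a cluster is within distance $\max_v\delta_v = O(\beta^{-1}\log n)$ of its source (in fact one argues the cluster is path-connected through the source via shortest paths, so the strong diameter is $O(\beta^{-1}\log n)$; a standard subtlety is ensuring each cluster induces a connected subgraph, which follows because if $u$ is assigned to source $v$ then every vertex on a shortest $u$--$v$ path is also assigned to $v$).

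Second, bound the expected number of cut edges. The standard argument: fix an edge $\{u,w\}$. Consider the sorted list of "effective arrival times" $\dist(u,v) - \delta_v$ over all sources $v$; the edge is cut only if $u$ and $w$ pick different sources. By the memorylessness of the exponential distribution, the probability that the smallest and second-smallest effective arrival times at $u$ differ by less than $1$ (one unit of distance, the length of edge $\{u,w\}$) is at most $O(\beta)$ — this is exactly where the rate parameter $\beta$ enters, since $\Pr[\mathrm{Exp}(\beta) \le 1] = 1 - e^{-\beta} \le \beta$. If that gap exceeds $1$, then $u$ and $w$ necessarily agree. Hence $\Pr[\{u,w\}\text{ cut}] \le O(\beta)$, and summing over all edges gives expected cut size $O(\beta)\,|E|$; rescaling $\beta$ by the hidden constant yields $\le \beta|E|$.

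Third, for the high-probability version: run the algorithm, count the actual number of inter-cluster edges (which takes $O(D)$ rounds via a Steiner/spanning tree, \cref{lem-basic}), and if it exceeds $\beta|E|$, re-run with fresh randomness. By Markov's inequality each attempt succeeds with constant probability, so $O(\log n)$ attempts suffice whp, giving round complexity $O(D + \beta^{-1}\log n)$.

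**Main obstacle:** The technically delicate part is the distributed implementation detail — arguing that the ball-growing process runs in $O(\beta^{-1}\log n)$ rounds in $\CONGEST$ despite potentially many overlapping waves. The resolution is that at each vertex and each round, only the current best candidate (minimum effective arrival time, ties by ID) needs to be stored and forwarded; once a vertex has committed to a source that cannot be beaten (because all not-yet-arrived waves started too late), it stops. A clean way to see the round bound is that after $r$ rounds, a vertex $u$ knows its final cluster whenever $\max_v \delta_v \le r - \dist(u,v_{\min})$, i.e. certainly by round $2\max_v\delta_v + 1 = O(\beta^{-1}\log n)$. The second subtlety, as noted, is the connectivity/strong-diameter claim for each cluster, handled by the shortest-path monotonicity observation above. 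Everything else is the textbook MPX analysis, so I would not belabor it.
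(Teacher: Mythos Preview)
The paper does not prove this lemma at all; it states it as a well-known result and cites \cite{LinialS93,miller2013parallel}. Your sketch is exactly the Miller--Peng--Xu exponential-shift argument from those references, and it is correct for the expected-cut and diameter guarantees.

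One small accounting slip: for the high-probability upgrade you propose $O(\log n)$ independent retries, each costing $O(D+\beta^{-1}\log n)$ rounds, which totals $O\bigl((D+\beta^{-1}\log n)\log n\bigr)$, not $O(D+\beta^{-1}\log n)$ as you wrote (and as the lemma asserts). This extra $\log n$ is harmless everywhere the lemma is used in the paper, but if you want to match the stated bound exactly you should either argue concentration for the number of cut edges directly, or note that the $O(\log n)$ retries of the ball-growing phase can be pipelined while the single $O(D)$ aggregation to pick a successful run is done once at the end.
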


Rozho\v{n} and Ghaffari~\cite{RozhonG20} recently obtained a $\poly\left(\beta^{-1}, \log n\right)$-round \emph{deterministic} algorithm that   achieves a similar result. The clusters returned by the Rozhon-Ghaffari algorithm do not necessarily have small diameters, but they are associated with low-diameter \emph{Steiner trees} that can be simultaneously embedded into the underlying graph with congestion $O(\log n)$.

\begin{lemma}[Deterministic low-diameter decomposition~\cite{RozhonG20}]
\label{thm-low-diam-decomp}
Given a parameter $0 < \beta < 1$, there is a deterministic algorithm that decomposes the vertex set $V$ into clusters $V = V_1 \cup V_2 \cup \cdots \cup V_x$ in $O(\beta^{-2} \log^6 n)$ rounds meeting the following conditions.
\begin{itemize}
   \item The number of inter-cluster edges is at most $\beta|E|$.
    \item Each cluster $V_i$ is associated with a Steiner tree $T_i$ such that the leaf vertices of $T_i$ is $V_i$. The diameter of $T_i$ is   $O(\beta^{-1} \log^3 n)$. Each edge $e \in E$ belongs to at most $O(\log n)$  Steiner trees.
\end{itemize}
\end{lemma}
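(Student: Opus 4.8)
The plan is to invoke the deterministic network-decomposition machinery of Rozho\v{n} and Ghaffari~\cite{RozhonG20}: this lemma is essentially a restatement of their low-diameter clustering subroutine, so the content is to recall their construction and read off the Steiner trees with the stated parameters. I would fix unique $b$-bit identifiers with $b=O(\log n)$ and run the algorithm in $b$ \emph{phases}, one per identifier bit, maintaining throughout a partition of $V$ into clusters, each carrying a label (a prefix of a $b$-bit string) and a rooted BFS-style Steiner tree spanning its vertices; initially every vertex is its own cluster.

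In phase $i$, color the current clusters by bit $i$ of their label — \emph{blue} for $0$, \emph{red} for $1$ — and run $R=O(\beta^{-1}\log^2 n)$ growth steps in which each blue cluster BFS-expands one graph layer at a time, pulling in vertices of adjacent red clusters and extending its BFS tree accordingly; any red cluster (or surviving piece of one) still separated from blue when the $R$ steps end is \emph{finalized}, and the $G$-edges it sends to blue clusters are \emph{cut}. Two facts drive the analysis. First, a cluster's Steiner-tree radius grows by only $O(R)$ per phase — only vertices within $R$ graph-hops of the advancing frontier are pulled in, so absorption does not compound diameters across phases — and hence after the $b=O(\log n)$ phases the diameter is $O(Rb)=O(\beta^{-1}\log^3 n)$. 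Second, a potential/charging argument, which is the technical heart of~\cite{RozhonG20}, bounds the edges newly cut in a single phase by $O(\beta/\log n)\cdot|E|$; summing over the $O(\log n)$ phases gives the inter-cluster bound $\beta|E|$ after rescaling $\beta$ by a constant. The round complexity is the total number of growth steps, $O(Rb)=O(\beta^{-1}\log^3 n)$, times the $O(\beta^{-1}\log^3 n)$ cost of a broadcast across one Steiner tree, i.e.\ $O(\beta^{-2}\log^6 n)$.

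It then remains only to verify the two features specific to the statement. The diameter bound on each $T_i$ is the first fact above. For the congestion bound, observe that at any moment the current clusters' Steiner trees are vertex-disjoint, and an edge $e\in E$ enters a Steiner tree only in the one phase during which its endpoints are first pulled along $e$ into a common cluster; with $O(\log n)$ phases and $O(1)$ congestion per phase, every edge lies in $O(\log n)$ trees — or one may simply quote this from~\cite{RozhonG20}, whose algorithm grows clusters via explicit low-diameter trees precisely so that such bookkeeping goes through.

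The main obstacle is not any single step but the \emph{simultaneous} calibration, against the one parameter $\beta$, of the cut-edge budget, the Steiner-tree diameter, and the tree congestion — in particular designing the absorb-versus-finalize rule and its potential so that each phase loses only a $1/\log n$ fraction of the total edge budget (giving $\beta|E|$ rather than $\beta|E|\log n$) while keeping absorption non-compounding. This balance is exactly the delicate part of~\cite{RozhonG20}; since the construction is theirs, the honest route here is to cite~\cite{RozhonG20} for the clustering and restrict the present argument to observing that its output clusters come equipped with Steiner trees $T_i$ of congestion $O(\log n)$, which is what we need.
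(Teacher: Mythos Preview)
Your proposal is correct and follows essentially the same approach as the paper: both sketch the Rozho\v{n}--Ghaffari ball-growing algorithm with $b=O(\log n)$ phases and $O(\beta^{-1}\log^2 n)$ iterations per phase, deriving the Steiner-tree diameter, the $O(\log n)$ edge-congestion, and the round complexity in the same way. The paper is slightly more explicit about the absorb-versus-cut rule (a blue cluster accepts all join requests iff $|E_2|/|E_1|>\beta/b$), whereas you defer that detail to the cited source, but the overall structure and accounting are identical.
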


\begin{proof}
\cref{thm-low-diam-decomp} is a result of a minor modification of the original algorithm of 
Rozho\v{n} and Ghaffari~\cite{RozhonG20}.
We present a brief description of the algorithm of~\cite{RozhonG20}, with the small modification that we need. 
Suppose each vertex is initially equipped with a distinct $\ID$ of $b = O(\log n)$ bits.
At the beginning, each vertex $v$ hosts a cluster $C = \{v\}$ whose identifier $\ID(C)$ is initialized to $\ID(v)$. 
This trivial clustering already satisfies the diameter requirement, but it does not meet the requirement on the number of inter-cluster edges.

The algorithm works in $b$ phases. In each phase, the clustering will be updated, and at most $\beta / b$ fraction of edges will be removed from the graph. As there are $b$ phases, the total number of removed edges is at most $\beta|E|$. The induction hypothesis specifies that at the end of the $i$th phase, for any two clusters $C_1$ and $C_2$ such that  $\ID(C_1)$ and $\ID(C_2)$ have different $i$-bit suffix, there is no edge connecting $C_1$ and $C_2$.

The goal of the $i$th phase is to achieve the following. 
For any \emph{fixed} $(i-1)$-bit suffix $Y$, separate all clusters whose ID is of the form 
$(\cdots 0 Y)$ (called \emph{blue} clusters) 
from those whose ID is of the form 
$(\cdots 1 Y)$ (called \emph{red} clusters).
The $i$th phase of the algorithm consists 
of $k = (b/\beta) \cdot O(\log n)$ iterations 
in which 
blue clusters may acquire new members and red clusters
lose members.

In each such iteration, each vertex in a red cluster that is adjacent to one or more blue clusters requests to join 
any one of the blue clusters. 
Now consider a blue cluster $C$. 
Define $E_1$ to be the set of edges inside $C$
and let $E_2$ be the set of new edges that would 
be added to $C$ if \emph{all} join requests were accepted. There are two cases:
\begin{enumerate}
    \item if $|E_1|=0$ or $|E_2|/|E_1| > \beta / b$, then all of $C$'s join requests are accepted;
    \item otherwise, all edges in $E_2$ are removed from the graph.
\end{enumerate}
After $k = \log_{1+\beta/b} m = (b/\beta) \cdot O(\log n)$ iterations, all red clusters are separated from all blue clusters, since it is impossible for a blue cluster to enter Case~1 for all $k$ iterations.

Whenever a blue cluster $C$ enters Case~1, 
each new member $v$ of $C$ is attached to $C$'s Steiner tree by including an edge $\{v,u\}\in E_2$
joining it to an existing member $u$ of $C$.
 Therefore, the diameter of the final Steiner tree will be $O(k b) =  O\left(\beta^{-1} \log^3 n \right)$, and each edge belongs to at most $b = O(\log n)$ Steiner trees. 
 
 For the round complexity, there are $b = O(\log n)$ phases, and each phase consists of $k = (b/\beta) \cdot O(\log n) = O(\beta^{-1} \log^2 n )$ iterations.
 The round complexity for each iteration is linear in the diameter of the Steiner trees $O(k b) =  O\left(\beta^{-1} \log^3 n \right)$.
 Therefore, the total round complexity of the entire algorithm is $O\left( \beta^{-2} \log^{6} n \right)$.
 \end{proof}

 By spending an additional $O(D)$ rounds, we can improve the bound on the number of inter-cluster edges.

 \begin{lemma}[Modified deterministic low-diameter decomposition]
\label{thm-low-diam-decomp-mod}
Given a parameter $0 < \beta < 1$, there is a deterministic algorithm that decomposes the vertex set $V$ into clusters $V = V_1 \cup V_2 \cup \cdots \cup V_x$ in $O(D + \beta^{-2} \log^6 n)$ rounds meeting the following conditions.
\begin{itemize}
   \item The number of inter-cluster edges is at most $\beta \vol(V \setminus V_{i^\ast})$, where $V_{i^\ast}$ is a cluster with the highest volume.
    \item Each cluster $V_i$ is associated with a Steiner tree $T_i$ such that the leaf vertices of $T_i$ is $V_i$. The diameter of $T_i$ is   $O(\beta^{-1} \log^3 n)$. Each edge $e \in E$ belongs to at most $O(\log n)$  Steiner trees.
\end{itemize}
\end{lemma}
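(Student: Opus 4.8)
The plan is to reduce the proof to a single invocation of the basic decomposition of \cref{thm-low-diam-decomp} plus $O(D)$ rounds of global aggregation, with the ``one dominant cluster'' case handled by a \emph{contraction recursion}.

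First I would run \cref{thm-low-diam-decomp} on $G$ with parameter $\beta/2$; in $O(\beta^{-2}\log^6 n)$ rounds this yields a clustering $\mathcal V=\{V_1,\dots,V_x\}$ with at most $(\beta/2)|E|$ inter-cluster edges, Steiner trees of diameter $O(\beta^{-1}\log^3 n)$, and every edge in $O(\log n)$ trees. Using \cref{lem-basic} along the ambient Steiner tree, in $O(D)$ further rounds I compute $\vol(V)$ and the volume of each cluster, and single out a maximum-volume cluster $V_{i^\ast}$ together with its tree $T_{i^\ast}$. If $\vol(V_{i^\ast})\le \tfrac34\vol(V)$, then $\vol(V\setminus V_{i^\ast})\ge \tfrac14\vol(V)=\tfrac12|E|$, so the $\le(\beta/2)|E|=(\beta/4)\vol(V)\le\beta\vol(V\setminus V_{i^\ast})$ inter-cluster edges already meet the requirement, and I output $\mathcal V$.

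The remaining case is $\vol(V_{i^\ast})>\tfrac34\vol(V)$, i.e., $V\setminus V_{i^\ast}$ carries little volume. Here I discard all of $\mathcal V$ except $V_{i^\ast}$, form the multigraph $\tilde G$ obtained from $G$ by contracting $V_{i^\ast}$ into one node $z$, and recurse on $(\tilde G,\beta)$, simulating $\tilde G$ inside $G$ by letting $z$ perform all its communication through $T_{i^\ast}$. Two elementary facts drive the argument. First, $|E(\tilde G)|=\tfrac12\big(\vol(V\setminus V_{i^\ast})+|\partial(V_{i^\ast})|\big)\le\vol(V\setminus V_{i^\ast})<\tfrac12|E|$, so the edge count shrinks by a constant factor and the recursion has depth $O(\log n)$. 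Second, contracting and un-contracting create no new inter-cluster edges, and every vertex outside $V_{i^\ast}$ keeps its degree; hence if the recursion returns a clustering of $\tilde G$ with at most $\beta\,\vol_{\tilde G}(V(\tilde G)\setminus\hat Z)$ inter-cluster edges, where $\hat Z$ is a maximum-volume cluster of $\tilde G$, then un-contracting $z$ to $V_{i^\ast}$ gives a clustering of $G$ whose cluster $B$ containing $V_{i^\ast}$ is maximum-volume (since $\vol(B)\ge\vol(V_{i^\ast})>\tfrac12\vol(V)$), with $\vol(V\setminus B)\ge\vol_{\tilde G}(V(\tilde G)\setminus\hat Z)$ and the same inter-cluster edge count, which is therefore at most $\beta\vol(V\setminus B)$. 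For the Steiner tree of $B$ I take the recursively produced tree of $\hat Z$ and expand $z$ into $T_{i^\ast}$, reattaching each former tree-edge at $z$ to its actual endpoint in $V_{i^\ast}$; both trees have diameter $O(\beta^{-1}\log^3 n)$, so the result does too, and the $O(\log n)$-trees-per-edge bound is inherited.

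The main obstacle is the round-complexity bookkeeping: one must verify that the recursive contractions together cost only $O(D+\beta^{-2}\log^6 n)$. The key observations are that the simulation overhead does not compound across levels — the dominant cluster always retains a Steiner tree of diameter $O(\beta^{-1}\log^3 n)$, so one round of the level-$t$ recursive graph is carried out in $O(\beta^{-1}\log^3 n)$ rounds of $G$ for every $t$ — and that deeper levels invoke \cref{thm-low-diam-decomp} on graphs with geometrically fewer edges (re-indexed with correspondingly shorter identifiers), so the per-level costs form a geometric series; the $O(D)$ term is paid only once, at the top level, since in the recursive calls the relevant diameter is $O(\beta^{-1}\log^3 n)$, not $D$. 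A secondary technical point is checking that the high-degree contracted node $z$ can take part in the blue/red merging rounds of the Rozho\v{n}--Ghaffari procedure using only aggregate queries over $T_{i^\ast}$ (``does $V_{i^\ast}$ border a blue cluster, and name one''; ``how many edges would merge into $V_{i^\ast}$''), which are ordinary tree computations, rather than learning the whole multiset of neighboring cluster identifiers.
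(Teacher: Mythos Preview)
Your contraction recursion is a genuinely different route from the paper's argument, and the central claim that ``the simulation overhead does not compound across levels'' is where it breaks. At level $t$ the super-node $z$ represents a set whose only available Steiner tree in $G$ is the concatenation $T_{i^\ast}^{(0)}\cup T_{i^\ast}^{(1)}\cup\cdots\cup T_{i^\ast}^{(t-1)}$ (each of diameter $\Theta(\beta^{-1}\log^3 n)$), so its diameter is $\Theta(t\beta^{-1}\log^3 n)$, not $O(\beta^{-1}\log^3 n)$. Consequently both the simulation cost per round at level $t$ and the final Steiner-tree diameter pick up an extra factor of the recursion depth $L=O(\log n)$; you would prove the lemma only with $O(\beta^{-1}\log^4 n)$ trees. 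The ``geometric series'' claim also fails: the cost of \cref{thm-low-diam-decomp} scales with $\log^6 n_t$, and $\sum_{t\le L}(\log n - t)^6=\Theta(\log^7 n)$, not $O(\log^6 n)$. There are further loose ends---there need not be a single super-node (nothing forces $z^{(s)}\in V_{i^\ast}^{(s+1)}$), and when several level-$L$ Steiner trees pass through the same super-node, expanding it can violate the ``each edge in $O(\log n)$ trees'' bound.

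The paper's proof avoids recursion altogether. After one call to \cref{thm-low-diam-decomp} it handles the dominant-cluster case by \emph{ball growing} from $V_{i^\ast}$: with $S_j=\{u:\dist(u,V_{i^\ast})\le j\}$, either $S_j=V$ for some $j=O(\beta^{-1}\log n)$ (so $G$ already has diameter $O(\beta^{-1}\log^3 n)$ and one cluster suffices), or some $j=O(\beta^{-1}\log n)$ has $|E(S_j,V\setminus S_j)|\le(\beta/2)\vol(V\setminus S_j)$. In the latter case one runs \cref{thm-low-diam-decomp} once more on $G[V\setminus S_j]$ and outputs $\{S_j\}\cup\{V_i'\}$. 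This uses only two invocations of the basic decomposition, no contractions, and gives the stated $O(\beta^{-1}\log^3 n)$ diameters and $O(D+\beta^{-2}\log^6 n)$ rounds directly.
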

\begin{proof}
Use \cref{thm-low-diam-decomp} to find a decomposition $V = V_1 \cup V_2 \cup \cdots \cup V_x$. Each part $V_i$ locally computes $\vol(V_i)$ in parallel, and it costs $O(\beta^{-1} \log^4 n)$ rounds, as the diameter of  $T_i$ is   $O(\beta^{-1} \log^3 n)$, and each edge $e \in E$ belongs to at most $O(\log n)$ Steiner trees. Then we use \cref{lem-basic} to calculate the index $i^\ast$ such that $V_{i^\ast}$ is a cluster with the highest volume in $O(D)$ rounds.
If $\vol(V_{i^\ast}) < \vol(V)/2$, then the number of inter-cluster edges is already at most \[\beta|E| = \beta \vol(V)/2 \leq \beta(\vol(V) - \vol(V_{i^\ast})) = \beta \vol(V \setminus V_{i^\ast}).\]
For the rest of the proof, we assume $\vol(V_{i^\ast}) \geq \vol(V)/2$.
For each $j$, we define the set \[S_j = \{ u \in V \ | \ \dist(u, V_{i^\ast}) \leq j\}.\] We claim that one of the following conditions is met.
\begin{enumerate}
    \item $S_j = V$ for some  $0 \leq j \leq O(\beta^{-1} \log n)$.
    \item $|E(S_j, V \setminus S_j)| \leq (\beta/2) \vol(V \setminus S_j)$ for some $0 \leq j \leq O(\beta^{-1} \log n)$.
\end{enumerate}
The reason is as follows. If we have  $|E(S_j, V \setminus S_j)| > (\beta/2) \vol(V \setminus S_j)$, then we must have $\vol(V \setminus S_{j+1}) \leq (1 - \beta/2) \vol(V \setminus S_{j})$. Therefore, if the second condition is not met, then the first condition must be met.

If the first condition is met, then the underlying graph $G$ itself has diameter $O(\beta^{-1} \log^3 n) + O(\beta^{-1} \log n)$, and so we can simply return the trivial decomposition where everyone belongs to the same cluster.

Now suppose the second condition is met. Let $0 \leq j \leq O(\beta^{-1} \log n)$ be an index such that $|E(S_j, V \setminus S_j)| \leq (\beta/2) \vol(V \setminus S_j)$. 
Calculation of the index $j$ costs $O(D + \beta^{-1} \log n)$ rounds using \cref{lem-basic}.
Run the algorithm of \cref{thm-low-diam-decomp} on $G[V \setminus S_j]$ to obtain a decomposition $V \setminus S_j = V_1' \cup V_2' \cup \cdots \cup V_x'$, where the number of inter-cluster edges is at most $(\beta/2) \vol(V \setminus S_j)$. It is clear that the decomposition $V = S_j  \cup V_1' \cup V_2' \cup \cdots \cup V_x'$ satisfies all the requirements, as the number of inter-cluster edges is at most $\beta \vol(V \setminus S_j)$.
\end{proof}

\section{Conductance and Sparsity}\label{sect-conductance}

In this section, we provide tools for analyzing the conductance and sparsity of graphs and cuts. 
Many of the tools in this section are  from~\cite{chuzhoy2019deterministic}, with small modifications in some cases.

\subsection{Inner and Outer Sparsity}\label{sect-in-out-conductance}


 Note that the bound $\lambda < 1/2$ in the following lemma is arbitrary, and $1/2$ can be replaced by any constant in the range $(0, 1)$.

\begin{lemma}[Inner and outer sparsity]\label{lem-conductance-v}
Consider a graph $G=(V,E)$  and a partition $\VV = \{V_1, V_2, \ldots, V_k\}$ of $V$. 
Define $\tilde{\Delta} = \max_{1 \leq i \leq k} \max_{v \in V_i} |N(v) \setminus V_i|$.
The following holds.
\begin{itemize}
    \item $\Phiv(G) = \Omega(\tilde{\Delta}^{-1} \cdot \vPhiIn{G}{\VV} \cdot \vPhiOut{G}{\VV})$.
    \item Let $C$ be any cut of $G$ with $0 < |C| \leq |V|/2$ and $\Phiv(C) \leq \psi = \lambda  \vPhiIn{G}{\VV}$, where $0 < \lambda < 1/2$.
    Define the cut $C'$ as the union of all $V_i \in \VV$ such that $|C \cap V_i| \geq (1/2) |V_i|$. Then
\begin{itemize}
    \item $\Phiv(C') = O(\tilde{\Delta} \psi / \vPhiIn{G}{\VV}) = O( \lambda\tilde{\Delta}|C|)$ and
    \item $||C'| - |C|| \leq \lambda |C|$.
\end{itemize}
\end{itemize}
\end{lemma}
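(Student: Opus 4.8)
The plan is to prove the two bullet points of \cref{lem-conductance-v} separately, both relying on the basic idea that the partition $\VV$ is ``compatible'' with sparse cuts: any sufficiently sparse cut $C$ in $G$ can cross only a few parts $V_i$ in a substantial way, because the inner sparsity $\vPhiIn{G}{\VV}$ forces $|\partial_{G[V_i]}(C\cap V_i)|$ to be large whenever $C$ splits $V_i$ non-trivially. First I would set up the notation: for each $V_i\in\VV$ let $A_i = C\cap V_i$, and let $s_i=\min\{|A_i|,|V_i\setminus A_i|\}$. By the definition of inner sparsity, for every $i$ we have $|\partial_{G[V_i]}(A_i)|\ge \vPhiIn{G}{\VV}\cdot s_i$; summing over $i$ and noting that these edge sets are disjoint subsets of $\partial_G(C)$ (they all lie inside the parts), we get $|\partial_G(C)|\ge \vPhiIn{G}{\VV}\cdot \sum_i s_i$.

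\textbf{First bullet.} For $\Phiv(G)=\Omega(\tilde\Delta^{-1}\vPhiIn{G}{\VV}\vPhiOut{G}{\VV})$, take any cut $C$ with $0<|C|\le|V|/2$, and define $C'$ as the $\VV$-respecting cut obtained by rounding: $C'=\bigcup\{V_i: |A_i|\ge |V_i|/2\}$. The quantity $\big||C'|-|C|\big|$ is bounded by $\sum_i s_i$ (each part contributes a discrepancy of at most $\min\{|A_i|,|V_i\setminus A_i|\}=s_i$), and also $|\partial_G(C')|\le |\partial_G(C)| + \tilde\Delta\sum_i s_i$, because moving each ``rounded'' vertex in or out of $C$ changes the cut by at most $\tilde\Delta$ edges (each such vertex has at most $\tilde\Delta$ neighbors outside its own part; edges inside its own part that were cut are no longer cut after rounding the whole part). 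Now there is a case split. If $\sum_i s_i \le \tfrac12\min\{|C|,|V\setminus C|\}$ (say), then $\min\{|C'|,|V\setminus C'|\}=\Theta(\min\{|C|,|V\setminus C|\})$ and $C'$ is a valid $\VV$-respecting cut, so $\Phiv(C)\ge \Phiv(C') - O(\tilde\Delta\sum_i s_i / |C|) \ge \vPhiOut{G}{\VV}/O(1)$ after absorbing the correction using $\sum_i s_i$ small; actually the cleaner route is: either $|\partial_G(C)|$ is already $\ge \tfrac12\tilde\Delta^{-1}\vPhiIn{G}{\VV}\vPhiOut{G}{\VV}\min\{|C|,|V\setminus C|\}$ and we are done, or $\sum_i s_i$ is small (using the displayed inequality $|\partial_G(C)|\ge \vPhiIn{G}{\VV}\sum_i s_i$ to bound $\sum_i s_i \le |\partial_G(C)|/\vPhiIn{G}{\VV}$), in which case $C'$ is balanced enough that $|\partial_G(C')|\ge \vPhiOut{G}{\VV}\min\{|C'|,|V\setminus C'|\}=\Omega(\vPhiOut{G}{\VV}\min\{|C|,|V\setminus C|\})$, and since $|\partial_G(C)|\ge |\partial_G(C')| - \tilde\Delta\sum_i s_i \ge |\partial_G(C')| - \tilde\Delta|\partial_G(C)|/\vPhiIn{G}{\VV}$, rearranging gives $|\partial_G(C)|(1+\tilde\Delta/\vPhiIn{G}{\VV})\ge \Omega(\vPhiOut{G}{\VV}\min\{|C|,|V\setminus C|\})$, i.e. $\Phiv(C)=\Omega(\tilde\Delta^{-1}\vPhiIn{G}{\VV}\vPhiOut{G}{\VV})$. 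Taking the minimum over all $C$ proves the bound; I would write this carefully choosing the constant thresholds so the case split is clean.

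\textbf{Second bullet.} Here $C$ has $0<|C|\le|V|/2$ and $\Phiv(C)\le\psi=\lambda\vPhiIn{G}{\VV}$ with $0<\lambda<1/2$, and $C'$ is the same rounding of $C$. From $|\partial_G(C)|\ge \vPhiIn{G}{\VV}\sum_i s_i$ and $|\partial_G(C)|=\Phiv(C)|C|\le \lambda\vPhiIn{G}{\VV}|C|$ we immediately get $\sum_i s_i\le \lambda|C|$. Since $\big||C'|-|C|\big|\le \sum_i s_i\le\lambda|C|$, the second sub-claim is done. For the first sub-claim, $|\partial_G(C')|\le |\partial_G(C)| + \tilde\Delta\sum_i s_i \le \lambda\vPhiIn{G}{\VV}|C| + \tilde\Delta\lambda|C| = O(\lambda\tilde\Delta|C|)$ (using $\vPhiIn{G}{\VV}\le \tilde\Delta$ trivially, or just keeping both terms); and $\min\{|C'|,|V\setminus C'|\}\ge |C'|\ge (1-\lambda)|C|\ge |C|/2$ provided $|C'|\le|V\setminus C'|$, which holds since $|C'|\le(1+\lambda)|C|\le(3/2)(|V|/2)$ — one may need the harmless observation that if $|C'|>|V|/2$ we replace $C'$ by its complement, but since $|C|\le|V|/2$ and $\lambda<1/2$ we have $|C'|\le(1+\lambda)|C|<(3/2)|C|$, which can exceed $|V|/2$ only mildly; I will handle this by noting $\Phiv(C')$ is defined via the min and the same bound on $|\partial_G(C')|$ applies, giving $\Phiv(C')\le |\partial_G(C')|/\min\{|C'|,|V\setminus C'|\} = O(\lambda\tilde\Delta|C|)/\Theta(|C|)=O(\lambda\tilde\Delta)$, and also $=O(\tilde\Delta\psi/\vPhiIn{G}{\VV})$ by substituting $\psi=\lambda\vPhiIn{G}{\VV}$.

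\textbf{Main obstacle.} The routine calculations are all elementary; the one genuine subtlety I expect to spend care on is the bookkeeping of the ``balance'' of $C'$ relative to $C$ — making sure that after rounding, $\min\{|C'|,|V\setminus C'|\}$ is still $\Theta(\min\{|C|,|V\setminus C|\})$ so that the conductance/sparsity comparison goes through. This requires the observation that $\sum_i s_i$ is small (which itself follows from the inner-sparsity lower bound on $|\partial_G(C)|$), and then a clean choice of constants in the two regimes of the case split. I would also double-check the claim that the edge sets $\partial_{G[V_i]}(A_i)$ are pairwise disjoint and all contained in $\partial_G(C)$ — this is immediate since each lies strictly inside $V_i$ and separates $A_i$ from $V_i\setminus A_i$, hence separates $C$ from $V\setminus C$. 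With that in hand, the rest is arithmetic.
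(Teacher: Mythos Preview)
Your approach is essentially the same as the paper's: round $C$ to the $\VV$-respecting cut $C'$, bound $\sum_i s_i$ via the inner-sparsity inequality $|\partial_G(C)|\ge \vPhiIn{G}{\VV}\sum_i s_i$, and then control both $\big||C'|-|C|\big|$ and $|\partial_G(C')|$ in terms of $\sum_i s_i$ and $\tilde\Delta$. Two remarks. First, the paper organizes the argument more economically by proving the second bullet first and then \emph{deriving} the first bullet from it: take $C$ to be a sparsest cut of $G$; if $\Phiv(C)>\tfrac13\,\vPhiIn{G}{\VV}$ you are done, and otherwise apply the second bullet with $\lambda=\tfrac13$ to get $\Phiv(C')=O(\tilde\Delta\,\Phiv(C)/\vPhiIn{G}{\VV})$, then use $\Phiv(C')\ge \vPhiOut{G}{\VV}$. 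This avoids your separate case split for the first bullet entirely. Second, your side remark ``$\vPhiIn{G}{\VV}\le\tilde\Delta$ trivially'' is false in general ($\tilde\Delta$ counts neighbors \emph{outside} the part, while inner sparsity concerns edges \emph{inside}); the paper gets the clean bound $|\partial(C')|\le (\tilde\Delta/\vPhiIn{G}{\VV})\,|\partial(C)|$ by also subtracting the $\sum_i x_i$ intra-part cut edges that disappear upon rounding, which you note in words but do not use in your inequality. Your hedge ``or just keeping both terms'' yields a bound that is correct but slightly weaker than stated when $\vPhiIn{G}{\VV}\gg\tilde\Delta$.
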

\begin{proof}
Consider any $C \subseteq V$ in $G$ with $0 < |C| \leq |V|/2$ and $\Phiv(C) \leq \psi = \lambda  \vPhiIn{G}{\VV}$, where $0 < \lambda < 1/2$.
Define $C'$ as the result of applying the following operations to $C$. From $i = 1$ to $i = k$, do the following. If $|C \cap V_i| \geq (1/2) |V_i|$, then update $C \leftarrow C \cup V_i$; otherwise update $C \leftarrow C \setminus V_i$. The resulting cut $C'$ is identical to the one in the lemma statement.
Consider the $i$th iteration.
\begin{itemize}
    \item The number of edges removed from $\partial(C)$ in this iteration is at least $x = |E(C \cap V_i, V_i \setminus C)|$.
    \item Suppose $|C \cap V_i| \geq (1/2) |V_i|$. Then $V_i \setminus C$ is added to $C$ in this iteration. This results in at most $|V_i \setminus C| \leq   x / \vPhiIn{G}{\VV}$ new vertices added to $C$, and at most
    $\tilde{\Delta} \cdot |V_i \setminus C| \leq \tilde{\Delta} \cdot  x / \vPhiIn{G}{\VV}$ new edges added to $\partial(C)$.
    \item Suppose $|C \cap V_i| \leq (1/2) |V_i|$. Then $V_i \cap C$ is removed from $C$ in this iteration. This results in at most $|V_i \cap C| \leq  x / \vPhiIn{G}{\VV}$ vertices removed from $C$, and at most   
    $\tilde{\Delta} \cdot |V_i \cap C| \leq \tilde{\Delta} \cdot  x / \vPhiIn{G}{\VV}$ new edges added to $\partial(C)$.   
\end{itemize}
Therefore, we have the following two bounds.
\begin{itemize}
    \item $|\partial(C')| \leq |\partial(C)| \cdot   \tilde{\Delta} / \vPhiIn{G}{\VV} \leq \tilde{\Delta} \psi |C|/ \vPhiIn{G}{\VV}  =  \lambda \tilde{\Delta} |C|$.
    \item $||C'| - |C|| \leq |\partial(C)| / \vPhiIn{G}{\VV} \leq \psi |C| / \vPhiIn{G}{\VV} =  \lambda |C|$.
\end{itemize}
The sparsity $\Phiv(C')$  of $C'$ can be upper bounded as follows. We use the assumption that $\lambda < 1/2$.
\begin{itemize}
    \item For the case $|C'| \leq |V|/2$, we have \[\Phiv(C') = \frac{|\partial(C')|}{|C'|} \leq \frac{\tilde{\Delta} \psi |C|/ \vPhiIn{G}{\VV}}{(1-\lambda)|C|} = O(\tilde{\Delta} \psi / \vPhiIn{G}{\VV}).\]
    \item For the case $|C'| > |V|/2$, we have 
    $|V \setminus C'| \geq |V \setminus C| - \lambda|C| > |V|/2 - |C|/2 \geq |V|/2 - |V|/4 = |V|/4$. We can now upper bound the sparsity $\Phiv(C')$ as \[\Phiv(C') = \frac{|\partial(C')|}{|V \setminus C'|} \leq \frac{\tilde{\Delta} \psi |C|/ \vPhiIn{G}{\VV}}{|V|/4} \leq 
    \frac{\tilde{\Delta} \psi |V|/ (2\vPhiIn{G}{\VV})}{|V|/4}
    = O(\tilde{\Delta} \psi / \vPhiIn{G}{\VV}).\]
\end{itemize}
To show that $\Phiv(G) = \Omega(\tilde{\Delta}^{-1} \cdot \vPhiIn{G}{\VV} \cdot \vPhiOut{G}{\VV})$, we select $C$ to be a sparsest cut of $G$, i.e., $\Phiv(G) = \Phiv(C)$. If $\Phiv(C) > \psi = \lambda  \vPhiIn{G}{\VV}$ for $\lambda = 1/3$, then we are done already. Otherwise we apply the above analysis, which shows that 
$\Phiv(C') = O(\tilde{\Delta}\Phiv(C) / \vPhiIn{G}{\VV})$. Since $\Phiv(G) = \Phiv(C)$ and $\Phiv(C') \geq \vPhiOut{G}{\VV}$, we infer that $\Phiv(G) = \Omega(\tilde{\Delta}^{-1} \cdot \vPhiIn{G}{\VV} \cdot \vPhiOut{G}{\VV})$. 
\end{proof}

Roughly speaking, \cref{lem-conductance-v} shows that any sufficiently sparse cut $C$ of $G$ can be transformed into another cut $C'$  
that respects $\VV$ by losing an $O(1/ \PhiIn{G}{\VV})$ factor in sparsity (assuming $\tilde{\Delta} = O(1)$) 
and an $1 + o(1)$ factor in balance (assuming $\lambda = o(1)$).

\subsection{Expander Split}\label{sect-expander-split}

   In the construction of $\Gsp$,  each edge $e=\{u,v\} \in E$ is associated with a vertex $u' \in X_u \subseteq \Vsp$ and a vertex  $v' \in X_v \subseteq \Vsp$. Each vertex in $\Vsp$ is associated with exactly one edge in $E$. From now on, we write $\xi(e) = \{u', v'\}$ for each edge $e=\{u,v\} \in E$. Note that $\xi(e)$ is also an edge of $\Gsp$.

We list some crucial properties of the expander split graph $\Gsp$ of $G$ as a lemma.

\begin{lemma}[Properties of $\Gsp$]\label{lem-expander-split-property}
For any graph $G=(V,E)$ an its expander split graph $\Gsp = (\Vsp, \Esp)$, the following holds.
\begin{enumerate}
    \item $\Gsp$ has constant maximum degree.
    \item $\vol(V) = 2|E| = |\Vsp|$.
    \item For any cut $C$ in $G$, its corresponding cut
$C' = \bigcup_{v \in C} X_v$  in $\Gsp$ respects the partition  $\VV=\{ X_v \ | \ v \in V \}$, and it satisfies $\vol(C) = |C'|$,
$|\partial(C)| = |\partial(C')|$, and so $\Phie(C) = \Phiv(C')$. 
    \item $\Phie(G) = \vPhiOut{\Gsp}{\VV} \geq \Phiv(\Gsp)$.
    \item $\Phie(G) = \Theta(\Phiv(\Gsp))$.
\end{enumerate}
\end{lemma}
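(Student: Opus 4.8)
The five items are all immediate consequences of the construction of $\Gsp$, with the single genuine input being \cref{lem-conductance-v}, which I would invoke only for the lower bound in item~(5). So the proof is mostly bookkeeping: set up the bijection between the inter-cluster edges of $\Gsp$ and the edges of $G$, and then read off each claim.

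First I would dispense with items~(1)--(3). For item~(1): a vertex $x$ of $X_v$ has at most $\Delta(X_v)=\Theta(1)$ neighbors inside $X_v$, and since $x$ is the $r_v(e)$-th vertex of $X_v$ for a \emph{unique} incident edge $e$ of $v$, it is an endpoint of exactly one inter-cluster edge $\xi(e)$; hence $\deg_{\Gsp}(x)\le\Delta(X_v)+1=\Theta(1)$. For item~(2): $|\Vsp|=\sum_{v\in V}|X_v|=\sum_{v\in V}\deg(v)=2|E|=\vol(V)$ by the handshake lemma. For item~(3): fix $C\subseteq V$ and put $C'=\bigcup_{v\in C}X_v$. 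Being a union of whole parts, $C'$ respects $\VV$; moreover $|C'|=\sum_{v\in C}\deg(v)=\vol(C)$ and likewise $|\Vsp\setminus C'|=\vol(V\setminus C)$. An edge of $\Gsp$ crosses $C'$ iff it is an inter-cluster edge (edges inside a part $X_v$ never cross, since no part is split by $C'$) of the form $\xi(e)$ with $e$ crossing $C$ in $G$; thus $e\mapsto\xi(e)$ is a bijection from $\partial_G(C)$ onto $\partial_{\Gsp}(C')$, so $|\partial(C')|=|\partial(C)|$. Dividing, $\Phie_G(C)=|\partial(C)|/\min\{\vol(C),\vol(V\setminus C)\}=|\partial(C')|/\min\{|C'|,|\Vsp\setminus C'|\}=\Phiv_{\Gsp}(C')$.

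Item~(4) then falls out of item~(3) and the definition of outer sparsity: a cut of $\Gsp$ respects $\VV$ precisely when it equals $C'$ for some $C\subseteq V$, and a nontrivial such cut corresponds to a nontrivial $C$. Hence $\vPhiOut{\Gsp}{\VV}=\min_{C}\Phiv_{\Gsp}(C')=\min_{C}\Phie_G(C)=\Phie(G)$. Since $\Phiv(\Gsp)$ is the minimum of $\Phiv_{\Gsp}(S)$ over \emph{all} nontrivial cuts $S$ and the $\VV$-respecting ones form a subfamily, $\vPhiOut{\Gsp}{\VV}\ge\Phiv(\Gsp)$. For item~(5), item~(4) already gives $\Phiv(\Gsp)\le\Phie(G)$, so it remains to show $\Phiv(\Gsp)=\Omega(\Phie(G))$. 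Apply \cref{lem-conductance-v} to $\Gsp$ with the partition $\VV$: here $\tilde\Delta=\max_v\max_{x\in X_v}|N(x)\setminus X_v|\le 1=O(1)$ since each $x$ has exactly one inter-cluster edge; $\vPhiIn{\Gsp}{\VV}=\min_v\Phiv(X_v)=\Theta(1)$ since each $X_v$ is a $\Theta(1)$-expander of $\Theta(1)$ maximum degree (hence $\Theta(1)$ sparsity), the degenerate case $\deg(v)\le1$ being vacuous; and by item~(4), $\vPhiOut{\Gsp}{\VV}=\Phie(G)$. Then $\Phiv(\Gsp)=\Omega(\tilde\Delta^{-1}\cdot\vPhiIn{\Gsp}{\VV}\cdot\vPhiOut{\Gsp}{\VV})=\Omega(\Phie(G))$, and combining with item~(4) yields $\Phie(G)=\Theta(\Phiv(\Gsp))$.

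\textbf{Main obstacle.} There is no serious difficulty; the only points needing care are (a) the bijection in item~(3) — checking that intra-part edges genuinely never cross a $\VV$-respecting cut and that $\xi$ is injective on $E$ — and (b) correctly aligning the two measures being minimized, namely sparsity of $\VV$-respecting cuts in $\Gsp$ (counted by cardinality) versus conductance of arbitrary cuts in $G$ (counted by volume); this is exactly what the identity $|C'|=\vol(C)$ from item~(3) provides, and it is the crux that makes items~(4) and~(5) go through. I would also flag, but not dwell on, the single-vertex cluster $X_v$ arising when $\deg(v)=1$, where ``$\Phiv(X_v)$'' is vacuous: such a cluster has no internal cut and so cannot violate any of the bounds.
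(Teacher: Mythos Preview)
Your proposal is correct and follows essentially the same approach as the paper: items (1)--(4) are routine bookkeeping from the construction (the paper simply calls them ``straightforward''), and for item (5) you invoke \cref{lem-conductance-v} on $\Gsp$ with the partition $\VV$, exactly as the paper does. Your write-up is in fact more detailed than the paper's, which omits the verification of (1)--(4) entirely; the only cosmetic difference is that the paper bounds via the maximum degree $\Delta$ of $\Gsp$ while you use the sharper $\tilde\Delta\le 1$, but both are $O(1)$ so the conclusion is the same.
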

\begin{proof}
All statements are straightforward, except that to prove the last statement $\Phie(G) = \vPhiOut{\Gsp}{\VV} = \Theta(\Phiv(\Gsp))$, we need to apply \cref{lem-conductance-v} to  $\Gsp$ and  $\VV$. \cref{lem-conductance-v} guarantees that $\Phiv(\Gsp) = \Omega(\Delta^{-1} \cdot \vPhiIn{\Gsp}{\VV} \cdot \vPhiOut{\Gsp}{\VV}) = \Omega(\vPhiOut{\Gsp}{\VV})$, implying   $\vPhiOut{\Gsp}{\VV} = \Theta(\Phiv(\Gsp))$.
We use the fact that $\vPhiIn{\Gsp}{\VV} = \Omega(1)$ and the maximum degree $\Delta$ of $\Gsp$ is a constant.
\end{proof}

The following lemma extends the above lemma to subgraphs.

\begin{lemma}[Subgraphs of $G$ and $\Gsp$] \label{lem-conductance-aux-1}
Let $G=(V,E)$ be any graph, and let $\Gsp = (\Vsp, \Esp)$ be its expander split graph.
Let $G^\ast=(V^\ast, E^\ast)$ be any subgraph of $G$.
Define $W^\ast \subseteq \Vsp$ as the union of all vertices in $\xi(e)$ over all $e \in E^\ast$.
Let $C$ be any cut of $G^\ast$, and
define its corresponding cut $C'$ in $\Gsp[W^\ast]$ by 
$C' = \bigcup_{v \in C} X_v \cap W^\ast$. 
The following holds.
\begin{enumerate}
    \item $\vol_{G^\ast}(V^\ast) = 2|E^\ast| = |W^\ast|$.
    \item $\vol_{G^\ast}(C) = |C'|$.
    \item $|\partial_{G^\ast}(C)| = |\partial_{\Gsp[W^\ast]}(C')|$.
    \item $\Phie(G^\ast) = \vPhiOut{\Gsp[W^\ast]}{\VV^\ast} \geq \Phiv(\Gsp[W^\ast])$, where $\VV^\ast$ is $\VV=\{ X_v \ | \ v \in V \}$ restricted to $\Gsp[W^\ast]$.
\end{enumerate}
\end{lemma}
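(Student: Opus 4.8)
\textbf{Proof proposal for \Cref{lem-conductance-aux-1}.}

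The plan is to reduce this lemma to the global statement \Cref{lem-expander-split-property} by observing that $\Gsp[W^\ast]$ is, up to isomorphism, a ``partial expander split graph'' that behaves exactly like the global one when restricted to the edges $E^\ast$. The core of the argument is purely combinatorial bookkeeping: establishing a bijection between $W^\ast$ and the edge-endpoints of $G^\ast$ and checking that this bijection carries cuts, boundaries, and volumes across faithfully. First I would set up notation: for each edge $e=\{u,v\}\in E^\ast$ recall that $\xi(e)=\{u',v'\}$ with $u'\in X_u$, $v'\in X_v$, and that $\xi$ is injective with the two endpoints of $\xi(e)$ being distinct vertices of $\Vsp$ lying in $X_u$ and $X_v$ respectively. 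So $W^\ast=\bigcup_{e\in E^\ast}\xi(e)$ is in natural bijection with the set of ``edge-slots'' $\{(e,u),(e,v): e=\{u,v\}\in E^\ast\}$, which has size exactly $2|E^\ast|$. This immediately gives item~1, since $\vol_{G^\ast}(V^\ast)=\sum_{v\in V^\ast}\deg_{G^\ast}(v)=2|E^\ast|$ by handshaking.

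Next I would handle items~2 and~3 together. For a vertex $v\in V^\ast$, the vertices of $W^\ast$ lying in $X_v$ are precisely those $u'$-endpoints $\xi(e)\cap X_v$ coming from edges $e\in E^\ast$ incident to $v$; there are exactly $\deg_{G^\ast}(v)$ of them. Hence $|C'|=\sum_{v\in C}|X_v\cap W^\ast|=\sum_{v\in C}\deg_{G^\ast}(v)=\vol_{G^\ast}(C)$, giving item~2. For item~3, I would classify the edges of $\Gsp[W^\ast]$ into two types: the ``matching edges'' $\xi(e)$ for $e\in E^\ast$, and the ``expander edges'' internal to some $X_v$. An edge $\xi(e)$ with $e=\{u,v\}$ crosses $C'$ iff exactly one of $u,v$ is in $C$, i.e.\ iff $e\in\partial_{G^\ast}(C)$. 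An expander edge inside $X_v$ never crosses $C'$, because either $v\in C$ (so both endpoints, if in $W^\ast$, are in $C'$) or $v\notin C$ (so neither is in $C'$) — here I use that $C'$ is $\VV^\ast$-respecting by construction. Therefore $|\partial_{\Gsp[W^\ast]}(C')|=|\{e\in E^\ast: e\in\partial_{G^\ast}(C)\}|=|\partial_{G^\ast}(C)|$, which is item~3. Note the crucial point that the expander edges internal to each $X_v$ contribute to volumes inside $\Gsp[W^\ast]$ but never to boundaries of $\VV^\ast$-respecting cuts; this is exactly why $\Phiv$ on such cuts of $\Gsp[W^\ast]$ matches $\Phie$ on $G^\ast$.

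Finally, item~4 follows by combining the first three items with \Cref{lem-conductance-v}. By items~2 and~3, for every $\VV^\ast$-respecting cut $C'$ of $\Gsp[W^\ast]$ arising from a cut $C$ of $G^\ast$ we have $\Phiv_{\Gsp[W^\ast]}(C')=\Phie_{G^\ast}(C)$, and since every $\VV^\ast$-respecting cut arises this way (simply take $C=\{v\in V^\ast: X_v\cap W^\ast\subseteq C'\}$), we get $\vPhiOut{\Gsp[W^\ast]}{\VV^\ast}=\Phie(G^\ast)$. For the inequality $\vPhiOut{\Gsp[W^\ast]}{\VV^\ast}\geq\Phiv(\Gsp[W^\ast])$, I would invoke \Cref{lem-conductance-v} applied to the graph $\Gsp[W^\ast]$ with the partition $\VV^\ast$: since each block $X_v\cap W^\ast$ is an induced subgraph of the constant-degree constant-expansion graph $X_v$ obtained by deleting a few vertices — wait, this needs a small argument. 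Actually the cleaner route is to note $\Phiv(\Gsp[W^\ast])=\Omega(\tilde\Delta^{-1}\cdot\vPhiIn{\Gsp[W^\ast]}{\VV^\ast}\cdot\vPhiOut{\Gsp[W^\ast]}{\VV^\ast})$ from \Cref{lem-conductance-v}, and $\tilde\Delta=O(1)$ since $\Gsp$ has constant maximum degree. The point I need is that $\vPhiIn{\Gsp[W^\ast]}{\VV^\ast}=\Omega(1)$, i.e.\ each induced block $X_v\cap W^\ast$ is still an expander.

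\textbf{Main obstacle.} The one nontrivial point is precisely this last claim that the blocks $X_v\cap W^\ast$ remain well-connected after restricting to $W^\ast$, since an arbitrary subgraph $G^\ast$ may use only some of the edges incident to $v$, deleting a subset of vertices from the constant-degree expander $X_v$. In general a vertex-induced subgraph of an expander need not be an expander. I expect this to be handled either (i) by the specific choice of the expanders $X_v$ in the construction having the property that all induced subgraphs are expanders — e.g.\ if $X_v$ is taken to be a suitable structured graph — or (ii) by a direct argument bounding $\Phiv(\Gsp[W^\ast])$ from below without passing through $\vPhiIn{\cdot}{\cdot}$, instead charging any sparse cut of $\Gsp[W^\ast]$ directly against a sparse cut of $G^\ast$ by the matching-edge/expander-edge dichotomy above, possibly after a ``rounding'' step analogous to the cut transformation $C\mapsto C'$ in \Cref{lem-conductance-v}. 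I would pursue route (ii): given any cut $S$ of $\Gsp[W^\ast]$ with $0<|S|\le|W^\ast|/2$, round it to the nearest $\VV^\ast$-respecting cut $S'$ exactly as in \Cref{lem-conductance-v}, losing only a constant factor in sparsity (using $\vPhiIn{}{}=\Omega(1)$ locally within each small-degree block — or, if the blocks are not expanders, using that each block has only $O(1)$ vertices and edges so the rounding cost is absorbed into constants when $\Phiv(S)$ is already small), then apply item~3 to conclude $\Phiv_{\Gsp[W^\ast]}(S)=\Omega(\Phie(G^\ast))=\Omega(\vPhiOut{\Gsp[W^\ast]}{\VV^\ast})$. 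This yields item~4. I would double-check the edge cases $C=\emptyset$, $C=V^\ast$, and $W^\ast=\emptyset$ separately, where both sides are $0$ by convention.
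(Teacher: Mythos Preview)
Your treatment of items~1--3 and the equality $\Phie(G^\ast)=\vPhiOut{\Gsp[W^\ast]}{\VV^\ast}$ in item~4 is correct and matches the paper's proof almost verbatim (bijection between edge-slots and $W^\ast$, then the matching-edge/expander-edge dichotomy for boundaries).

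Your ``main obstacle'' is not an obstacle at all: you have the inequality $\vPhiOut{\Gsp[W^\ast]}{\VV^\ast}\geq\Phiv(\Gsp[W^\ast])$ backwards in your head. By definition, $\vPhiOut{\Gsp[W^\ast]}{\VV^\ast}$ is the minimum of $\Phiv_{\Gsp[W^\ast]}(S)$ over the \emph{restricted} family of $\VV^\ast$-respecting cuts, whereas $\Phiv(\Gsp[W^\ast])$ is the minimum over \emph{all} nontrivial cuts. A minimum over a subfamily is automatically at least the minimum over the full family, so the inequality is immediate --- this is exactly how the paper handles it, in one line. All of your discussion of \Cref{lem-conductance-v}, the expansion of the blocks $X_v\cap W^\ast$, and the rounding argument in route~(ii) is aimed at proving $\Phiv(\Gsp[W^\ast])=\Omega(\vPhiOut{\Gsp[W^\ast]}{\VV^\ast})$, which is the reverse (harder) direction and is neither claimed nor needed here. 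You correctly noted that this reverse direction would require the blocks to remain expanders, which indeed need not hold for an arbitrary subgraph $G^\ast$; fortunately the lemma does not assert it.
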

\begin{proof}
It is straightforward to see $\vol_{G^\ast}(V^\ast) = 2|E^\ast| = |W^\ast|$ by the definition of $W^\ast$. To see that $\vol_{G^\ast}(C) = |C'|$, we observe that $|C'| = \sum_{v \in C} |X_v \cap W^\ast| =  \sum_{v \in C} \deg_{G^\ast}(v) = \vol_{G^\ast}(C)$.

The statement $|\partial_{G^\ast}(C)| = |\partial_{\Gsp[W^\ast]}(C')|$ follows from the fact that $\xi$ gives  a bijective mapping between $\partial_{G^\ast}(C)$ and $\partial_{\Gsp[W^\ast]}(C')$.

Combining $\vol_{G^\ast}(C) = |C'|$ with $|\partial_{G^\ast}(C)| = |\partial_{\Gsp[W^\ast]}(C')|$, we have   $\Phie_{G^\ast}(C) = \Phiv_{\Gsp[W^\ast]}(C')$.
For any cut $C$ in $G^\ast$, its corresponding cut $C'$  in $\Gsp[W^\ast]$ respects $\VV^\ast$. Conversely, any cut  $C'$ in $\Gsp[W^\ast]$ respecting $\VV^\ast$ corresponds to a cut $C$ in $G^\ast$. Therefore, $\Phie(G^\ast) = \vPhiOut{\Gsp[W^\ast]}{\VV^\ast} \geq \Phiv(\Gsp[W^\ast])$.
\end{proof}

The following two lemmas show that sparse cuts and well-connected subgraphs in expander split $\Gsp$ can be transformed into sparse cuts and well-connected subgraphs in $G$.


\begin{lemma}[Sparse cuts in expander splits]\label{lem-expander-split-cut}
Let $G=(V,E)$ be any graph, and let $\Gsp = (\Vsp, \Esp)$ be its expander split graph simulated in the communication network $G$.
Given any  cut $C'$ in $\Gsp$ with $|C'| = \beta |\Vsp|$, where $0 < \beta \leq 1/2$, and
$\Phiv(C') = \psi$, in $O(D)$ rounds we can obtain a cut $C$ of $G$ with $\vol(C) = \Omega(\beta) \cdot \vol(V)$, $\vol(C) \leq \vol(V)/2$, and $\Phie(C) = O(\psi)$.
\end{lemma}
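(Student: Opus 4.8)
The plan is to reduce everything to the inner/outer-sparsity rounding lemma (\cref{lem-conductance-v}) applied to $\Gsp$ with the partition $\VV = \{X_v \mid v\in V\}$, and then to transport the rounded cut back to $G$ through the edge--vertex correspondence recorded in \cref{lem-expander-split-property}. Recall from the construction that each $X_v$ is a bounded-degree $\Theta(1)$-conductance expander on $\deg(v)$ vertices, so for $\Gsp$ with partition $\VV$ we have $\vPhiIn{\Gsp}{\VV} = \Omega(1)$ and the boundary degree $\tilde\Delta = \max_i\max_{w\in V_i}|N(w)\setminus V_i|$ is exactly $1$. Fix a small constant, say $\lambda = 1/3$, and split into two regimes according to the size of $\psi$ (the degenerate case $|C'|=0$, i.e. $\beta=0$, is trivial since then $C=\emptyset$ works).

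If $\psi \ge \lambda\,\vPhiIn{\Gsp}{\VV} = \Omega(1)$, I would forget $C'$ entirely and simply produce a fresh balanced cut: set $t_v = \deg_G(v)$ and invoke the balanced-partition primitive \cref{lem-bal-partition} (its hypothesis $\max_v\deg_G(v)\le\vol(V)/2$ always holds, since every edge at a vertex contributes to the degree of another vertex), which in $O(D)$ rounds returns $C$ with $\vol(V)/3 \le \vol(C)\le\vol(V)/2$. Since $\beta\le 1/2$ this gives $\vol(C)\ge\vol(V)/3 = \Omega(\beta)\vol(V)$, and $\Phie(C) = |\partial(C)|/\min\{\vol(C),\vol(V\setminus C)\}\le 1 = O(\psi)$ because $|\partial(C)|$ never exceeds the smaller side's volume.

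In the remaining case $\psi < \lambda\,\vPhiIn{\Gsp}{\VV}$, I would apply \cref{lem-conductance-v} to the cut $C'$ in $\Gsp$ (legitimate since $|C'| = \beta|\Vsp|\le|\Vsp|/2$ and $\Phiv_{\Gsp}(C') = \psi \le \lambda\,\vPhiIn{\Gsp}{\VV}$). By a purely local computation (each $X_v$ checks whether at least half of its vertices lie in $C'$) this produces a $\VV$-respecting cut $\hat C = \bigcup\{X_v \mid |C'\cap X_v|\ge|X_v|/2\}$ with $\Phiv_{\Gsp}(\hat C) = O(\tilde\Delta\psi/\vPhiIn{\Gsp}{\VV}) = O(\psi)$ and $\big||\hat C|-|C'|\big|\le\lambda|C'|$, hence $(1-\lambda)\beta|\Vsp|\le|\hat C|\le(1+\lambda)\beta|\Vsp|$. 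As $\hat C$ respects $\VV$ it equals $\bigcup_{v\in C_0}X_v$ with $C_0 = \{v \mid X_v\subseteq\hat C\}$, and item (3) of \cref{lem-expander-split-property} yields $\vol_G(C_0) = |\hat C|$ and $|\partial_G(C_0)| = |\partial_{\Gsp}(\hat C)|$, so $\Phie_G(C_0) = \Phiv_{\Gsp}(\hat C) = O(\psi)$. Finally I would take $C$ to be whichever of $C_0$, $V\setminus C_0$ has volume $\le\vol(V)/2$ (identified in $O(D)$ rounds via \cref{lem-basic}); this does not change $\Phie$, and the volume bound survives: if $|\hat C|\le\vol(V)/2$ then $\vol(C) = |\hat C|\ge(1-\lambda)\beta\vol(V)$, and otherwise $\vol(C) = \vol(V)-|\hat C|\ge\big(1-(1+\lambda)\beta\big)\vol(V)\ge(1-\lambda)\vol(V)/2 = \Omega(\beta)\vol(V)$ using $\beta\le1/2$.

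The only genuinely delicate point I anticipate is the bookkeeping for the balance guarantee $\vol(C) = \Omega(\beta)\vol(V)$: one must verify it survives both the rounding of $C'$ (which can shrink $|C'|$ by up to a $\lambda$ factor) and the possible replacement of $C_0$ by its complement (which is triggered exactly because rounding can push the size past $\vol(V)/2$), and it is essential here that $\beta\le 1/2$, so that a constant lower bound on $\vol(C)$ already implies $\vol(C) = \Omega(\beta)\vol(V)$. The large-$\psi$ regime is the other thing one has to notice, but it is easy once one observes that $\Phie$ of any cut is at most $1$.
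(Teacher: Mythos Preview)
Your proposal is correct and follows essentially the same approach as the paper: split into a large-$\psi$ regime handled by a fresh balanced bipartition via \cref{lem-bal-partition}, and a small-$\psi$ regime handled by rounding $C'$ to a $\VV$-respecting cut via \cref{lem-conductance-v} and then transporting to $G$ through \cref{lem-expander-split-property}, possibly passing to the complement to restore $\vol(C)\le\vol(V)/2$. The only cosmetic differences are that the paper uses a smaller constant $\lambda<0.1$ (where you use $\lambda=1/3$; both are fine) and that the paper takes the complement in $\Gsp$ before mapping to $G$, whereas you map first and complement in $G$---equivalent since the correspondence of \cref{lem-expander-split-property} commutes with complementation.
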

\begin{proof}
For the case $\psi > \epsilon$ for some small constant $\epsilon > 0$, we can simply pick $C$ as any balanced bipartition of $G$. Specifically, we can apply \cref{lem-bal-partition} with $t_v = \deg(v)$. This costs $O(D)$ rounds.

To prove this lemma, to suffices to find a cut $C''$  respecting   $\VV=\{ X_v \ | \ v \in V \}$ in $\Gsp$ such that 
\begin{itemize}
    \item $\Phiv(C'') = O(\psi)$,
    \item $|C''| = \Omega(\beta) \cdot |\Vsp|$, and
    \item $|C''| \leq |\Vsp|/2$.
\end{itemize}

 If such a cut $C''$ is given, then we can pick $C$ to be a cut in $G$ that corresponds to $C''$. Specifically, we  pick $C = \{ v \in V \ | \ X_v \subseteq C''\}$.  In view of \cref{lem-expander-split-property}, we have $\vol(C) / \vol(V) = |C''| / |\Vsp|$ and $\Phie(C) = \Phiv(C'')$. Hence $C$ satisfies all the requirements. 

From now on, we assume that $\Phiv(C') = \psi \leq \epsilon$ for some small constant $\epsilon$, and we focus on finding the cut $C''$ described above. Applying \cref{lem-conductance-v} to the expander split graph $\Gsp$ with the partition $\VV=\{ X_v \ | \ v \in V \}$. We obtain a cut $C''$ of $\Gsp$ respecting $\VV$ such that $\Phiv(C'') = O(\psi)$ and $|C''| = \Omega(\beta) \cdot |\Vsp|$. Here we use the fact that $\epsilon$ is a sufficiently small constant so that we can have
$\Phiv(C') = \psi = \lambda  \vPhiIn{G}{\VV}$, for some small $0 < \lambda < 0.1$ that allows us to argue \[\Phiv(C'') = O(\tilde{\Delta}\psi / \vPhiIn{\Gsp}{\VV}) = O(\psi),\] as both $\tilde{\Delta}$ and $\vPhiIn{\Gsp}{\VV}$ are constants for $\Gsp$. Also, we have $||C'| - |C''|| \leq \lambda |C'| \leq 0.1 |C'|$, which implies \[|C''| \geq 0.9|C'| = \Omega(\beta) \cdot |\Vsp|.\]


If it happens that $|C''| \leq |\Vsp|/2$, then we are done. Otherwise, we can use this cut $\Vsp \setminus C''$ to satisfy the requirement $|\Vsp \setminus C''| \leq |\Vsp|/2$. Note that we still have \[\Phiv(\Vsp \setminus C'') = \Phiv(C'') = O(\psi)\] and \[|\Vsp \setminus C''| \geq  |\Vsp \setminus C'| - \lambda |C'| 
\geq 0.5|\Vsp| - 0.1|\Vsp| = 0.4|\Vsp| =  \Omega(\beta) \cdot |\Vsp|.\]
The step of calculating the size of a cut costs $O(D)$ rounds by \cref{lem-basic}. The other parts can be done in zero rounds.
\end{proof}

\begin{lemma}[Well-connected subgraphs in expander splits]\label{lem-expander-split-high-conductance}
Let $G=(V,E)$ be any graph, and let $\Gsp = (\Vsp, \Esp)$ be its expander split graph simulated in the communication network $G$.
Given any  $W \subseteq \Vsp$ of $\Gsp$ with $|W| = \beta |\Vsp|$, where $0 < \beta \leq 1$, and
$\Phiv(\Gsp[W]) = \psi$, in zero rounds we can obtain a subset $E^\ast$ of $G$ with  $|E^\ast| \geq \beta |E|$
 and $\Phie(G[E^\ast]) = \Omega(\psi)$.
\end{lemma}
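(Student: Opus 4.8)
Looking at this lemma, it is the "dual" of \cref{lem-expander-split-cut}: given a well-connected subgraph $\Gsp[W]$ of the expander split graph, we want to recover a well-connected subgraph $G[E^\ast]$ of the original graph, losing only constant factors. Since \cref{lem-expander-split-cut} was proved using \cref{lem-conductance-v}, the natural plan is to mirror that argument.

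\medskip

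\textbf{Proof proposal.} The plan is to first reduce to the case where the subset $W$ respects the partition $\VV = \{X_v \mid v\in V\}$, i.e.\ for each $v\in V$ either $X_v\subseteq W$ or $X_v\cap W=\emptyset$, and then read off $E^\ast$ from such a $W$ directly. First I would apply \cref{lem-conductance-v} to $\Gsp$ with the partition $\VV$ in the following way: if $\psi=\Phiv(\Gsp[W])$ is above some small constant $\epsilon$, the claim is easy (take $E^\ast$ to be all of $E$, or a constant fraction of it, using that $\Phie(G)=\Theta(\Phiv(\Gsp))$ from \cref{lem-expander-split-property} together with the fact that $\Gsp[W]$ occupies a $\beta$ fraction of $\Vsp$; one needs to check the degree contributions carefully). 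Otherwise $\psi\le\epsilon$, and I would ``snap'' $W$ to the partition: for each $v\in V$, include $X_v$ in the new set $W''$ if $|X_v\cap W|\ge \tfrac12|X_v|$ and drop it otherwise. The key point is that, because $\Phiv(\Gsp[W])\ge \psi$ and each $X_v$ is itself a $\Theta(1)$-expander with $\Theta(1)$ max degree, cutting $\Gsp[W]$ along a single ``bundle boundary'' $X_v$ only removes a constant fraction of the volume attached to $v$; so the snapping changes $|W|$ by at most an $O(\psi/\vPhiIn{\Gsp}{\VV})=O(\psi)$ fraction, hence $|W''|\ge (1-o(1))|W|\ge (1-o(1))\beta|\Vsp|$, and it changes the boundary by at most a constant factor of $\partial$, so $\Phiv(\Gsp[W''])=\Omega(\psi)$. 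This is exactly the mechanism in the proof of \cref{lem-conductance-v} and \cref{lem-expander-split-cut}, just applied to a subgraph rather than a cut.

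\medskip

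Once $W''$ respects $\VV$, set $E^\ast = \{\,e\in E \mid \text{both endpoints } v \text{ of } e \text{ have } X_v\subseteq W''\,\}$ — equivalently, $E^\ast$ is the set of edges $e$ with $\xi(e)\subseteq W''$, so $G[E^\ast]$ is precisely the vertex-induced subgraph on $\{v : X_v\subseteq W''\}$ intersected with the edges whose split-vertices both survive. By \cref{lem-conductance-aux-1} (applied with $G^\ast=G[E^\ast]$, whose split-graph-image $W^\ast$ equals $W''$ by construction), we have $\Phie(G[E^\ast]) = \vPhiOut{\Gsp[W'']}{\VV^\ast} \ge \Phiv(\Gsp[W'']) = \Omega(\psi)$, and $2|E^\ast| = |W''| \ge (1-o(1))\beta|\Vsp| = (1-o(1))\cdot 2\beta|E|$. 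The lemma as stated asks for $|E^\ast|\ge\beta|E|$, so the $(1-o(1))$ slack is absorbed by choosing $\epsilon$ small enough (or the $o(1)$ here is genuinely $o(1)$ in $n$ while $\beta$ is a fixed target; either way one can arrange $|E^\ast|\ge\beta|E|$, possibly by first shrinking the target $\beta$ by a constant — I'd double-check against how the lemma is invoked, but since the bound is only needed up to constants downstream this is harmless). Finally, the round complexity: the snapping step is purely local to each bundle $X_v$, which lives at a single vertex $v$ of $G$, so each vertex can decide membership of $X_v$ in $W''$ with zero communication, and then each edge $e=\{u,v\}$ can be tested for membership in $E^\ast$ in one round — but since we only need the endpoints to agree and the information ``$X_v\subseteq W''$'' is already local to $v$, this can in fact be folded into zero rounds given the standing convention that $\Gsp$ is simulated on $G$ with no overhead and $W$ is already known locally.

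\medskip

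\textbf{Main obstacle.} The only delicate point is the snapping estimate — verifying that rounding $W$ to a $\VV$-respecting set $W''$ costs only a constant factor in $\Phiv$ and an $o(1)$ fraction in size. This is where one must use both that each $X_v$ is a good expander (so that the ``interior'' of a bundle cannot be cheaply disconnected, forcing $|X_v\cap W|$ to be either small or $\ge(1-O(\psi))$-fraction after restricting to $W$) and that $\Gsp$ has bounded degree (so boundary and volume are comparable). In \cref{lem-expander-split-cut} this was handled by a direct invocation of \cref{lem-conductance-v}; the subtlety here is that \cref{lem-conductance-v} is phrased for cuts of $G$, not for subgraphs, so I would either re-run its iterative argument on the subgraph $\Gsp[W]$ (treating $W$ as the ambient vertex set and $\VV$ restricted to $W$ as the partition), or, more cleanly, observe that ``$W$ is well-connected in $\Gsp$'' is equivalent to ``$\Vsp\setminus W$ is not a useful cut and no internal cut is sparse,'' and feed the appropriate cut of $\Gsp$ into \cref{lem-conductance-v}. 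Everything else is bookkeeping with \cref{lem-conductance-aux-1}.
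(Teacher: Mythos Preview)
Your approach has a genuine gap that breaks the size bound. You claim that after snapping to a $\VV$-respecting set $W''$ and setting $E^\ast = \{e : \xi(e) \subseteq W''\}$, the split-image $W^\ast$ of $G[E^\ast]$ from \cref{lem-conductance-aux-1} equals $W''$, hence $2|E^\ast| = |W''|$. This is false: $W^\ast$ consists only of the endpoints of $\xi(e)$ for $e \in E^\ast$, so a vertex $w' \in X_v \subseteq W''$ whose associated edge leaves the set $S = \{v : X_v \subseteq W''\}$ is \emph{not} in $W^\ast$. Concretely, let $G$ be two cliques of size $n/2$ joined by a single edge $\{a,b\}$ and take $W = X_a \cup X_b$ (already $\VV$-respecting, so $W'' = W$). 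Then $|W''| \approx n$, yet your $E^\ast = \{\{a,b\}\}$ has $|E^\ast| = 1$, while $\beta|E| \approx n/2$; the required bound fails by a factor of $n$. Your ``easy case'' (take $E^\ast = E$ when $\psi = \Omega(1)$) is also wrong: $\Phie(G) = \Theta(\Phiv(\Gsp))$ can be arbitrarily smaller than $\Phiv(\Gsp[W])$, since $W$ may sit entirely on one side of a sparse cut of $\Gsp$. The snapping step itself is also not established --- \cref{lem-conductance-v} controls how rounding a \emph{cut} to $\VV$ affects its sparsity as a cut, which is a single-boundary quantity, whereas you need to control the sparsity of an \emph{induced subgraph}, a statement about all internal cuts simultaneously.

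The paper avoids snapping altogether. It takes $E^\ast$ to be the set of edges $e$ with $\xi(e)$ incident to \emph{at least one} vertex of $W$; since each vertex of $\Vsp$ is associated to a unique edge of $E$ and at most two of them share an edge, this gives $|E^\ast| \ge |W|/2 = \beta|E|$ on the nose. The corresponding $W^\ast \supseteq W$ is then obtained from $W$ by attaching, to each $u' \in W$ whose $\xi$-partner lies outside $W$, that partner as a pendant leaf; in a bounded-degree graph, appending at most one leaf per vertex changes sparsity by only a constant factor, and adding the remaining edges of $\Gsp[W^\ast]$ can only help. Then \cref{lem-conductance-aux-1} gives $\Phie(G[E^\ast]) \ge \Phiv(\Gsp[W^\ast]) = \Omega(\psi)$. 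The key move you are missing is to \emph{enlarge} $W$ along $\xi$-edges rather than round it to full bundles.
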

\begin{proof}
The subset $E^\ast$ is selected by the set of all edges in $E$ associated with at least one vertex in $W$.  Specifically, $e  \in E^\ast$ if its corresponding edge $\xi(e)$ is incident to a vertex in $W$. This immediately implies that \[|E^\ast| \geq \frac{|W|}{2} \geq \frac{\beta |\Vsp|}{2} = \beta |E|,\] 
since $|\Vsp| = 2|E|$ by \cref{lem-expander-split-property}.

To see that $\Phie(G[E^\ast]) = \Omega(\psi)$, we consider the subset $W^\ast$ with $W \subseteq W^\ast \subseteq \Vsp$ defined by including all vertices in $\Vsp$ associated with the edges in $E^\ast$. 
Specifically, $v \in W^\ast$ if $v$ is incident to an edge $\xi(e)$ for some $e \in E^\ast$. Note that  $|W^\ast| = 2 |E^\ast|$.

We claim that $\Phiv(\Gsp[W^\ast]) = \Omega(\psi)$. To see this, observe that $\Gsp[W^\ast]$ can be constructed from $\Gsp[W]$ by applying the following  steps: 
\begin{enumerate}
    \item Let $U \subseteq W$ be a defined as follows. For each $u' \in W$,  let  $e \in E$ be the unique edge  associated with $u'$, i.e., $u'$ is incident to $\xi(e)$, and write  $\xi(e) = \{u', v'\}$. Then $u'$ is added to $U$ if $v' \notin W$. 
    \item For each vertex $u' \in U$ in the graph  $\Gsp[W]$, we append a leaf $v'$ to $u'$ by adding the edge  $\xi(e) = \{u', v'\}$, where $v'$ and $\xi(e)$ are the ones defined above.
    \item After the above step, the vertex set of current graph is identical to $W^\ast$, and the current graph is a subgraph of $\Gsp[W^\ast]$. We add extra edges to make it isomorphic to  $\Gsp[W^\ast]$.
\end{enumerate}
As $\Gsp$ has constant maximum degree, the step of appending a leaf vertex to each vertex in $U$ clearly affects the sparsity by at most a constant factor.
The step of adding extra edges cannot decrease the sparsity. Therefore, indeed $\Phiv(\Gsp[W^\ast]) = \Omega(\psi)$.
Finally,  by \cref{lem-conductance-aux-1}, we have \[\Phie(G[E^\ast]) \geq \Phiv(\Gsp[W^\ast])  = \Omega(\psi),\] as required.
\end{proof}

\subsection{Graph Operations}\label{sect-conductance-op}

The following lemma allows us to bound the sparsity of the graph $G^\ast$ resulting from contracting some vertices of $G$. Note that we do not keep self-loops and multi-edges during contraction, and recall that  the notation $E(V_i,V_j)$ denotes the set of edges $\{ e=\{u,v\} \in E \ | \ u \in V_i, v \in V_j \}$.

\begin{lemma}[Contraction]\label{lem-contraction}
Consider a graph $G=(V,E)$   and a partition $\VV = \{V_1, V_2, \ldots, V_x\}$ of $V$. Define $G^\ast$ as the result of contracting  $V_i$ into a vertex $v_i$, for each $1 \leq i \leq k$. We have $\Phiv(G^\ast) \geq c^{-1} \Phiv(G)$, where $c = \max_{1 \leq i < j \leq k} |E(V_i,V_j)|$. 
\end{lemma}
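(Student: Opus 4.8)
\textbf{Proof plan for \cref{lem-contraction}.}

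The plan is to prove the contrapositive-flavored bound directly: take a sparsest cut of $G^\ast$ and pull it back to a cut of $G$, showing that sparsity degrades by at most a factor $c$. Let $S^\ast \subseteq V(G^\ast)$ be a cut of $G^\ast$ witnessing $\Phiv(G^\ast)$, i.e. with $0 < |S^\ast| \le |V(G^\ast)|/2$ and $\Phiv(G^\ast) = |\partial_{G^\ast}(S^\ast)| / |S^\ast|$. Since every vertex of $G^\ast$ is either an original vertex of $G$ untouched by contraction or one of the new vertices $v_i$ obtained from contracting $V_i$, we can define the preimage $S = \bigcup_{v_i \in S^\ast} V_i \ \cup\ (S^\ast \cap V)$, i.e. replace each contracted vertex $v_i$ in $S^\ast$ by the whole block $V_i$. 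First I would observe the cardinality relation: $|S| \geq |S^\ast|$ always (each $v_i$ expands to $|V_i|\ge 1$ vertices), and more importantly $|V \setminus S| \geq |V(G^\ast) \setminus S^\ast|$ by the same reasoning, so $\min\{|S|, |V\setminus S|\} \geq \min\{|S^\ast|, |V(G^\ast)\setminus S^\ast|\} = |S^\ast|$.

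Next I would bound $|\partial_G(S)|$ in terms of $|\partial_{G^\ast}(S^\ast)|$. Every edge of $G$ crossing $S$ corresponds to an edge of $G^\ast$ crossing $S^\ast$ — here one must be slightly careful because contraction identifies parallel edges and deletes self-loops, so a single edge $\{v_i, v_j\}$ (or $\{v_i, w\}$ with $w \in V$) of $G^\ast$ may be the image of up to $|E(V_i, V_j)| \le c$ edges of $G$ (respectively up to $|E(V_i, \{w\})| \le c$ edges). No edge of $G$ crossing $S$ can map to a self-loop of $G^\ast$, since a self-loop at $v_i$ would mean both endpoints lie in $V_i \subseteq S$, contradicting crossing. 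Hence $|\partial_G(S)| \le c \cdot |\partial_{G^\ast}(S^\ast)|$. Combining,
\[
\Phiv(G) \le \Phiv_G(S) = \frac{|\partial_G(S)|}{\min\{|S|, |V\setminus S|\}} \le \frac{c\,|\partial_{G^\ast}(S^\ast)|}{|S^\ast|} = c\,\Phiv(G^\ast),
\]
which rearranges to $\Phiv(G^\ast) \ge c^{-1}\Phiv(G)$, as claimed. One should also dispatch the degenerate case $S = \emptyset$ or $S = V$: this cannot happen since $S^\ast$ is a nontrivial cut and $S$ contains at least the vertices of $S^\ast$ and misses at least the vertices of $V(G^\ast)\setminus S^\ast$, both nonempty.

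The main obstacle — really the only subtlety — is the bookkeeping around how contraction coalesces parallel edges and discards self-loops: I must make sure the factor $c$ is exactly $\max_{i<j}|E(V_i,V_j)|$ and not something larger coming from edges between a block $V_i$ and a singleton original vertex, which is why it is worth noting that an original vertex $w \in V$ can itself be regarded as a trivial block, so the bound $|E(V_i,\{w\})| \le \max_{i<j}|E(V_i,V_j)|$ is automatically covered if we read the partition as refining to singletons on the uncontracted part; alternatively one simply remarks $|E(V_i,\{w\})| \le |E(V_i, V_j)|$ for the block $V_j \ni w$. Everything else is a routine inequality chain, so I would keep the write-up short.
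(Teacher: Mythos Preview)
The proposal is correct and follows essentially the same approach as the paper: lift a sparsest cut of $G^\ast$ to a cut of $G$ by replacing each contracted vertex with its block, then compare sizes and boundaries. One small point: since $\VV$ is stated to be a partition of all of $V$, there are no ``original vertices untouched by contraction'' in $G^\ast$, so the digression about singleton blocks and edges of the form $E(V_i,\{w\})$ is unnecessary---every edge of $\partial_{G^\ast}(S^\ast)$ is already of the form $\{v_i,v_j\}$ and lifts to at most $|E(V_i,V_j)| \le c$ edges of $G$.
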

\begin{proof}
Let $C$ be any cut of $G^\ast=(V^\ast, E^\ast)$. Let $C' = \bigcup_{v_i \in C} V_i$ be its corresponding cut in $G$. To prove the lemma, it suffices to show that $|\partial(C)| / |C| \geq c^{-1} |\partial(C')| / |C'|$, If this is true for any  $C$ be any cut of $G^\ast$, by selecting $C$ as a sparsest cut, we have 
\begin{align*}
     \Phiv(G^\ast)  = \Phiv(C) &= \max\left\{  
    \frac{|\partial(C)|}{|C|}, \frac{|\partial(C)|}{|V^\ast \setminus C|}
    \right\}\\
    &\geq \max\left\{  
    \frac{c^{-1} |\partial(C')|}{|C'|}, \frac{c^{-1}|\partial(C')|}{|V \setminus C'|}
    \right\}\\
    &= c^{-1} \Phiv(C')\\
    &\geq  c^{-1} \Phiv(G).
\end{align*}
Now we prove that $|\partial(C)| / |C| \geq c^{-1} |\partial(C')| / |C'|$.
It is straightforward to see that $|C'| \geq |C|$, so we just need to show that $|\partial(C)|  \geq c^{-1} |\partial(C')|$. Indeed, $|\partial(C')| = \sum_{\{v_i, v_j\} \in \partial(C)} |E(V_i,V_j)| \leq c \cdot |\partial(C)|$ by definition of $c$.
\end{proof}



The next lemma considers the operation of subdividing edges into paths of length at most $d$. 


\begin{lemma}[Subdivision]\label{lem-subdivide}
Consider a graph $G=(V,E)$. Define $G^\ast$ as the result of replacing some of its edges  by  paths of length at most $d$. We have  $\Phiv(G^\ast) = \Omega( \Delta^{-1} d^{-1} \Phiv(G))$.
\end{lemma}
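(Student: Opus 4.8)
\textbf{Proof proposal for Lemma~\ref{lem-subdivide} (Subdivision).}

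The plan is to bound the sparsity of $G^\ast$ from below by relating every cut of $G^\ast$ to a cut of $G$. Let $C^\ast$ be an arbitrary cut of $G^\ast$ with $0 < |C^\ast| \le |V^\ast|/2$, where $V^\ast$ is the vertex set of $G^\ast$; it suffices to show $\Phiv_{G^\ast}(C^\ast) = \Omega(\Delta^{-1} d^{-1} \Phiv(G))$, and then take $C^\ast$ to be a sparsest cut of $G^\ast$. The vertices of $G^\ast$ split into the \emph{original} vertices (those of $V$) and the \emph{internal} vertices introduced on the subdivided paths; each subdivided edge $e = \{u,v\}$ contributes at most $d-1$ internal vertices, which form a path $Q_e$ joining $u$ and $v$. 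First I would define the induced cut $C = C^\ast \cap V$ on the original vertices. The key structural observation is that an edge $e = \{u,v\} \in E$ with $u \in C$ and $v \notin C$ — i.e.\ $e \in \partial_G(C)$ — gives rise to a path $Q_e$ (or the single edge $e$, if $e$ was not subdivided) whose two endpoints lie on opposite sides of $C^\ast$, hence $Q_e$ contributes at least one edge to $\partial_{G^\ast}(C^\ast)$; distinct $e$ give edge-disjoint such paths, so $|\partial_{G^\ast}(C^\ast)| \ge |\partial_G(C)|$.

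The main work is then the volume/cardinality bookkeeping, and this is where I expect the only real subtlety. There are two cases. If $C$ is a "large enough" cut of $G$ — say $|C| \ge |V \setminus C|$ is not forced, but more to the point, if neither $C = \emptyset$ nor $C = V$ — then $|\partial_G(C)| \ge \Phiv(G) \cdot \min\{|C|, |V\setminus C|\}$, and I need to compare $\min\{|C|,|V\setminus C|\}$ with $\min\{|C^\ast|, |V^\ast \setminus C^\ast|\} = |C^\ast|$ (by our normalization). Here $|C^\ast| \le |C| + (\text{number of internal vertices on cut paths plus internal vertices of paths entirely inside } C^\ast)$. The clean way is: every internal vertex of $G^\ast$ lies on some path $Q_e$; charge it to an endpoint of $e$. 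Each original vertex $u$ has $\deg_G(u) \le \Delta$ incident edges, each path of length $\le d$, so $u$ is charged at most $\Delta \cdot (d-1) \le \Delta d$ internal vertices. Hence $|C^\ast| \le |C| + \Delta d \, |C| = O(\Delta d)\,|C|$ provided $|C| \ge 1$; symmetrically $|V^\ast \setminus C^\ast| = O(\Delta d)\,|V \setminus C|$. Combining, $\min\{|C^\ast|, |V^\ast\setminus C^\ast|\} = O(\Delta d)\min\{|C|,|V\setminus C|\}$, so
\[
\Phiv_{G^\ast}(C^\ast) = \frac{|\partial_{G^\ast}(C^\ast)|}{\min\{|C^\ast|,|V^\ast\setminus C^\ast|\}} \ge \frac{|\partial_G(C)|}{O(\Delta d)\min\{|C|,|V\setminus C|\}} \ge \frac{\Phiv(G)}{O(\Delta d)}.
\]

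The remaining case is $C \in \{\emptyset, V\}$, i.e.\ $C^\ast$ contains original vertices all on one side (or none). If $C = \emptyset$, then $C^\ast$ consists entirely of internal vertices; since each path $Q_e$ is a simple path with both endpoints outside $C^\ast$, the set $C^\ast \cap Q_e$ is a union of sub-intervals of the interior of $Q_e$, and each nonempty such sub-interval contributes at least $2$ edges to $\partial_{G^\ast}(C^\ast)$ while containing at most $d-1$ vertices; a short counting argument then gives $|\partial_{G^\ast}(C^\ast)| \ge \frac{2}{d-1}|C^\ast| \ge \frac{1}{d}|C^\ast|$, which is $\Omega(d^{-1}) \ge \Omega(\Delta^{-1}d^{-1}\Phiv(G))$ since $\Phiv(G) \le \Delta$ always. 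The case $C = V$ is handled by passing to the complement $V^\ast \setminus C^\ast$, which then has no original vertices (or we re-normalize), reducing to the previous subcase. Assembling the three cases yields $\Phiv(G^\ast) = \Omega(\Delta^{-1} d^{-1} \Phiv(G))$. I would expect the charging argument bounding $|C^\ast|$ in terms of $|C|$ (and the analogous statement for the complement) to be the one place where care is needed to avoid an extra factor, but it is routine.
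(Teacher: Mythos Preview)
Your charging argument for $|C^\ast| \le O(\Delta d)|C|$ has a genuine gap. You charge each internal vertex of $G^\ast$ to an endpoint of its path $Q_e$, and then assert that internal vertices in $C^\ast$ are charged to vertices in $C$. But an internal vertex in $C^\ast$ lying on a path $Q_e$ whose \emph{both} endpoints are in $V\setminus C$ is charged to $V\setminus C$, not to $C$; such vertices are simply not counted by your bound. Concretely, take $G$ a long path $v_0,\dots,v_n$, subdivide each edge into a path of length $d$, and let $C^\ast$ consist of $\{v_0\}$ together with one internal vertex from each of the paths $Q_{\{v_i,v_{i+1}\}}$ for $i\ge 1$. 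Then $C=\{v_0\}$, $|C|=1$, while $|C^\ast|$ grows linearly in $n$, so the claimed inequality $|C^\ast|\le O(\Delta d)|C|$ is false. Your approach can be repaired: those ``orphan'' internal islands each contribute at least two edges to $\partial_{G^\ast}(C^\ast)$ and contain at most $d-1$ vertices, so the correct bound is $|C^\ast|\le O(\Delta d)|C| + O(d)\,|\partial_{G^\ast}(C^\ast)|$, after which a rearrangement (using $\Phiv(G)\le\Delta$) still yields the lemma.

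The paper's proof sidesteps this bookkeeping by working in the other direction. It starts from a sparsest cut $C$ of $G^\ast$, assumes $\Phiv_{G^\ast}(C)\le 1/(2d)$ (else done), and \emph{rounds} $C$ to a cut $C'$ that respects each path $Q_e$ (each path crosses $C'$ at most once), moving at most $(d-1)|\partial_{G^\ast}(C)|$ vertices in the process; the sparsity assumption guarantees $|C'|>|C|/2$. Then $C''=C'\cap V$ satisfies $|\partial_G(C'')|=|\partial_{G^\ast}(C')|$ and $|C'|\le d\Delta|C''|$, which gives the result cleanly. The rounding step is exactly what absorbs the orphan internal vertices that your charging missed.
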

\begin{proof}
Let $C$ be a sparsest cut of  $G^\ast=(V^\ast, E^\ast)$, i.e., $\Phiv(C) = \Phiv(G^\ast)$. We assume  $\Phiv(C) \leq 1/(2d)$, since otherwise we are done already. For each edge $e \in E$, denote $P_e$ as the path of length at most $d$ in $G^\ast$ corresponding to $e$.
Let $C'$ be the cut resulting from applying the following procedure to $C$. 
\begin{enumerate}
    \item Initially $\tilde{C} = C$.
    \item For each edge $e \in E$, do the following.  
If the two endpoints  of $P_e$ belong to the same side of the cut $(\tilde{C}, V^\ast \setminus \tilde{C})$, then we move the entire path $P_e$ to that side.  If the two endpoints of $P_e$ belong to different sides of the cut $(\tilde{C}, V^\ast \setminus \tilde{C})$, then we move the intermediate vertices in $P_e$ appropriately so that $P_e$ crosses the cut $(\tilde{C}, V^\ast \setminus \tilde{C})$ exactly once. 
\item $C' = \tilde{C}$ is the final result.
\end{enumerate}
Clearly we have $|\partial(C')| \leq |\partial(C)|$, and the number of vertices moved from one side of the cut to the other side during the procedure is at most $(d-1)|\partial(C)|$. Since $\Phiv(C) \leq 1/(2d)$,  we have $|C'| \geq  |C| - (d-1)|\partial(C)|
\geq |C|(1 -(d-1)\Phiv(C))
> |C| /2$. Therefore, \[\frac{|\partial(C)|}{|C|} > \frac{1}{2} \cdot \frac{|\partial(C')|}{|C'|}.\]
The cut $C'$ has the property that for each $e \in E$, the path $P_e$ crosses the cut at most once.  Therefore, if we take $C'' = C' \cap V$ as the cut in $G$ corresponding to $C'$, then we have $|\partial(C'')| = |\partial(C')|$ and $|C'| \leq d \sum_{v \in C''} \deg(v) \leq d \Delta |C''|$. Therefore, $|\partial(C')|/|C'| \geq \Delta^{-1} d^{-1}  |\partial(C'')|/|C''|$, and so
\[\frac{|\partial(C)|}{|C|} > \frac{1}{2} \cdot \frac{|\partial(C')|}{|C'|}
\geq  \frac{1}{2 \Delta d} \cdot \frac{|\partial(C'')|}{|C''|}.\]
Similarly, applying the same analysis to the other side of the cut, we  have
\[\frac{|\partial(C)|}{|V^\ast \setminus C|} >  \frac{1}{2} \cdot\frac{|\partial(C')|}{|V^\ast \setminus C'|} \geq \frac{1}{2 \Delta d} \cdot  \frac{|\partial(C'')|}{|V \setminus C''|},\]
and so
\begin{align*}
\Phiv(G^\ast) = \Phiv(C) 
&= \max \left\{ \frac{|\partial(C)|}{|C|}, \frac{|\partial(C)|}{| V^\ast \setminus C|}\right\}\\
&>
\frac{1}{2 \Delta d} \cdot \max  \left\{  \frac{|\partial(C'')|}{|C''|},  \frac{|\partial(C'')|}{| V \setminus C''|}\right\}\\
&> \frac{1}{2 \Delta d} \cdot  \Phiv(G),
\end{align*}
as required.
\end{proof}

\subsection{Well-connected Subgraphs from Expander Embeddings}\label{sect-expander-emb}

In this section, we show how to extract a well-connected subgraph from a given expander embedding with small congestion and dilation.

\begin{lemma}[Expander embeddings]\label{lem-expander-emb}
Let $G=(V,E)$ be a graph. Suppose a graph $H$ with $\Phiv(H) \geq \psi$ and maximum degree $\Delta_H$ can be embedded into $U \subseteq V$ with congestion $c$ and dilation $d$. Let $W  \subseteq V$ be the set of all vertices involved in the embedding. Then we have $$\Phiv(G[W]) = \Omega\left(c^{-1} d^{-1} \Delta_H^{-1}  \psi\right).$$
\end{lemma}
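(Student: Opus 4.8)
The plan is to relate the sparsity of $G[W]$ to the sparsity of the embedded graph $H$ via a contraction/subdivision argument, combined with the congestion and dilation bounds. First I would consider an auxiliary graph $G'$ built from the embedding: for each edge $e = \{u',v'\}$ of $H$, take the path $P_e$ of length at most $d$ in $G$ that embeds $e$, and let $G'$ be the subgraph of $G[W]$ consisting of the union of all these paths $\bigcup_{e \in E(H)} P_e$. Since every vertex of $W$ lies on some path, $G'$ spans $W$ (though $G[W]$ may have extra edges; adding edges only increases sparsity, so it suffices to lower-bound $\Phiv(G')$).

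The key observation is that $G'$ is, up to the congestion overhead, a subdivision of $H$. More precisely, I would argue in two steps. Step one: form a multigraph $\hat{G}$ by contracting, for each original vertex $u' \in U$, the (at most $c \cdot d$ — actually bounded using congestion) set of internal path-vertices appropriately; the cleanest route is to observe that if we contract each path $P_e$ down to its two endpoints in $U$, we recover exactly $H$ as a multigraph. So $H$ is obtained from $G'$ by contracting connected pieces. Running this backwards: $G'$ is obtained from $H$ by subdividing each edge into a path of length $\le d$, except that an edge $e'$ of $G$ may be shared by up to $c$ different paths $P_e$, which means the subdivided paths are not vertex-disjoint. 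To handle the sharing, I would first apply \cref{lem-subdivide} conceptually to a vertex-disjoint version and then account for the congestion $c$ as an extra multiplicative loss, or — more carefully — directly adapt the cut-moving argument in the proof of \cref{lem-subdivide}: given a sparsest cut $C$ of $G'$ with $\Phiv(C)$ small, move each path $P_e$ entirely to one side or make it cross exactly once, obtaining a cut $C''$ of $H$ with $|\partial_H(C'')| \le c \cdot |\partial_{G'}(C')|$ (each $G'$-edge in the boundary is charged to at most $c$ paths, hence at most $c$ edges of $H$) and $|C''| \ge |C'|/(2d\Delta_H)$-ish (each $H$-vertex corresponds to $\le d$ path-segments incident to it, times the congestion bookkeeping; here I would use $|W| \le d \cdot \sum_{e} 1 \le$ something controlled by $\Delta_H |U|$ and $d$).

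Putting the two inequalities together, $\Phiv(G') \ge \frac{|\partial_{G'}(C')|}{|C'|} \ge \frac{1}{c} \cdot \frac{|\partial_H(C'')|}{d \Delta_H |C''|} \cdot (\text{const}) \ge \Omega(c^{-1} d^{-1} \Delta_H^{-1}) \cdot \Phiv(H) \ge \Omega(c^{-1}d^{-1}\Delta_H^{-1}\psi)$, and since $G[W] \supseteq G'$ on the same vertex set, $\Phiv(G[W]) \ge \Phiv(G')$, giving the claim. The main obstacle I anticipate is the bookkeeping in the second inequality: bounding $|C'|$ (a vertex count in the subdivided, congested graph) from above in terms of $|C''|$ (a vertex count in $H$). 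Each vertex of $W$ is either an image of a vertex in $U$ or an internal vertex of some $P_e$; the number of internal vertices is at most $(d-1)|E(H)| \le (d-1)\Delta_H|U|/2$, so $|W| = O(d\Delta_H|U|)$, and one must verify that after the cut-moving step the surviving cut $C''$ retains at least an $\Omega(1/(d\Delta_H))$ fraction of the relevant vertices — this requires assuming $\Phiv(C')$ is below a threshold like $1/(2d)$ (otherwise we are already done since $\psi \le \Delta_H$ forces the target bound to be $O(1/d)$), exactly as in the proof of \cref{lem-subdivide}. The congestion factor $c$ enters only through the edge-charging and does not interact with the vertex-counting, so the two losses multiply cleanly.
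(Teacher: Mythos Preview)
Your proposal is essentially correct and matches the paper's approach, but the paper executes it more cleanly by going \emph{forward} rather than backward. Instead of taking a sparsest cut of $G[W]$ and pulling it back to $H$ (which forces you into the awkward cut-moving argument with shared internal vertices that you flag as the ``main obstacle''), the paper builds a chain $H = G_0 \to G_1 \to G_2 \to G_3 = G[W]$: first subdivide every edge of $H$ into its embedding path using \emph{fresh} intermediate vertices to get $G_1$ (so \cref{lem-subdivide} applies directly, losing $\Delta_H d$); then contract the vertex classes that map to the same vertex of $G$ to get $G_2$ (each resulting edge is the merge of at most $c$ edges of $G_1$ by the congestion bound, so \cref{lem-contraction} loses only $c$); finally add the remaining edges of $G[W]$, which cannot decrease sparsity. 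This is exactly your first suggested route (``apply \cref{lem-subdivide} conceptually to a vertex-disjoint version and then account for the congestion $c$ as an extra multiplicative loss''), and it sidesteps the bookkeeping you worry about in the second, direct route: since the subdivision happens before any vertex identification, the paths in $G_1$ are genuinely vertex-disjoint and the cut-moving inside \cref{lem-subdivide} is well-defined.
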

\begin{proof}
Starting from $G_0 = H$, we construct $G[W]$ as follows.
\begin{enumerate}
    \item For each edge $e$ in $G_0 = H$, suppose the path that embeds $e$ in the embedding has length $x \leq d$, then replace $e$  by a path of length $x$.  Denote the resulting graph by $G_1$.
    \item Some sets of vertices in $G_1$ correspond to the same vertex in $G$ in the embedding of $H$ to $U$. We contract these sets of vertices. Denote the resulting graph by $G_2$. Note that each edge in $G_2$ is the result of merging at most $c$ edges of $G_1$.
    \item Now the set of vertices of $G_2$ is identical to $W$, and $G_2$ is a subgraph of $G[W]$. We add edges to $G_2$ to make it isomorphic to $G[W]$. Denote the resulting graph by $G_3 = G[W]$.
\end{enumerate}
By \cref{lem-subdivide}, $\Phiv(G_1) = \Omega(\Delta_H^{-1} d^{-1} \Phiv(G_0)) = \Omega(\Delta_H^{-1} d^{-1} \psi)$. By \cref{lem-contraction}, $\Phiv(G_2) \geq c^{-1} \Phiv(G_1) = \Omega(\Delta_H^{-1} c^{-1} d^{-1} \psi)$. Since adding edges does not decrease sparsity, we have $\Phiv(G[W])= \Phiv(G_3) \geq \Phiv(G_2) = \Omega(\Delta_H^{-1} c^{-1} d^{-1} \psi)$, as required.
\end{proof}

The next lemma considers the case we have a simultaneous embedding of multiple expanders.

\begin{lemma}[Simultaneous expander embeddings]\label{lem-expander-emb-multi}
Let $G=(V,E)$ be a graph. 
 Let $U_1, U_2, \ldots, U_k$ be disjoint subsets of $V$.
 Suppose we can embed  $H_1, H_2, \ldots, H_k$ simultaneously to  $U_1, U_2, \ldots, U_k$ with congestion $c$ and dilation $d$. Consider the following parameters.
 \begin{itemize}
     \item $\hat{\Delta}$ is an upper bound of the maximum degree of $H_1, H_2, \ldots, H_k$ and $G$.
     \item $\phi_i$ is a lower bound of $\min_{1 \leq i \leq k}\Phiv(H_i)$.
     \item $\phi_o$ is an lower bound of $\vPhiOut{G[U]}{\UU}$, where $U = U_1 \cup U_2 \cup \cdots \cup U_k$ and $\UU = \{U_1, U_2, \ldots, U_k\}$.
 \end{itemize}
 Let $W  \subseteq V$ be the set of all vertices involved in the embedding. Then we have 
 $$\Phiv(G[W]) = \Omega\left(\hat{\Delta}^{-2} c^{-1} d^{-1} \phi_i \phi_o\right).$$
\end{lemma}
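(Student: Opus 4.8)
\textbf{Proof proposal for \cref{lem-expander-emb-multi}.}

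The plan is to reduce this to the single-expander case \cref{lem-expander-emb} by a two-level contraction argument. First I would form the auxiliary graph $\hat{H}$ on the ``super-vertex'' set $\{U_1,\dots,U_k\}$, where $\hat{H}$ is exactly the quotient of $G[U]$ obtained by contracting each $U_i$ to a single node: this is the graph whose sparsity is measured by $\vPhiOut{G[U]}{\UU}$, and since each $U_i$ is a cluster and $G$ has maximum degree at most $\hat{\Delta}$, the number of edges between any two clusters is at most $\hat{\Delta}\cdot\min(|U_i|,|U_j|)$, so I need to be a little careful --- but actually the cleaner route is to build a single graph $H^\star$ on the vertex set $U$ whose edge set is $\bigcup_i E(H_i)$ together with a representative set of the inter-cluster edges of $G[U]$, and then argue $\Phiv(H^\star)$ is large directly. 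The key claim is that $H^\star$ restricted to the interplay of the $H_i$'s and the cross edges is itself ``well-connected enough,'' and then $H^\star$ embeds into $U\subseteq V$ with congestion $O(c)$ and dilation $O(d)$ (the $H_i$-edges use the simultaneous embedding; the cross edges of $G[U]$ are already edges of $G$, with congestion $1$ and dilation $1$), so \cref{lem-expander-emb} applied to $H^\star \hookrightarrow U$ yields $\Phiv(G[W]) = \Omega(c^{-1}d^{-1}\Delta_{H^\star}^{-1}\Phiv(H^\star))$ with $\Delta_{H^\star} = O(\hat{\Delta})$.

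The heart of the argument is the lower bound on $\Phiv(H^\star)$, and this is a standard ``combining expanders along an expander of clusters'' computation of the same flavor as \cref{lem-conductance-v}. Concretely I would take any cut $S$ in $H^\star$ with $0<|S|\le |U|/2$ and split its contribution by cluster. For each $i$, if $S$ cuts $U_i$ roughly in half (say $\min(|S\cap U_i|,|U_i\setminus S|) \ge \tfrac{1}{2}\min_i$... ), the $\phi_i$-expansion of $H_i$ inside $U_i$ already gives $|\partial_{H^\star}(S)\cap E(H_i)| \ge \phi_i \cdot \min(|S\cap U_i|, |U_i\setminus S|)$, which handles the ``balanced within a cluster'' part of $S$. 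For the remaining clusters --- those lying almost entirely on one side of $S$ --- I round $S$ to the nearest $\UU$-respecting cut $S'$ (move each such $U_i$ fully to the majority side); the rounding error in vertex count is controlled because each imbalanced cluster contributes few boundary edges, and $\phi_o = \vPhiOut{G[U]}{\UU}$ lower-bounds the number of cross edges cut by $S'$, which are cross edges of $H^\star$ and hence also boundary edges of $S$ up to the small correction. Balancing the two regimes --- whether at least half of $|S|$ is ``balanced-within-clusters'' mass or at least half is ``whole-cluster'' mass --- gives $\Phiv(H^\star) = \Omega(\hat{\Delta}^{-1}\phi_i\phi_o)$, where one factor of $\hat{\Delta}$ comes from relating cross-edge counts between $G[U]$ and its quotient (at most $\hat{\Delta}$ edges merge into one quotient edge, exactly as in \cref{lem-contraction}).

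Putting the pieces together: $\Phiv(G[W]) = \Omega(c^{-1}d^{-1}\hat{\Delta}^{-1}\Phiv(H^\star)) = \Omega(c^{-1}d^{-1}\hat{\Delta}^{-1}\cdot \hat{\Delta}^{-1}\phi_i\phi_o) = \Omega(\hat{\Delta}^{-2}c^{-1}d^{-1}\phi_i\phi_o)$, as claimed. The main obstacle I anticipate is making the case split on the cut $S$ fully rigorous --- in particular getting the constants right when a cluster is only mildly imbalanced, and making sure the rounding-to-$S'$ step does not destroy the bound on $\min(|S|,|U|\setminus|S|)$ (this is exactly the $1 \pm o(1)$-balance subtlety in \cref{lem-conductance-v}, and here it matters because I then feed $S'$ into the $\phi_o$ bound). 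A secondary technical point is verifying that $W$, the vertex set of the combined embedding, is the same whether one thinks of embedding $H^\star$ or embedding the $H_i$'s plus using $G[U]$-edges in place; since cross edges of $G[U]$ are embedded by themselves (trivial path of length one), $W$ is unchanged, so this is a routine check rather than a real difficulty.
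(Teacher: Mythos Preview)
Your approach is correct and essentially the same as the paper's. Your graph $H^\star$ is exactly the paper's $G_0$ (the graph $G[U]$ with each $G[U_i]$ replaced by $H_i$), and your two steps --- bounding $\Phiv(H^\star)=\Omega(\hat{\Delta}^{-1}\phi_i\phi_o)$, then invoking \cref{lem-expander-emb} --- correspond precisely to the paper's two steps: cite \cref{lem-conductance-v} directly (so the rounding argument you anticipate as the ``main obstacle'' is already done), then unroll the subdivide/contract/add-edges sequence of \cref{lem-expander-emb} by hand.
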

\begin{proof}
The proof is similar to that of \cref{lem-expander-emb}. Consider the  graph $G[U]$,  but each $G[U_i]$ is replaced by $H_i$. We denote this graph as $G_0$. We construct $G[W]$ from $G_0$ as follows.
\begin{enumerate}
    \item For each $1 \leq i \leq k$, for each edge $e$ in $H_i \subseteq G_0$, if in the embedding $e$ is a path of length $x \leq d$, then we replace $e$ by a path of length $x$. Denote the resulting graph by $G_1$.
    \item Some sets of vertices in $G_1$ correspond to the same vertex in $G$ in the embedding of $H_1, H_2, \ldots, H_k$. We contract these sets of vertices. Denote the resulting graph by $G_2$. Note that each edge in $G_2$ is the result of merging at most $c$ edges of $G_2$.
    \item Now the set of vertices of $G_2$ is identical to $W$, and $G_2$ is a subgraph of $G[W]$. We add edges to $G_2$ to make it isomorphic to $G[W]$. Denote the resulting graph by $G_3 = G[W]$.
\end{enumerate}
Applying
\cref{lem-conductance-v} to $G_0$ and $\UU = \{U_1, U_2, \ldots, U_k\}$, we have $\Phiv(G_0) = \Omega( \tilde{\Delta}^{-1} \cdot \vPhiIn{G_0}{\UU} \cdot \vPhiOut{G_0}{\UU})$, where  $\tilde{\Delta} \leq \hat{\Delta}$, $\vPhiIn{G_0}{\UU} \geq \phi_i$, and $\vPhiOut{G_0}{\UU} \geq \phi_o$. Therefore, $\Phiv(G_0) = \Omega(\hat{\Delta}^{-1} \phi_i \phi_o)$.
By \cref{lem-subdivide}, $\Phiv(G_1) = \Omega(\hat{\Delta}^{-1} d^{-1} \Phiv(G_0)) = \Omega(\hat{\Delta}^{-2} d^{-1} \phi_i \phi_o)$. By \cref{lem-contraction}, $\Phiv(G_2) \geq c^{-1} \Phiv(G_1) = \Omega(\hat{\Delta}^{-2} c^{-1} d^{-1} \phi_i \phi_o)$. Since adding edges does not decrease $\Phiv$, we have $\Phiv(G[W])= \Phiv(G_3) \geq \Phiv(G_2) = \Omega(\hat{\Delta}^{-2} c^{-1} d^{-1} \phi_i \phi_o)$, as required.
\end{proof}
\section{Maximal Flow}\label{sect-flow}

In this section, we provide distributed algorithms for variants of maximal flow problems and show how  sparse cuts can be obtained   if we cannot find a desired  solution for the given flow problems.  
\subsection{Sparse Cuts from Well-separated Sets}
\label{sect-cut-from-flow}

We review a technique from~\cite{chuzhoy2019deterministic} that enables us to obtain sparse cuts from certain maximal flows.
Consider the following motivating example. We have a set of source vertices $S$ and a set of sink vertices $T$ in a graph $G = (V,E)$. Suppose we invoke an algorithm that returns a maximal set of edge-disjoint $S$-$T$ paths $\PP=\{P_1, P_2, \ldots, P_x\}$ subject to the constraint that the length of each path is at most $d$.  Let $E^\ast$ be the set of edges involved in theses paths. Now it is clear that for each $u \in S$ and for each $v \in T$, we must have $\dist_{E \setminus E^\ast}(u, v) > d$, since otherwise $\PP$ is not maximal.
If $|\PP|$ is very small, then we expect that there is a sparse cut in the graph, and such a cut can be found using the following lemma.

\begin{lemma}[Cuts from flows~\cite{chuzhoy2019deterministic}]\label{lem-cutfromflow-basic}
Consider a graph $G = (V,E)$ with maximum degree $\Delta$.
Suppose we are given $S \subseteq V$, and $T \subseteq V$ satisfying $\dist(u, v) > d$ for each $u \in S$ and $v \in T$. 
Then there is a cut $C$ with $|C| \leq |V|/2$ and $\Phiv(C) = O(\Delta d^{-1} \log |V|)$, and this cut separates $S$ and $T$. Such a cut $C$ can be found in $O(D+d)$ rounds.
\end{lemma}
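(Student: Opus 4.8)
The plan is to perform a ball-growing (layered BFS) argument simultaneously from $S$ and from $T$, using the hypothesis $\dist(u,v) > d$ for all $u \in S$, $v \in T$ to guarantee that the two growing regions never meet. First I would define, for each integer $j \ge 0$, the set $B_j = \{ w \in V : \dist(w, S) \le j \}$ of vertices within distance $j$ of $S$, so that $B_0 \supseteq S$ and $B_0 \subseteq B_1 \subseteq \cdots$. The key structural fact is that $B_{\lfloor d/2 \rfloor}$ and the analogous ball $B'_{\lfloor d/2 \rfloor}$ grown from $T$ are disjoint: if they shared a vertex $w$, then $\dist(S,w) \le d/2$ and $\dist(w,T) \le d/2$ would force $\dist(u,v) \le d$ for some $u \in S$, $v \in T$, contradicting the hypothesis. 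Hence at least one of the two balls, say the one grown from $S$ (after swapping $S$ and $T$ if necessary), satisfies $|B_{\lfloor d/2 \rfloor}| \le |V|/2$.

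Next I would extract a sparse cut from within the layers $B_0 \subseteq B_1 \subseteq \cdots \subseteq B_{\lfloor d/2 \rfloor}$ by a standard averaging / telescoping argument. Let $r = \lfloor d/2 \rfloor$, and consider the $r$ consecutive "shells" $\partial(B_j) \subseteq E(B_j, V \setminus B_j)$ for $j = 0, 1, \ldots, r-1$. Each edge in $\partial(B_j)$ goes from layer $j$ to layer $j+1$, and since $G$ has maximum degree $\Delta$, we have $|\partial(B_j)| \le \Delta \cdot |B_{j+1} \setminus B_j|$, and also $|\partial(B_j)| \le \Delta |B_j|$. If for every $j < r$ we had $|\partial(B_j)| > \frac{c \Delta \log|V|}{d} |B_j|$ for an appropriate constant $c$, then $|B_{j+1}| \ge |B_j|(1 + \tfrac{c \log|V|}{d})$ for each $j$, so $|B_r| \ge (1 + \tfrac{c\log|V|}{d})^{r} \ge (1+\tfrac{c\log|V|}{d})^{d/2-1} > |V|$ once $c$ is a large enough constant, contradicting $|B_r| \le |V|/2 \le |V|$. (Here I use $|B_0| \ge 1$.) Therefore there exists some $j^\star < r$ with $|\partial(B_{j^\star})| \le \frac{c\Delta \log|V|}{d}|B_{j^\star}|$. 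Setting $C = B_{j^\star}$ gives $|C| \le |B_r| \le |V|/2$, $\Phiv(C) \le \frac{|\partial(C)|}{|C|} = O(\Delta d^{-1}\log|V|)$, and $C \supseteq S$ while $C \cap T = \emptyset$ (since $B_r$, hence $B_{j^\star}$, is disjoint from $T$), so $C$ separates $S$ and $T$ as required. If instead $S = \emptyset$ or $T = \emptyset$ one takes $C = \emptyset$ or handles it trivially; and if $d$ is so small that the bound $O(\Delta d^{-1}\log|V|)$ exceeds $1$ there is nothing to prove, so we may assume $d \ge \Omega(\Delta \log|V|)$.

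For the distributed implementation: first compute, in $O(D)$ rounds via the Steiner tree primitives (\cref{lem-basic}), whether $|B_r| \le |V|/2$ for the $S$-ball; do the same for the $T$-ball; pick the side with the smaller ball. Then run BFS from the chosen source set for $r = \lfloor d/2 \rfloor$ layers — this costs $O(d)$ rounds in $\CONGEST$ — so that every vertex learns its layer index (or $\infty$). Each vertex can then locally report $|B_j|$ and $|\partial(B_j)|$ contributions, and a single aggregation along the Steiner tree ($O(D)$ rounds, or $O(D + d)$ with pipelining over the $r \le d$ candidate values of $j$) lets us identify $j^\star$ and broadcast it, after which each vertex knows whether it belongs to $C$. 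Total: $O(D + d)$ rounds.

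I expect the main obstacle to be bookkeeping the exact constants and the edge cases: in particular making the exponential-growth contradiction airtight requires being careful that $r = \lfloor d/2\rfloor$ really forces $|B_r| > |V|/2$ (one must handle the "$B_r$ already exhausts a large fraction of $V$" case and the degenerate small-$d$ case cleanly), and choosing which of the $S$-side and $T$-side balls to grow so that the resulting $C$ has $|C| \le |V|/2$. The averaging step itself and the distributed cost analysis are routine; everything hinges on the clean disjointness of the two half-distance balls, which is immediate from the hypothesis $\dist(u,v) > d$.
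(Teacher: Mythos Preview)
Your proposal is correct and is essentially the same ball-growing argument the paper uses: define layers $S_i = \{v : \dist(v,S) \le i\}$ (and similarly for $T$), use the disjointness of the half-distance balls to pick the side of size at most $|V|/2$, and then find a layer with small expansion ratio (the paper phrases this last step as ``pick $i^\ast = \arg\min |S_{i+1}|/|S_i|$'' rather than as a contradiction, but the content is identical). The distributed implementation sketch also matches the paper's $O(D+d)$ bound.
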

\begin{proof}
For each $i$, we define the sets $S_i = \{ u \in V \ | \ \dist(u,S) \leq i\}$  and $T_i = \{ u \in V \ | \ \dist(u,T) \leq i\}$.
Note that $S = S_0 \subseteq S_1 \subseteq \cdots \subseteq S_{d/2}$ and 
$T= T_0 \subseteq T_1 \subseteq \cdots \subseteq T_{d/2}$.
Since $\dist(S,T) > d$, we have $S_{d/2} \cap T_{d/2} = \emptyset$, and so either $|S_{d/2}| \leq |V|/2$ or $|T_{d/2}| \leq |V|/2$, or both.
Without loss of generality, assume $|S_{d/2}| \leq |V|/2$. 
 Pick $i^\ast = \arg \min_{0 \leq i < d/2} |S_{i+1}| / |S_i|$.
 We write $|S_{i^\ast+1}| / |S_{i^\ast}| = 1+ \epsilon$. Then we must have $|S| (1+\epsilon)^{d/2} \leq |V|/ 2$, and so $\epsilon = O(d^{-1} \log (|V| / (2 |S'|)) = O(d^{-1} \log |V|)$.
 We pick $C = S_i$. It is clear that $C$ separates $S$ and $T$. We bound the sparsity $\Phiv(C)$ as  $\Phiv(C) = |\partial(C)|/|C| = |E(S_i, S_{i+1} \setminus S_i)| / |S_i| \leq \epsilon \Delta = O(\Delta d^{-1} \log |V|)$.
 To calculate $C$ in the distributed setting, it suffices that each vertex $v \in V$ learns its membership in $S_0, S_1,  \ldots, S_{d/2}, T_0, T_1, \ldots, T_{d/2}$ and learns the size of these sets. This can be done in $O(D + d)$ rounds using \cref{lem-basic}.
\end{proof}

\cref{lem-cutfromflow-basic} can be extended to handle the case where there are multiple source and sink pairs $(S_1, T_1), (S_2, T_2), \ldots, (S_k, T_k)$.   
\cref{lem-cutfromflow-multiple} is proved by going through
$(S_1, T_1), (S_2, T_2), \ldots, (S_k, T_k)$ one-by-one with the same argument in the proof of \cref{lem-cutfromflow-basic}.

\begin{lemma}[Cuts from multi-commodity flows~\cite{chuzhoy2019deterministic}]\label{lem-cutfromflow-multiple}
Consider a graph $G = (V,E)$ with maximum degree $\Delta$.
Let $(S_1, T_1), (S_2, T_2), \ldots, (S_k, T_k)$ be pairs of vertex subsets such that $S_1$, $T_1$, $S_2$, $T_2$, $\ldots$, $S_k$, $T_k$ are disjoint, and $\dist(S_i, T_i) > d$ for each  $1 \leq i \leq k$.
There is an $O(k(D+d))$-round algorithm that finds cut $C$ with $|C| \leq |V|/2$
and $\Phiv(C) = O(\Delta d^{-1} \log |V|)$, and it satisfies either one of the following.
\begin{itemize}
    \item $|C \cap (S_i \cup T_i)| \geq \min \{|S_i|, |T_i|\}$, for each $1 \leq i \leq k$.
    \item $|C| \geq |V|/3$.
\end{itemize}
\end{lemma}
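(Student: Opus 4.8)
\textbf{Proof plan for \cref{lem-cutfromflow-multiple}.}
The plan is to bootstrap \cref{lem-cutfromflow-basic} by processing the pairs $(S_1, T_1), \ldots, (S_k, T_k)$ sequentially, maintaining a running cut that we keep enlarging. I would maintain an invariant that after processing the first $i$ pairs we have a set $C^{(i)}$ with $\Phiv(C^{(i)}) = O(\Delta d^{-1} \log |V|)$, with $|C^{(i)}| \le |V|/2$, and such that for each $j \le i$ either $|C^{(i)} \cap (S_j \cup T_j)| \ge \min\{|S_j|, |T_j|\}$ or $|C^{(i)}|$ is already large (say $\ge |V|/3$), in which case we simply stop and output $C^{(i)}$.

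First I would set $C^{(0)} = \emptyset$ and iterate for $i = 1, \ldots, k$. At step $i$, if $|C^{(i-1)}| \ge |V|/3$ we are done. Otherwise, consider the pair $(S_i, T_i)$. I would run the layered-ball argument of \cref{lem-cutfromflow-basic} starting from $S_i$ (say $|S_i| \le |T_i|$ without loss of generality; note $S_i \cap T_i = \emptyset$ and $\dist(S_i, T_i) > d$, so one of the two $(d/2)$-balls has at most $|V|/2$ vertices, and we start from the smaller side). Let $S_{i,\ell} = \{u : \dist(u, S_i) \le \ell\}$. Among $\ell = 0, 1, \ldots, d/2 - 1$ pick the index $\ell^\ast$ minimizing $|S_{i,\ell^\ast+1}| / |S_{i,\ell^\ast}|$; as in \cref{lem-cutfromflow-basic}, this ratio is $1 + \epsilon$ with $\epsilon = O(d^{-1} \log |V|)$, so the layer $S_{i,\ell^\ast}$ has sparsity $O(\Delta d^{-1} \log |V|)$ and separates $S_i$ from $T_i$. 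Moreover, since $\dist(S_i, T_i) > d$, the ball $S_{i, \ell^\ast}$ is disjoint from $T_i$, so it contains all of $S_i$ and none of $T_i$; hence $|S_{i,\ell^\ast} \cap (S_i \cup T_i)| \ge |S_i| = \min\{|S_i|, |T_i|\}$. Set $C^{(i)} = C^{(i-1)} \cup S_{i,\ell^\ast}$.

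The one subtlety — and the main technical point — is showing the union $C^{(i-1)} \cup S_{i,\ell^\ast}$ still has small sparsity and that the size stays controlled. For sparsity, since disjointness is not guaranteed between $C^{(i-1)}$ and the new ball, one uses $|\partial(A \cup B)| \le |\partial(A)| + |\partial(B)|$ and, because the earlier pairs' vertex sets are disjoint from $S_i, T_i$, the earlier structure is not destroyed; the key observation is that each of the $k$ layer-peeling steps adds boundary edges proportional to $\epsilon$ times the size of the set it adds, so summing telescopes and the overall sparsity is still $O(\Delta d^{-1} \log |V|)$ (the $\log |V|$, not $\log^2$, because the bound on $\epsilon$ at each step is $O(d^{-1}\log|V|)$ regardless of which step we are at). For the size: if at some step $|C^{(i)}|$ would exceed $|V|/2$, then it also exceeds $|V|/3$, so we can instead take the complement $V \setminus C^{(i)}$, which has size $\ge |V|/3$ as well when $|C^{(i)}| \le (2/3)|V|$, or — more cleanly — we stop and output whichever of $C^{(i)}$, $V \setminus C^{(i)}$ is smaller, which lies in $[|V|/3, |V|/2]$ and still has the sparsity bound; this is the "$|C| \ge |V|/3$" alternative in the statement. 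I expect the bookkeeping of why the complement still separates (or rather, why we no longer need separation once $|C| \ge |V|/3$, since that alternative is unconditional) to be the only place requiring care. Finally, the round complexity: each of the $k$ steps runs the $\cref{lem-cutfromflow-basic}$-style computation in $O(D + d)$ rounds using \cref{lem-basic} for the ball sizes and membership bits, giving $O(k(D+d))$ rounds total, as claimed.
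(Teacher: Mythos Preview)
Your high-level plan---iterate over the pairs, grow a sparse ball for each, and accumulate into a growing cut, stopping once the cut is balanced---is exactly the paper's approach. The gap is in one technical choice: you grow the $i$th ball in the \emph{full} graph $G$, defining $S_{i,\ell}=\{u:\dist_G(u,S_i)\le\ell\}$, whereas the paper applies \cref{lem-cutfromflow-basic} inside the \emph{residual} graph $\tilde G = G[V\setminus C^{(i-1)}]$. This single change is what makes both of your ``subtleties'' go through, and without it neither does.

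First, the sparsity telescoping. In the residual graph the new piece $\tilde C_i$ is automatically disjoint from $C^{(i-1)}$, so $|C^{(i)}|=|C^{(i-1)}|+|\tilde C_i|$ exactly, and every edge in $\partial_G(C^{(i)})$ either already crossed $C^{(i-1)}$ or lies inside $\tilde G$ and crosses $\tilde C_i$; hence $|\partial_G(C^{(i)})|\le |\partial_G(C^{(i-1)})|+|\partial_{\tilde G}(\tilde C_i)|\le |\partial_G(C^{(i-1)})|+O(\Delta d^{-1}\log|V|)\cdot|\tilde C_i|$, and summing gives the bound with no loss. In your version the balls $S_{i,\ell_i^\ast}$ may overlap heavily with $C^{(i-1)}$, so ``the size of the set it adds'' can be much smaller than $|S_{i,\ell_i^\ast}|$ while the boundary you are charging is still $O(\Delta d^{-1}\log|V|)\cdot|S_{i,\ell_i^\ast}|$; the inequality $|\partial(A\cup B)|\le|\partial(A)|+|\partial(B)|$ then only yields $\Phiv(C^{(k)})\le O(\Delta d^{-1}\log|V|)\cdot\frac{\sum_i|S_{i,\ell_i^\ast}|}{|\bigcup_i S_{i,\ell_i^\ast}|}$, which can be a factor $k$ too large. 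Your sentence ``summing telescopes'' hides precisely this point.

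Second, the size of the complement. You need $|C^{(i)}|\le \tfrac{2}{3}|V|$ so that the smaller side of $(C^{(i)},V\setminus C^{(i)})$ lands in $[\tfrac{1}{3}|V|,\tfrac{1}{2}|V|]$. With $|C^{(i-1)}|<|V|/3$ and a ball of size $\le|V|/2$ in the full graph you only get $|C^{(i)}|<5|V|/6$, which is not enough. Working in the residual graph gives $|\tilde C_i|\le |\tilde V|/2=\tfrac12|V\setminus C^{(i-1)}|$, hence $|C^{(i)}|\le \tfrac12(|V|+|C^{(i-1)}|)<\tfrac{2}{3}|V|$, exactly what is needed.

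So the fix is simply: in step $i$, run \cref{lem-cutfromflow-basic} on $G[V\setminus C^{(i-1)}]$ with $(\tilde S,\tilde T)=(S_i\setminus C^{(i-1)},\,T_i\setminus C^{(i-1)})$. Distances only increase in subgraphs, so $\dist_{\tilde G}(\tilde S,\tilde T)>d$ still holds; and the intersection count $|C^{(i)}\cap(S_i\cup T_i)|\ge\min\{|S_i|,|T_i|\}$ follows because $\tilde C_i$ separates $\tilde S$ from $\tilde T$ and $C^{(i-1)}$ already contains $S_i\setminus\tilde S$ and $T_i\setminus\tilde T$. Everything else in your plan (early stopping, round complexity $O(k(D+d))$) is correct as written.
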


For the rest of \cref{sect-cut-from-flow}, we prove \cref{lem-cutfromflow-multiple}.  The algorithm for \cref{lem-cutfromflow-multiple} is as follows. Initially $C_0 = \emptyset$. For $i = 1, 2, \ldots, k$, do the following.
\begin{enumerate}
    \item Consider $\tilde{G}= (\tilde{V}, \tilde{E}) = G[\tilde{V}]$ induced by $\tilde{V} = V \setminus C_{i-1}$. 
    
    \item Apply \cref{lem-cutfromflow-basic} on $\tilde{G}$ with $\tilde{S} = S_k \cap \tilde{V}$, $\tilde{T} = T_k \cap \tilde{V}$.

\item Let $\tilde{C}_i$ be the resulting cut. Set $C_i = C_{i-1} \cup \tilde{C}_i$.
\end{enumerate}
The final cut $C$ is chosen as follows.
\begin{enumerate}
    \item If $|C_k| < |V|/3$, then set $C = C_k$.
    \item Otherwise, let $i^\ast$ be the smallest index $i$ such that $|C_i| \geq |V|/3$. If $|C_{i^\ast}| \leq |V|/2$, then set $C = C_{i^\ast}$; otherwise set $C = V \setminus C_{i^\ast}$.
\end{enumerate}


For the round complexity of the algorithm, the  part of calculating $C_1, C_2, \ldots, C_k$ costs $O(k(D+d))$ rounds by \cref{lem-cutfromflow-basic}, and the part of calculating  $C$ can  be done in $O(k + D)$ rounds by using \cref{lem-basic} to calculate the size of these sets $C_1, C_2, \ldots, C_k$.

For the correctness of the algorithm, we break the analysis into several lemmas. This analysis is very similar to the one in \cite{chuzhoy2019deterministic}.
It is clear from the description of the algorithm that $|C| \leq |V|/2$. \cref{lem-flow-aux-1,lem-flow-aux-2} show that  either $|C| \geq |V|/3$  or $|C \cap (S_i \cup T_i)| \geq \min \{|S_i|, |T_i|\}$ for each $1 \leq i \leq k$. \cref{lem-flow-aux-3} shows that $\Phiv(C) = O(\Delta d^{-1} \log |V|)$.

\begin{lemma}\label{lem-flow-aux-1}
If $|C| < |V|/3$, then $C = C_k$.
\end{lemma}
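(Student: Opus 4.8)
\textbf{Proof plan for \cref{lem-flow-aux-1}.}

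The plan is to argue by contrapositive: I will show that if $C \neq C_k$, then $|C| \geq |V|/3$. Recall from the description of the algorithm that the final cut $C$ is set to $C_k$ precisely when $|C_k| < |V|/3$; otherwise $C$ is taken to be either $C_{i^\ast}$ or $V \setminus C_{i^\ast}$, where $i^\ast$ is the smallest index with $|C_{i^\ast}| \geq |V|/3$. So the only way to have $C \neq C_k$ is to be in the second case, i.e.\ $|C_k| \geq |V|/3$, and then $C \in \{C_{i^\ast}, V \setminus C_{i^\ast}\}$.

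First I would handle the sub-case $C = C_{i^\ast}$: by the choice of $i^\ast$ we have $|C_{i^\ast}| \geq |V|/3$ directly, so $|C| \geq |V|/3$. Next, the sub-case $C = V \setminus C_{i^\ast}$ occurs only when $|C_{i^\ast}| > |V|/2$, in which case $|V \setminus C_{i^\ast}| \geq |V| - |V|/2 \ldots$ — here I need to be slightly more careful, since $|V\setminus C_{i^\ast}| = |V| - |C_{i^\ast}|$ and $|C_{i^\ast}|$ could a priori be close to $|V|$. The key observation is that $C_{i^\ast} = C_{i^\ast - 1} \cup \tilde C_{i^\ast}$, that $|C_{i^\ast-1}| < |V|/3$ by minimality of $i^\ast$, and that the cut $\tilde C_{i^\ast}$ returned by \cref{lem-cutfromflow-basic} applied to the graph $\tilde G = G[V \setminus C_{i^\ast-1}]$ satisfies $|\tilde C_{i^\ast}| \leq |\tilde V|/2 \leq |V|/2$; hence $|C_{i^\ast}| \leq |C_{i^\ast-1}| + |\tilde C_{i^\ast}| < |V|/3 + |V|/2 = (5/6)|V|$, which gives $|V \setminus C_{i^\ast}| > |V|/6$. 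That is not quite $|V|/3$, so to close this cleanly I would instead observe that in this sub-case the algorithm already guarantees $|C| \le |V|/2$ by construction, and combine the bound $|C_{i^\ast}| \le (5/6)|V|$ with the fact that $C = V \setminus C_{i^\ast}$ is chosen only when $|C_{i^\ast}| > |V|/2$; so $|C| = |V| - |C_{i^\ast}|$, and since $|C_{i^\ast}| < (5/6)|V|$ we get $|C| > |V|/6$. The statement as phrased wants $|C| \ge |V|/3$, so the cleanest route is to tighten the accounting: note $\tilde C_{i^\ast}$ from \cref{lem-cutfromflow-basic} additionally separates $\tilde S = S_{i^\ast}\cap\tilde V$ from $\tilde T = T_{i^\ast}\cap\tilde V$ and has $|\tilde C_{i^\ast}|\le|\tilde V|/2$, and since $|C_{i^\ast-1}|<|V|/3$ we obtain $|C_{i^\ast}| \le |V|/3 + (|V|-|V|/3)/2 = (2/3)|V|$, hence $|V\setminus C_{i^\ast}| \ge |V|/3$.

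The main obstacle is exactly this bookkeeping on $|C_{i^\ast}|$: one must use both the minimality of $i^\ast$ (to bound $|C_{i^\ast-1}|$) and the ``$\le$ half the ambient vertex set'' guarantee of \cref{lem-cutfromflow-basic} applied on the shrunken graph $\tilde G$, rather than naively on $G$. Everything else is a direct unwinding of how $C$ is defined from $C_1,\dots,C_k$. I would write the final argument as a short case analysis on which of the three definitional branches produced $C$, discharging the branch $C=C_k$ as vacuous (it is the only branch with $|C_k|<|V|/3$, hence $C=C_k$) and the other two branches via the $|V|/3$ lower bound just described.
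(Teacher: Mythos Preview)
Your approach is essentially the same as the paper's: argue by contrapositive, use minimality of $i^\ast$ to get $|C_{i^\ast-1}|<|V|/3$, and combine with $|\tilde C_{i^\ast}|\le |V\setminus C_{i^\ast-1}|/2$ from \cref{lem-cutfromflow-basic} to conclude $|C_{i^\ast}|<2|V|/3$, hence $|V\setminus C_{i^\ast}|>|V|/3$. One caution on your final inequality ``$|C_{i^\ast}| \le |V|/3 + (|V|-|V|/3)/2$'': this is only valid because the combined bound $x + (|V|-x)/2 = |V|/2 + x/2$ is \emph{increasing} in $x=|C_{i^\ast-1}|$, not because each term can be bounded separately (indeed $(|V|-|C_{i^\ast-1}|)/2 > (|V|-|V|/3)/2$ when $|C_{i^\ast-1}|<|V|/3$); the paper writes this step directly as $|C_{i^\ast}| - |C_{i^\ast-1}| \le |V\setminus C_{i^\ast-1}|/2$, which avoids the ambiguity.
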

\begin{proof}
For each iteration $i$, we have $|C_{i}| - |C_{i-1}|  = |\tilde{C}| \leq  |\tilde{V}|/2 = |V \setminus C_{i}|/2$ by \cref{lem-cutfromflow-basic}. In particular, if $|C_{i^\ast - 1}| < |V|/3$, then  $|C_{i^\ast}| < 2|V|/3$, and so $|V \setminus C_{i^\ast}| \geq |V|/3$. Therefore, if the algorithm  selects the final cut as  $C = C_{i^\ast}$ or  $C = V \setminus C_{i^\ast}$, then we have $|C| \geq |V|/3$. Therefore, if  $|C| < |V|/3$, then the final cut must be  selected as $C = C_k$.
\end{proof}

\begin{lemma}\label{lem-flow-aux-2}
If  $C = C_k$, then  $|C \cap (S_i \cup T_i)| \geq \min \{|S_i|, |T_i|\}$, for each $1 \leq i \leq k$.
\end{lemma}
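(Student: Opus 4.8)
\textbf{Proof proposal for \cref{lem-flow-aux-2}.}

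The plan is to argue by induction on $i$, tracking how the cut grows across the iterations of the algorithm. Recall that the final cut $C = C_k = C_{k-1} \cup \tilde{C}_k$, and more generally $C_i = C_{i-1} \cup \tilde{C}_i$ with $C_0 = \emptyset$, where $\tilde{C}_i$ is the cut returned by applying \cref{lem-cutfromflow-basic} to the graph $\tilde{G} = G[V \setminus C_{i-1}]$ with sources $\tilde{S} = S_i \cap (V \setminus C_{i-1})$ and sinks $\tilde{T} = T_i \cap (V \setminus C_{i-1})$. The key structural fact from \cref{lem-cutfromflow-basic} is that $\tilde{C}_i$ \emph{separates} $\tilde{S}$ from $\tilde{T}$ inside $\tilde{G}$. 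Since $C_i \subseteq C = C_k$ (the cuts are nested by construction), once a vertex enters some $C_i$ it remains in the final $C$; this monotonicity is what lets a local guarantee at iteration $i$ survive to the end.

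First I would fix an index $i$ and split $S_i \cup T_i$ according to membership in the already-removed part $C_{i-1}$. Write $S_i = (S_i \cap C_{i-1}) \cup \tilde{S}$ and $T_i = (T_i \cap C_{i-1}) \cup \tilde{T}$, where $\tilde S, \tilde T$ are as above; note $\tilde S, \tilde T$ are disjoint because $S_i, T_i$ are disjoint. Now consider the cut $\tilde C_i$ restricted to $\tilde G$: it separates $\tilde S$ from $\tilde T$, so one of the two sides contains all of $\tilde S$ and the other contains all of $\tilde T$. Say (without loss of generality, after possibly noting the two symmetric cases) that $\tilde S \subseteq \tilde C_i$; then the side $\tilde C_i$ of the cut captures all of $\tilde S$, hence $|\tilde C_i \cap (\tilde S \cup \tilde T)| \ge |\tilde S|$. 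The other possibility is $\tilde T \subseteq \tilde C_i$, giving $|\tilde C_i \cap (\tilde S \cup \tilde T)| \ge |\tilde T|$. In either case $|\tilde C_i \cap (\tilde S \cup \tilde T)| \ge \min\{|\tilde S|,|\tilde T|\}$.

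Next I would combine this with the vertices of $S_i \cup T_i$ that were already removed into $C_{i-1}$. Since $C_{i-1} \subseteq C$ and $\tilde C_i \subseteq C_i \subseteq C$, we have $C \cap (S_i \cup T_i) \supseteq (C_{i-1} \cap (S_i \cup T_i)) \cup (\tilde C_i \cap (\tilde S \cup \tilde T))$, and these two sets are disjoint (the first lies in $C_{i-1}$, the second in $V \setminus C_{i-1}$). Therefore
\begin{align*}
|C \cap (S_i \cup T_i)| &\ge |C_{i-1} \cap (S_i \cup T_i)| + \min\{|\tilde S|,|\tilde T|\}\\
&= |S_i \cap C_{i-1}| + |T_i \cap C_{i-1}| + \min\{|S_i| - |S_i \cap C_{i-1}|,\; |T_i| - |T_i \cap C_{i-1}|\}\\
&\ge \min\{|S_i|,|T_i|\},
\end{align*}
where the last inequality is the elementary fact that $a + b + \min\{x-a,\,y-b\} \ge \min\{x,y\}$ whenever $0 \le a \le x$ and $0 \le b \le y$ (check: the left side is $\min\{x+b,\,y+a\}$, and both $x+b \ge x \ge \min\{x,y\}$ and $y+a \ge y \ge \min\{x,y\}$). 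This establishes the claim for every $i$.

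The main obstacle, and the only place that needs care, is the bookkeeping in the second and third steps: making sure the ``already removed'' contribution $C_{i-1} \cap (S_i \cup T_i)$ and the ``newly separated'' contribution $\tilde C_i \cap (\tilde S \cup \tilde T)$ are genuinely disjoint and genuinely both inside the final $C$, and then checking that the two contributions add up correctly regardless of how $S_i \cup T_i$ was split between $C_{i-1}$ and its complement. Everything else — the separation property of $\tilde C_i$, the nestedness of the $C_i$, and the arithmetic inequality — is immediate from \cref{lem-cutfromflow-basic} and the description of the algorithm.
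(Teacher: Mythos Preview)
Your proposal is correct and follows essentially the same approach as the paper's proof: both split $S_i \cup T_i$ into the part already inside $C_{i-1}$ and the remaining parts $\tilde S, \tilde T$, use the separation guarantee of \cref{lem-cutfromflow-basic} to get $|\tilde C_i \cap (\tilde S \cup \tilde T)| \ge \min\{|\tilde S|,|\tilde T|\}$, and then combine via the inequality $a+b+\min\{x-a,y-b\}\ge\min\{x,y\}$. Your write-up is slightly more explicit about the arithmetic and the disjointness bookkeeping, but the argument is the same (and note that despite your phrasing, no induction is actually needed---the nestedness $C_{i-1}\subseteq C_k=C$ is all you use).
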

\begin{proof}
Consider iteration $i$. Let $\tilde{z} = \min\{\tilde{S}, \tilde{T}\}$. It is clear that we already have $|C_{i-1} \cap (S_i \cup T_i)| \geq \min \{|S_i|, |T_i|\} - z$. Therefore, to prove the lemma, it suffices to show that $|\tilde{C}_i \cap (\tilde{S} \cup \tilde{T})| \geq \tilde{z}$. This is true because $\tilde{C}_i$ separates $\tilde{S}$  and $\tilde{T}$ by \cref{lem-cutfromflow-basic}.
\end{proof}

\begin{lemma}\label{lem-flow-aux-3}
$\Phiv(C) = O(\Delta d^{-1} \log |V|)$.
\end{lemma}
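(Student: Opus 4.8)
\textbf{Proof plan for \cref{lem-flow-aux-3}.}
The claim is that the final cut $C$ produced by the iterative algorithm satisfies $\Phiv(C) = O(\Delta d^{-1} \log |V|)$. The plan is to bound $|\partial(C)|$ by summing the boundaries of the incremental cuts $\tilde{C}_1, \tilde{C}_2, \ldots, \tilde{C}_k$ produced in each iteration, and then divide by an appropriate lower bound on $\min\{|C|, |V \setminus C|\}$.

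First I would record the key quantitative guarantee inherited from \cref{lem-cutfromflow-basic}: in iteration $i$, the cut $\tilde{C}_i$ returned inside $\tilde G = G[V \setminus C_{i-1}]$ satisfies $|\partial_{\tilde G}(\tilde C_i)| \le \epsilon \Delta |\tilde C_i|$ with $\epsilon = O(d^{-1} \log |V|)$; in particular $|\partial_{\tilde G}(\tilde C_i)| = O(\Delta d^{-1} \log |V|) \cdot |\tilde C_i|$. Since $C_i = C_{i-1} \cup \tilde C_i$ and the $\tilde C_i$ are disjoint (each lives in the complement of $C_{i-1}$), the boundary of the accumulated cut satisfies $|\partial_G(C_i)| \le \sum_{j \le i} |\partial_{\tilde G_j}(\tilde C_j)| = O(\Delta d^{-1} \log |V|) \cdot \sum_{j \le i} |\tilde C_j| = O(\Delta d^{-1} \log |V|) \cdot |C_i|$. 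This is the crucial monotone accumulation step: an edge in $\partial_G(C_i)$ either was already counted in some $\partial_{\tilde G_j}(\tilde C_j)$ or joins $\partial(C_i)$ when $\tilde C_i$ is added, and in the latter case it is an edge of $\tilde G_i$ incident to $\tilde C_i$.

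Next I would split into the two cases matching the algorithm's choice of $C$. If $|C_k| < |V|/3$, then $C = C_k$, and $|V \setminus C| = |V \setminus C_k| > (2/3)|V| \ge |C|$, so $\Phiv(C) = |\partial(C_k)|/|C_k| = O(\Delta d^{-1}\log|V|)$ directly from the accumulation bound. Otherwise let $i^\ast$ be the first index with $|C_{i^\ast}| \ge |V|/3$; since $|C_{i^\ast-1}| < |V|/3$ and each iteration adds $|\tilde C_{i^\ast}| \le |V \setminus C_{i^\ast}|/2$ vertices (by \cref{lem-cutfromflow-basic}), we get $|C_{i^\ast}| < (2/3)|V|$, hence $|V|/3 \le |C_{i^\ast}| \le (2/3)|V|$, so $\min\{|C_{i^\ast}|, |V \setminus C_{i^\ast}|\} \ge |V|/3 \ge |C_{i^\ast}|/2$. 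Whether $C = C_{i^\ast}$ or $C = V \setminus C_{i^\ast}$, we have $|\partial(C)| = |\partial(C_{i^\ast})| = O(\Delta d^{-1}\log|V|)\cdot|C_{i^\ast}| = O(\Delta d^{-1}\log|V|)\cdot\min\{|C|, |V\setminus C|\}$, giving $\Phiv(C) = O(\Delta d^{-1}\log|V|)$.

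The main obstacle, though it is more of a bookkeeping subtlety than a real difficulty, is verifying that boundaries genuinely accumulate additively across iterations — i.e., that $|\partial_G(C_i)| \le \sum_{j \le i}|\partial_{\tilde G_j}(\tilde C_j)|$ rather than only $|\partial_{\tilde G_i}(C_i \cap \tilde V_i)| \le \ldots$. One must check that every edge leaving $C_i$ in $G$ is accounted for: an edge $\{u,v\}$ with $u \in C_i$, $v \notin C_i$ either has $u \in C_{i-1}$ (in which case, since $v \notin C_{i-1}$ either, it was already a boundary edge of $C_{i-1}$, handled inductively) or $u \in \tilde C_i$ and $v \in \tilde V_i \setminus \tilde C_i$, so it is counted in $|\partial_{\tilde G_i}(\tilde C_i)|$. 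This induction closes the argument cleanly.
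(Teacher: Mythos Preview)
Your proposal is correct and follows essentially the same approach as the paper: accumulate the boundary bound $|\partial(C_i)| \le \sum_{j\le i}|\partial_{\tilde G_j}(\tilde C_j)| = O(\Delta d^{-1}\log|V|)\cdot|C_i|$, then handle the two output cases, using in the second case that $|C_{i^\ast}| < (2/3)|V|$ so that $|V\setminus C_{i^\ast}|$ is within a constant factor of $|C_{i^\ast}|$. One tiny slip: the bound from \cref{lem-cutfromflow-basic} is $|\tilde C_{i^\ast}| \le |V\setminus C_{i^\ast-1}|/2$, not $|V\setminus C_{i^\ast}|/2$; with the correct index the inequality $|C_{i^\ast}| < (2/3)|V|$ follows exactly as you intend.
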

\begin{proof}
First of all, we show that for each $1 \leq i \leq k$, $|\partial(C_i)| / |C_i| \leq  O(\Delta d^{-1} \log |V|)$. Using \cref{lem-cutfromflow-basic}, we  bound $|\partial(C_i)|$ as follows.
\begin{align*}
|\partial(C_i)| 
&\leq \sum_{1 \leq j \leq i} |\partial_{G[V \setminus C_{i-1}]}(\tilde{C}_i)|\\
&\leq \sum_{1 \leq j \leq i} 
O(\Delta d^{-1} \log |V|)  \cdot |\tilde{C}_i|\\
&
= O(\Delta d^{-1} \log |V|)  \cdot |C_i|.
\end{align*}
If the final cut $C$ equals $C_i$ for some $1 \leq i \leq k$ and $|C| \leq |V|/2$, then we already have $\Phiv(C) = |\partial(C_i)| / |C_i| = O(\Delta d^{-1} \log |V|)$.
The only remaining case is  $C = V \setminus C_{i^\ast}$. By \cref{lem-flow-aux-1}, we know that in this case $|C| = |V \setminus C_{i^\ast}| \geq |V|/3 \geq |C_{i^\ast}|/3$, and so $\Phiv(C) = |\partial(C_{i^\ast})| / |V \setminus C_{i^\ast}| \leq 3|\partial(C_{i^\ast})| / |C_{i^\ast}| = O(\Delta d^{-1} \log |V|)$.
\end{proof}

\subsection{Randomized Maximal Flow}\label{sect-rand-flow}

 Given the two sets $S$ and $T$, we show how to find an  maximal set of vertex-disjoint $S$-$T$ paths  subject to the length constraint $d$ in $\poly(d, \log n)$ rounds with high probability. The algorithm is the augmenting path finding algorithm of Lotker, Patt-Shamir, and Pettie~\cite{LotkerPP08}. For the sake of completeness, we provide a complete proof here.  

\begin{lemma}[Randomized maximal flow~\cite{LotkerPP08}]\label{lem-maximal-flow-rand}
Let $G=(V,E)$ be a graph of maximum degree $\Delta$.
Given two vertex subset $S$ and $T$, there is an algorithm that finds a maximal set of vertex-disjoint $S$-$T$ paths of length at most $d$ in $O((d^2 \log n)(d \log \Delta + \log n))$ rounds with high probability.
\end{lemma}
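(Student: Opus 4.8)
\textbf{Proof proposal for \cref{lem-maximal-flow-rand}.}

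The plan is to implement the classical augmenting-path (blocking flow) approach for vertex-disjoint paths, adapted to the $\CONGEST$ model using the randomized technique of Lotker, Patt--Shamir, and Pettie~\cite{LotkerPP08}. We maintain a current set $\PP$ of vertex-disjoint $S$-$T$ paths, each of length at most $d$, initially empty. As long as $\PP$ is not maximal, there exists an $S$-$T$ path of length at most $d$ in $G$ avoiding all internal vertices of paths in $\PP$; we want to find a large batch of such paths in parallel and add them to $\PP$. Since each augmentation strictly increases $|\PP|$ and $|\PP| \le |V|$, and since we will show that a single "phase" can be carried out in $O(d^2 \log n \cdot (d\log\Delta + \log n))$ rounds divided by the number of paths it finds (i.e.\ the total work over all phases telescopes), the stated round bound follows. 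More precisely, the standard argument is that we run $O(d)$ phases, each phase finding a \emph{maximal} (not maximum) batch of vertex-disjoint augmenting paths of bounded length; after $O(d)$ such phases the flow is maximal, because the length of the shortest augmenting path increases by at least one each phase and is capped at $d$. Each phase costs $O(d \cdot (d\log\Delta + \log n) \cdot \log n)$ rounds, giving the claimed total.

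The core subroutine is: given the "residual" structure (the graph $G$ with the internal vertices of current paths forbidden, plus the ability to reroute along existing paths), find a maximal set of vertex-disjoint augmenting walks of length at most $d$. First I would build a layered BFS-style structure of depth $O(d)$ from $S$ in the residual graph, so that every augmenting path of minimum length corresponds to a monotone path through the layers. Then, to extract a maximal vertex-disjoint collection, I would use the randomized symmetry-breaking idea from~\cite{LotkerPP08}: repeatedly, each still-unsaturated source independently launches a "token" that performs a randomized forward exploration through the layers; when two tokens would collide at a vertex, a random priority (a random $O(\log n)$-bit ID, or a random real) decides which survives; tokens that reach $T$ define augmenting paths, which are then committed and their vertices removed. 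One shows that each such randomized round either finds an augmenting path or eliminates a constant fraction of the "live" candidate structure in expectation, so $O(d\log n + \log^2 n)$ randomized sub-rounds suffice per phase to reach maximality, with high probability via a Chernoff/union bound. Each sub-round touches the depth-$O(d)$ layered structure and involves resolving contention among up to $\Delta$ neighbors, which costs $O(d\log\Delta)$ (or $O(d + \log\Delta)$ with pipelining) rounds; multiplying out gives the per-phase bound $O(d^2\log n)(d\log\Delta + \log n)$ after accounting for the $O(d)$ phases collapsing appropriately into the statement.

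Two technical points need care. The first is the \emph{rerouting along existing paths}: an augmenting path in the vertex-disjoint-paths setting may need to pass "through" a path already in $\PP$ (entering at one vertex and leaving at another), which is handled by the standard device of splitting each vertex $v$ into $v_{\mathrm{in}}, v_{\mathrm{out}}$ with a unit-capacity internal edge and working in this split digraph; I must check that walks in the split digraph of length $\le 2d$ still correspond to $S$-$T$ paths of length $\le d$ in $G$ after cancelling, and that the distributed BFS layering respects this. The second is the probability analysis for maximality: I need the standard claim that in each randomized sub-round, for every vertex $v$ that could still lie on some augmenting path, with constant probability either $v$ gets committed to a path or one of its layer-neighbors does, so that the number of "active" vertices shrinks geometrically; the union bound over $O(d)$ layers and $\poly(n)$ vertices forces the extra $\log n$ factor and the high-probability guarantee.

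The main obstacle I expect is getting the contention-resolution / maximality argument in the layered structure exactly right: one must ensure that the randomized token walks are coupled so that \emph{no} augmenting path of length $\le d$ survives disjoint from the committed batch, which requires a careful invariant (e.g.\ that every live vertex has a live monotone path to $T$, and that each sub-round destroys a constant fraction of such paths). The round-complexity bookkeeping — in particular, showing that the $O(d)$ phases do not multiply the per-phase cost by another factor of $d$ but instead fit inside $O((d^2\log n)(d\log\Delta+\log n))$ — also needs the observation that the shortest augmenting-path length is monotone nondecreasing across phases and bounded by $d$, so there are only $O(d)$ phases total, each of the stated cost. Everything else (distributed BFS via the underlying graph, Chernoff bounds, the vertex-splitting reduction) is routine and can be assembled from \cref{lem-basic} and standard $\CONGEST$ primitives.
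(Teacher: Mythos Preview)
Your high-level framework matches the paper: process the paths by increasing length in $d$ stages, where stage $i$ finds a maximal set of vertex-disjoint $S$--$T$ paths of length exactly $i$ in the graph with all previously-used vertices removed, so that afterwards $\dist(S,T)\ge i+1$. But two things are off.

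First, the residual-graph / vertex-splitting / rerouting discussion is a red herring. We are asked for a \emph{maximal} set of vertex-disjoint paths, not a maximum one. There is no augmenting-through-existing-paths step: once a path is committed, its vertices are simply deleted. After stage $d$ the remaining graph has $\dist(S,T)>d$, so no $S$--$T$ path of length $\le d$ survives that is disjoint from the committed paths, which is exactly maximality. You should drop the split-digraph machinery entirely.

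Second, and more importantly, the real content of the lemma is the per-stage subroutine---find a maximal set of vertex-disjoint length-$d'$ paths in the layered graph---and your ``random token walk with random priorities'' sketch does not supply it. As you yourself flag, getting \emph{maximality} (not just ``many paths'') out of colliding random walks is the whole difficulty, and the invariant you gesture at is not established. The paper's actual mechanism is different and clean: view the length-$d'$ $S$--$T$ paths as vertices of an auxiliary graph $H$ (edges = shared vertex in $G$), so the task is MIS on $H$. One cannot simulate $H$ directly, but one \emph{can} implement {\sc Sample}$(p)$---sample each path independently with probability $p$ and keep those with no sampled neighbor---in $O(d')$ rounds, by first counting the number of $S$--$v$ shortest paths forward through the layers, then realizing the sampling backward by distributing the sampled counts proportionally. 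Running Luby-style MIS on $H$ (geometric sweep over degree thresholds $k=\Delta',\Delta'/2,\ldots$ with $p=1/(2k)$, $O(\log n)$ repeats each) then yields a genuine MIS, hence a maximal set of vertex-disjoint paths. Since $\Delta'\le n(\Delta+1)^{d'-1}$, the per-stage cost is $O(d'\log n\cdot\log\Delta')=O((d'\log n)(d'\log\Delta+\log n))$, and summing over $d'$ gives the stated bound. Your proposal is missing this MIS reduction and the path-counting implementation of {\sc Sample}$(p)$, which together are the proof.
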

\begin{proof}
The algorithm is based on the framework of \emph{blocking flow}. Specifically, the algorithm has $d$ stages. At the beginning of the $i$th stage, the current graph satisfies that $\dist(S,T) \geq i$, and the goal of this stage is to find a maximal set of vertex-disjoint $S$-$T$ paths of length exactly $i$. We then update the current graph by removing all vertices involved in these paths. It is clear that after the $d$th iteration, the union of all paths found is a maximal set of vertex-disjoint $S$-$T$ paths of length at most $d$.

In what follows, we focus on the $d'$th stage, and we will show that the task of the $d'$th stage can be solved in $O((d' \log n)(d' \log \Delta + \log n))$ rounds with high probability, and so the entire algorithm takes $O((d^2 \log n)(d \log \Delta + \log n))$ rounds.

Consider the graph $H=(V',E')$ defined as follows. The vertex set $V'$ is the set of all length-$d'$ $S$-$T$ paths.
Two length-$d'$ $S$-$T$ paths $P_1$ and $P_2$  are adjacent in $E'$ if they share some vertex $v \in V$.
Now the task of finding a maximal set of vertex-disjoint $S$-$T$ paths of length exactly $d'$ in $G$ is reduced to finding a maximal independent set (MIS) of $H$. It is infeasible to carry out a direct simulation of $H$ on $G$, but the following sampling task can be carried out efficiently.

\medskip

\noindent{\sc Sample($p$):} 
 Let $0< p < 1$ be any parameter. Let $U$ be the result of sampling each $P \in V'$ with probability $p$. Return the independent set $I$ containing all $P \in U$ such that none of the neighbors of $P$ is in $U$. 

\medskip
The task {\sc Sample($p$)} can be solved in $O(d')$ rounds, as follows. For $0 \leq i \leq d'$, define $L_i = \{v \in V \ | \ \dist(v, S) = i\}$. Note that $L_0 = S$. Since the current graph does not have any $S$-$T$ path of length less than $d'$, we have $T \cap (L_0 \cup L_1 \cup \cdots \cup L_{d'-1}) = \emptyset$. The high level idea is to (1) count the number of paths in a ``forward'' manner from level $0$ to $d'$, and (2) to realize the path sampling using the number we counted in a ``backward'' manner from level $d'$ back to $0$.

Each vertex $v \in L_0 \cup L_1 \cup \cdots \cup L_{d'}$ calculates a number $n_v$ as follows.
For each $v \in L_0$, define $n_v = 1$.
For each $v \in L_i$ with $1 \leq i \leq d'$, define $n_v = \sum_{u \in L_{i-1} \cap N(v)} n_u$. It is clear that $n_v$ equals the number of $S$-$v$ shortest paths.

We do the sampling as follows by having each edge $e \in E$ (resp., each vertex $v \in V$) compute a number $s_e$ (resp., $s_v$) indicating the number of sampled paths that pass $e$ (resp., $v$). From these numbers, it is straightforward to recover the independent set $I$ required in the task  {\sc Sample($p$)} in $O(d')$ rounds. 
Specifically, if $s_v = 1$, then $v$ knows that it belongs to a path $P \in I$. Given that $v \in P \in I$, a neighbor $u \in N(v)$ belongs to the same path $P$ if $s_{\{u,v\}} = 1$, and so $v$ also knows its neighboring vertices in $P$.

These numbers $\{s_e\}$ and $\{s_v\}$ are calculated as follows.
For $k = d', d'-1, \ldots, 1, 0$, do the following.
\begin{enumerate}
    \item If $k = d'$, then each $v \in L_{d'}$ samples $s_v \leftarrow \Bernoulli(n_v, p)$. Otherwise, each $v \in L_k$ sets $s_v \leftarrow \sum_{u \in N(v) \cap L_{k+1}} s_{\{u,v\}}$.
    \item If $k > 0$, then each $v \in L_{k}$ sets the numbers $\{s_e \ | \ e=\{u,v\}, u \in N(v) \cap L_{k-1} \}$ by simulating the following procedure. Write $N(v) \cap L_{k-1} = \{u_1, u_2, \ldots, u_x\}$. We prepare $s_v$ balls and $x$ bins. We throw each ball into a bin randomly in such a way that the probability that a ball lands in the $i$th bin is $n_{u_i}/n_v$. After this process, calculate the number of balls in each bin, and set $s_{\{u_i, v\}}$ to be the number of balls in the $i$th bin, for each $1 \leq i \leq x$.
\end{enumerate}

Now given that {\sc Sample($p$)}  can be solved efficiently,  an MIS of $H$ can be computed as follows. 

\medskip
\noindent{\sc MIS Computation: }
Let $\Delta' \leq n\cdot  (\Delta+1)^{d'-1}$ be the maximum degree of $H$. For $k = \Delta'$, $\Delta'/2$, $\Delta'/4$, $\ldots$, $2$, $1$, repeat the following procedure for $O(\log n)$ iterations. Compute an independent set $I \leftarrow \textsc{Sample}(p)$ with $p = 1/(2k)$, and then remove all vertices that are in $I$ or adjacent to $I$ from $H$. The final output is the union of all independent sets found during the procedure.

\medskip

As {\sc Sample($p$)} can be solved in $O(d')$ rounds and for each $k$, we repeat for $O(\log n)$ iterations,
the round complexity of the MIS algorithm is $O(d' \log n \log \Delta') = O((d' \log n)(d' \log \Delta + \log n))$.
We now analyze this MIS algorithm. To show the correctness of the algorithm, it suffices to show that at the end $H$ becomes empty. We prove this by showing that the following invariant holds. 

\medskip
\noindent{\sc Invariant($k$): }
At the beginning of an iteration with parameter $k$, the current remaining graph has maximum degree at most $k$.
\medskip

This invariant trivially holds for the first iteration $k = \Delta'$. For the inductive step, now suppose that we are at the beginning of an iteration with parameter $k$, and the current graph already has maximum degree at most $k$, and we need to show that at the end of this iteration, the maximum degree is reduced to at most $k/2$. 

Recall that an iteration consists of $O(\log n)$ stages of finding an independent set $I \leftarrow \textsc{Sample}(p)$ with $p = 1/(2k)$ and removing all vertices that are in $I$ or adjacent to $I$ from $H$.
Consider any vertex $v$ in $H$ with degree at least $k$ at the beginning of one stage of computing  $I \leftarrow \textsc{Sample}(p)$.

Consider the process of sampling each vertex with probability $p = 1/(2k)$.
Consider the following two events:
\begin{itemize}
    \item  Define $\mathcal{E}_1$ as the event that exactly one vertex $u$ in $N^+(v)$ is sampled. We have $\Prob[\mathcal{E}_1] = (\deg(v)+1) p (1-p)^{\deg(v)}$.
    Recall that $k/2 \leq \deg(v) \leq k$. The local minimum of this function $\Prob[\mathcal{E}_1]$ on the domain $k/2 \leq \deg(v) \leq k$ has local minima at the two ends. For the case $\deg(v) = k$, we have $\Prob[\mathcal{E}_1] \geq \frac{k+1}{2k} (1 - 1/(2k))^{k} > 1/2$. For the case $\deg(v) = k/2$, we have $\Prob[\mathcal{E}_1] \geq \frac{(k/2) +1}{2k} (1 - 1/(2k))^{k/2} = \Omega(1)$.
    \item Now condition on the event  $\mathcal{E}_1$. Define $\mathcal{E}_2$ as the event that the unique vertex $u$ in $N^+(v)$ that is sampled satisfies that all vertices in $N(u)$ are not sampled. Note that if  $\mathcal{E}_2$ occurs, then we must have $u \in I$, and so $v$ will be removed from $H$. Let $s = |N(u) \setminus N^+(v)| \leq k-1$. Hence $\Prob[\mathcal{E}_2]  = (1-p)^{s} \geq 1-sp > 1/2$.
\end{itemize}
Thus, with a constant positive probability, $v$ is removed after this stage. Hence we conclude that if $v$ has $\deg(v) \geq k/2$ at the beginning, then after $C \log n$ stages, with probability $1 - n^{-\Omega(1)}$,  either $\deg(v) < k/2$ or $v$ is removed.
\end{proof}

The purpose of \cref{lem-maximal-flow-rand} is to prove the following lemma. 

\begin{lemma}[Randomized cut or match] \label{lem-matching-player-rand} Consider a bounded-degree graph $G=(V,E)$ an a parameter $0 < \psi < 1$.
Given a set of source vertices $S$ and a set of sink vertices $T$ with $|S| \leq |T|$, there is an algorithm that finds a cut $C$ and a set of $S$-$T$ paths $\PP$ embedding a matching $M$ between $S$ and $T$ satisfying the following requirements in $O(\psi^{-2} \log^4 n (D + \psi^{-3} \log^4 n) )$ rounds with high probability.
\begin{description}
    \item[Match:] The embedding $\PP$ has congestion  $c = O(\psi^{-2} \log^4 n)$ and dilation   $d = O(\psi^{-1} \log n)$.
    \item[Cut:] Let $S' \subseteq S$ and $T' \subseteq T$ be the subsets that are not matched by $M$. If $S' \neq \emptyset$, then $C$ satisfies $S' \subseteq C$,  $T' \subseteq V \setminus C$, and $\Phiv(C) \leq \psi$; otherwise $C = \emptyset$.
\end{description}
\end{lemma}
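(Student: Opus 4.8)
The goal is to implement the matching player of the (RST-style) cut-matching game by a flow computation between $S$ and $T$, producing either a large matching with a low-congestion, low-dilation embedding, or a sparse cut separating the leftover vertices. I would first scale down the length parameter: since we want dilation $d = O(\psi^{-1}\log n)$ and congestion $c = O(\psi^{-2}\log^4 n)$ in the end, I will run the randomized maximal-flow routine of \cref{lem-maximal-flow-rand} on an auxiliary graph that encodes a capacity constraint. Concretely, replace each vertex $v \in V \setminus (S \cup T)$ by $c$ parallel copies (a ``vertex-splitting'' gadget, or equivalently run $c$ rounds of maximal vertex-disjoint path finding, removing used paths each time). Running \cref{lem-maximal-flow-rand} with length bound $d$ on this blown-up graph gives, with high probability, a maximal set $\PP$ of $S$-$T$ paths of length at most $d$ in which every edge of $G$ is used at most $c$ times. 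The round complexity of one invocation is $O((d^2 \log n)(d\log\Delta + \log n)) = O(\psi^{-2}\log^2 n \cdot (\psi^{-1}\log n + \log n)) = O(\psi^{-3}\log^3 n)$ for bounded degree, and the $c$-fold repetition costs an extra $c = O(\psi^{-2}\log^4 n)$ factor, matching the claimed $O(\psi^{-2}\log^4 n(D + \psi^{-3}\log^4 n))$ bound (the $D$ term coming from the auxiliary communication along Steiner trees).

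\textbf{Extracting the matching.} From $\PP$, read off the matching $M$: for each $P \in \PP$, match its $S$-endpoint to its $T$-endpoint. Since the paths are vertex-disjoint in the blown-up graph, distinct paths have distinct $S$- and $T$-endpoints, so $M$ is a genuine matching, and $\PP$ is exactly its embedding, with congestion $c$ and dilation $d$ by construction. Let $S' \subseteq S$, $T' \subseteq T$ be the unmatched vertices. If $S' = \emptyset$ we output $C = \emptyset$ and we are done. Otherwise, the maximality of $\PP$ in the capacitated graph is the crucial structural fact: there is no $S'$-$T'$ path of length at most $d$ that avoids the (fully used) capacity on every edge — more simply, in the residual graph $\dist(S', T') > d$ once we delete the edges/capacity used up by $\PP$. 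I would be careful here to argue that after removing a bounded amount of structure we have a genuine (uncapacitated) graph $G'$ on $V$ with $\dist_{G'}(S', T') > d$; the cleanest route is to note that if every $S'$-$T'$ path in $G$ of length $\le d$ were ``blocked'' only because some edge is used $c$ times, then since $c$ is large we would actually have found more than $c\cdot(\text{something})$ disjoint paths, a contradiction — so in fact there is a subgraph $G'$ (obtained by deleting the $\le$ small set of saturated edges) in which $S'$ and $T'$ are at distance more than $d$ while $|E \setminus E(G')|$ is negligible compared to the cut size we will extract.

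\textbf{Producing the cut.} Given $\dist_{G'}(S', T') > d$ with $d = \Theta(\psi^{-1}\log n)$, apply \cref{lem-cutfromflow-basic} to $G'$ with the two sets $S'$ and $T'$: in $O(D + d)$ rounds it yields a cut $C$ with $|C| \le |V|/2$, $\Phiv_{G'}(C) = O(\Delta d^{-1}\log |V|) = O(\psi)$, and $C$ separates $S'$ from $T'$ — i.e.\ $S' \subseteq C$ (or $S' \subseteq V \setminus C$, in which case swap to the complement, using that all unmatched $S$-vertices are on one side) and $T' \subseteq V \setminus C$. Because only a negligible number of edges were deleted in passing to $G'$, $\Phiv_G(C) = O(\psi)$ as well; choosing the hidden constants in $d$ and in the blow-up appropriately makes this exactly $\le \psi$. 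Summing the round complexities — the capacitated maximal-flow computation dominates — gives the stated $O(\psi^{-2}\log^4 n(D + \psi^{-3}\log^4 n))$.

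\textbf{Main obstacle.} The delicate point is the passage from ``maximal set of vertex-disjoint paths in the capacitated/blown-up graph'' to ``$S'$ and $T'$ are far apart in a subgraph of $G$ that differs from $G$ by a negligible set of edges.'' Maximality in the blown-up graph only immediately gives that there is no augmenting path respecting capacities, which is weaker than a metric separation statement in $G$ itself; one has to quantify how many edges can be ``fully saturated'' and argue this is $o(|\partial C|)$, or alternatively set up the gadget so that a blocking set of saturated edges is itself a sparse cut. Getting the bookkeeping right here — reconciling the length bound $d$, the congestion bound $c$, the sparsity target $\psi$, and the error from deleted edges — is where the real work lies; the rest is an assembly of \cref{lem-maximal-flow-rand} and \cref{lem-cutfromflow-basic}.
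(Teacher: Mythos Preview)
Your high-level ingredients are right — iterate the randomized maximal-flow routine of \cref{lem-maximal-flow-rand}, then invoke \cref{lem-cutfromflow-basic} when you get stuck — but the step you flag as the ``main obstacle'' is a genuine gap, and the fix you sketch does not work. There is simply no bound on the number of edges (or vertices) that become ``fully saturated'' after $c$ iterations: a single bottleneck edge can appear in one path in every iteration, and nothing prevents $\Theta(|E|)$ edges from being saturated. So you cannot argue that the residual graph differs from $G$ by a negligible edge set, and the sparsity of the cut from \cref{lem-cutfromflow-basic} could be ruined when you add those edges back. Your alternative (``set up the gadget so that a blocking set of saturated edges is itself a sparse cut'') is not a statement you can get from maximality of a capacitated flow.

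The paper resolves this with a per-iteration dichotomy rather than a global one. In each iteration $i$ it runs \cref{lem-maximal-flow-rand} on the \emph{original} graph $G$ (only the source/sink sets $S_i,T_i$ shrink), obtaining a maximal vertex-disjoint family $\PP_i$. If $|\PP_i|\ge |S_i|/(d^2\log n)$, record these paths and update $S_{i+1},T_{i+1}$; the unmatched set shrinks geometrically, so this case can occur at most $O(d^2\log^2 n)=O(\psi^{-2}\log^4 n)$ times, which is exactly the congestion bound. If instead $|\PP_i|<|S_i|/(d^2\log n)$, then the vertex set $W$ used by $\PP_i$ has $|W|\le (d{+}1)|\PP_i|\le |S_i|/(d\log n)$, and by maximality of $\PP_i$ we have $\dist_{G\setminus W}(S_i,T_i)>d$. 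Now \cref{lem-cutfromflow-basic} applied to $G\setminus W$ gives a cut $C'$ separating $S_i$ from $T_i$ with $\Phiv_{G\setminus W}(C')=O(\Delta d^{-1}\log n)$; since only $\Delta|W|\le O(|S_i|/(d\log n))$ extra edges are incident to $W$ and $|C'|\ge |S_i|$, passing back to $G$ costs only an additive $O(d^{-1})$ in sparsity. The point is that you only ever need to remove the paths from the \emph{current} iteration, and those are few precisely in the case where you invoke the cut lemma.
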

\begin{proof}
Initially $S_1 = S$, $T_1 = T$,   $M = \emptyset$, and $\PP = \emptyset$.
For $i = 1, 2, \ldots, O(d^{2} \log^2 n)$, do the following. Apply the algorithm of \cref{lem-maximal-flow-rand} to find a maximal set of vertex-disjoint paths $\PP_i = \{P_1=(s_1, \ldots, t_1), P_2 = (s_2, \ldots, t_2), \ldots, P_x=(s_x, \ldots, t_x)\}$ between $S_{i}$ and $T_{i}$ subject to the constraint that the length of the paths is at most $d$, where $P_j=(s_j, \ldots, t_j)$ is interpreted as an embedding of $\{s_j, t_j\}$ with dilation $d$. We emphasize that we invoke \cref{lem-maximal-flow-rand} to find each maximal set $\PP_i$ of paths on the same graph $G$, and only the sets $S_i$ and $T_i$ are changed for each invocation.
There are two cases.

\paragraph{Case 1.} The first case is when $|\PP_i| \geq |S_i| / ( d^2 \log n )$. In this case, add $\PP_i$ to the current embedding $\PP$, and add $M_i = \{s_1, t_1\}, \{s_2, t_2\}, \ldots, \{s_x, t_x\}$ to the current matching $M$. Set $S_{i+1}$ and $T_{i+1}$ by removing the vertices in $S_{i}$ and $T_{i}$ that are matched in $M_i$. Proceed to the next iteration $i+1$. 

\paragraph{Case 2.}  The other case is when $|\PP_i| < |S_i| /( d^2 \log n )$. In this case, apply the algorithm of \cref{lem-cutfromflow-basic} on $G' = G[V \setminus W]$  and $(S_i, T_i)$, where $W$ is the set of all vertices used in $\PP_i$. In the application of \cref{lem-cutfromflow-basic}, we still assume that the underlying vertex set is $V$ by making $W$ an independent set.
The algorithm returns a cut $C'$ with $\Phiv_{G'}(C') = O(d^{-1} \log n)$, and it separates $S_i$ and $T_i$.
We pick $C$ to be one of $C'$ and $V \setminus C'$ in such a way that
$S_i \subseteq C$, and $T_i \subseteq V \setminus C$. We bound  $\Phiv_{G}(C)$ as follows.
\begin{align*}
  \Phiv_{G}(C) &= |\partial_{G}(C)| / \min\{|C|, |V \setminus C|\} \\
   &= |\partial_{G}(C)| / |C'| \\
  &\leq (|\partial_{G'}(C)| + \Delta |W|) / |C'|\\
  &\leq \Phiv_{G'}(C') + \Delta d |\PP_i|/|S_i|\\
  &= O(d^{-1} \log n).
\end{align*}
In the calculation $\Delta = O(1)$ is the maximum degree of $G$. We also use the following two facts $|C'| \geq \min\{S_i, T_i\} \geq |S_i|$ and $|\PP_i| < |S_i| /( d^2 \log n )$.

\paragraph{Termination.} The algorithm terminates whenever it enters Case~2 or has $S_i = \emptyset$. 
The number of iterations is at most $O(d^2 \log^2 n)$, since $|S_{i+1}| \leq |S_i|(1 - 1/( d^2 \log n))$ whenever it enters Case~1 in the $i$th iteration. Therefore, the final embedding has congestion $c = O(d^2 \log^2 n)$. It is clear that the embedding of the matching $M$ and the cut $C$ returned by the algorithm satisfy all the requirements by a change of variable $\psi = O(d^{-1} \log n)$.

\paragraph{Round complexity.} There are $c = O(\psi^{-2} \log^4 n)$ iterations. In each iteration, we apply the algorithm of \cref{lem-maximal-flow-rand} to compute $\PP$ and calculate its size using \cref{lem-basic}. The cost of \cref{lem-maximal-flow-rand} is  \[O((d^2 \log n)(d \log \Delta + \log n)) = O(d^3 \log n + d^2 \log^2 n) = O(\psi^{-3} \log^4 n).\]
Note that $\Delta = O(1)$ and $d = O(\psi^{-1} \log n)$. 
The cost of  \cref{lem-basic} is $O(D)$.
Lastly, the cost $O(d + D) = O(\psi^{-1} \log n + D)$ of the algorithm  \cref{lem-cutfromflow-basic} is not a dominating term.
To summarize, the overall round complexity is
$O(\psi^{-2} \log^4 n (D + \psi^{-3} \log^4 n) )$
\end{proof}

\subsection{Deterministic Nearly Maximal Flow}\label{sect-det-flow}

In this section, we provide the analogues of \cref{lem-maximal-flow-rand,lem-matching-player-rand} in the deterministic setting. We are not aware of an efficient deterministic algorithm that solves the maximal vertex-disjoint path problems. However, using a technique of of Goldberg, Plotkin, and Vaidya~\cite{GPV93}, we are able to \emph{nearly} solve the problem.

\begin{lemma}[Deterministic nearly maximal flow]\label{lem-GPV-basic}
Consider a graph $G=(V, E)$ of maximum degree $\Delta$.
Let $S \subseteq V$ and $T \subseteq V$ be two subsets.
There is an $O(d^3 \beta^{-1} \log^2 \Delta \log n)$-round deterministic algorithm that finds a set $\PP$ of $S$-$T$ vertex-disjoint paths of length at most $d$, together with a vertex set $B$ of size at most $\beta |V \setminus T| < \beta|V|$, such that any $S$-$T$ path of length at most $d$ that is vertex-disjoint to all paths in $\PP$ must contain a vertex in $B$.
\end{lemma}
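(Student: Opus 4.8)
The plan is to adapt the deterministic parallel ``approximate blocking flow'' technique of Goldberg, Plotkin, and Vaidya~\cite{GPV93} to the $\CONGEST$ model. The key observation that makes this tractable is that we only care about $S$--$T$ paths of length at most $d$: every quantity the algorithm needs — residual distances up to $d$, the residual adjacency, and the layered DAG of short paths — is determined by the radius-$O(d)$ ball around each vertex, so the computation is entirely \emph{local} and needs no global coordination and, in particular, no termination detection. I would therefore fix the number of rounds to $R = O(d^3\beta^{-1}\log^2\Delta\log n)$ in advance, run a prescribed schedule of phases within that budget, and prove that whatever $\PP$ and $B$ the schedule has produced when the budget is exhausted already satisfy the three required guarantees: (i) the paths in $\PP$ are vertex-disjoint and of length $\le d$; (ii) every length-$\le d$ $S$--$T$ path avoiding $V(\PP)$ hits $B$; and (iii) $|B| \le \beta|V\setminus T|$.

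The main loop maintains a set $\PP$ of vertex-disjoint short $S$--$T$ paths and an ``abandoned'' set $B\subseteq V\setminus T$, both initially empty. A phase takes the residual $H = G[V\setminus (V(\PP)\cup B)]$ with residual sources $S\setminus(V(\PP)\cup B)$ and sinks $T\setminus V(\PP)$, builds (in $O(d)$ rounds) the layered DAG $\mathcal{L}$ of all length-$\le d$ residual $S$--$T$ paths, and, if $\mathcal{L}$ is nonempty, invokes a deterministic parallel subroutine that returns a set $\QQ$ of vertex-disjoint $S$--$T$ paths of length $\le d$ together with a set $B'$ such that $\mathcal{L}$ with $V(\QQ)\cup B'$ deleted has no short $S$--$T$ path; we then append $\QQ$ to $\PP$ and $B'$ to $B$. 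The subroutine is GPV's scaled blocking flow on a depth-$d$ layered graph: one works through $O(\log n)$ demand scales, and within a scale one performs a blocking-flow-style sweep (one pass per layer, $O(d)$ layers), pushing flow forward; a vertex from which flow cannot be advanced is either retreated from — in which case it provably cannot lie on any remaining short path — or, when a scale's progress stalls, placed into $B'$. Choosing the stalling threshold so that at most a $\beta/(2d)$-fraction of $|V\setminus T|$ is abandoned per layer keeps $|B'|\le\tfrac{\beta}{2}|V\setminus T|$ per phase, and a potential argument (the number of residual sources still able to reach $T$ within distance $d$ drops by a constant factor each time $B'$ has to be invoked rather than $\QQ$ alone finishing the job) bounds the number of non-vacuous phases by $O(\beta^{-1})$, whence $|B|\le\beta|V\setminus T|$ overall; the within-layer selection of disjoint path continuations — a bipartite-assignment problem on vertices of degree $\le\Delta$ — is carried out deterministically in $O(\log^2\Delta)$ rounds of iterated $O(1)$-approximate matching. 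Multiplying the costs (depth-$d$ BFS, $O(d)$ layer passes per sweep, $O(d)$ residual-distance levels, $O(\log n)$ scales, $O(\log^2\Delta)$ per within-layer matching, and the $O(\beta^{-1})$ phase overhead) yields the claimed $O(d^3\beta^{-1}\log^2\Delta\log n)$.

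I expect the main obstacle to be the \emph{simultaneous} bookkeeping rather than any single subroutine: one must verify that the paths remain globally vertex-disjoint as they accumulate across phases, that every short residual path is \emph{certifiably} destroyed by $V(\PP)\cup B$ once the algorithm goes idle (so that $B$ really is a blocking set, not merely a byproduct of abandoned work), and that the total number of vertices ever committed to $B$ never exceeds $\beta|V\setminus T|$ — all while the scaling schedule and the blocking-flow sweeps are happening in parallel and locally. This is precisely the accounting done in~\cite{GPV93}; the contribution of this lemma is to transcribe it into the local $\CONGEST$ model, where the absence of any $D$-term in the round complexity reflects that nothing in the construction reaches beyond distance $O(d)$.
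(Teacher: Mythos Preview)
Your high-level plan — adapt GPV's blocking-flow idea, build a layered DAG, advance paths locally via deterministic matching, and declare a small leftover set $B$ when progress stalls — is the right intuition, and your observation that everything is $O(d)$-local (hence no $D$-term) is correct. But the concrete accounting you give does not close.

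The gap is in the size of $B$. You budget $|B'|\le \tfrac{\beta}{2}|V\setminus T|$ \emph{per phase} and then claim $O(\beta^{-1})$ phases, ``whence $|B|\le\beta|V\setminus T|$''. Those two bounds multiply to $O(|V\setminus T|)$, not $\beta|V\setminus T|$. The potential argument you sketch (``the number of residual sources drops by a constant factor each time $B'$ is invoked'') would give $O(\log n)$ such phases, not $O(\beta^{-1})$, and in any case does not by itself control $\sum|B'|$. Separately, the ``$O(\log n)$ demand scales'' are extraneous: we are routing unit-capacity vertex-disjoint paths, so there is nothing to scale.

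The paper's proof is simpler and avoids both issues. It runs exactly $d$ \emph{stages}, stage $d'$ handling paths of length exactly $d'$ (the invariant is $\dist(S,T)\ge d'$ at the start of stage $d'$). Within a stage it grows a set of vertex-disjoint partial paths from $S$, one layer per \emph{iteration}, by a single bipartite \emph{maximal} matching between current path-tips and idle next-layer vertices (Fischer's $O(\log^2\Delta\log n)$-round algorithm); unmatched tips retreat and become dead, matched paths that reach $T$ succeed. A simple charging argument shows that after $x$ iterations at most $2|V\setminus T|/x$ partial paths survive, so stopping after $2d^2/\beta$ iterations leaves at most $(\beta/d)|V\setminus T|$ active vertices, which become $B_{d'}$. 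Correctness is that any length-$d'$ $S$--$T$ path avoiding the successful paths must contain an active vertex (a dead tip cannot border an idle vertex). Summing over the $d$ stages gives $|B|\le\beta|V\setminus T|$ exactly, and the round count is $d\cdot (2d^2/\beta)\cdot O(\log^2\Delta\log n)=O(d^3\beta^{-1}\log^2\Delta\log n)$.
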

\begin{proof}
The proof of the lemma uses the approach of Goldberg, Plotkin, and Vaidya~\cite{GPV93} in the framework of blocking flow.
Similar to the proof of \cref{lem-maximal-flow-rand}, 
 the algorithm has $d$ stages. At the beginning of the $i$th stage, the current graph satisfies that $\dist(S,T) \geq i$, and the goal of this stage is to find a maximal set of vertex-disjoint $S$-$T$ paths $\PP_i$ of length exactly $i$. We then update the current graph by removing all vertices involved in these paths, together with a set $B_i$ of \emph{leftover} vertices of size $|B_i| \leq (\beta/d) |V|$. We will show that after removing these vertices, we have  $\dist(S,T) \geq i+1$.  It is clear that after the $d$th iteration, the union of all paths found $\PP = \PP_1 \cup \PP_2 \cup \cdots \cup \PP_d$, together with the union of all leftover vertices $B = B_1 \cup B_2 \cup \cdots \cup B_d$ satisfy all the requirements.
 
 \paragraph{Algorithm for one stage.}
 In what follows, we focus on the $d'$th stage. For $0 \leq i \leq d'$, define $L_i = \{v \in V \ | \ \dist(v, S) = i\}$. Note that $L_0 = S$ and $(L_{0} \cup L_1 \cup \cdots \cup  L_{d' - 1}) \cap T = \emptyset$. For each edge $e = \{ u, v\}$ between two adjacent layers $u \in L_i$ and $v \in L_{i+1}$ ($0 \leq i \leq d'-1$), we direct the edge $e$ from $u$ to $v$. Let $G'$ be the directed graph induced by these directed edges. The algorithm for the $d'$th stage only considers $G'$. Indeed, any $S$-$T$ path of length exactly $d'$ in the current graph must be a directed path from $L_0$ to $L_{d'}$ in $G'$.

\paragraph{States of vertices.} 
During the algorithm, there are four possible states of a vertex: \idle, \aactive, \dead, and \success. Initially, all vertices are \idle. The algorithm proceeds in iterations. 
In each iteration, we maintain a set of \emph{active paths} $\Pactive$. At the beginning of the algorithm, $\Pactive = L_0$ in the sense that each $v \in L_0$ is a length-0 path in $\Pactive$.

\paragraph{Algorithm for one iteration.}
The algorithm for one stage proceeds in iterations.
The task of each iteration is to extend the paths in $\Pactive$ along directed edges simultaneously by solving a bipartite maximal matching on the following bipartite graph:  one part $X$ of the bipartite graph is the set of the last vertices in all paths in $\Pactive$,  the other part $Y$ is the set of all \idle\ vertices, and a vertex $u \in X$ and a vertex $v \in Y$ are adjacent if $\{u,v\}$ is an directed edge from $u$ to $v$ in $G'$.

The vertices that are currently in a path in $\Pactive$ is \aactive. During the above path extension procedure, if a path $P = (v_1, v_2, \ldots, v_t)$ is unable to extend because all neighboring \idle\ vertices of $v_t$ are matched to the last vertex of other paths in $\Pactive$, then $v_t$ is removed from the path $P$, and $v_t$ changes is state to \dead, and $v_{t-1}$ becomes the last vertex of $P$.
If a path $P = (v_1 \in L_0 = S, v_2, \ldots, v_{t-1}, v_t \in L_{d'} \subseteq T)$ reaches a sink, then this path is removed from $\Pactive$ and is added to $\PP_i$, and all vertices in $P$ change their state to  \success.

\paragraph{Termination.}
After $x$ iterations, there can be at most $2|V \setminus T|/x$ remaining active paths in $\Pactive$. The reason is that there must be at least $|\Pactive|$ vertices in $V \setminus T$ changing their state in each iteration, and each vertex can change its state at most twice: $\idle \rightarrow \aactive \rightarrow \dead$ or $\idle \rightarrow \aactive \rightarrow \success$. 
Also, note that the number of active paths never increases. So if  $|\Pactive|>2|V \setminus T|/x$  after $x$ iterations, there must have been more than $2|V \setminus T|$ state changes of vertices, which is a contradiction. 
We terminate the algorithm after $2 d^2 /\beta$ iterations. This ensures that there are at most $(\beta / d^2) |V \setminus T|$ paths in $\Pactive$ in the end, and they contain at most $(\beta / d) |V \setminus T|$ vertices. We set $B_i$ to be the set containing these vertices, i.e., $B_i$ is the set of all  \aactive\ vertices.

\paragraph{Correctness.} To show the correctness of the algorithm for the $d'$th stage, we need to show that at the end of the algorithm for the $d'$th stage, any $S$-$T$ path of length exactly $d'$ must contain a vertex in $B_i$ or a vertex in a $S$-$T$ path in the set $\PP_i$ found by the algorithm. If there is an $S$-$T$ path  of length exactly $d'$ not using any vertex in $B_i$ or in a path in $\PP$, then such a path $P$ must be a directed path in $G'$ starting from a \dead\ vertex of $L_0$, ending at an \idle\ vertex of $L_{d'}$, and   $P$ does not contain any \aactive\ or \success\ vertex. The existence of $P$ implies the existence of a directed edge $e = (u,v)$ where $u$ is \dead\ but $v$ is \idle. This is impossible, since by the time $u$ changes its state to \dead, all its out-neighbors are not \idle, and we know that a vertex cannot change its state back to \idle. Therefore, we conclude that such a path $P$ does not exist, and so the algorithm for the $d'$th stage is correct.

\paragraph{Round complexity.} There are $d$ stages. Each stage consists of $2 d^2 /\beta$ iterations. Each iteration consists of a computation of a  maximal matching, which can be found in $O(\log^2 \Delta \log n)$ rounds deterministically using~\cite{Fischer17}. Therefore, the overall round complexity is $O(d^3 \beta^{-1} \log^2 \Delta \log n)$.
\end{proof}

The following lemma is similar to  \cref{lem-matching-player-rand}, but here we only need to find a cut $C$ when the number of unmatched sources is at least $\beta|V \setminus T|$. Intuitively, this means that we are allowed to have a small number of \emph{leftover} vertices.

\begin{lemma}[Deterministic cut or match] \label{lem-cut-match-det} Consider a  graph $G=(V,E)$ with maximum degree $\Delta$ and a parameter $0 < \psi < \Delta$.
Given a set of source vertices $S$ and a set of sink vertices $T$ with $|S| \leq |T|$, there is a deterministic algorithm that finds a cut $C$ and a set of $S$-$T$ paths $\PP$ embedding a matching $M$ between $S$ and $T$ satisfying the following requirements in $O( \Delta^2 \psi^{-2} \log^4 n (D + \Delta^{4} \psi^{-4} {\beta}^{-1}  \log^2 \Delta \log^6 n) )$ rounds deterministically.
\begin{description}
    \item[Match:] The embedding $\PP$ has congestion  $c = O(\Delta^2 \psi^{-2} \log^4 n)$ and dilation   $d = O(\Delta \psi^{-1} \log n)$.
    \item[Cut:] Let $S' \subseteq S$ and $T' \subseteq T$ be the subsets that are not matched by $M$. If $|S'| > \beta |V \setminus T|$, then $C$ satisfies $S' \subseteq C$,  $T' \subseteq V \setminus C$, and $\Phiv(C) \leq \psi$; otherwise $C = \emptyset$.
\end{description}
\end{lemma}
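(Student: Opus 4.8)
The plan is to follow the proof of \cref{lem-matching-player-rand} almost verbatim, replacing the randomized maximal‑flow subroutine \cref{lem-maximal-flow-rand} by the deterministic \emph{nearly} maximal‑flow subroutine \cref{lem-GPV-basic}, and paying for the leftover vertices it produces out of the slack $\beta$. Fix $d = \Theta(\Delta\psi^{-1}\log n)$, chosen large enough that \cref{lem-cutfromflow-basic} returns a cut of sparsity $O(\Delta d^{-1}\log n)\le\psi/4$, and run \cref{lem-GPV-basic} with leftover parameter $\beta' = \Theta(\beta\psi\Delta^{-1})$. Maintain the current unmatched source and sink sets $S_i\subseteq S$, $T_i\subseteq T$ (with $S_1=S$, $T_1=T$), a partial embedding $\PP$, and a partial matching $M$. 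In iteration $i$, call \cref{lem-GPV-basic} on $G$ with source set $S_i$, sink set $T_i$, length bound $d$, and leftover parameter $\beta'$, obtaining vertex‑disjoint $S_i$–$T_i$ paths $\PP_i$ of length $\le d$ and a set $B_i$ with $|B_i|\le\beta'|V\setminus T_i|$ meeting every remaining $S_i$–$T_i$ path of length $\le d$. If $|\PP_i|\ge |S_i|/(d^2\log n)$ (Case~1), append $\PP_i$ to $\PP$, add the endpoint pairs to $M$, delete the matched sources and sinks to form $S_{i+1},T_{i+1}$, and continue; since $|S_{i+1}|\le|S_i|(1-1/(d^2\log n))$, after $O(d^2\log^2 n)$ Case‑1 iterations either $S_i=\emptyset$ or Case~2 occurs.

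If $S_i$ becomes empty, output $C=\emptyset$; the $\PP_i$ are individually vertex‑disjoint, so the embedding has congestion at most the iteration count $O(d^2\log^2 n)=O(\Delta^2\psi^{-2}\log^4 n)$ and dilation $d=O(\Delta\psi^{-1}\log n)$, as required. If instead $|\PP_i|<|S_i|/(d^2\log n)$ (Case~2) we stop, so the unmatched sets are exactly $S'=S_i$ and $T'=T_i$. When $|S'|\le\beta|V\setminus T|$ we output $C=\emptyset$ (the slack permits it). Otherwise we invoke \cref{lem-cutfromflow-basic} on $G[V\setminus(W\cup B_i)]$, where $W$ is the vertex set of $\PP_i$, treating $W\cup B_i$ as isolated vertices so the ground set stays $V$; this is legitimate because near‑maximality of $\PP_i$ together with the defining property of $B_i$ leaves no $S_i$–$T_i$ path of length $\le d$ in $G[V\setminus(W\cup B_i)]$. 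We get a cut $C_0$, $|C_0|\le|V|/2$, separating $S_i$ from $T_i$ with $\Phiv_{G[V\setminus(W\cup B_i)]}(C_0)=O(\Delta d^{-1}\log n)$ and $|C_0|\ge|S_i|$; take $C$ to be whichever of $C_0, V\setminus C_0$ contains $S_i$, so $S'\subseteq C$, $T'\subseteq V\setminus C$, and $\min\{|C|,|V\setminus C|\}=|C_0|\ge|S_i|$.

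Converting to $G$: the only edges of $G$ absent from $G[V\setminus(W\cup B_i)]$ are incident to $W\cup B_i$, so $\Phiv_G(C)\le \Phiv_{G[V\setminus(W\cup B_i)]}(C_0) + \Delta(|W|+|B_i|)/|S_i|$. The first correction is $\Delta|W|/|S_i|\le \Delta d|\PP_i|/|S_i| < \Delta/(d\log n)=O(\psi/\log^2 n)$. For the second, note that the number of matched sinks equals the number of matched sources, hence $|V\setminus T_i| = |V\setminus T| + (|S|-|S_i|)\le 2|V\setminus T|$ (using $S\subseteq V\setminus T$), and since we are in the branch $|S_i|=|S'|>\beta|V\setminus T|$ this gives $|V\setminus T_i|< 2|S_i|/\beta$, so $\Delta|B_i|/|S_i|\le \Delta\beta'|V\setminus T_i|/|S_i| < 2\Delta\beta'/\beta = O(\psi)$. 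Adjusting the hidden constants in $d$ and $\beta'$ makes $\Phiv_G(C)\le\psi$. For the round complexity, there are $O(d^2\log^2 n)$ iterations, each costing one call to \cref{lem-GPV-basic} ($O(d^3\beta'^{-1}\log^2\Delta\log n)$ rounds), an $O(D)$‑round application of \cref{lem-basic} to learn $|\PP_i|$, and at most one $O(D+d)$‑round application of \cref{lem-cutfromflow-basic}; substituting $d=\Theta(\Delta\psi^{-1}\log n)$ and $\beta'^{-1}=\Theta(\Delta\psi^{-1}\beta^{-1})$ yields the claimed bound $O\!\left(\Delta^2\psi^{-2}\log^4 n\,(D + \Delta^4\psi^{-4}\beta^{-1}\log^2\Delta\log^6 n)\right)$, with slack in the polylog factors.

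The main obstacle is the bookkeeping around the leftover set $B_i$: unlike the randomized case the subroutine does not yield a truly maximal flow, so one must (i) argue carefully that deleting $W\cup B_i$ kills all short residual $S_i$–$T_i$ paths so that \cref{lem-cutfromflow-basic} applies, (ii) calibrate $\beta'$ as a function of $\beta,\psi,\Delta$ so that the $\Delta|B_i|/|S_i|$ term never pushes $\Phiv_G(C)$ above $\psi$ whenever a cut is actually demanded, which is where the identity $|V\setminus T_i|=\Theta(|V\setminus T|)$ and the hypothesis $|S'|>\beta|V\setminus T|$ get used, and (iii) check that the complementary regime $|S'|\le\beta|V\setminus T|$ is precisely the ``$C=\emptyset$'' branch of the statement. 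Everything else — the Case~1/Case~2 dichotomy, the geometric shrinkage of $|S_i|$, the congestion/dilation accounting — is a routine transcription of the blocking‑flow argument already carried out for \cref{lem-matching-player-rand}.
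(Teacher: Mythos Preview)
Your proposal is correct and follows essentially the same approach as the paper: iterate the deterministic nearly-maximal flow of \cref{lem-GPV-basic}, split into a progress case and a cut case at the same threshold $|\PP_i|\gtrless |S_i|/(d^2\log n)$, and in the cut case remove both $W$ and the leftover set $B_i$ before invoking \cref{lem-cutfromflow-basic}. Two cosmetic differences: the paper terminates early as soon as $|S_i|\le\beta|V\setminus T|$ (rather than checking only inside Case~2), and it sets $\beta'=\beta/(d\log n)$ rather than your $\beta'=\Theta(\beta\psi\Delta^{-1})$, which are equal up to a $\log^2 n$ factor; your larger $\beta'$ still works and indeed gives a slightly sharper polylog in the round count, consistent with your remark about slack. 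Your explicit handling of $|V\setminus T_i|$ versus $|V\setminus T|$ via the identity $|V\setminus T_i|=|V\setminus T|+(|S|-|S_i|)\le 2|V\setminus T|$ is a point the paper glosses over.
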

\begin{proof}
The proof is very similar to that of \cref{lem-matching-player-rand}.
Initially $S_1 = S$, $T_1 = T$, $\PP = \emptyset$, and $M = \emptyset$.
For $i = 1, 2, \ldots, O( d^{2} \log^2 n)$, do the following. 
If $|S_i| \leq \beta |V \setminus T|$, terminate the algorithm; otherwise apply the algorithm of \cref{lem-GPV-basic} with $\beta' = \beta / ( d \log n)$ to find a set of vertex-disjoint paths $\PP_i = \{P_1=(s_1, \ldots, t_1), P_2 = (s_2, \ldots, t_2), \ldots, P_x=(s_x, \ldots, t_x)\}$ between $S_{i}$ and $T_{i}$ and a subset $B_i \subseteq V$ satisfying the following conditions.
\begin{itemize}
    \item Each path in $\PP_i$ has length at most $d$.
    \item $|B_i| \leq \beta'|V \setminus T| = \beta |V \setminus T| / ( d \log n) < |S_i| / ( d \log n)$. Note that the last inequality is due to $|S_i| > \beta |V \setminus T|$, since other wise the algorithm has been terminated.
    \item Any $S_i$-$T_i$ path of length at most $d$ must contain a vertex in $B_i$ or a vertex in a path in  $\PP_i$. 
\end{itemize}
Each $P_j=(s_j, \ldots, t_j)$ is interpreted as an embedding of $\{s_j, t_j\}$ with dilation $d$. There are two cases.

\paragraph{Case 1.} The first case is  $|\PP_i| \geq |S_i| / (  d^2 \log n )$. In this case, add $M_i = \{ \{s_1, t_1\}$, $\{s_2, t_2\}$, $\ldots$, $\{s_x, t_x\}\}$ to $M$, and add $\PP_i$ to the current embedding $\PP$. Set $S_{i+1}$ and $T_{i+1}$ by removing the vertices in $S_{i}$ and $T_{i}$ that are matched in $M_i$. Proceed to the next iteration $i+1$. 

\paragraph{Case 2.}  The second case is  $|\PP| < |S_i| /(  d^2 \log n )$. In this case, apply the algorithm of \cref{lem-cutfromflow-basic} on $G' = G[V \setminus W]$  and $(S_i, T_i)$, where $W$ is the set including the following vertices.
\begin{itemize}
    \item The vertices used in paths in $\PP_i$.
    \item The vertices in $B_i$.
\end{itemize}
In the application of \cref{lem-cutfromflow-basic}, We assume that the underlying vertex set of $G'$ is $V$ by treating $W$ as an independent set in $G'$.
The algorithm of \cref{lem-cutfromflow-basic} returns a cut $C'$ with $\Phiv_{G'}(C') = O(\Delta d^{-1} \log n)$, and it separates $S_i$ and $T_i$.
We pick $C$ to be one of $C'$ and $V \setminus C'$ in such a way that
$S_i \subseteq C$, and $T_i \subseteq V \setminus C$. We bound  $\Phiv_{G}(C)$ as follows.
\begin{align*}
  \Phiv_{G}(C) &= |\partial_{G}(C)| / \min\{|C|, |V \setminus C|\} \\
   &= |\partial_{G}(C)| / |C'| \\
  &\leq (|\partial_{G'}(C)| + \Delta |W|) / |C'|\\
  &\leq \Phiv_{G'}(C') + \Delta d |\PP|/|S_i| + \Delta |B_i|/|S_i|\\
  &= O(\Delta d^{-1} \log n).
\end{align*}
In the calculation we use the following  facts.
\begin{itemize}
    \item $|C'| \geq \min\{S_i, T_i\} \geq |S_i|$.
    \item $|\PP_i| < |S_i| /(  d^2 \log n )$.
    \item $|B_i| \leq |S_i| / ( d \log n)$.
\end{itemize}

\paragraph{Termination.} The algorithm terminates whenever it enters Case~2 or has $|S_i| \leq \beta|V \setminus T|$. 
The number of iterations is at most $O( d^2 \log^2 n)$, since $|S_{i+1}| \leq |S_i|(1 - 1/(  d^2 \log n))$ whenever it enters Case~1 in the $i$th iteration. Therefore, the final embedding has congestion $c = O( d^2 \log^2 n)$. It is clear that the embedding of the matching $M$ and the cut $C$ returned by the algorithm satisfy all the requirements by a change of variable $\psi = O(\Delta d^{-1} \log n)$.

\paragraph{Round complexity.} There are $c = O( d^2 \log^2 n) = O(\Delta^2 \psi^{-2} \log^4 n)$ iterations. In each iteration, we apply the algorithm of \cref{lem-GPV-basic} to compute $\PP_i$ and calculate its size using \cref{lem-basic}. The cost of \cref{lem-GPV-basic} is  $O(d^3 {\beta'}^{-1} \log^2 \Delta \log n) 
= O(d^4 {\beta}^{-1}  \log^2 \Delta \log^2 n)
= O(\Delta^{4} \psi^{-4} {\beta}^{-1}  \log^2 \Delta \log^6 n)$. The cost of  \cref{lem-basic} is $O(D)$.
Lastly, the cost $O(d + D) = O(\Delta \psi^{-1} \log n + D)$ of the algorithm  \cref{lem-cutfromflow-basic} is not a dominating term.
To summarize, the overall round complexity is
$O( \Delta^2 \psi^{-2} \log^4 n (D + \Delta^{4} \psi^{-4} {\beta}^{-1}  \log^2 \Delta \log^6 n) )$.
\end{proof}




\begin{lemma}[Deterministic flow without leftover] \label{lem-cut-match-det-noleftover} Consider a  graph $G=(V,E)$ with $\Phiv(G) \geq \psi$. 
Given a set of vertices $S=\{s_1, s_2, \ldots, s_x\} \subseteq V$   with $|S| < |V|/2$, there is an $O( \Delta^2 \psi^{-2} \log^{9/2} n) \cdot (D + 2^{O(\sqrt{\log n})}\cdot \Delta^{4} \psi^{-4})$-round deterministic algorithm that finds a set of $S$-$T$ paths $\PP=\{P_1, P_2, \ldots, P_x\}$ meeting the following conditions, with $T = V \setminus S$.
\begin{itemize}
    \item For each $1 \leq i \leq x$, $P_i$ is a path of length $O(\Delta \psi^{-1} \log^{3/2} n)$ starting at $s_i \in S$ and ending at a vertex in $T = V \setminus S$.
    \item Each vertex $v \in V$ belongs to at most $O(\Delta^2 \psi^{-2}) \cdot 2^{O(\sqrt{\log n})}$ paths in $\PP$.
    \item The set of paths  $\PP$ is stored implicitly in the following sense. For each vertex $v_j$ in the path $P_i = (s_i= v_1, v_2, \ldots, v_{k-1}, v_k \in T)$, given $\ID(s_i)$, the vertex $v_j$ can locally calculate $\ID(v_{j-1})$ (if $j > 1$) and $\ID(v_{j+1})$ (if $j < k$).
\end{itemize}
\end{lemma}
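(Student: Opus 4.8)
The plan is to bootstrap the leftover-producing flow routine \cref{lem-cut-match-det} so that, after $R=\lceil\sqrt{\log n}\,\rceil$ applications, every source is served. Set $\psi'=\psi/2$ and $\beta=2^{-R}$, and let $S_0=S$. In round $i=0,1,\dots,R-1$ I would invoke \cref{lem-cut-match-det} on the pair $(S_i,\,T_i)$ with $T_i=V\setminus S_i$ (note $|S_i|\le|S|<|V|/2<|T_i|$, and $\psi'<\psi\le\Phiv(G)\le\Delta$). Since $\Phiv(G)\ge\psi>\psi'$, no nonempty cut of sparsity at most $\psi'$ exists, so the routine cannot return a cut; hence it returns an embedding $\PP_i$ of a matching $M_i$ between $S_i$ and $T_i$ with congestion $c=O(\Delta^2\psi^{-2}\log^4 n)$ and dilation $d=O(\Delta\psi^{-1}\log n)$, leaving at most $\beta|V\setminus T_i|=\beta|S_i|$ vertices of $S_i$ unmatched. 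Let $S_{i+1}$ be that unmatched set, so $|S_{i+1}|\le\beta|S_i|$, hence $|S_R|\le\beta^R|S|=2^{-R^2}|S|<1$, i.e.\ $S_R=\emptyset$: every source is matched in some round. For a matched source $u$, write $Q_u^{(j)}\in\PP_j$ for the round-$j$ matching path starting at $u$, where $j$ is the unique round with $u\in S_j\setminus S_{j+1}$.

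Next I would assemble the output paths by concatenation along chains of strictly decreasing round index. For $s\in S$, follow $Q_s^{(m_s)}$; it ends at a vertex $t\in V\setminus S_{m_s}=T\cup(S\setminus S_{m_s})$. If $t\in T$ we stop; otherwise $t$ is a source matched in some round $m_t<m_s$, so we append $Q_t^{(m_t)}$ and repeat. Because the round index strictly decreases at every concatenation and a round-$0$ matching path ends in $V\setminus S_0=T$, the process terminates at a vertex of $T$ after at most $R$ segments; this defines $P_s$, of length at most $Rd=O(\Delta\psi^{-1}\log^{3/2}n)$, as required. The implicit-storage guarantee follows because each vertex on $Q_u^{(j)}$ already knows its two neighbours along $Q_u^{(j)}$ once told $\ID(u)$ (this is how \cref{lem-cut-match-det} delivers its embedding); a single propagation down the chains, pushing the pair $(\ID(s),\ID(u))$ to every vertex of $Q_u^{(j)}$ lying on $P_s$, lets each vertex build a small table mapping $\ID(s)$ to the relevant segment source, from which it recovers $\ID(v_{j-1})$ and $\ID(v_{j+1})$; this costs $O(\mathrm{dilation}\cdot\mathrm{congestion})$ rounds, dominated by what follows.

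The crux is the congestion bound. For a vertex $u$ and round $r$, let $A(u,r)$ be the number of output paths $P_s$ containing the segment $Q_u^{(r)}$. A chain reaches $u$ at round $r$ either because $u=s$ with $m_s=r$, or because the preceding segment $Q_{u'}^{(r')}$ for some $r'>r$ ends at $u$; and since $M_{r'}$ is a matching, for each $r'\in\{r+1,\dots,R-1\}$ at most one round-$r'$ segment ends at $u$. Hence $A(u,r)\le 1+\sum_{r'=r+1}^{R-1}\max_{u''}A(u'',r')$, and with $\max_u A(u,R-1)\le 1$ this gives $\max_u A(u,r)\le 2^{R-r}$. Now any fixed vertex $v$ lies on at most $c$ segments of round $r$ (congestion of $\PP_r$), each contained in at most $2^{R-r}$ output paths, so the number of output paths through $v$ is at most $\sum_{r=0}^{R-1}c\cdot 2^{R-r}\le 2c\cdot 2^R=O(\Delta^2\psi^{-2}\log^4 n)\cdot 2^{O(\sqrt{\log n})}=O(\Delta^2\psi^{-2})\cdot 2^{O(\sqrt{\log n})}$, absorbing the polylogarithmic factor into $2^{O(\sqrt{\log n})}$.

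For the round complexity, the construction is $R$ invocations of \cref{lem-cut-match-det} with $\beta^{-1}=2^R=2^{O(\sqrt{\log n})}$, i.e.\ $R\cdot O(\Delta^2\psi^{-2}\log^4 n)\bigl(D+\Delta^4\psi^{-4}\beta^{-1}\log^2\Delta\log^6 n\bigr)=O(\Delta^2\psi^{-2}\log^{9/2}n)\bigl(D+2^{O(\sqrt{\log n})}\Delta^4\psi^{-4}\bigr)$, using $R=O(\log^{1/2}n)$ and folding the remaining $\mathrm{polylog}$ terms into $2^{O(\sqrt{\log n})}$; the propagation step is within this budget. I expect the congestion recurrence to be the main technical point — in particular arguing cleanly that the \emph{matching} property is what forces the one-predecessor-per-round bound, and checking that repeated reuse of a vertex across the $R$ rounds still leaves the path count at $2^{O(\sqrt{\log n})}$ rather than blowing up — with the concatenation bookkeeping for the implicit representation being the secondary point that needs care.
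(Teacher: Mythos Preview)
Your proposal is correct and follows essentially the same approach as the paper: iterate \cref{lem-cut-match-det} for $\sqrt{\log n}$ rounds with $\beta=2^{-\sqrt{\log n}}$ so that the leftover set empties, then concatenate the matching paths along strictly decreasing round indices. The paper organizes the congestion analysis slightly more compactly via the recurrence $c_{i}=2c_{i-1}+c'$ (each path in $\PP'_{i-1}$ is kept once for its own source and reused at most once as a suffix, by the matching property), which is exactly your per-segment bound $A(u,r)\le 2^{R-r}$ unrolled; both give the same $2^{O(\sqrt{\log n})}\cdot c'$ congestion.
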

\begin{proof}
Initially $S_1 = S$. For $i = 1, 2, \ldots, \sqrt{\log n}$, do the following. Apply the algorithm of \cref{lem-cut-match-det} with $\psi' =  0.9 \psi$, $\beta' = 2^{-\sqrt{\log n}}$, $S' = S_i$, and $T' = T_i = V \setminus S_i$. Since $\Phiv(G) \geq \psi > \psi'$, the algorithm of \cref{lem-cut-match-det} is guaranteed to return an embedding $\PP_i$ of a matching $M_i$ between $S_i$ and $T_i$ with congestion $c' = O(\Delta^2 \psi^{-2} \log^4 n)$ and dilation $d' = O(\Delta \psi^{-1} \log n)$, and the set of unmatched vertices $S_i'$ in $S_i$ has size $|S_i'| \leq \beta' |V \setminus T| \leq |S_i| \cdot 2^{-\sqrt{\log n}}$.
If $S_i' \neq \emptyset$, proceed to the next iteration $i+1$ with $S_{i+1} = S_i'$.

\paragraph{Post-processing.}
It is clear that after  $i = \sqrt{\log n}$ iterations, we have $S_{i}' =  S_{i+1} = \emptyset$. That is, every vertex in $S$ is matched during the above algorithm. However, many of these vertices are not matched to vertices in $T$. We do the following post-processing step to fix it.

Initially, let $\PP_1' = \PP_1$  be the matching embedding of the first iteration of the above algorithm. Suppose by inductive hypothesis that $\PP_{i}'$ is a set of vertex-disjoint paths between $S \setminus S_{i+1}$ and $T$ such that each $v \in S \setminus S_{i+1}$ is the starting vertex of exactly one path in $\PP_{i}'$. Then we construct $\PP_{i+1}'$ from  $\PP_{i}'$ as follows. 

For each vertex $v \in S_{i+1} \setminus S_i$, there is a path $P \in \PP_{i+1}$ starting at $v$ and ending at a vertex $u \in T_{i+1} = V \setminus S_{i+1}$. There are  two cases.
\begin{itemize}
    \item If $u \in T$, then we add $P$ to $\PP_{i+1}'$.
    \item If $u \notin T$, then $u \in S \setminus S_{i+1}$, and there is a path $P' \in \PP_{i}'$ starting at $u$ and ending at a vertex $w \in T$. We add the concatenation of $P$ and $P'$ to $\PP_{i+1}'$.
\end{itemize}
We set $\PP = \PP_{\sqrt{\log n}}$. It is clear that each vertex $v \in S$ is the starting vertex of exactly one path in $\PP$, and each path in $\PP$ ends at a vertex in $T$. The congestion $c_i$ and dilation $d_i$ of $\PP_i'$ can be calculated recursively as follows.
\begin{align*}
    c_1 &= c',\\
    c_{i} &= 2 c_{i-1} + c',\\
    d_1 & = d',\\
    d_i & = d_{i-1} + d'.
\end{align*}
Therefore, the paths in $\PP$ have length at most $d_{\sqrt{\log n}} = \sqrt{\log n} \cdot d' = O(\Delta \psi^{-1} \log^{3/2} n)$, and each vertex $v \in V$ belongs to at most $c_{\sqrt{\log n}} = 2^{O(\sqrt{\log n})} \cdot c' = O(\Delta^2 \psi^{-2}) \cdot 2^{O(\sqrt{\log n})}$.

\paragraph{Round complexity.}
The algorithm consists of $\sqrt{\log n}$ iterations of the algorithm of \cref{lem-cut-match-det}, and each of them costs \[O( \Delta^2 \psi^{-2} \log^4 n (D + \Delta^{4} \psi^{-4} {\beta'}^{-1}  \log^2 \Delta \log^6 n) )
= O( \Delta^2 \psi^{-2} \log^4 n) \cdot (D + 2^{O(\sqrt{\log n})}\cdot \Delta^{4} \psi^{-4}  ) 
\] rounds. Therefore, the overall round complexity is
\[O( \Delta^2 \psi^{-2} \log^{9/2} n) \cdot (D + 2^{O(\sqrt{\log n})}\cdot \Delta^{4} \psi^{-4}  ).\]
The post-processing step can be seen as the instruction for routing that allows us to store the set of paths $\PP$ implicitly, and so this step does not incur any overhead in the round complexity. 
\end{proof}

\subsection{Multi-commodity Deterministic Nearly Maximal Flow}\label{sect-det-flow-multi}

We extend \cref{lem-cut-match-det} to the case where there are multiple pairs of source vertices $S_i$ and sink vertices $T_i$.

\begin{lemma}[Simultaneous deterministic cut or match] \label{lem-cut-match-det-multi} Consider a  graph $G=(V,E)$ with maximum degree $\Delta$ and a parameter $0 < \psi < \Delta$. We are given the following as input.
\begin{description}
    \item[Sources and sinks:] $S_1, T_1, S_2, T_2, \ldots, S_k, T_k$ are $2k$ disjoint subsets of $V$ such that $|S_i| \leq |T_i|$ for each $1 \leq i \leq k$, where $\min_{1 \leq i \leq k} |S_i| \geq \beta|V|$.
    \item[Cut:] $\Cin \subseteq V$ is a cut with $0 \leq |C| < |V|/3$ and $\Phiv(C) \leq \psi/2$.
\end{description}
Then there is a deterministic algorithm with round complexity
\[
O\left(Dk \Delta^{2} \psi^{-2} \log n \log \beta^{-1} + k \Delta^6 \psi^{-6} \beta^{-1} \log^2\Delta \log^2 n \log \beta^{-1}\right)
\]
that finds a cut $\Cout$ and a set of $S_i$-$T_i$ paths $\PP_i$ embedding a matching $M_i$ between $S_i$ and $T_i$, for each $1 \leq i \leq k$, satisfying the following requirements.
\begin{description}
    \item[Match:] The simultaneous embedding $\PP_1, \PP_2, \ldots, \PP_k$ has congestion  $c = O(\Delta^{2} \psi^{-2} \log n \log \beta^{-1})$ and dilation   $d = O(\Delta \psi^{-1} \log n)$.
    \item[Cut:] For the cut $\Cout$, there are two options. 
    \begin{itemize}
        \item The first option is to have $|V|/3 \leq |\Cout| \leq |V|/2$ and $\Phiv(\Cout) \leq \psi$.
        \item The second option is to have  $0 \leq |\Cout| \leq |V|/2$,  $\Phiv(\Cout)\leq \psi/2$, and $\Cin \subseteq \Cout$. Furthermore, for each $1 \leq i \leq k$, if less than half of the vertices of $S_i$ are matched in $M_i$, then $\Cout$ must contain at least $|S_i|/2$ vertices in $S_i \cup T_i$. 
    \end{itemize}
\end{description}
\end{lemma}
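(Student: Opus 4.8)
The plan is to prove this lemma as the $k$-commodity analogue of \cref{lem-cut-match-det}: the augmenting-path subroutine will be the deterministic nearly-maximal-flow algorithm of \cref{lem-GPV-basic}, and the output cut will be assembled by iterating the cut-extraction of \cref{lem-cutfromflow-basic} in the ``nested'' fashion already used in the proof of \cref{lem-cutfromflow-multiple}. Throughout, the length bound for every augmenting path is $d=\Theta(\Delta\psi^{-1}\log n)$, and the calls to \cref{lem-GPV-basic} use a leftover fraction $\beta'$ that is a sufficiently small $\poly(\psi,\beta,\Delta^{-1},\log^{-1}n)$.

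The algorithm maintains a cut $C$ (initialised to $\Cin$) and, for each commodity $i$, a residual source $S_i'\subseteq S_i$, a residual sink $T_i'\subseteq T_i$, and a partial embedding $\PP_i$ of a matching $M_i$. A commodity is \emph{active} until it is \emph{closed} or at least $|S_i|/2$ of its sources have been matched; the algorithm runs an outer loop until no commodity is active, and each iteration sweeps over the still-active commodities. For commodity $i$, it applies \cref{lem-GPV-basic} with source $S_i'$, sink $T_i'$, length bound $d$ and leftover fraction $\beta'$, inside the current residual graph --- $G$ with $C$ deleted and with the vertices consumed by earlier batches in this sweep deleted --- producing a set $\QQ$ of internally vertex-disjoint $S_i'$-$T_i'$ paths of length $\le d$ and a leftover set $B$, $|B|\le\beta'|V|$, such that every short residual $S_i'$-$T_i'$ path meets $\QQ\cup B$. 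If $|\QQ|$ is at least a $1/\poly(d)$ fraction of $|S_i'|$ (\emph{Case 1}), the algorithm appends $\QQ$ to $\PP_i$ and deletes the newly matched endpoints from $S_i'$, $T_i'$; since $|S_i'|$ then drops by a constant factor every $\poly(d)$ successes, $\poly(d)$ Case-1 steps make commodity $i$ ``half matched''. If $|\QQ|$ is smaller (\emph{Case 2}), the algorithm \emph{closes} commodity $i$: it deletes $\QQ\cup B$ as well, so $S_i'$ and $T_i'$ are now at distance $>d$; it invokes \cref{lem-cutfromflow-basic} on what remains to obtain a cut $\tilde C$ of residual sparsity $O(\Delta d^{-1}\log n)$ separating $S_i'$ from $T_i'$; it keeps the smaller side of $\tilde C$ (which, since $\tilde C$ separates the two sets, contains all of $S_i'$ or all of $T_i'$, and both of these have more than $|S_i|/2$ vertices --- exactly the per-commodity balance requirement); it lifts this side back to $G$; and it sets $C\leftarrow C\cup\tilde C$. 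To keep the lifting error small, a sweep is truncated as soon as its batches have consumed more than $\Theta(\psi\beta\Delta^{-1})|V|$ vertices in total: adding those vertices back then changes $|\partial(\tilde C)|$ by at most $\Delta\cdot\Theta(\psi\beta\Delta^{-1})|V|=O(\psi)\cdot(\beta|V|)=O(\psi)\cdot|S_i'|$, which with $d$ large makes each lifted increment $(\psi/4)$-sparse in the residual graph it was taken from.

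The output cut then follows from the nested-cut estimate of \cref{lem-cutfromflow-multiple}: with $\Cin=C_0\subseteq C_1\subseteq\cdots$ the chain of cuts, $|\partial(C_t)|\le|\partial(\Cin)|+\sum_j|\partial_{G\setminus C_{j-1}}(\tilde C_j)|\le(\psi/2)|C_t|$ by induction, so $\Phiv(C_t)\le\psi/2$ while $|C_t|\le|V|/2$, and $\Cin\subseteq C_t$ at all times. Since each increment has size at most half of the remaining residual vertex set, the first step $i^*$ at which $|C_{i^*}|\ge|V|/3$ has $|C_{i^*}|\le(|V|+|C_{i^*-1}|)/2<2|V|/3$, so the smaller of $C_{i^*}$ and $V\setminus C_{i^*}$ has size in $[|V|/3,|V|/2]$ and sparsity $\le\psi$ (the factor $2$ is lost only when passing to the complement); we output it as $\Cout$, which is option~1. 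If $|C_t|$ never reaches $|V|/3$, then at the end every commodity is either matched to at least half or closed, and $\Cout:=C$ satisfies $|C|<|V|/3$, $\Phiv(C)\le\psi/2$, $\Cin\subseteq C$, with every closed commodity contributing $>|S_i|/2$ vertices of $S_i\cup T_i$ to $C$: this is option~2. The dilation claim is immediate, each $\PP_i$ being a union of internally vertex-disjoint length-$\le d$ batches; and each iteration costs $O(D)$ for size aggregation via \cref{lem-basic} plus one call to \cref{lem-GPV-basic} per active commodity, which after substituting $d$, $\beta'$ and the number of iterations matches the stated round complexity.

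I expect the congestion of the simultaneous embedding to be the main obstacle: a literal per-commodity implementation would multiply the single-commodity congestion of \cref{lem-cut-match-det} by $k$, whereas the target $O(\Delta^2\psi^{-2}\log n\log\beta^{-1})$ is essentially $k$-free (note $\min_i|S_i|\ge\beta|V|$ forces $k\le\beta^{-1}$, hence $\log\beta^{-1}\ge\log k$). The crucial point to establish is that, under the sweep discipline above --- batches made mutually vertex-disjoint within a sweep, a sweep truncated after a small constant fraction of the graph has been used --- each vertex carries only $O(1)$ committed paths per sweep, while the total number of sweeps is bounded by a global geometric-decrease argument on $\sum_i|S_i^{(t)}|$ together with the choice of the Case-1 threshold; one must check that truncation never permanently stalls a commodity (so the decrease completes) and that the net dependence on $\beta$ comes out logarithmic rather than polynomial. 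A secondary obstacle is the sparsity accounting for the cut: the lifting-error term and the nested-chain sum must both close against the \emph{same} constant $\psi/2$ that $\Cin$ already occupies, which is exactly why each increment has to be measured in its own residual graph rather than in $G$, and why the leftover fraction $\beta'$ and the truncation threshold are tied to $\psi\beta\Delta^{-1}$.
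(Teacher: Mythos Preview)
Your plan is in the right spirit but over-complicates the organisation in a way that creates exactly the obstacle you flag. The paper avoids the truncation issue entirely by making the Case~1/Case~2 decision \emph{globally} rather than per commodity.

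Concretely, the paper factors the proof through a one-iteration subroutine (\cref{lem-cut-match-det-multi-oneit}). In one iteration it runs \cref{lem-GPV-basic} sequentially over all $k$ active commodities, each time on $G$ with the vertices used by the earlier commodities in this iteration deleted; this yields a single family $\PP=\PP_1\cup\cdots\cup\PP_k$ of vertex-disjoint paths (congestion~$1$). Then it branches on the \emph{total} size $|\PP|$: if $|\PP|\ge\Omega(\Delta^{-2}\psi^{2}\log^{-1}n)\cdot\sum_i|S_i'|$, it outputs $\PP$ and the global sum $\sum_i|S_i'|$ drops by a fixed factor; otherwise $|W|\le|\PP|(d+1)$ is small enough that the lifting error is $O(\psi\beta)|V|$ uniformly, and one invocation of \cref{lem-cutfromflow-multiple} (not one call to \cref{lem-cutfromflow-basic} per closed commodity) produces a cut that either is $(1/3)$-balanced or covers $\ge|S_i'|$ vertices of $S_i'\cup T_i'$ for \emph{every} remaining $i$ simultaneously. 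That cut is then merged with $\Cin$ via \cref{lem-cut-combine}, and the algorithm terminates. The outer loop just iterates this subroutine; the number of iterations is $O(\Delta^2\psi^{-2}\log n\log\beta^{-1})$ by the geometric decrease, and since each iteration has congestion~$1$, the total congestion is exactly this bound.

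The difference that matters: because the cut case is triggered only when the total $|\PP|$ is small, the lifting error is automatically small for all commodities at once, so there is no need to truncate a sweep and no risk of stalling. Your per-commodity Case~2 forces you to bound the lifting error against whatever the Case~1 commodities earlier in the same sweep happened to consume, which is why you needed the $\Theta(\psi\beta\Delta^{-1})|V|$ truncation threshold and then had to worry about progress under truncation. Switching to the global threshold removes both obstacles and collapses the cut side of the argument to a single call to \cref{lem-cutfromflow-multiple} plus \cref{lem-cut-combine}.
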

\begin{proof}
The algorithm is an iterated applications of the algorithm of \cref{lem-cut-match-det-multi-oneit} with $\Cin$ and the current remaining pairs $(S_1', T_1')$, $(S_2', T_2')$, $\ldots$, $(S_{k'}', T_{k'}')$ with parameters $\psi$ and $\beta' = \sum_{1 \leq i \leq k'} |S_i'| / |V| \geq (\beta/2) k'$, as  $|S_i'| \geq |S_i|/2 \geq  \beta|V|$ for each remaining pair $(S_i', T_i')$.
Note that it is possible that $k' < k$, and we re-order the pairs to have $S_i' \subseteq S_i$ and $T_i' \subseteq T_i$ for each $1 \leq i \leq k'$. For $1 \leq i \leq k'$, all vertices in $S_i \setminus S_i'$ have been matched to $T_i \setminus T_i'$ in previous iterations.  For $k' < i \leq k$, the pair $(S_i, T_i)$ has been removed because at least half of $S_i$ have been matched to $T_i$ in previous iterations. The congestion $c$ of the embedding is at most the number of iterations, as the embedding returned by the algorithm of \cref{lem-cut-match-det-multi-oneit} has congestion 1.

If the output of the algorithm of \cref{lem-cut-match-det-multi-oneit} is a cut $\Cout$, then we return $\Cout$ and the current matching and its embedding. This cut satisfies the requirements stated in the lemma. By  \cref{lem-cut-match-det-multi-oneit}, there are two possibilities. One possibility is $|V|/3 \leq |\Cout| \leq |V|/2$ and $\Phiv(\Cout) \leq \psi$. The other possibility is $0 \leq |\Cout| \leq |V|/2$,  $\Phiv(\Cout)\leq \psi/2$,  $\Cin \subseteq \Cout$, and also for each $1 \leq i \leq k'$,   $\Cout$ must contain at least $|S_i'| \geq|S_i|/2$ vertices in $S_i' \cup T_i' \subseteq S_i \cup T_i$. In both cases, $\Cout$ satisfies all the requirements. Remember that for each $k' < i \leq k$, at least half of the vertices in $S_i$ are matched already.

If the output of the algorithm of \cref{lem-cut-match-det-multi-oneit} is a set of vertex-disjoint paths $\PP = \PP_1 \cup \PP_2 \cup \cdots \cup \PP_k$, then we update the current matching and embedding accordingly by including $\PP$. The vertices  in $S_1', T_1', S_2', T_2', \ldots, S_k', T_k'$ that have been matched are removed from these sets. After the update, if $|S_i'| < |S_i|/2$ for some $1 \leq i \leq k'$, then we remove the pair $(S_i', T_i')$, as we have already found a large enough matching between them.

By \cref{lem-cut-match-det-multi-oneit}, the total number of vertices in the current remaining $S_1' \cup S_2' \cup \cdots \cup S_k'$ is guaranteed to be reduced by a factor of $1 - \Omega(\Delta^{-2} \psi^{2} \log^{-1} n)$ in each iteration, if the algorithm does not return $\Cout$. Since we cannot have $|S_1' \cup S_2' \cup \cdots \cup S_k'| < (\beta/2)|V|$, the number of iterations can be bounded by $c = O(\log \beta^{-1}) \cdot O(\Delta^{2} \psi^{-2} \log n)$, as required.

For the round complexity, each invocation of the algorithm of \cref{lem-cut-match-det-multi-oneit} costs \[O\left({k'}D + {k'}^2 \Delta^4 \psi^{-4} {\beta'}^{-1} \log^2\Delta \log n\right)
= O\left(kD + k \Delta^4 \psi^{-4} \beta^{-1} \log^2\Delta \log n\right)
\] because $k' \leq k$ and $\beta' = \Omega(\beta k')$. Therefore, the overall round complexity is 
\[
O\left(D k \Delta^{2} \psi^{-2} \log n \log \beta^{-1} + k \Delta^6 \psi^{-6} \beta^{-1} \log^2\Delta \log^2 n \log \beta^{-1}\right).
\qedhere
\]
\end{proof}

\begin{lemma}[Simultaneous deterministic cut or match, one iteration] \label{lem-cut-match-det-multi-oneit} Consider a  graph $G=(V,E)$ with maximum degree $\Delta$ and a parameter $0 < \psi < \Delta$. We are given the following as input.
\begin{description}
    \item[Sources and sinks:] $S_1, T_1, S_2, T_2, \ldots, S_k, T_k$ are $2k$ disjoint subsets of $V$ such that $|S_i| \leq |T_i|$ for each $1 \leq i \leq k$. Define  $\beta = \sum_{1 \leq i \leq k} |S_i| / |V|$.
    \item[Cut:] $\Cin \subseteq V$ is a cut with $0 \leq |C| < |V|/3$ and $\Phiv(C) \leq \psi/2$.
\end{description}
Then there is a deterministic algorithm with round complexity
\[
O\left(kD + k^2 \Delta^4 \psi^{-4} \beta^{-1} \log^2\Delta \log n\right)
\]
that outputs either one of the following.
\begin{description}
    \item[Match:] $\PP= \PP_1 \cup \PP_2 \cup \cdots \cup \PP_k$ is a set of vertex-disjoint paths of length at most $d = O(\Delta \psi^{-1} \log n)$, where the paths in $\PP_i$ are $S_i$-$T_i$ paths. It is required that $|\PP| = \Omega(\Delta^{-2} \psi^{2} \log^{-1} n) \cdot \sum_{1 \leq i \leq k} |S_i|$.
    \item[Cut:] For the cut $\Cout$, there are two options. 
    \begin{itemize}
        \item The first option is to have $|V|/3 \leq |\Cout| \leq |V|/2$ and $\Phiv(\Cout) \leq \psi$.
        \item The second option is to have  $0 \leq |\Cout| \leq |V|/2$,  $\Phiv(\Cout)\leq \psi/2$, and $\Cin \subseteq \Cout$. Furthermore, for each $1 \leq i \leq k$,   $\Cout$ must contain at least $|S_i|$ vertices in $S_i \cup T_i$. 
    \end{itemize}
\end{description}
\end{lemma}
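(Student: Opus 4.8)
The plan is to run the single‑commodity construction behind \cref{lem-cut-match-det} once for each of the $k$ pairs, in sequence, while accumulating a cut; the dichotomy ``enough paths were found'' versus ``a balanced sparse cut was found'' then plays the role of Cases~1 and~2 in the proof of \cref{lem-cut-match-det}. I would fix a dilation bound $d = \Theta(\Delta\psi^{-1}\log n)$, large enough that the cut sparsity $O(\Delta d^{-1}\log|V|)$ produced by \cref{lem-cutfromflow-basic} is at most $\psi/8$; a leftover parameter $\beta' = \Theta(\beta/k)$ small enough that $k\beta'|V|$ is a negligible fraction of $\sum_i|S_i| = \beta|V|$; and a progress threshold $\theta = \Theta(\Delta^{-2}\psi^{2}\log^{-1}n)$ chosen so that $d\theta \le 1/100$, which is possible precisely because $\psi<\Delta$.

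First I would process the pairs $i=1,\dots,k$ in order, maintaining an accumulating cut $C$ initialized to $C_{\mathsf{in}}$, a set $P$ of vertices lying on already-selected matching paths, and a leftover set $B$, with $P,B$ initially empty. When pair $i$ is reached, let $R=C\cup P\cup B$ and invoke the deterministic nearly maximal flow routine of \cref{lem-GPV-basic} inside $G[V\setminus R]$ between the surviving sources $S_i\setminus R$ and surviving sinks $T_i\setminus R$, with length bound $d$ and leftover $\beta'$, obtaining $\PP_i$ and $B_i$. I add $\PP_i$ to the running matching $\PP$ (its paths have length at most $d$ and, because $R$ was deleted, are vertex-disjoint from everything chosen so far), add $V(\PP_i)$ to $P$, and add $B_i$ to $B$. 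By the guarantee of \cref{lem-GPV-basic}, after this step $G[V\setminus(R\cup V(\PP_i)\cup B_i)]$ contains no $S_i$–$T_i$ path of length at most $d$; since $R$ only grows, this separation survives to the final residual graph $G[V\setminus W]$ with $W:=P\cup B$.

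Then I would branch. If the running matching ever reaches $|\PP|\ge\theta\sum_i|S_i|$, I halt and output $\PP$ as the Match outcome. Otherwise, after all $k$ pairs, $|W|\le d|\PP|+k\beta'|V|\le \tfrac1{50}\sum_i|S_i| = o(|V|)$, and I would run the multi-commodity cut‑from‑flow routine of \cref{lem-cutfromflow-multiple} on $G[V\setminus W]$ (underlying vertex set $V$, treating $W$ as independent) for the pairs $(S_i\setminus W,\,T_i\setminus W)$ with distance bound $d$, but with its accumulator pre‑loaded with $C_{\mathsf{in}}$ instead of $\emptyset$: the analysis of \cref{lem-cutfromflow-multiple} carries over because $\Phiv(C_{\mathsf{in}})\le\psi/2$ and $d$ is large enough that every piece appended afterwards has sparsity at most $\psi/2$, so the weighted‑average estimate still gives $\Phiv\le\psi/2$ and the accumulator contains $C_{\mathsf{in}}$. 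I would fold $W$ in as one further small appended cut and set $C_{\mathsf{out}}=C\cup W$. If the accumulator ever exceeds $|V|/3$ during this process, I instead output the first‑option cut $C_{i^\star}$ or its complement, whichever has size at most $|V|/2$, whose sparsity is still at most $\psi$ by the same estimate (exactly as in \cref{lem-cutfromflow-multiple}); otherwise $|C_{\mathsf{out}}|<|V|/3+|W|\le|V|/2$, $C_{\mathsf{in}}\subseteq C_{\mathsf{out}}$, $\Phiv(C_{\mathsf{out}})\le\psi/2$, and it remains only to check that $C_{\mathsf{out}}$ contains at least $|S_i|$ vertices of $S_i\cup T_i$ for each $i$ — every source $v\in S_i$ was either swallowed by $C_{\mathsf{in}}$ or by an appended cut (hence in $C$), or consumed into a matching path or leftover set (hence in $W$), or still alive when pair $i$ was processed, in which case \cref{lem-cutfromflow-basic}, invoked inside \cref{lem-cutfromflow-multiple}, returns the cut side containing those survivors. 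Finally, the round complexity is $k$ calls to \cref{lem-GPV-basic}, each $O(d^3\beta'^{-1}\log^2\Delta\log n)$, plus the $O(k(D+d))$ cost of \cref{lem-cutfromflow-multiple} and $O(kD)$ for the size bookkeeping via \cref{lem-basic}, which with the chosen $d,\beta'$ is $O(kD+k^2\Delta^4\psi^{-4}\beta^{-1}\log^2\Delta\log n)$.

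The step I expect to be the main obstacle is the final accounting: showing that $C_{\mathsf{out}}$ provably captures at least $|S_i|$ vertices of $S_i\cup T_i$ for \emph{every} pair while still satisfying $|C_{\mathsf{out}}|\le|V|/2$ and $\Phiv(C_{\mathsf{out}})\le\psi/2$. The friction is that not forbidding other pairs' terminals as interior path vertices keeps the sparsity accounting for the extracted cuts clean (few deleted vertices in the cut branch) but lets those terminals be consumed; the cleanest fix is to charge all consumed terminals to the $o(|V|)$ set $W\subseteq C_{\mathsf{out}}$, and to ensure \cref{lem-cutfromflow-basic} is always applied so as to return the side containing the alive sources — which is automatic whenever the $d/2$‑ball around those sources in the residual graph has size at most $|V|/2$, a condition one arranges by possibly shrinking $d$ or by swapping the roles of $S_i$ and $T_i$ and arguing that case separately. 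Pinning down the constants in $d,\beta',\theta$ so that ``$|\PP|$ small'' genuinely forces $|W|=o(|V|)$, and so that the stated round complexity comes out, is the bulk of the remaining work.
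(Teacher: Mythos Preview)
Your overall strategy matches the paper's proof almost exactly: sequentially run \cref{lem-GPV-basic} on the $k$ pairs (removing the vertices of paths already found), branch on whether $|\PP|$ exceeds the threshold $\Theta(\Delta^{-2}\psi^{2}\log^{-1}n)\sum_i|S_i|$, and in the small-$|\PP|$ case invoke \cref{lem-cutfromflow-multiple} on the residual graph and then combine the resulting cut with $\Cin$. The paper does the combination via \cref{lem-cut-combine} rather than by pre-loading the accumulator inside \cref{lem-cutfromflow-multiple}, but that is a cosmetic difference.

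There is, however, a genuine gap in your choice of the leftover parameter. You set $\beta' = \Theta(\beta/k)$ so that $|W|\le d|\PP|+k\beta'|V|$ is a small constant fraction of $\sum_i|S_i|=\beta|V|$. But controlling $|W|$ is not enough: when you ``fold $W$ in'' to the cut, the sparsity picks up an additive term of order $\Delta|W|/|\Cout|$, because every edge incident to $W$ may cross the new cut. With $|W|=\Theta(\beta|V|)$ and $|\Cout|\ge\beta|V|$ this additive term is $\Theta(\Delta)$, not $O(\psi)$, and since the lemma assumes $\psi<\Delta$ the bound $\Phiv(\Cout)\le\psi/2$ fails. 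The fix (which is exactly what the paper does) is to take $\beta'=\Theta(\beta\psi\Delta^{-1}k^{-1})$, so that $\Delta\cdot k\beta'|V|=O(\psi\beta|V|)=O(\psi\,|\Cout|)$; combined with $\Delta\cdot d|\PP|=O(\psi\beta|V|)$ from your threshold $\theta$, this gives $\Delta|W|=O(\psi|\Cout|)$ and the sparsity bound goes through. The tighter $\beta'$ is also what makes the $\Delta^4\psi^{-4}$ in the round complexity appear (your computation with $\beta'=\Theta(\beta/k)$ only produces $\Delta^3\psi^{-3}$, which does not match).

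A smaller point: your verification that $|\Cout\cap(S_i\cup T_i)|\ge|S_i|$ is more delicate than necessary. You do not need $S_i\subseteq\Cout$; the guarantee of \cref{lem-cutfromflow-multiple} is $|C\cap(S_i\cup T_i)|\ge\min\{|S_i|,|T_i|\}$ (one whole side of each separated pair lands in the accumulated cut, whichever is smaller), and combining this with what $\Cin$ already swallowed from $S_i\cup T_i$ gives the bound directly, without any case analysis on which side the ball-growing returns.
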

\begin{proof}
For $i = 1, 2, \ldots, k$, do the following.
\begin{enumerate}
    \item  Let $G_i$ be the subgraph $G$ that excludes all vertices involved in $\PP_1, \PP_2, \ldots, \PP_{i-1}$ and their incident edges. 
    \item Apply the algorithm of \cref{lem-GPV-basic} to $S = S_i$ and $T = T_i$ on the graph $G_i$ with parameter $d$ and $\beta' = (1/8) \beta \psi \Delta^{-1} k^{-1}$.
    Let $\PP_i$ and $B_i$ be the output result.
\end{enumerate}

Set $\PP= \PP_1 \cup \PP_2 \cup \cdots \cup \PP_k$. If $|\PP| >
(1/8) \beta \psi \Delta^{-1} (d+1)^{-1} |V|=
\Omega(\Delta^{-2} \psi^{2} \log^{-1} n) \cdot \sum_{1 \leq i \leq k} |S_i|$, then the algorithm returns $\PP$ as the output.

From now on, we assume $|\PP| \leq (1/8) \beta \psi \Delta^{-1} (d+1)^{-1} |V|$.
Let $W$ be the set of all edges involved in $\PP$. Note that $|W| \leq |\PP| (d+1) \leq (1/8) \beta \psi \Delta^{-1} |V|$.
Set $B = W \cup B_1 \cup B_2 \cup \cdots \cup B_k$. By \cref{lem-GPV-basic}, we have $|B_i| \leq \beta' |V| = (1/8) \beta \psi \Delta^{-1} k^{-1} |V|$, for each $1 \leq i \leq k$. Therefore, the number of edges incident to $B$ is at most 
\[\Delta|B| \leq \Delta|W| + \Delta k \beta' |V| \leq (1/8) \beta \psi   |V| + (1/8) \beta \psi   |V|= (1/4) \beta \psi   |V|.\]

Apply \cref{lem-cutfromflow-multiple} to $(S_1, T_1), (S_2, T_2), \ldots, (S_k, T_k)$ and the parameter $d$ on the graph $G[V \setminus (B \cup \Cin)]$, assuming that the underlying vertex set is $V \setminus \Cin$ by treating $B \setminus \Cin$ as an independent set.
Let $C$ be the output.
By selecting $d = O(\Delta \psi^{-1} \log n)$ to be large enough, we can make $C$ to  have \[\Phiv_{G[V \setminus (B \cup \Cin)]}(C) \leq \psi/4 = O(\Delta d^{-1} \log n).\]

We argue that $\Phiv_{G[V \setminus \Cin]}(C) \leq \psi /2$. Each edge $e \in \partial_{G[V \setminus \Cin]}(C)$ is either incident to $B$ or belongs to $\partial_{G[V \setminus (B \cup \Cin)]}(C)$. The number of such edges in the first case is at most $\Delta |B| =  (1/4) \beta \psi   |V| =   (\psi/4) \sum_{1 \leq i \leq k} |S_i|
\leq (\psi/4)  |C|$. The number of such edges in the second case is also at most $\Phiv_{G[V \setminus (B \cup \Cin)]}(C) \cdot |C| \leq (\psi/4) |C|$.
Therefore, indeed  $\Phiv_{G[V \setminus \Cin]}(C) \leq \psi /2$.

Since $|C| \leq |V \setminus \Cin|/2$ (by \cref{lem-cutfromflow-multiple})  and  $\Phiv_{G[V \setminus \Cin]}(C) \leq \psi /2$, we can apply \cref{lem-cut-combine} with $C_1 = \Cin$ and $C_2 = C$  to deduce that setting $\Cout$ to be either $C_1 \cup C_2$ (if $|C_1 \cup C_2| \leq |V|/2$) or $V \setminus C_1 \cup C_2$ (if $|C_1 \cup C_2| > |V|/2$) satisfies all the requirements stated in the lemma.
 
 For the round complexity, the cost of \cref{lem-cutfromflow-multiple} is $O(k(D+d))$, and the cost of $k$ invocations of \cref{lem-GPV-basic} is $k \cdot O(d^3 {\beta'}^{-1} \log^2 \Delta \log n)$. As $d = O(\Delta \psi^{-1} \log n)$ and $\beta' = \Omega(\beta \psi \Delta^{-1}k^{-1})$, the overall round complexity is 
 \[
 O(kD + k^2 \Delta^4 \psi^{-4} \beta^{-1} \log^2\Delta \log n). \qedhere
 \]
\end{proof}

\section{Tools for Sparse Cut Computation}\label{sect-sparse-cut-tools}

In this section, we provide tools  for sparse cut computation.

\subsection{Diameter Reduction}
\label{sect-diam-reduce}

The goal of this section is to prove \cref{lem-diam-reduction-main}.


Let $G=(V,E)$ be the input graph of maximum degree $\Delta$, with a Steiner tree $T$ of diameter $D$.
The task given to us is to solve $\detbalspcut{\phicut}{\phiemb}{\betacut}{\betaleft}$.  We show how to solve it using one invocation of the $\left(D + \psi^{-2} \Delta^{2} \log^6 n\right)$-round deterministic algorithm $\mathcal{A}_1$ (preprocessing step) of \cref{lem-diam-reduction-aux},  one invocation of an algorithm solving $\detbalspcut{\phicut/2}{\phiemb}{\betacut}{\betaleft}$.  on a subgraph with $D' = O(\psi^{-1}  \Delta \log^3 n)$, and then finally a $O(D)$-round algorithm $\mathcal{A}_2$  (postprocessing step) that produces the final output by combining the solutions in previous steps. 

\paragraph{Preprocessing step.}
To solve $\detbalspcut{\phicut}{\phiemb}{\betacut}{\betaleft}$, we first apply the algorithm $\mathcal{A}_1$ of \cref{lem-diam-reduction-aux} with  $\psi= \phicut$. 
Denote $C_1$ as its output.
There are two possibilities.
If   $C_1$ satisfies $|V|/3 \leq |C_1| \leq |V|/2$ and $\Phiv(C_1) \leq \phicut$, then we are done solving $\detbalspcut{\phicut/2}{\phiemb}{\betacut}{\betaleft}$ already, as $\betacut \leq 1/3$ always.
Otherwise, the output of the algorithm of \cref{lem-diam-reduction-aux} consists of a cut $C_1$ with  $\Phiv(C_1) \leq \phicut/2$ and a subtree $T'$ of $G$ that spans all vertices in $V \setminus C_1$, and the tree has diameter $D' = O(\psi^{-1}  \Delta \log^3 n)$.

\paragraph{Main part.}
Now, we apply an algorithm that solves $\detbalspcut{\phicut/2}{\phiemb}{\betacut}{\betaleft}$ on the subgraph $G[V \setminus C_1]$ using the Steiner tree $T'$ with diameter $D' = O(\psi^{-1}  \Delta \log^3 n)$. 
 Let $W_2 \subseteq V \setminus C_1$ and $C_2 \subseteq V \setminus C_1$ be its output result.

\paragraph{Postprocessing step.} 
Now we describe the algorithm $\mathcal{A}_2$ that produces the final output.
It is straightforward to see that the algorithm can be implemented in $O(D)$ rounds deterministically using the straightforward information gathering algorithm of \cref{lem-basic}.

According to the specification of $\detbalspcut{\phicut/2}{\phiemb}{\betacut}{\betaleft}$, the subgraph $G[W_2]$ has $\Phiv(G[W_2]) \geq \phiemb$,
which meets the requirements for the task $\detbalspcut{\phicut}{\phiemb}{\betacut}{\betaleft}$ on $G$, and the cut $C_2$ satisfies
$0 \leq  |C_2| \leq \frac{|V \setminus C_1|}{2}$ and $\Phiv_{G[V \setminus C_1]}(C_2) \leq \frac{\phicut}{2}$, Moreover, at least one of the following is met.
    \begin{itemize}
        \item $|C_2| \geq \betacut|V \setminus C_1|$ and $W_2 = \emptyset$.
        \item $|(V \setminus C_1) \setminus (C_2 \cup W_2)| \leq \betaleft|V \setminus C_1|$.
    \end{itemize}
Remember that we have $|C_1|  < |V|/3$, since otherwise we are done already, and so we can apply \cref{lem-cut-combine} with the above $C_1$ and $C_2$  to deduce the following. In all cases we find a valid solution $(C^\ast, W^\ast)$ for solving  $\detbalspcut{\phicut}{\phiemb}{\betacut}{\betaleft}$ on $G$.
 
 \paragraph{Case 1.} If $|C_1 \cup C_2| \leq |V|/2$, then $C^\ast = C_1 \cup C_2$ satisfies $\Phiv(C^\ast) \leq \phicut/2 < \phicut$ and $|C^\ast| \leq |V|/2$. 
We show that can output the cut $C^\ast$ with $W^\ast = W_2$.
Specifically, we need to show that either $|V \setminus (C^\ast \cup W^\ast)| \leq \betaleft|V|$ or 
$|C^\ast| \geq \betacut|V|$.

If we already have $|C_2| \geq \betacut|V \setminus C_1|$, then we have $|C^\ast| = |C_1| + |C_2| \geq \betacut|V \setminus C_1| + |C_1| \geq \betacut|V|$, as required.
Otherwise, we must have $|(V \setminus C_1) \setminus (C_2 \cup W_2)|
\leq \betaleft|V \setminus C_1|$, and so 
\begin{align*}
 |V \setminus (C^\ast \cup W^\ast)|
&=
|(V \setminus C_1) \setminus (C_2 \cup W_2)| \\
&\leq \betaleft|V \setminus C_1|\\
&\leq \betaleft|V|.   
\end{align*}

 \paragraph{Case 2.}  If $|C_1 \cup C_2| > |V|/2$, then $C^\ast = V \setminus (C_1 \cup C_2)$ satisfies $\Phiv(C^\ast) \leq \phicut$ and $|V|/3 \leq |C^\ast| \leq |V|/2$.  Therefore, we can output the cut $C^\ast$ with $W^\ast = \emptyset$, as $\betacut|V| \leq |V|/3 \leq |C^\ast|$.

\paragraph{Auxiliary lemmas.} \cref{lem-diam-reduction-aux} and \cref{lem-cut-combine} are auxiliary lemmas needed in the above proof.

\begin{lemma}[Sparse cut with small diameter]\label{lem-diam-reduction-aux}
Consider a graph $G=(V,E)$ with maximum degree $\Delta$ associated with a Steiner tree $T$ with diameter $D$.
Given a parameter $\psi$, there is a deterministic  algorithm with round complexity \begin{align*}
     O(D + \psi^{-2} \Delta^{2} \log^6 n) & \ \ \ \text{in the deterministic model} \\
     \text{or} \ \ O(D + \psi^{-1} \Delta \log n) & \ \ \ \text{in the randomized model,}
\end{align*}
 and it finds a cut $C \subseteq V$ with $|C| \leq |V|/2$ such that either one of the following holds.
\begin{itemize}
    \item $|V|/3 \leq |C|$ and $\Phiv(C) \leq \psi$.
    \item $\Phiv(C) \leq \psi/2$, and the algorithm also computes a subtree $T'$ that spans all vertices in $V \setminus C$, and the tree has diameter 
\begin{align*}
     D' = O(\psi^{-1} \Delta \log^3 n) & \ \ \ \text{in the deterministic model} \\
     \text{or} \ \ D' = O(\psi^{-1} \Delta \log n) & \ \ \ \text{in the randomized model.}
\end{align*}    
\end{itemize}
\end{lemma}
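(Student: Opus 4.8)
The plan is to reduce the Steiner-tree diameter in one shot by running a low-diameter decomposition with a carefully tuned parameter $\beta=c_0\psi/\Delta$ (for a small constant $c_0$), and then either return a balanced union of clusters (if no single cluster dominates) or grow BFS balls around the dominating cluster and graft that cluster's already-low-diameter Steiner tree onto the ball. Concretely, run the deterministic decomposition of \cref{thm-low-diam-decomp} (or, in the randomized model, the high-probability variant \cref{thm-low-diam-decomp-rand}) with parameter $\beta$. This costs $O(\beta^{-2}\log^6 n)=O(\psi^{-2}\Delta^2\log^6 n)$ rounds (resp.\ $O(D+\beta^{-1}\log n)=O(D+\psi^{-1}\Delta\log n)$), produces clusters $V_1,\dots,V_x$ with at most $\beta|E|\le (c_0\psi/2)|V|$ inter-cluster edges, and equips each $V_i$ with a Steiner tree $T_i$ of diameter $O(\beta^{-1}\log^3 n)=O(\psi^{-1}\Delta\log^3 n)$ (resp.\ makes each $G[V_i]$ a subgraph of diameter $O(\psi^{-1}\Delta\log n)$). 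Let $V_{i^\ast}$ be a largest cluster; $|V_{i^\ast}|$ and all the cluster sizes can be learned via \cref{lem-basic} in $O(D)$ rounds.

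\textbf{Case $|V_{i^\ast}|\le |V|/2$.} Gather the sizes $|V_1|,\dots,|V_x|$ at one vertex along $T$; since the maximum is at most $|V|/2$, \cref{lem-bal-partition} (applied with the clusters as super-nodes, weighted by size) selects a sub-collection whose union $C$ satisfies $|V|/3\le |C|\le |V|/2$, and this $C$ is then broadcast. As $\partial_G(C)$ consists only of inter-cluster edges, $|\partial(C)|\le\beta|E|\le (c_0\psi/2)|V|$, so $\Phiv(C)=|\partial(C)|/|C|\le (3c_0/2)\psi\le\psi$ once $c_0\le 2/3$. We output $C$; the first alternative is met and no Steiner tree is required. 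This adds only $O(D)$ rounds.

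\textbf{Case $|V_{i^\ast}|>|V|/2$.} Here one cannot simply output $V\setminus V_{i^\ast}$, because $|V\setminus V_{i^\ast}|$ may be tiny and make $\Phiv$ large; instead we use the RST/Rozho\v{n}--Ghaffari style ball-growing (as in the proof of \cref{thm-low-diam-decomp-mod}). For $j\ge 0$ set $S_j=\{u : \dist(u,V_{i^\ast})\le j\}$, and cap $j$ at $J=\Theta((\Delta/\psi)\log n)$; using \cref{lem-basic} compute $|S_j|$ and $|E(S_j,V\setminus S_j)|$ for all $j\le J$ in $O(J+D)$ rounds. If some $j\le J$ has $V\setminus S_j\ne\emptyset$ and $|E(S_j,V\setminus S_j)|\le(\psi/2)|V\setminus S_j|$, pick the smallest such $j$ and set $C=V\setminus S_j$; then $|C|\le|V\setminus V_{i^\ast}|<|V|/2$, hence $\Phiv(C)=|E(S_j,V\setminus S_j)|/|V\setminus S_j|\le\psi/2$. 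Otherwise the geometric-decay estimate $|S_{j+1}\setminus S_j|\ge|E(S_j,V\setminus S_j)|/\Delta>(\psi/(2\Delta))|V\setminus S_j|$, i.e.\ $|V\setminus S_{j+1}|<(1-\psi/(2\Delta))|V\setminus S_j|$, forces $S_J=V$ (since $|V\setminus S_0|<|V|\le n$); in that case output $C=\emptyset$, for which $\Phiv(C)=0\le\psi/2$. In both subcases $V\setminus C=S_j$ (resp.\ $S_J=V$) is connected and contains $V_{i^\ast}$; we build the promised tree $T'$ by starting from $T_{i^\ast}$ and attaching, for each $v\in (V\setminus C)\setminus V_{i^\ast}$, a shortest BFS path from $v$ into $V_{i^\ast}$ through the layers $S_0\subseteq S_1\subseteq\cdots$ (adding a leaf copy for any vertex of $V\setminus C$ that would otherwise become internal, so that $T'$ is a legitimate Steiner tree for $V\setminus C$). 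Then $\operatorname{diam}(T')\le\operatorname{diam}(T_{i^\ast})+2J=O(\psi^{-1}\Delta\log^3 n)$ (resp.\ $O(\psi^{-1}\Delta\log n)$), and the added paths raise the per-edge congestion by at most $1$; this step costs $O(J+D)$ rounds.

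\textbf{Main obstacle.} The only genuinely delicate part is the dominating-cluster case: getting the parameter choices $\beta=\Theta(\psi/\Delta)$ and $J=\Theta((\Delta/\psi)\log n)$ to make every round-complexity and diameter bound land exactly where the lemma demands, and grafting $T_{i^\ast}$ onto the grown ball $S_j$ without violating the Steiner-tree definition or the $O(\log n)$ congestion bound. The balanced-cluster case and all the counting are routine bookkeeping with \cref{lem-basic,lem-bal-partition}.
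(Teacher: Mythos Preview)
Your proposal is correct and follows essentially the same approach as the paper: low-diameter decomposition with $\beta=\Theta(\psi/\Delta)$, a balanced union of clusters via \cref{lem-bal-partition} when no cluster dominates, and ball-growing around the dominating cluster otherwise, grafting the resulting ball onto that cluster's Steiner tree. The only cosmetic difference is that the paper packages the ball-growing step as an invocation of \cref{lem-cutfromflow-basic} (growing from the set of far vertices), whereas you grow directly from $V_{i^\ast}$ and check the complement-side volume drop; both are the same standard argument. One small imprecision: you do not need to ``gather the sizes $|V_1|,\dots,|V_x|$ at one vertex'' (which could cost $\Omega(x)$ rounds); \cref{lem-bal-partition} already does the balanced selection distributively in $O(D)$ rounds once each cluster's representative holds its size.
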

\begin{proof}
The algorithm is as follows. Let $K > 0$ be some constant. 
\begin{enumerate}
    \item Apply the low-diameter decomposition algorithm of \cref{thm-low-diam-decomp-rand} or \cref{thm-low-diam-decomp} with $\beta = \psi\Delta^{-1} / 3$.
    \item Each part $V_i$ of the decomposition $\VV = \{V_1, V_2, \ldots, V_x\}$ measures  its size $|V_i|$. 
    \item If $\max_{1\leq i \leq x} |V_i| \leq |V|/2$, then apply \cref{lem-bal-partition} to find a subset $\SSS \subseteq \VV$ such that $|V|/3 \leq \sum_{V_i \in \SSS} |V_i| \leq |V|/2$, and return $C = \bigcup_{V_i \in \SSS} V_i$   as the output.
    \item From now on, assume $\max_{1\leq i \leq x} |V_i| > |V|/2$. Pick any $V_j$ with $|V_j| > |V|/2$. Define $$S = \{ v \in V \ | \ \dist(v, V_j) > K \psi^{-1} \Delta \log n\}.$$ 
     If $S = \emptyset$, then all vertices are within distance $O(\psi^{-1} \Delta\log n)$ to $V_j$, and then the diameter of the graph $G$ is at most $O(\psi^{-1}\Delta \log n)$ plus the diameter $D_j$ of the Steiner tree $T_j$ associated with $V_j$. Return $C = \emptyset$ as the output. The Steiner tree $T'$ of $V \setminus C = V$ is chosen as an arbitrary BFS tree of $G$.
    \item Otherwise, $S \neq \emptyset$.
    Apply the algorithm of \cref{lem-cutfromflow-basic} with this set $S$ and $T = V_j$. As $\dist(S,T) \geq d = K \psi^{-1} \Delta \log n$, this algorithm returns a cut $C \subseteq V$ with $|C| \leq |V|/2$ and $\Phiv(C) = O(\Delta d^{-1} \log n) \leq \psi/2$ by selecting $K$ to be sufficiently large. Return this cut $C$ as the output. The choice of the Steiner tree $T'$ of $V \setminus C$ is deferred to subsequent discussion.
\end{enumerate}

We explain each step in detail. 

\paragraph{Step~1.}
In the deterministic setting, the low-diameter decomposition of \cref{thm-low-diam-decomp} costs $O(\psi^{-2} \Delta^{2} \log^6 n)$ rounds. Each cluster $V_i$ is associated with a Steiner tree $T_i$ of diameter $D_i = O(\psi^{-1} \Delta \log^3 n)$. Each edge $e \in E$ belongs to at most $c = O(\log n)$  Steiner trees. 

In the randomized setting, the 
low diameter decomposition of \cref{thm-low-diam-decomp-rand} costs $O(D + \psi^{-1} \Delta \log n)$ rounds. Each cluster $V_i$ has diameter $O(\psi^{-1}  \Delta \log n)$, and so we can pick $T_i$ to be any BFS tree of $V_i$, and $T_i$ also have diameter $D_i = O(\psi^{-1}  \Delta \log n)$.

\paragraph{Step~2.}
The computation of $|V_i|$ can be achieved using the Steiner tree $T_i$ associated with $V_i$ using \cref{lem-basic} in $O(D_i)$ rounds for each cluster.
In the randomized setting, this can be done in $O(\psi^{-1} \Delta \log n)$ rounds.
In the deterministic setting, the embedding of the Steiner trees has congestion $c = O(\log n)$
by \cref{thm-low-diam-decomp}. Therefore, the  computation of  $|V_i|$ costs $O(\log n) \cdot O(\psi^{-1} \Delta \log^3 n) = O(\psi^{-1} \Delta \log^4 n)$ rounds. We assume that there is a representative vertex $v_i$ of each part $V_i$ that stores $|V_i|$.

\paragraph{Step~3.}
Suppose we are in the case $\max_{1\leq i \leq x} |V_i| \leq |V|/2$.
The application of \cref{lem-bal-partition} takes $O(D)$ rounds. For the correctness of the output $C = \bigcup_{V_i \in \SSS} V_i$, recall that the number of inter-cluster edges in a low-diameter decomposition is at most $\beta|E|$, and so
\begin{align*}
|\partial(C)| &\leq \beta|E|  & \text{All edges in $\partial(C)$ are inter-cluster.} \\
&\leq \beta \Delta |V| \\
&\leq \beta \Delta  \cdot 3|C| & |V|/3 \leq \sum_{V_i \in \SSS} |V_i| = |C|\\
&= \psi|C|. & \beta = \psi\Delta^{-1} / 3\\
\end{align*}
To summarize, the cut $C$ satisfies both  $|V|/3 \leq |C| \leq |V|/2$ and $\Phiv(C) \leq \psi$, as required.

\paragraph{Step~4.}
Consider the case $\max_{1\leq i \leq x} |V_i| > |V|/2$. Let $V_j$ be any part with $|V_j| > |V|/2$. Suppose $S = \{ v \in V \ | \ \dist(v, V_j) > K \psi^{-1} \Delta \log n\} = \emptyset$, then all vertices are within distance $K \psi^{-1} \Delta \log n$ to $V_j$. Then the diameter of the graph $G$ is at most $K \psi^{-1} \Delta \log n$ plus the diameter $D_j$ of the Steiner tree associated with $V_j$.
Therefore, we can simply return $C = \emptyset$ and pick any BFS tree of $G$ as $T'$, as it spans all of $V$, and it has diameter $O(D_j) + O(\psi^{-1} \Delta \log n)$, which is $O(\psi^{-1} \Delta \log n)$ in the randomized setting, or $O(\psi^{-1} \Delta \log^3 n)$ in the deterministic setting, as required.

For the round complexity, constructing $S$ costs $O(\psi^{-1} \Delta \log n)$ rounds, estimating the size of $S$ costs $O(D)$ rounds (\cref{lem-basic}), finding a BFS tree of $G$ costs $O(\psi^{-1} \Delta \log^3 n)$ rounds in the deterministic setting, or $O(\psi^{-1} \Delta \log n)$ rounds in the randomized setting. Overall, the round complexity of this step is $O(D + \psi^{-1} \Delta \log^3 n)$ in the randomized setting, and $O(D + \psi^{-1} \Delta \log n)$ in the randomized setting.

\paragraph{Step~5.} The algorithm  of \cref{lem-cutfromflow-basic}  costs $O(D +d) = O(D + \psi^{-1} \Delta \log n)$ rounds. We describe how to pick the Steiner tree $T'$ that spans $V \setminus C$ with small diameter. 
Observe that $|S| < |V|/2 < |T|$, and \cref{lem-cutfromflow-basic} guarantees that
  $C$  separates $S$ and $T$. Therefore, we must have $S \subseteq C$ and $T \subseteq V \setminus C$. In particular, all vertices in $V \setminus C$ are within distance $K \psi^{-1} \Delta \log n$ to $T = V_j$. Therefore, we can simply extend the Steiner tree $T_j$ from $V_j =T$ to all of $V \setminus C$ by via shortest paths from $(V \setminus C) \setminus V_j$ to $V_j$. The resulting Steiner tree $T'$ spans $V \setminus C$, and it has diameter $D_j + O(\psi^{-1} \Delta \log n)$, which is $O(\psi^{-1} \Delta^{-1} \log n)$ in the randomized setting, or $O(\psi^{-1} \Delta^{-1} \log^3 n)$ in the deterministic setting, as required.

\paragraph{Round complexity.} In the deterministic model, the round complexities of these five steps are: $O(\psi^{-2} \Delta^{2} \log^6 n)$, $O(\psi^{-1} \Delta \log^4 n)$, $O(D)$, $O(D + \psi^{-1} \Delta \log^3 n)$ and $O(D + \psi^{-1} \Delta \log n)$. Hence the total complexity is \[O(D + \psi^{-2} \Delta^{2} \log^6 n).\]
In the randomized model, the round complexities of these five steps are: $O(D + \psi^{-1} \Delta \log n)$, $O(\psi^{-1} \Delta \log n)$, $O(D)$, $O(D + \psi^{-1} \Delta \log n)$ and $O(D + \psi^{-1} \Delta \log n)$. Hence the total complexity is \[O(D + \psi^{-1} \Delta \log n). \qedhere\]
\end{proof}

\begin{lemma}[Combine two sparse cuts]\label{lem-cut-combine}
Consider a graph $G = (V,E)$. Let $C_1 \subseteq V$ and $C_2 \subseteq V \setminus C_1$. The cut $C_1$ satisfies $\Phiv(C_1) \leq \psi/2$ and $|C_1| < |V|/3$. The cut $C_2$ satisfies  $\Phiv_{G[V \setminus C_1]}(C_2) \leq \psi/2$ and $|C_2| \leq |V \setminus C_1|/2$. Then the following holds.
\begin{itemize}
    \item If $|C_1 \cup C_2| \leq |V|/2$, then $C = C_1 \cup C_2$ satisfies $\Phiv(C) \leq \psi/2$ and $|C| \leq |V|/2$. 
    \item If $|C_1 \cup C_2| > |V|/2$, then $C = V \setminus (C_1 \cup C_2)$ satisfies $\Phiv(C) \leq \psi$ and $|V|/3 \leq |C| \leq |V|/2$.
\end{itemize}
\end{lemma}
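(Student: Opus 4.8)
The plan is to set $C = C_1 \cup C_2$, which is a disjoint union since $C_2 \subseteq V \setminus C_1$, so $|C| = |C_1| + |C_2|$. The first step is to establish a single clean bound $|\partial_G(C)| \le (\psi/2)|C|$ that will drive both cases. To obtain it, I would decompose the edge boundary: any edge $e = \{u,v\}$ with $u \in C$ and $v \notin C$ either has $u \in C_1$, in which case $v \notin C_1$ and $e \in \partial_G(C_1)$; or $u \in C_2 \subseteq V \setminus C_1$, in which case $v \in (V \setminus C_1) \setminus C_2$, so $e$ is an edge of $G[V\setminus C_1]$ belonging to $\partial_{G[V\setminus C_1]}(C_2)$. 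Hence $|\partial_G(C)| \le |\partial_G(C_1)| + |\partial_{G[V\setminus C_1]}(C_2)|$. Because $|C_1| < |V|/3 < |V \setminus C_1|$, the definition of $\Phiv$ gives $|\partial_G(C_1)| = \Phiv(C_1)\,|C_1| \le (\psi/2)|C_1|$ (trivially true also if $C_1 = \emptyset$), and because $|C_2| \le |V \setminus C_1|/2$ it gives $|\partial_{G[V\setminus C_1]}(C_2)| \le (\psi/2)|C_2|$; summing yields $|\partial_G(C)| \le (\psi/2)|C|$.

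The second step is a size estimate: $|C| = |C_1| + |C_2| \le |C_1| + |V\setminus C_1|/2 = |V|/2 + |C_1|/2 < |V|/2 + |V|/6 = 2|V|/3$, using $|C_1| < |V|/3$. This is the only auxiliary fact needed for Case 2.

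Then I would split into the two cases. If $|C_1 \cup C_2| \le |V|/2$, then $\min\{|C|, |V\setminus C|\} = |C|$, so $\Phiv(C) = |\partial_G(C)|/|C| \le \psi/2$ and $|C| \le |V|/2$, giving the first bullet. If $|C_1 \cup C_2| > |V|/2$, put $C = V \setminus (C_1 \cup C_2)$ (the output cut in this case); from $|V|/2 < |C_1\cup C_2| < 2|V|/3$ we get $|V|/3 < |C| < |V|/2$, so $C$ is a nontrivial cut with the claimed balance. Since $\partial_G(C) = \partial_G(C_1\cup C_2)$ and $\min\{|C|, |C_1\cup C_2|\} = |C|$, we obtain $\Phiv(C) = |\partial_G(C_1\cup C_2)| / |C| \le (\psi/2)\cdot(|C_1\cup C_2|/|C|) < (\psi/2)\cdot 2 = \psi$, where the last step uses $|C_1\cup C_2| < 2|V|/3$ and $|C| > |V|/3$. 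This is the second bullet.

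I expect no real obstacle — the argument is essentially bookkeeping. The one point that will require care is Case 2, where passing to the complement loses a factor of $2$ in the sparsity; this is exactly the slack provided by the hypothesis $|C_1| < |V|/3$, which forces $|C_1 \cup C_2| < 2|V|/3$ and hence $|V \setminus (C_1\cup C_2)| > |V|/3$. I would also remember the convention $\Phiv(\emptyset) = 0$ for the degenerate inputs $C_1 = \emptyset$ or $C_2 = \emptyset$, and note that in Case 2 the output $V\setminus(C_1\cup C_2)$ is genuinely nonempty (since $|C_1\cup C_2| < |V|$) and a proper subset of $V$ (since $|C_1\cup C_2| > 0$).
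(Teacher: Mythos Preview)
Your proposal is correct and follows essentially the same approach as the paper: bound $|\partial(C_1\cup C_2)| \le |\partial(C_1)| + |\partial_{G[V\setminus C_1]}(C_2)| \le (\psi/2)|C_1\cup C_2|$, then in Case~2 use $|C_1|<|V|/3$ and $|C_2|\le |V\setminus C_1|/2$ to get $|V\setminus(C_1\cup C_2)|\ge |V|/3 \ge |C_1\cup C_2|/2$, which gives the factor-of-two loss in sparsity. Your treatment is slightly more detailed (the edge-by-edge boundary decomposition and the degenerate-input remarks), but the argument is the same.
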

\begin{proof}
 We bound $|\partial(C_1 \cup C_2)|$ as follows. $$|\partial(C_1 \cup C_2)| \leq |\partial(C_1)| + |\partial_{G[V \setminus C_1]}(C_2)|
\leq (\psi/2)|C_1| + (\psi/2)|C_2|
=(\psi/2)|C_2 \cup C_1|.$$

If $|C_1 \cup C_2| \leq |V|/2$, then we have $C = C_1 \cup C_2$, and the above calculation implies $\Phiv(C) \leq \psi/2$, as required.

Otherwise, $|C_1 \cup C_2| > |V|/2$, then we have $C = V \setminus (C_1 \cup C_2)$. 
Since $|C_1| < |V|/3$  and $|C_2| \leq |V \setminus C_1|/2$, we must have $|V \setminus (C_1 \cup C_2)| \geq |V|/3$, and so $|V \setminus (C_1 \cup C_2)| \geq |C_1 \cup C_2|/2$. 
 Therefore, we have 
 \begin{align*}
   |C| &\geq |V|/3 \ \ \ \text{and}\\
   \Phiv(C) &=  \frac{|\partial(C_1 \cup C_2)|}{|V \setminus (C_1 \cup C_2)|}
 \leq \frac{2|\partial(C_1 \cup C_2)|}{|C_1 \cup C_2|}
 \leq 2 \cdot (\psi/2) = \phi,
 \end{align*}
as required.
\end{proof}

\subsection{Balance Improvement}
\label{sect-Bal-improve}

The goal of this section is to prove \cref{lem-bal-improve-main}, i.e., the following inequality.
\begin{align*}
   &\Tcut\left(n, \Delta, D, \phicut, \phiemb, 1/3, \betaleft \right)\\
    &\leq O\left(D \betacut^{-1} \right) 
    + O\left(\betacut^{-1} \right) \cdot \Tcut\left(n, \Delta, D,\phicut/2, \phiemb,  \betacut, \betaleft\right)
\end{align*}


Let $G=(V,E)$ be the input graph of maximum degree $\Delta$, with a Steiner tree $T$ of diameter $D$.
The task given to us is to solve $\detbalspcut{\phicut}{\phiemb}{1/3}{\betaleft}$. We present an algorithm that solves this task. The algorithm  has  $O(\betacut^{-1})$ iterations, and each iteration costs $O(D)$ plus the round complexity for solving $\detbalspcut{\phicut/2}{\phiemb}{\betacut}{\betaleft}$ on a subgraph, and so we have the above inequality. 

The $O(D)$ part in the round complexity is due to the fact that we need to measure the size of the cuts returned by $\detbalspcut{\phicut/2}{\phiemb}{\betacut}{\betaleft}$ using \cref{lem-basic}.

\paragraph{Algorithm.}
The algorithm proceeds in iterations. 
In iteration $i$, we apply an algorithm for $\detbalspcut{\phicut/2}{\phiemb}{\betacut}{\betaleft}$ on the subgraph $G[V \setminus (C_1 \cup C_2 \cup \cdots \cup C_{i-1})]$. We write $C_i$ and $W_i$ to denote the result of $\detbalspcut{\phicut/2}{\phiemb}{\betacut}{\betaleft}$ in iteration $i$.

\paragraph{Induction hypothesis.}
We have this induction hypothesis at the beginning of iteration $i$.
\begin{itemize}
    \item $|C_1 \cup C_2 \cup \cdots \cup C_{i-1}| < |V|/3$. \item $\Phiv(C_1 \cup C_2 \cup \cdots \cup C_{i-1}) \leq \phicut/2$.
    \item $|C_j| \geq \betacut |V \setminus(C_1 \cup C_2 \cup \cdots \cup C_{j-1})|$ for each $1 \leq j \leq i-1$.
\end{itemize}

This induction hypothesis holds vacuously for the first iteration $i=1$. For $i >1$, the induction hypothesis is justified by the algorithm stated below.

Consider iteration $i$.
We write $C' = C_1 \cup C_2 \cup \cdots \cup C_{i-1}$. As $|C'| < |V|/3$, we can
Apply \cref{lem-cut-combine} with $C'$ and $C_i$  to deduce the following. 

 
 \paragraph{Case 1.} If $|V|/3 \leq |C' \cup C_i| \leq |V|/2$, then $C^\ast = C' \cup C_i$ satisfies $\Phiv(C^\ast) \leq \phicut/2 < \phicut$ and $|C^\ast| \leq |V|/3$. 
Therefore, we can output the cut $C^\ast$ with $W^\ast = \emptyset$.

 \paragraph{Case 2.}  If $|C' \cup C_i| > |V|/2$, then $C^\ast = V \setminus (C' \cup C_i)$ satisfies $\Phiv(C^\ast) \leq \phicut$ and $|V|/3 \leq |C^\ast| \leq |V|/2$.  Therefore, we can output the cut $C^\ast$ with $W^\ast = \emptyset$, as $|V|/3 \leq |C^\ast|$.

 \paragraph{Case 3.} If $|C' \cup C_i| < |V|/3$ and $|C_i| < \betacut|V \setminus C'|$, then the same in Case~1, $C^\ast = C' \cup C_i$ satisfies $\Phiv(C^\ast) \leq \phicut/2 < \phicut$ and $|C^\ast| < |V|/3$. We claim that $C^\ast$ with $W^\ast = W_i$ is a valid output.
 Since $|C_i| < \betacut|V \setminus C'|$, we must have $|(V \setminus C') \setminus (C_i \cup W_i)|
\leq \betaleft|V \setminus C'|$, and so 
\begin{align*}
 |V \setminus (C^\ast \cup W^\ast)|
&=
|(V \setminus C') \setminus (C_i \cup W_i)| \\
&\leq \betaleft|V \setminus C'|\\
&\leq \betaleft|V|.   
\end{align*}
 
  \paragraph{Case 4.} If $|C' \cup C_i| < |V|/3$ and $|C_i| \geq \betacut|V \setminus C'|$, then we proceed to the next iteration. Note that we must have $\Phiv(C' \cup C_i) \leq \phicut/2$ in this case.

\paragraph{Number of iterations.}
Since we proceed to the next iteration only when we are in  Case~4, we must have $|C_j| \geq \betacut |V \setminus(C_1 \cup C_2 \cup \cdots \cup C_{j-1})|$ for each $j$ if the algorithm does not terminate in iteration $j$.
Thus, the algorithm must terminate within $\tau = O(\betacut^{-1})$ iterations, since otherwise
$|C_1 \cup C_2 \cup \cdots \cup C_{\tau-1}| \geq |V|/3$ violates an induction hypothesis.

\section{Analysis of Potential Functions}\label{sect-potential}

In this section we analyze the potential functions in the randomized and the deterministic cut-matching games.

\subsection{Randomized Cut-matching Game}\label{sect-rand-potential}
The only ingredient that is missing in \cref{sect-RST} in order to prove \cref{lem-rand-cutmatch-iterations} is the following.

\begin{lemma}[Potential drop~\cite{RST14,SaranurakW19}] \label{lem-potential-drop}
For any constant  $K > 0$, the following is true  for each iteration $i \geq 1$,
\[\Expect[\Pi(i)] \leq  \Pi(i-1) \left(1 - \frac{1}{K \log n}\right) + n^{-\Omega(K)}.\]
\end{lemma}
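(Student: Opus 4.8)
\textbf{Proof plan for \cref{lem-potential-drop}.}

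The plan is to follow the potential-drop analysis of the RST cut-matching game~\cite{RST14} (and its refinement in~\cite{SaranurakW19}), adapted to the setting where the matching player is only required to saturate a constant fraction of $A^l$ rather than all of it. Fix an iteration $i$, and write $A = A_i$, $A' = A_{i+1} = A_i \setminus C_i$, $F = F_{i-1}$, and $\mu = \mu_{i-1}$ for the current mean vector restricted to the active set at the \emph{start} of the iteration. The key quantity is the projection of the rows $\{F[j] - \mu\}_{j}$ onto the random direction $r$ chosen by the cut player; one first shows that the expected squared length of this projection is exactly $\Pi(i-1)/n$ up to lower-order terms, so the potential is, in expectation over $r$, ``spread out'' along $r$. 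The cut player's rule (the four cases in the description of the cut player, taken verbatim from~\cite[Lemma~3.3]{RST14}) is designed precisely so that the pair $(A^l, A^r)$ separates a constant fraction of this projected potential: the vertices in $A^l$ and the vertices in $A^r$ have $u$-values that differ by a constant multiple of the ``spread'' $\sqrt{\Pi(i-1)/n}$ in a way that captures $\Omega(1)$ fraction of $\sum_{j\in A}(u[j]-\bar u)^2$.

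Next I would invoke the standard fact about the averaging matrix $F_{M_i}$: when two coordinates $j$ (in $A^l$) and $j'$ (in $A^r$) are matched by $M_i$, applying $F_{M_i}$ replaces both $F[j]$ and $F[j']$ by their average, which decreases $\|F[j]-\mu\|^2 + \|F[j']-\mu\|^2$ by $\tfrac12\|F[j]-F[j']\|^2$. Summing over the matched pairs, the potential drops by at least a constant times the contribution of the separated pairs to the projected potential. The subtlety introduced here relative to~\cite{RST14} is twofold. First, the matching $M_i$ saturates only $\Omega(1)$ fraction of $A^l$, so I only get the potential drop from that constant fraction of pairs — but since $|A^l|$ already captures an $\Omega(1)$ share of the potential mass, an $\Omega(1)$ fraction of those pairs still yields an $\Omega(1)\cdot$(projected potential) drop. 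Second, the unmatched vertices of $A^l$ (and the vertices of $C_i$ generally) are removed from the active set, so I must argue that deleting these rows from the potential sum does not \emph{increase} $\Pi$; this is immediate because $\Pi(i)$ is a sum of nonnegative terms over $A'$, and recentering to the new mean $\mu_i$ only decreases the sum of squared deviations (the mean minimizes sum of squared deviations). Then taking expectation over the random unit vector $r$ and using the concentration of the projection (a random unit vector in $\mathbb R^n$ projects a fixed vector to roughly $1/\sqrt n$ of its length, with the relevant moments controlled), one gets $\Expect[\Pi(i)] \le \Pi(i-1)(1 - \Omega(1/\log n))$ plus an additive $n^{-\Omega(K)}$ slack coming from the low-probability event that $r$ is badly aligned, which is where the constant $K$ enters.

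The main obstacle I expect is bookkeeping the constants through the four cases of the cut player and verifying that the ``$1/\log n$'' factor — rather than a constant — is the right rate. The $1/\log n$ loss arises because the separated potential in a single iteration is only guaranteed to be an $\Omega(1/\log n)$ fraction of $\Pi(i-1)$ in the worst case (when the potential is concentrated on few coordinates, the ``spreading'' argument is weak), and reconciling this with the projection step requires the careful case analysis of whether $P_L \ge P_A/20$ or not. Since the excerpt explicitly states ``the analysis is the same as in~\cite{RST14,SaranurakW19}'', the proof I would write is essentially a citation with a remark explaining the one modification — that the matching player only saturates a constant fraction of $A^l$ — together with the observation that the RST/SW potential argument is robust to this change because it only ever uses that an $\Omega(1)$ fraction of the relevant pairs get averaged. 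I would also note that the removal of $C_i$ from the active set is handled by the monotonicity of sum-of-squared-deviations under restriction and recentering, so no extra loss is incurred there.
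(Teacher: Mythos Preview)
Your overall architecture is right, but there is a genuine gap in how you account for the unmatched portion of $A^l$. You assert that ``the matching $M_i$ saturates only $\Omega(1)$ fraction of $A^l$'' and then argue that this constant fraction of matched pairs already yields enough drop. But in the RST game as specified here there is \emph{no} lower bound on $|M_i|$: the matching player is allowed to return an arbitrarily small (even empty) matching, with every unmatched vertex of $A^l$ going into $C_i$. You may be conflating this with the KKOV game in the deterministic section, where the matching player must saturate at least half of $S^i$; that guarantee is absent here. If $M_i=\emptyset$, your matching term vanishes, and since you only use removal of $C_i$ for monotonicity (``does not increase $\Pi$''), you would conclude no drop at all.

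The fix, which is exactly what the paper does, is to treat removal as an \emph{active} source of potential drop rather than mere monotonicity. The decomposition
\[
\Pi(i-1)-\Pi(i)\;\ge\;\tfrac12\sum_{\{v_j,v_l\}\in M_i}\|F_{i-1}[j]-F_{i-1}[l]\|^2\;+\;\sum_{v_j\in C_i}\|F_{i-1}[j]-\mu_{i-1}\|^2
\]
has the crucial feature that every $v_j\in A^l$ contributes to one term or the other: if $v_j$ is matched it sits in the first sum, and if not it is in $C_i$ (by the matching-player specification) and sits in the second. After projecting onto $r$ and using the separation value $\eta$ from \cref{lem-rst-property}, each such $v_j$ contributes $\Omega\bigl((u[j]-\bar u)^2\bigr)$ regardless of which term it lands in, so one recovers the full $\sum_{v_j\in A^l}(u[j]-\bar u)^2$ and hence, via property~(3) of \cref{lem-rst-property}, an $\Omega(1)$ fraction of $\sum_{v_j\in A_i}(u[j]-\bar u)^2$. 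A secondary correction: the $1/\log n$ factor does not come from ``potential concentrated on few coordinates''; it comes from the tail bound in \cref{lem-projection}, which is needed to guarantee $\|v\|^2\ge \frac{n}{c\log n}\langle v,r\rangle^2$ simultaneously for all relevant vectors $v$ before the expectation $\Expect[\langle v,r\rangle^2]=\|v\|^2/n$ is applied.
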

By selecting $K$ to be sufficiently large, the $n^{-\Omega(K)}$ term becomes negligible, and so indeed the potential decreases by a factor of $1 - \Omega(1/\log n)$ in expectation in each iteration.
The proof of \cref{lem-potential-drop} can be found in~\cite{RST14,SaranurakW19}. 
For the sake of completeness, we include  a proof of \cref{lem-potential-drop} that is almost self-contained except the following two lemmas. 

\begin{lemma}[{Gaussian behavior of projection~\cite[Lemma~3.5]{KRV09}}]\label{lem-projection}
Let $v \in \mathbb{R}^n$ be a vector with $\| v \| = \ell$, and let $R \in \mathbb{R}^n$ be a uniformly random unit vector. For $x \leq n/16$, we have
\[
\Expect\left[\langle v, r\rangle^2\right] = \ell^2 / n \ \ \ \text{and} \ \ \
\Prob\left[\langle v, r \rangle^2 \geq x \ell^2 / n \right] \leq \exp(-x/4).
\]
\end{lemma}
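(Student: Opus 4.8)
The plan is to use the standard Gaussian representation of a uniformly random point on the sphere. Since the distribution of $r$ is rotationally invariant and $\langle v,r\rangle^2$ depends on $v$ only through $\ell=\|v\|$, I may assume $v=\ell e_1$. I then realize $r$ as $r=g/\|g\|$, where $g=(g_1,\dots,g_n)$ has i.i.d.\ standard normal coordinates, so that $\langle v,r\rangle^2=\ell^2 g_1^2/\|g\|^2$. The expectation claim is then immediate from symmetry: the quantities $\Expect[g_i^2/\|g\|^2]$ are all equal by permutation symmetry of $g$, and $\sum_{i=1}^n \Expect[g_i^2/\|g\|^2]=\Expect\!\left[\sum_{i=1}^n g_i^2/\|g\|^2\right]=1$, so each equals $1/n$ and $\Expect[\langle v,r\rangle^2]=\ell^2/n$.

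For the tail bound, write $Y=\sum_{i=2}^n g_i^2$, so that $\|g\|^2=g_1^2+Y$ with $Y\sim\chi^2_{n-1}$ independent of $g_1$. Because $x\le n/16<n$, the event $\{\ell^2 g_1^2/\|g\|^2\ge x\ell^2/n\}$ rearranges to $\{(n-x)g_1^2\ge xY\}$. I would condition on $Y$, apply the one-dimensional tail estimate $\Prob[g_1^2\ge u]\le e^{-u/2}$ for a standard Gaussian, and then integrate out $Y$ using the chi-square moment generating function:
\[
\Prob[(n-x)g_1^2\ge xY]\;\le\;\Expect\!\left[\exp\!\left(-\tfrac{x}{2(n-x)}Y\right)\right]\;=\;\left(1+\tfrac{x}{n-x}\right)^{-(n-1)/2}\;=\;\left(1-\tfrac{x}{n}\right)^{(n-1)/2}.
\]
Finally $\left(1-x/n\right)^{(n-1)/2}\le \exp\!\left(-\tfrac{x(n-1)}{2n}\right)\le\exp(-x/4)$, the last step using $(n-1)/n\ge 1/2$ for $n\ge 2$.

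The computation is short and there is no serious obstacle; the only points needing a little care are (i) establishing the one-dimensional bound $\Prob[g_1^2\ge u]\le e^{-u/2}$ with the sharp constant — this follows from the Mills-ratio inequality $\Prob[|g_1|\ge t]\le \tfrac{2}{t\sqrt{2\pi}}e^{-t^2/2}$ in the range $u\ge 2/\pi$ and from a direct monotonicity check against $e^{-u/2}$ for $u<2/\pi$, and it is exactly what prevents an extra factor of $2$ from polluting the exponent — and (ii) keeping track of the $(n-1)/2$ exponent so that the final constant comes out exactly $1/4$. The degenerate case $n=1$ never arises, as $n$ is the number of vertices of the underlying graph.
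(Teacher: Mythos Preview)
Your proof is correct. The paper, however, does not prove this lemma at all: it is stated with a citation to \cite[Lemma~3.5]{KRV09} and explicitly listed as one of the two external facts the potential-drop argument relies on. So there is no paper proof to compare against; you have supplied a self-contained argument where the authors chose to import the result.

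Your argument via the Gaussian representation $r=g/\|g\|$ is the standard route and is cleanly executed. The expectation follows immediately from coordinate symmetry, and the tail bound reduces to a chi-square Laplace transform after the one-dimensional estimate $\Prob[g_1^2\ge u]\le e^{-u/2}$. Your justification of that estimate is fine; in fact the function $f(u)=e^{-u/2}-\Prob[g_1^2\ge u]$ satisfies $f(0)=0$, $f'(u)=e^{-u/2}\bigl((2\pi u)^{-1/2}-\tfrac12\bigr)$, so $f$ increases on $(0,2/\pi)$ and decreases thereafter with $f(\infty)=0$, hence $f\ge 0$ everywhere, which is a slightly more unified way to handle both ranges of $u$ than splitting into Mills-ratio and monotonicity cases. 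The final inequality $(n-1)/(2n)\ge 1/4$ for $n\ge 2$ is exactly where the constant $1/4$ in the exponent comes from, and your remark that $n\ge 2$ is automatic in context is appropriate.
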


\begin{lemma}[{Properties of $A^l$ and $A^r$~\cite[Lemma~3.3]{RST14}}]\label{lem-rst-property}
For each iteration $i$, the sets $A^l$ and $A^r$ computed by the cut player satisfy the following properties for some separation value $\eta$.
\begin{enumerate}
    \item Either $\max_{v_j \in A^l} \leq \eta \leq \min_{v_j \in A^r}$ or $\max_{v_j \in A^r} \leq \eta \leq \min_{v_j \in A^l}$.
    \item For each $v_j \in A^l$, we have $|u[j] - \eta|^2 \geq (1/9)|u[j] - \bar{u}|^2$.
    \item $\sum_{v_j \in A^l} |u[j] - \bar{u}|^2 \geq (1/80) \sum_{v_j \in A_i} |u[j] - \bar{u}|^2$.
\end{enumerate}
\end{lemma}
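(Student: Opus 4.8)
The plan is to carry out, in full, the case analysis of the cut player's procedure that underlies \cite[Lemma~3.3]{RST14}. Write $m = |A_i|$, recall that $\bar u$ is the mean of $u$ over $A_i$, so $\sum_{v_j \in A_i}(u[j]-\bar u)=0$; splitting this sum over $L$ and $R$ yields the identity $\ell_L = \ell_R =: \ell$, and by definition $P_A = P_L + P_R$. First I would cut the number of cases in half: the reflection $u \mapsto 2\bar u - u$ fixes $\bar u$, swaps $L$ with $R$, swaps branch~(a) with branch~(c) and branch~(b) with branch~(d), and leaves all three claimed properties invariant (the deviations $|u[j]-\bar u|$ and the separation structure are preserved). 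Hence it suffices to analyze branches (a) and (b), and we may assume $|L| \le |R|$, so $|L| \le m/2$.

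In branch~(a) (where $P_L \ge P_A/20$) I would take $\eta = \bar u$. Then $A^r = R = \{v_j : u[j] \ge \bar u\}$, while $A^l$ is the set of the $m/8$ vertices of smallest $u$-value; when these all lie in $L$ this gives property~1 with separation value $\eta$, and property~2 holds trivially because $|u[j]-\eta| = |u[j]-\bar u|$ on $A^l$. For property~3 one uses the elementary fact that the top $k$ of $t$ sorted nonnegative reals sum to at least a $k/t$ fraction of the total: since $A^l$ consists of the $m/8$ vertices of $L$ with the \emph{largest} squared deviation and $|L| \le m/2$, we get $\sum_{v_j \in A^l}(u[j]-\bar u)^2 \ge \tfrac{m/8}{|L|}\,P_L \ge \tfrac14 P_L \ge P_A/80$.

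In branch~(b) (where $P_L < P_A/20$, hence $P_R > \tfrac{19}{20}P_A$ and $P_L < P_R/19$) the quantitative engine is $\ell^2 = \ell_L^2 \le |L|\,P_L \le m\,P_L < \tfrac{m}{19}P_R$ by Cauchy--Schwarz, i.e.\ $\ell^2/m < P_R/19$. I would set $\eta = \bar u + 4\ell/m$. Property~3 then holds irrespective of the size of $R'$: because $A^l \supseteq R' = \{v_j : u[j]-\bar u \ge 6\ell/m\}$ and $(u[j]-\bar u)^2 < \tfrac{6\ell}{m}(u[j]-\bar u)$ on $R\setminus R'$,
\[
\sum_{v_j \in A^l}(u[j]-\bar u)^2 \;\ge\; P_R - \frac{6\ell}{m}\,\ell_{R\setminus R'} \;\ge\; P_R\Bigl(1-\frac{6}{19}\Bigr) \;\ge\; \frac{P_A}{80}.
\]
For property~2, every $v_j \in A^l \cap R'$ satisfies $u[j]-\bar u \ge 6\ell/m$, whence $u[j]-\eta = (u[j]-\bar u) - 4\ell/m \ge \tfrac13(u[j]-\bar u) \ge 0$, so $|u[j]-\eta|^2 \ge \tfrac19|u[j]-\bar u|^2$; and since $A^r = \{v_j : u[j] \le \bar u + 4\ell/m\}$ lies weakly below $\eta$, property~1 will follow once we know that the (possibly present) vertices of $A^l$ lying outside $R'$ also satisfy $u[j] > \bar u + 4\ell/m$.

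The main obstacle is precisely that last clause: verifying the separation in the corner case of branch~(b) where $|R'| < m/8$ and the cut player pads $A^l$ with additional high-$u$ vertices, and symmetrically the corner case of branch~(a) where $|L| < m/8$ (there it is property~2 that must be re-examined). The plan for the branch~(b) corner case is an argument by contradiction: if the $(m/8)$-th largest value of $u[j]-\bar u$ over $R$ did not exceed $4\ell/m$, then almost all of $P_R$ would be carried by fewer than $m/8$ vertices of $R$, which — combined with the sorted-tail estimate $\sum_{k \le m/8}(\ell/k)^2 = O(\ell^2)$ and the bound $\ell^2/m < P_R/19$ — should contradict the size of $P_R$. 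This is the delicate constant-juggling step of the RST argument and also where the particular constants $20$, $80$, $4$, $6$, and $1/9$ get pinned down; once it is settled, branches (c) and (d) are handled verbatim by the reflection symmetry, completing the proof.
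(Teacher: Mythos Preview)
The paper does not prove this lemma; it is quoted from \cite[Lemma~3.3]{RST14} and used as a black box in the proof of \cref{lem-potential-drop}. Your reconstruction of the RST argument is correct in the non-degenerate cases: the symmetry reduction, the choice $\eta=\bar u$ in branch~(a) with the averaging bound for property~3, and in branch~(b) the Cauchy--Schwarz step $\ell^2\le|L|P_L$ together with the splitting of $P_R$ over $R'$ and $R\setminus R'$ all go through as written.

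Your plan for the corner cases, however, does not close. In branch~(b) with $|R'|<m/8$, the sorted-tail estimate $\sum_{k}(\ell/k)^2=O(\ell^2)$ cannot yield a contradiction, because $\ell^2$ can be of order $mP_L$, hence of order $mP_R/19$, far larger than $P_R$. Concretely, take $|L|=|R|=m/2$ with every $L$-vertex at deviation $-c$ and $R$ consisting of one vertex at deviation $\approx mc/2$ plus $m/2-1$ vertices at deviation $\approx 0$; then $P_R/P_L\approx m/2$, so for large $m$ we are in branch~(b), yet only one vertex of $R$ exceeds $4\ell/m=2c$, and padding $A^l$ to size $m/8$ necessarily picks up vertices with $u$-value below $\eta$, breaking property~1. (In the branch-(a) corner case $|L|<m/8$ it is likewise property~1 and disjointness from $A^r=R$ that fail, not property~2 as you suggest, since $|u[j]-\eta|=|u[j]-\bar u|$ for every $v_j$ when $\eta=\bar u$.) The root cause is the paper's transcription of the cut player: ``the $\max\{|R'|,|A_i|/8\}$ vertices in $R'$'' is ill-posed when $|R'|<|A_i|/8$. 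If one instead reads $A^l=R'$ with no padding, disjointness and property~1 are immediate, and your branch-(b) calculations already give properties~2 and~3; the corner case you were wrestling with is an artifact of the transcription rather than a genuine obstacle.
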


Recall that $A_{i+1} = A_i \setminus C_i$, and the definition of the potential function is
\[\Pi(i) = \sum_{v_j \in A_{i+1}} \| F_i[j] - \mu_i \|^2, \ \ \text{where} \ \ \mu_i = \frac{1}{|A_{i+1}|} \cdot\sum_{v_j \in A_{i+1}} F_i[j] .\]
We first lower bound $\Pi(i-1) - \Pi(i)$ as follows.
\begin{align*}
    \Pi(i-1) - \Pi(i) 
    &= \sum_{v_j \in A_{i}} \| F_{i-1}[j] - \mu_{i-1} \|^2 -
    \sum_{v_j \in A_{i+1}} \| F_{i}[j] - \mu_i \|^2\\
    &\geq \sum_{v_j \in A_{i}} \| F_{i-1}[j] - \mu_{i-1} \|^2 -
    \sum_{v_j \in A_{i+1}} \| F_{i}[j] - \mu_{i-1} \|^2\\
    &= \sum_{v_j \in A_{i}} \left( \| F_{i-1}[j] - \mu_{i-1} \|^2 -
    \| F_{i}[j] - \mu_{i-1} \|^2 \right) +
    \sum_{v_j \in C_{i}} \| F_{i-1}[j] - \mu_{i-1} \|^2  \\
    &= \frac{1}{2} \sum_{ \{v_j, v_l\} \in M_{i}} \| F_{i-1}[j] - F_{i-1}[l] \|^2  +
    \sum_{v_j \in C_{i}} \| F_{i-1}[j] - \mu_{i-1} \|^2.  
\end{align*}
The  inequality is due to the fact that for any set of length-$n$ vectors $\{x_1, x_2, \ldots, x_k\}$, this term $\sum_{1 \leq j \leq k} \| x_j - z \|^2$ is minimized when $z = (1/k) \sum_{1 \leq j \leq k} x_j$ is chosen as the average vector.
The last equality is due to the fact that $F_{i-1}[j] \neq F_{i}[j]$ only when $v_j$ is matched in $M_i$ and the following lemma.

\begin{lemma}[{\cite[Lemma~3.3]{KRV09}}]
Let $\{v_j, v_l\} \in M_i$.
Then we have
\[
 \| F_{i-1}[j] - \mu_{i-1} \|^2 + \| F_{i-1}[l] - \mu_{i-1} \|^2 -
    \| F_{i}[j] - \mu_{i-1} \|^2 - \| F_{i}[l] - \mu_{i-1} \|^2
    = \frac{1}{2}  \| F_{i-1}[j] - F_{i-1}[l] \|^2.
\]
\end{lemma}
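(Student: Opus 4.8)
The statement to prove is the identity, for a matching edge $\{v_j,v_l\}\in M_i$,
\[
\|F_{i-1}[j]-\mu_{i-1}\|^2 + \|F_{i-1}[l]-\mu_{i-1}\|^2 - \|F_{i}[j]-\mu_{i-1}\|^2 - \|F_{i}[l]-\mu_{i-1}\|^2 = \tfrac{1}{2}\|F_{i-1}[j]-F_{i-1}[l]\|^2.
\]
The plan is to reduce everything to the elementary observation that when $\{v_j,v_l\}\in M_i$, applying $F_{M_i}$ replaces the two row vectors $F_{i-1}[j]$ and $F_{i-1}[l]$ by the \emph{same} vector, namely their average $\tfrac12(F_{i-1}[j]+F_{i-1}[l])$, while leaving the sum $F_{i-1}[j]+F_{i-1}[l]$ unchanged. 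First I would verify this: from the definition of $F_M$, the $j$th and $l$th rows of $F_i = F_{M_i}F_{i-1}$ each equal $\tfrac12 F_{i-1}[j] + \tfrac12 F_{i-1}[l]$, because the transition matrix mixes exactly the coordinates $j$ and $l$ with weight $1/2$ each. Hence $F_i[j] = F_i[l] = \tfrac12(F_{i-1}[j]+F_{i-1}[l]) =: m$, and $F_i[j]+F_i[l] = F_{i-1}[j]+F_{i-1}[l]$.

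Next I would introduce the shifted vectors $a = F_{i-1}[j]-\mu_{i-1}$ and $b = F_{i-1}[l]-\mu_{i-1}$, so that $F_i[j]-\mu_{i-1} = F_i[l]-\mu_{i-1} = \tfrac12(a+b)$. The left-hand side of the identity then becomes a purely vector-algebraic quantity:
\[
\|a\|^2 + \|b\|^2 - 2\left\|\tfrac12(a+b)\right\|^2 = \|a\|^2 + \|b\|^2 - \tfrac12\|a+b\|^2.
\]
Expanding $\tfrac12\|a+b\|^2 = \tfrac12\|a\|^2 + \langle a,b\rangle + \tfrac12\|b\|^2$ gives $\tfrac12\|a\|^2 + \tfrac12\|b\|^2 - \langle a,b\rangle = \tfrac12\|a-b\|^2$. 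Finally, $a-b = F_{i-1}[j]-F_{i-1}[l]$, so the whole expression equals $\tfrac12\|F_{i-1}[j]-F_{i-1}[l]\|^2$, as claimed. This is a two-line computation with no real obstacle; the only thing to be careful about is the book-keeping of the factor $1/2$ coming from the lazy random walk, i.e., correctly reading off from the definition of $F_M$ that the matched rows become the exact average (not, say, $\tfrac14$-weighted or with a surviving diagonal term), which follows because $v_j$ and $v_l$ are matched in $M_i$ so the "$i=j$" diagonal case in the definition of $F_M$ contributes weight $1/2$, not $1$. I would state this row-level claim as a short preliminary computation and then invoke it to finish.

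One minor point worth addressing explicitly: the identity is stated for a single matching edge, but in the application (the telescoping bound on $\Pi(i-1)-\Pi(i)$ above) it is summed over all $\{v_j,v_l\}\in M_i$ and combined with the fact that $F_{i-1}[j]=F_i[j]$ for unmatched $v_j$; I would note that this summation is exactly what produces the $\tfrac12\sum_{\{v_j,v_l\}\in M_i}\|F_{i-1}[j]-F_{i-1}[l]\|^2$ term, so no further work is needed there. The entire argument is self-contained and does not depend on the cut player's strategy, on $\Phiv$, or on any earlier lemma in the excerpt.
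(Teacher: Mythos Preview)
Your proof is correct and follows essentially the same approach as the paper: introduce $a = F_{i-1}[j]-\mu_{i-1}$ and $b = F_{i-1}[l]-\mu_{i-1}$, use that $F_i[j]=F_i[l]=\tfrac12(F_{i-1}[j]+F_{i-1}[l])$ for a matched pair, and reduce to the elementary identity $\|a\|^2+\|b\|^2-2\|(a+b)/2\|^2=\tfrac12\|a-b\|^2$. The paper states exactly this in two lines; your version is slightly more verbose but identical in substance.
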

\begin{proof}
Write $a = F_{i-1}[j] - \mu_{i-1}$ and $b = F_{l-1}[j] - \mu_{i-1}$. Then we have $(a+b)/2 = F_{i}[j] - \mu_{i-1} = F_{i}[l] - \mu_{i-1}$, since $F_{i}[j] = F_{i}[l] = (F_{i-1}[j] = F_{i-1}[l])/2$. The lemma follows from the equality $\| a \|^2 + \|b\|^2  - 2 \|(a+b)/2\|^2 = (1/2)\|a - b\|^2$, which is true for any two vectors $a$ and $b$.
\end{proof}

Remember that we have the following in the algorithm of the cut player in iteration $i$.
\[u = F_{i-1} \cdot r \in \mathbb{R}^{n} \ \ \ \text{and} \ \ \ \bar{u} = \frac{1}{|A_i|}\sum_{v_j \in A_i} u[j],  \]
and observe that
\[ u[j] = \langle F_{i-1}[j], r \rangle
\ \ \ \text{and}  \ \ \ \bar{u} = \langle \mu_{i-1}, r \rangle. 
\]
Therefore, in view of \cref{lem-projection}, with probability $1 - n^{-c/4}$, we have
\[
\Pi(i-1) - \Pi(i) \geq
\frac{n}{2c \ln n} \sum_{ \{v_j, v_l\} \in M_{i}} ( u[j] - u[l] )^2  + \frac{n}{c \ln n} 
    \sum_{v_j \in C_{i}} ( u[j] - \bar{u} )^2. 
\]
Using \cref{lem-rst-property}, we have
$$\sum_{ \{v_j, v_l\} \in M_{i}} ( u[j] - u[l] )^2 \geq \sum_{  v_j\in A^l \setminus C_i} ( u[j] - \eta )^2
\geq \frac{1}{9} \sum_{  v_j\in A^l \setminus C_i} ( u[j] - \bar{u} )^2$$
and so
\begin{align*}
    \Pi(i-1) - \Pi(i) &\geq
\frac{n}{18 c \ln n} \sum_{  v_j\in A^l \setminus C_i} ( u[j] - \bar{u} )^2  + 
\frac{n}{c \ln n} \sum_{v_j \in  A^l \cap C_{i}} ( u[j] - \bar{u} )^2 \\
&\geq \frac{n}{18 c \ln n} \sum_{  v_j\in A^l} ( u[j] - \bar{u} )^2 \\
&\geq \frac{n}{1440 c \ln n} \sum_{  v_j\in A_i} ( u[j] - \bar{u} )^2.
\end{align*}
By \cref{lem-projection}, we have 
\[
\Expect\left[ \frac{n}{1440 c \ln n} \sum_{  v_j\in A_i} ( u[j] - \bar{u} )^2 \right]
=   \frac{1}{1440 c \ln n} \sum_{  v_j\in A_i} \| F_{i-1}[j] - \mu_{i-1} \|^2
=  \frac{\Pi(i-1)}{1440 c \ln n} .
\]
By a change of variable $K \log n = 1440 c \ln n$, we finish the proof of \cref{lem-potential-drop}. Note that the $n^{-\Omega(K)}$ term is due to the fact that there is one step that holds with probability $1 - n^{-c/4}$.

\subsection{Deterministic Cut-matching Game}\label{sect-det-potential}

In \cref{sect-det-potential}, we prove \cref{lem-potential-reduction}.
For clarity, we write $\betaKKOV = 1/4$ to denote the threshold $1/4$ in  $|C^i| \geq (1/4)|V|$, and we write $\phicut = 1/2$ to denote the threshold $1/2$ in $\Phiv_{H_{i-1}}(C^i) \leq 1/2$ in the definition of the cut-matching game in \cref{def-kkov}.
Note that a small difference between the analysis here and the one in the original paper~\cite{khandekar2007cut} is that we only require at least half of the vertices in $S^i$ to be matched.

Consider iteration $i$ of the cut-matching game.
We are given $H^{i-1} = M^1 \cup M^2 \cup \cdots \cup M^{i-1}$ and a bipartition $V = S^i \cup T^i$ with
\begin{align*}
   &|\partial(S^i)| \leq \phicut|S^i| \\
   &\betaKKOV|V| \leq |S^i| \leq |V|/2 
\end{align*}
The matching player in the $i$th iteration returns a matching $M^i=\{(s_1, t_1), (s_2, t_2), \ldots, (s_x, t_x)\}$, where $S_\ast^i = \{s_1, s_2, \ldots, s_x\}$ and $T_\ast^i = \{t_1, t_2, \ldots, t_x\}$ are distinct vertices in $S^i$ and $T^i$. 

We write $p(j \rightsquigarrow k)$ and $p'(j \rightsquigarrow k)$ to denote the transition probability for $(M^1, M^2, \ldots, M^{i-1})$ and $(M^1, M^2, \ldots, M^{i})$, respectively.
We have the following bounds.
\begin{align*}
 &\sum_{j \in S^i, k \in T^i} p(j \rightsquigarrow k)  \leq \frac{1}{2} \phicut|S^i|  \\
 &\sum_{j \in T^i, k \in S^i} p(j \rightsquigarrow k)  \leq \frac{1}{2} \phicut|S^i| 
\end{align*}
The reason is as follows. Consider the following process.  At the beginning each vertex $v \in S^i$ has one unit of load. 
For $i = 1, 2, \ldots, i-1$, in iteration $j$, for each edge $\{u, v\} \in M^j$, the two vertices $u$ and $v$ average their load. In the end, the summation of load in $T^i$ equals $\sum_{j \in S^i, k \in T^i} p(j \rightsquigarrow k)$. Observe that the maximum load at a vertex at any time is at most 1, and so the amount of load transmitted from $S^i$ to $T^i$ via an edge $\{u, v\} \in M^j$ crossing $S^i$ and $T^i$ during iteration $j$ is at most $1/2$. Therefore, we have $$\sum_{j \in S^i, k \in T^i} p(j \rightsquigarrow k) \leq \frac{1}{2} |\partial_{H^{i-1}}(S^i)| \leq \frac{1}{2} \phicut|S^i|
$$
The calculation of $\sum_{j \in T^i, k \in S^i} p(j \rightsquigarrow k)$ is similar.

\paragraph{Good and bad triples.}
Consider any three vertices $v_j, v_k, v_l$  with $v_j \in S^i$,  $v_k \in S^i_\ast$, $v_l \in T^i_\ast$ such that $\{v_k, v_l\} \in M^i$. Since $M^i$ is a perfect matching between $S^i_\ast$ and $T^i_\ast$, $v_l \in T^i_\ast$ is uniquely determined given $v_k \in S^i_\ast$, and vice versa.

We say that such a triple $(j,k,l)$ is \emph{good} if $p(j \rightsquigarrow k) \geq 2 \cdot p(j \rightsquigarrow l)$, otherwise $(j,k,l)$ is \emph{bad}.
It is clear that the summation of $p(j \rightsquigarrow k)$ over all \emph{good and bad} triples $(j, k, l)$ is  \[
\sum_{v_j \in S^i, v_k \in S_\ast^i} p(j \rightsquigarrow k)
=
|S^i_\ast| - \sum_{j \in T^i, k \in S^i_\ast} p(j \rightsquigarrow k)  \geq
\left(1 - \frac{\phicut}{2}\right)|S^i|.
\]
Using the definition of bad triples, the summation of $p(j \rightsquigarrow k)$ over all \emph{bad} triples $(j, k, l)$ can be upper bounded by \[
2\cdot \sum_{v_j \in S^i, v_l \in T^i_\ast} p(j \rightsquigarrow l) \leq  
2\cdot  \sum_{j \in S^i, k \in T^i} p(j \rightsquigarrow k)  \leq \phicut |S^i|.\]
Therefore, the summation of $p(j \rightsquigarrow k)$ over all \emph{good} triples $(j, k, l)$ is at least 
\[\left(1 - \frac{3\phicut}{2}\right)|S^i| = \frac{1}{4}|S^i| \geq \frac{\betaKKOV}{4}|V| = \frac{n}{16}.\]

\paragraph{Potential increase for a triple.}
Fix any triple $(j, k, l)$  with $v_j \in S^i$,  $v_k \in S^i_\ast$, $v_l \in T^i_\ast$ such that $\{v_k, v_l\} \in M^i$.
Note that $p'(j \rightsquigarrow k) = p'(j \rightsquigarrow l) = (1/2) (p(j \rightsquigarrow k) + p(j \rightsquigarrow l))$. 
We want to show that the following holds for some constant $\epsilon > 0$.
\begin{align*}
 &p'(j \rightsquigarrow k) \log (1/p'(j \rightsquigarrow k)) + p'(j \rightsquigarrow l) \log (1/p'(j \rightsquigarrow l))\\
 &\geq p(j \rightsquigarrow k) \log (1/p(j \rightsquigarrow k)) + p(j \rightsquigarrow l) \log (1/p(j \rightsquigarrow l))  + \epsilon p(j \rightsquigarrow k),  &\text{ if $(j,k,l)$ is good.}\\ \\
  &p'(j \rightsquigarrow k) \log (1/p'(j \rightsquigarrow k)) + p'(j \rightsquigarrow l) \log (1/p'(j \rightsquigarrow l))\\
  &\geq p(j \rightsquigarrow k) \log (1/p(j \rightsquigarrow k)) + p(j \rightsquigarrow l) \log (1/p(j \rightsquigarrow l)),  &\forall \, v_j \in V, \, \{v_k, v_l\} \in M^i.
\end{align*}
This implies that the overall potential increase $\Pi(i) - \Pi(i-1)$ is at least $\epsilon$ times the summation of $p(j \rightsquigarrow k)$ over all \emph{good} triples $(j, k, l)$, and so 
$$\Pi(i) - \Pi(i-1) = \Omega(n).$$
For notational simplicity, we write $p_1 = p(j \rightsquigarrow k) \in [0, 1]$, $p_2 = p(j \rightsquigarrow l) \in [0, 1]$, and $p_3 = (p_1 + p_2)/2 = p'(j \rightsquigarrow k) = p'(j \rightsquigarrow l) \in [0, 1]$. Denote $h(p) = p \log (1/p)$. By the concavity of $h(p)$, we have $2 h(p_3) \geq h(p_1) + h(p_2)$. If $(j,k,l)$ is a good triple, we have $p_2 \leq p_1 / 2$, and a  calculation shows that for this case we have   $2 h(p_3) \geq h(p_1) + h(p_2) + \epsilon p_1$ with $\epsilon = (1/2) \log(32/27) > 0$.

\newpage
\bibliographystyle{alpha}
\bibliography{references}

\end{document}